\documentclass[twocolumn,amssymb,nobibnotes, prx, superscriptaddress]{revtex4-2}

\usepackage[margin=2cm]{geometry}
\usepackage{amsmath,amssymb} 
\usepackage{graphicx}
\usepackage{dcolumn}
\usepackage{bm}
\usepackage[hidelinks]{hyperref}
\usepackage{physics} 
\usepackage[table,xcdraw,dvipsnames]{xcolor}
\usepackage{float}
\usepackage[customcolors]{hf-tikz} 
\usepackage[utf8]{inputenc}
\usepackage{enumitem}
\usepackage{booktabs}



\newcommand\nd{
    {\mbox{\tiny nd}}
}
\newcommand\st{
    {\mbox{\tiny st}}
}

\usepackage{pdfrender}
\newcommand*{\boldify}[1]{%
  \textpdfrender{%
    TextRenderingMode=FillStroke,%
    LineWidth=.60pt,%
  }{#1}%
}

\newcommand{\vbra}[1]{\bra{\boldify{#1}}}
\newcommand{\vket}[1]{\ket{\boldify{#1}}}
\newcommand{\vbraket}[2]{\braket{\boldify{#1}}{\boldify{#2}}}
\newcommand{\vketbra}[2]{\ketbra{\boldify{#1}}{\boldify{#2}}}

\usepackage{dsfont}
\newcommand{\ident}{
  \mathds{1}
}


\renewcommand{\cal}[1]{\mathcal{#1}}
\DeclareMathAlphabet\mathbfcal{OMS}{cmsy}{b}{n}

\usepackage{amsthm}
\newtheorem{theorem}{Theorem}[section]
\newtheorem{corollary}{Corollary}[section]
\newtheorem{lemma}{Lemma}[section]
\newtheorem{definition}{Definition}[section]

\newcounter{protocol}[section]
\newenvironment{protocol}[1][]{
  \refstepcounter{protocol}
  \par\noindent\rule{\columnwidth}{1.4pt}  \vspace{-0.3cm}
  \par\noindent Protocol~\theprotocol: \textbf{#1} \rmfamily 
  \par\noindent\rule{\columnwidth}{1pt} \vspace{-0.5cm} }{\par\noindent\rule{\columnwidth}{1.4pt}\medskip}

\usepackage[normalem]{ulem}
\def\>{\rangle}
\def\<{\langle}
\def\Tr{\mbox{tr}}
\def\I{ \ident }
\def\hc{^{\dagger}}
\def\H {\mathcal{H}}
\def\B {\mathcal{B}}
\def\U {\mathcal{U}}
\def\V {\mathcal{V}}
\def\F {\mathcal{F}}
\def\D {\mathcal{D}}
\def\R {\mathcal{R}}
\def\S {\mathcal{S}}
\newcommand{\E}{\mathcal{E}}

\begin{document}
\title{Estimation of correlations and non-separability\\ in quantum channels via unitarity benchmarking}
\date{15 April, 2022}
\author{Matthew Girling}
\email{m.j.girling@leeds.ac.uk}
\affiliation{School of Physics and Astronomy, University of Leeds, Leeds LS2 9JT, United Kingdom}
\author{Cristina C\^{i}rstoiu}
\affiliation{Cambridge Quantum Ltd, 13-15 Hills Road, Cambridge CB2 1NL, United Kingdom} 
\author{David Jennings}
\affiliation{School of Physics and Astronomy, University of Leeds, Leeds LS2 9JT, United Kingdom}
\affiliation{Department of Physics, Imperial College London, London SW7 2AZ, United Kingdom} 

\begin{abstract}
The ability to transfer quantum information between systems is a fundamental component of quantum technologies and leads to correlations within the global quantum process.  However correlation structures in quantum channels are less studied than those in quantum states.
Motivated by recent techniques in randomized benchmarking, we develop a range of results for efficient estimation of correlations within a bipartite quantum channel. 
We introduce sub-unitarity measures that are invariant under local changes of basis, generalize the unitarity of a channel, and allow for the analysis of quantum information exchange within channels. Using these, we show that unitarity is monogamous, and provide an information-disturbance relation. We then define a notion of correlated unitarity that quantifies the correlations within a given channel. Crucially, we show that this measure is strictly bounded on the set of separable channels and therefore provides a witness of non-separability. Finally, we describe how such measures for effective noise channels can be efficiently estimated within different randomized benchmarking protocols. We find that the correlated unitarity can be estimated in a SPAM-robust manner for any separable quantum channel, and we show that a benchmarking/tomography protocol with mid-circuit resets can reliably witness non-separability for sufficiently small reset errors. The tools we develop provide information beyond that obtained via simultaneous randomized benchmarking and so could find application in the analysis of cross-talk errors in quantum devices.
\end{abstract}
\maketitle 

\section{Introduction}
Efficiently certifying and benchmarking non-classical features in quantum theory is central to the development of quantum technologies \cite{kliesch2020theory,helsen2020general,proctor2019direct,gaebler2012randomized,erhard2019characterizing,eisert2020quantum,derbyshire2021randomized}, which requires precise control and manipulation of quantum systems. High-fidelity quantum gates and circuits are essential for scalable quantum computing so it is important to benchmark the effects of physical noise on how accurately a target unitary is realized on the quantum device. For example, noise due to unwanted correlations or leakage can detrimentally affect error rate thresholds required for fault tolerant quantum computing \cite{preskill2012sufficient, nickerson2019analysing, iverson2020coherence}. Therefore, detection and quantification
of noise correlations such as cross-talk in quantum devices not only
impacts NISQ era devices \cite{preskill2018quantum} by improving circuit fidelities
and error mitigation methods, but it goes beyond
it in providing necessary tools to test physical assumptions
of quantum error correction.

Direct process tomography~\cite{d2001quantum, greenbaum2015introduction} of noisy gates and circuits faces two non-trivial obstacles: first, the complexity of full tomography is known to scale exponentially, and second, there is the problem of characterizing errors in the presence of other types of errors such as those arising from state-preparation and measurement (SPAM). To circumvent these obstacles techniques have been developed such as gate-set tomography and randomized benchmarking (RB), which allow for efficient estimation of measures that are robust against SPAM errors.

The simplest instance of an RB protocol returns an estimate of the average gate infidelity $r(\E)$ of the noisy computational gate-set (e.g.\ Clifford gates), for the effective noise channel $\E$. The average gate infidelity of this quantum channel~\cite{watrous2018theoryvec,wilde2013quantum,nielsen2002quantum,bengtsson2017geometry} can be used to bound the worst-case error rate, defined in terms of the diamond norm \cite{wallman2014randomized}, which is the relevant quantity in the context of fault-tolerant computation~\cite{aharonov2008fault, harper2019fault}:
	\begin{equation}\label{eqn:r-diamond-bound}
		\frac{d}{d+1}r(\E) \le \frac{1}{2} || id - \E ||_\diamond \le \sqrt{d(d+1)r(\E)},
	\end{equation}
	where $||id -\E||_\diamond$ is the diamond norm distance of the channel $\E$ to the identity channel~\cite{watrous2018theoryvec}.

Recent work has extended the core benchmarking toolkit, for example through higher-order moment analysis \cite{nakata2021quantum}, character benchmarking techniques \cite{helsen2019new}, the extension to benchmarking of logical qubits \cite{combes2017logical} and analogue regimes \cite{derbyshire2020randomized}.  Simultaneous randomized benchmarking \cite{gambetta2012characterization} has also been developed as a means to quantify the \emph{addressability} of a subsystem in a device and thus provide a basic assessment of the presence of cross-talk and correlation errors. 

Beyond noise analysis in quantum technologies, there are other motivations why one would like to be able to efficiently assess correlative structures within quantum channels -- for example consider a bipartite quantum channel  $\E_{AB}: \B(\H_A \otimes \H_B) \to \B(\H_{A} \otimes \H_{B})$ from a bipartite quantum system $AB$ to itself with $A$ and $B$ having equal dimension.  Correlations within the channel are required for the transfer of a quantum state prepared on the first subsystem $A$ to the second subsystem $B$: for example to transform the input pure states $|\psi\>_A \otimes |\phi\>_B$ to $|\phi\>_A \otimes |\psi\>_B$ via the SWAP unitary. This transformation is impossible under product channels of the form $\E_{AB} = \E_A \otimes \E_B$, with $\E_A$ a channel from $A$ to $A$ and $\E_B$ a channel from $B$ to $B$, and so non-product channels are clearly required. However, quantifying these channel correlations is a distinct problem from measuring the correlation--\emph{generating} abilities of a quantum channel. The SWAP unitary perfectly transfers a quantum state on $A$ to $B$, however it has zero correlation generating abilities as it sends the set of product states $\rho_A \otimes \sigma_B$ to itself. In contrast the channel that sends all quantum states on $AB$ to a Bell state is maximal in generating correlations, however it clearly transmits zero information from $A$ to $B$. Intermediate between these two extremal channels are \emph{separable} channels that are defined as a convex mixture of product channels, $\E_{AB} = \sum_i p_i \E_A^i \otimes \E_B^i$. These channels can only create classical correlations between $A$ and $B$, but it is clear they do not transfer any quantum information from $A$ to $B$.

However, connections between non-classical channel correlations and correlations within quantum states do exist. Specifically, the set of separable channels play a central role in the resource theory of  Local Operations and Shared Randomness (LOSR)~\cite{gutoski2008properties, de2014nonlocality, geller2014quantifying, gallego2017nonlocality, rosset2019characterizing, schmid2020type, schmid2020standard, wolfe2020quantifying,hsieh2020entanglement,gutoski2009properties}, for the study of non-classicality in quantum theory. It has recently been argued that this framework is the appropriate setting in which to properly analyse Bell non-locality and the self-testing of quantum states~\cite{schmid2020standard, wolfe2020quantifying}.  Therefore a non-separable quantum channel requires the consumption of state correlations, and an ability to efficiently and robustly certify non-separability in a general quantum channel $\E_{AB}$ implies the use of non-local quantum resources.

 More broadly, since process tomography is exponentially hard, one can ask what non-classical features of quantum channels \cite{gour2020dynamical, bauml2019resource} can be accessed in practice. We know that actual physical systems only probe a very small region of the set of all possible quantum states, dubbed the ``physical corner of Hilbert space'' \cite{poulin2011quantum, eisert2013entanglement}, and so a similar question for quantum channels can be addressed by drawing on recent developments in randomized benchmarking theory.
 
\subsection{Aims and outline of the paper}
Motivated by (a) benchmarking the performance of quantum computers at the level of sub-systems, and (b) certifying non-classicality in quantum physics, we have the following two aims in this work:
\begin{enumerate}
\item To quantify the degree to which a quantum channel deviates from being separable in a form that can be estimated efficiently and robustly.
\item To demonstrate an application of this approach by deriving an information-disturbance relation that can be efficiently and robustly verified.
\end{enumerate}

 Our work exploits recent techniques from randomized benchmarking theory \cite{wallman2015estimating,dirkse2019efficient, sundaresan2020reducing, kukulski2020generating, sundaresan2020reducing} that were originally introduced to provide additional information on the average gate infidelity $r(\E)$ for noise channels. The central quantity of interest is the \emph{unitarity} $u(\E)$ of a quantum channel $\E$. This is defined as
\begin{equation}\label{unitarity:haar-defn}
    u(\mathcal{E}) := \frac{d}{d-1} \int \dd \psi  \tr[\mathcal{E}(\psi - \frac{\ident}{d})^2],
\end{equation}
where the integration is with respect to the Haar measure over pure states of the $d$-dimensional input system \cite{wallman2015estimating}. The unitarity provides a measure how far a quantum channel is from being a unitary channel and crucially can be estimated in an efficient and SPAM-robust protocol. It attains its extremal values of $u(\E) = 0$ if and only if $\E$ is a completely depolarizing channel and $u(\E) = 1$ if and only if $\E$ is an isometry channel,  and can also be shown to provide a tighter bound on diamond norm measures for quantum channels. While measures like the diamond norm have clear operational significance, such as for single-shot channel discrimination, they are in general neither efficiently estimatable nor robust to SPAM-errors, in contrast to the unitarity. 

We shall show that the unitarity of a quantum channel is well-suited to aims (1) and (2) above, and suggests a route to analysing similar structural questions about bipartite quantum channels in a form that is amenable to efficient and SPAM-robust experiments. 

We will show that for a bipartite quantum system $AB$ the concept of unitarity naturally extends to a collection of 9 \emph{sub-unitarities} $u_{X\rightarrow Y} (\E)$ of a quantum channel $\E$ on $AB$. Each of these sub-unitarities gives finer information about how the channel acts on the quantum systems $A$ and $B$, and allow us to address both (1) and (2) above. However we find that only non-trivial combinations of sub-unitarities are estimatable in a SPAM-robust protocol, and so this forces us to develop methods to estimate channel correlations for aim (1).

Objective (1) turns out to be substantially more challenging than (2), and we begin in Section II with the problem of quantifying channel correlations. We first note that the unitarity of a channel can be reformulated as a variance estimate, which then motivates a correlation measure $u_c(\E_{AB})$ that parallels the covariance between two classical random variables. The construction of this correlation measure leads to the definition of the 9 sub-unitarities in Section II.A and II.D.

Then in Section II.B we show that the simplest sub-unitarities lead to a novel form of the information-disturbance relation given by,
\begin{equation}
u(\E_{X\rightarrow A}) + u(\E_{X\rightarrow B}) \le 1,
\end{equation}
for any quantum channel $\E_{X\rightarrow A}$ from an input system $X$ to an output system $A$ and an associated complementary channel $\E_{X\rightarrow B}$. In contrast to prior formulations the unitarity-based relation provides the ability to efficiently and robustly verify this fundamental relation.

In Section II.E we prove that the measure $u_c(\E_{AB})$ certifies non-classical features of a channel. More precisely, we prove that over the set of separable quantum channels (i.e.~convex mixtures of product channels) it is strictly bounded away from the global maximum, and thus provides a witness of non-separability for quantum channels.

Finally, in Section III we address the problem of efficiently estimating the correlated unitarity of effective noise channels in a benchmarking scenario. For this we follow a similar approach to simultaneous randomized benchmarking in which one employs local $2$-designs on each subsystem. This is of relevance for quantifying cross-talk errors in quantum devices. We show that for bipartite separable channels the correlated unitarity can be obtained efficiently in a SPAM-robust protocol. For more general non-separable channels, we show that for weak reset errors that this can still be estimated and within a natural model demonstrate explicitly that the protocols can witness non-separability over a substantial range of reset errors. We end by discussing the relation between our work and simultaneous randomized benchmarking and show that our protocols provide additional, independent information on cross-talk and correlative errors.

\section{Sub-unitarities for bipartite Quantum Channels}

We wish to formulate an experimentally accessible measure of correlations in a general bipartite quantum channel. Paralleling the situation with quantum states, we say that a quantum channel $\E_{AB}:\B(\H_A \otimes \H_B) \rightarrow \B(\H_A \otimes \H_B)$ from a bipartite system $AB$ to itself is \emph{uncorrelated} or alternatively a \emph{product} channel  if $\E_{AB} = \E_A \otimes \E_B$ for a channel $\E_A$ from $A$ to itself and $\E_B$ from $B$ to itself. Otherwise it is said to be a \emph{correlated} channel. We shall also consider the set of \emph{separable} channels, which take the form of a convex mixture of product channels $\E_{AB} = \sum_i p_i \E_A^i \otimes \E^i_B$. A quantum channel is said to be \emph{non-separable} if it lies outside the convex set of separable channels. The extension to channels from input systems $AB$ to potentially different output systems $A'B'$ is obvious, but to avoid over-complicating notation we primarily focus on identical input and output systems and only discuss the more general case in Section~\ref{info-dist}, where it is required. The general definition is provided in Appendix~\ref{notation}.

\subsection{Elementary sub-unitarities of a channel}

Given two classical random variables $X$ and $Y$ a simple and direct method of measuring correlations is to compute the covariance of $X$ and $Y$. This is given as $\mbox{cov}(X,Y) := \<XY\> - \<X\>\<Y\>$, where the angle brackets denote taking the expectation value of the random variable. Moreover, we have that $\mbox{cov}(X,X) = \mbox{var}(X)$, the variance of the random variable $X$, which in turn quantifies the noisiness of $X$. The relevance here is that in \cite{korzekwa2018coherifying} it was noted that the unitarity of a channel can be expressed as
\begin{equation}
u(\cal{E}) = \tr[\mathrm{var}(\cal{E})], 
\end{equation}
where $\mathrm{var}(\cal{E}) := \ev{\cal{E}(\psi)^2} - \ev{\cal{E}(\psi)}^2$ and the angle brackets denote taking the expectation of an operator-valued random variable with respect to the Haar measure. 

As the unitarity can be viewed as the ``variance'' of a quantum channel, we can ask if a form of \emph{covariance} for a quantum channel exists similar to the covariance of two random variables in classical statistics. However, while there is a clear notion of a marginal distribution for a joint probability distribution the situation is more complex for a bipartite quantum channel where the reduction to `marginal channels' depends on the structure of the initial state considered \cite{hsieh2021quantum}. Instead, here we take the basic form of covariance of two random variables as a guide and construct a unitarity-based correlation measure $u_c(\E_{AB})$ for a bipartite quantum channel with certain desirable features. 

 As we want the measure $u_c(\E_{AB})$ for quantum channels will function like $cov(X,Y)$ for classical random variables, we must define sensible channel equivalents to $\<X\>, \<Y\>$ and $\<XY\>$. In the context of RB protocols on bipartite quantum channels we shall show in Section \ref{spam-robust-protocol-section} that a natural marginal channel measure $u_{A\rightarrow A}$ emerges that parallels the classical marginal expectation $\<X\>$. This is given by the following \emph{sub-unitarity} $u_{A\rightarrow A}$ of a bipartite quantum channel.

\begin{definition}\label{defn:sub-unitarity1}
    The sub-unitarity $u_{A \rightarrow A}$ of a bipartite channel $\cal{E}_{AB}$ is defined as 
    \begin{equation}\label{eqn:sub-A}
        \begin{split}
            u_{A \rightarrow A}(\cal{E}_{AB}) &:= u(\cal{E}_A),
        \end{split}
    \end{equation}
    where $ \mathcal{E}_A (\rho) := \tr_B[\mathcal{E}_{AB}(\rho \otimes \frac{\ident_B}{d_B})]$ for any state $\rho$ of $A$.
\end{definition}
The same construction applies for the $B$ subsystem with the associated channel $\mathcal{E}_B(\rho) := \tr_A[\mathcal{E}(\frac{\ident_A}{d_A} \otimes \rho)]$ giving $u_{B \rightarrow B}(\cal{E}_{AB}) := u(\cal{E}_B)$. It is also clear that we can define two further sub-unitarities $u_{A\rightarrow B}$ and $u_{B \rightarrow A}$ that are obtained simply as
\begin{equation}
u_{A\rightarrow B} (\E_{AB}) = u_{A\rightarrow A} ( SW\!\!AP \circ \E_{AB}),
\end{equation}
and similarly for $u_{B\rightarrow A}$, where $SW\!\!AP$ is the unitary that swaps the two subsystems $A$ and $B$.

From these definitions it is clear that the sub-unitarity $u_{X \rightarrow Y} (\E_{AB})$, with $X,Y$ being subsystems, is based on the situation in which a quantum state $\rho$ is prepared on $X$ with the the maximally mixed state on the other subsystem and then evolved under the channel $\E_{AB}$. The quantity $u_{X \rightarrow Y} (\E_{AB})$ inherits the properties of unitarity and therefore measures how close this global evolution is to being an isometric mapping of the state $\rho$ on $X$ into the output system $Y$.

Moreover, the sub-unitarities $u_{A \rightarrow A}$ and $u_{B \rightarrow B}$ for the bipartite quantum channel have the property that when applied to product channels give
\begin{align}\label{sub-prod}
u_{A\rightarrow A} (\E_A \otimes \E_B) &= u(\E_A) \nonumber \\
u_{B\rightarrow B} (\E_A \otimes \E_B) &= u(\E_B).
\end{align}
These relations can therefore imply that we can define a \emph{correlated unitarity} $u_c(\E_{AB})$ measure as
\begin{equation}
u_c(\E_{AB}) := u_{AB \rightarrow AB} (\E_{AB}) - u_{A\rightarrow A}(\E_{AB})u_{B\rightarrow B}(\E_{AB})
\end{equation}
provided we can also construct a sub-unitarity $u_{AB \rightarrow AB}$ such that
\begin{equation}\label{AB-prod}
u_{AB\rightarrow AB} (\E_A \otimes \E_B) = u(\E_A) u(\E_B).
\end{equation}
The definition of $u_{AB\rightarrow AB}$ is most easily expressed in the Liouville representation, and is provided in Section~\ref{sec:sub-unitarities}, and the justification for the naturalness of these terms is provided in Section  \ref{spam-robust-protocol-section} where we will show that these arise naturally from randomized benchmarking theory. The technical reason for this is that they are the quantities that arise if one considers quadratic order expectations over Haar random states where one includes the bipartite structure explicitly. 

However, before addressing benchmarking theory, in the next sub-section we show how the above sub-unitarities lead to a statement of the information-disturbance relation that is amenable to experimental verification. 

\subsection{Unitarity formulation of information-disturbance}\label{info-dist}
The information-disturbance relation~\cite{kretschmann2008information} is a fundamental result in quantum theory and can be summarized as saying that if a quantum channel is close to being a unitary, or more generally an isometry, then the leakage of quantum information into the environment must be ``small''. This trade-off can be expressed in terms of the diamond norm distance of the channel from a unitary channel for the output system, and the diamond norm distance of the complementary channel from a completely depolarizing channel for the environment. However, such quantities can neither be estimated efficiently nor in a SPAM-robust form. In this section we provide an alternative formulation of information-disturbance that does not suffer from these weaknesses.

In the definition of sub-unitarities, we assumed that the input and output systems are identical, but the above definitions can be extended to a channel from arbitrary input and output systems. Of particular interest is when one has a channel from a single input system $X$ into a bipartite system $AB$. In this setting the sub-unitarities of the channel coincide with the unitarities of the marginal channels into $A$ and $B$ separately. For this setting we now show the following result on sub-unitarities that provides a statement of quantum incompatibility \cite{heinosaari2016invitation, kretschmann2008information}. To our knowledge the question of efficiently and SPAM-robustly testing such foundational results has not been previously considered, and so such a result opens up this possibility by formulating in terms of quantities native to randomized benchmarking protocols.

For clarity in this section we shall put subscripts on the channels to denote their input and output systems explicitly, and write $\E_{X\rightarrow Y}$ to denote a channel from $X$ into $Y$. In the context of a single input system, we have that
\begin{equation}
u_{X\rightarrow A} (\E_{X\rightarrow AB}) = u (\E_{X\rightarrow A}),
\end{equation}
with a similar expression for $u_{X\rightarrow B}$. Given the ability to estimate unitarity in randomized benchmarking protocols we therefore expect that our relation could also be verified efficiently and robustly using existing hardware. We now state and prove the unitarity-based information-disturbance relation.

\begin{theorem}[Information-Disturbance Relation]
   Let $\E_{X\rightarrow A}$ be a quantum channel from an input system $X$ to an output system $A$, and let $\E_{X\rightarrow AB}(\rho) = V\rho V^\dagger$ be any isometry, with $V^\dagger V = \I$, that provides a Stinespring dilation of $\E_{X\rightarrow A}$ via $ \E_{X\rightarrow A} = \tr_B \circ \, \E_{X\rightarrow AB}$. Then
   \begin{equation}
u(\E_{X\rightarrow A}) +u(\E_{X\rightarrow B}) \le 1,
\end{equation}
where $\E_{X\rightarrow B} = \tr_A \circ \, \E_{X\rightarrow AB}$ is the associated complementary channel to $\E_{X\rightarrow A}$ in the dilation.
\end{theorem}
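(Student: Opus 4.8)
My plan is to reduce the claim to the elementary bound $\tr[\rho^2]\le 1$ on the purity of a density operator, via two rewritings of the unitarity and one short second‑moment computation. First I would recast the unitarity of a channel $\mathcal{F}$ with input dimension $d$ as a genuine variance: since $\int \dd\psi\, \mathcal{F}(\psi) = \mathcal{F}(\ident/d)$, expanding the square in \eqref{unitarity:haar-defn} gives
\[
u(\mathcal{F}) = \frac{d}{d-1}\left(\int \dd\psi\, \tr[\mathcal{F}(\psi)^2] - \tr[\mathcal{F}(\ident/d)^2]\right),
\]
i.e.\ $u$ is $\tfrac{d}{d-1}$ times the mean output purity minus the purity of the mean output. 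Applying this to $\E_{X\rightarrow A}$ and to $\E_{X\rightarrow B}$ with $d = d_X := \dim \H_X$, and writing $\omega_Y := \E_{X\rightarrow Y}(\ident_X/d_X)$, the left‑hand side of the theorem becomes
\[
\frac{d_X}{d_X-1}\left(\int \dd\psi\, \bigl(\tr[\E_{X\rightarrow A}(\psi)^2] + \tr[\E_{X\rightarrow B}(\psi)^2]\bigr) - \tr[\omega_A^2] - \tr[\omega_B^2]\right).
\]

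Next I would use the dilation. For a pure input $\psi$, the operator $V\psi V^\dagger$ is a pure state on $\H_A\otimes\H_B$ whose two marginals are exactly $\E_{X\rightarrow A}(\psi)$ and $\E_{X\rightarrow B}(\psi)$; these have identical spectra, so $\tr[\E_{X\rightarrow A}(\psi)^2] = \tr[\E_{X\rightarrow B}(\psi)^2]$ for every $\psi$. Hence the two averaged output purities coincide; call their common value $p := \int \dd\psi\, \tr[\E_{X\rightarrow A}(\psi)^2]$, and observe that $p\le 1$ is just ``purity $\le 1$''. It then remains to show that the remaining term is also pinned down by $p$, namely $\tr[\omega_A^2] + \tr[\omega_B^2] = \tfrac{d_X+1}{d_X}\,p$: granting this, the prefactor $\tfrac{d_X}{d_X-1}$ cancels, leaving the clean identity $u(\E_{X\rightarrow A}) + u(\E_{X\rightarrow B}) = p \le 1$ (with equality iff $\E_{X\rightarrow A}$ is an isometry, matching the information–disturbance intuition).

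To establish that identity I would use the state $2$‑design moment $\int \dd\psi\, \psi^{\otimes 2} = (\ident + \mathbb{F}_{XX'})/(d_X(d_X+1))$, which yields $\int \dd\psi\, (V\psi V^\dagger)^{\otimes 2} = \bigl(\Pi^{\otimes 2} + V^{\otimes 2}\mathbb{F}_{XX'}(V^\dagger)^{\otimes 2}\bigr)/(d_X(d_X+1))$ with $\Pi := VV^\dagger$ the rank‑$d_X$ projector onto $\mathrm{range}(V)$. The key structural observation --- which I expect to be the crux --- is that $V^{\otimes 2}\mathbb{F}_{XX'}(V^\dagger)^{\otimes 2}$ is simply the swap operator restricted to $\mathrm{range}(V)\otimes\mathrm{range}(V)$, and therefore equals $\Pi^{\otimes 2}(\mathbb{F}_{AA'}\otimes\mathbb{F}_{BB'})$. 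Feeding this into the swap‑trick expression $p = \int \dd\psi\, \tr\bigl[(\tr_B V\psi V^\dagger)^2\bigr] = \tr\bigl[\bigl(\int \dd\psi\,(V\psi V^\dagger)^{\otimes 2}\bigr)(\mathbb{F}_{AA'}\otimes\ident_{BB'})\bigr]$ and using $\mathbb{F}_{AA'}^2 = \ident$, the two terms collapse to $\tr[(\tr_B\Pi)^2]$ and $\tr[(\tr_A\Pi)^2]$; the point is that the $A$‑side swap, applied to the design's swap term, drags a $B$‑side swap along with it and so reproduces the \emph{complementary} marginal. Since $\tr_B\Pi = d_X\omega_A$ and $\tr_A\Pi = d_X\omega_B$, this is exactly $p = \tfrac{d_X}{d_X+1}\bigl(\tr[\omega_A^2] + \tr[\omega_B^2]\bigr)$, closing the argument. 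Everything outside this computation --- the variance rewriting, the equal‑spectra fact for a channel and its complement on pure inputs, and $\tr[\rho^2]\le 1$ --- is routine.
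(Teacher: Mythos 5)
Your proof is correct, but it takes a genuinely different route from the paper's. The paper's own argument quotes two external results: a formula expressing $u(\E_{X\rightarrow A})$ in terms of the purities of the average outputs of the channel and of a complementary channel, and the purity inequality $\gamma(\rho_A)+\gamma(\rho_B)\le 1+\gamma(\rho_{AB})$; combining these with $\gamma(\rho_{AB})=1/d$ for the isometric dilation gives the bound. You instead work from first principles: the variance rewriting of $u$, the equal-spectra fact for the two marginals of the pure state $V\psi V^\dagger$, and a second-moment computation via the symmetric-subspace formula and the swap trick. The crux checks out: $(V^\dagger)^{\otimes 2}(\mathbb{F}_{AA'}\otimes\mathbb{F}_{BB'})V^{\otimes 2}=\mathbb{F}_{XX'}$ precisely because $V^\dagger V=\I$, so $V^{\otimes 2}\mathbb{F}_{XX'}(V^\dagger)^{\otimes 2}=\Pi^{\otimes 2}(\mathbb{F}_{AA'}\otimes\mathbb{F}_{BB'})$ as you claim, the $A$-side swap then cancels against $\mathbb{F}_{AA'}$ to leave $\tr[(\tr_A\Pi)^2]=d^2\tr[\omega_B^2]$, and you obtain $p=\tfrac{d}{d+1}(\tr[\omega_A^2]+\tr[\omega_B^2])$ and hence the exact identity $u(\E_{X\rightarrow A})+u(\E_{X\rightarrow B})=p$. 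What your approach buys is self-containedness and a slightly stronger conclusion --- an equality with the Haar-averaged output purity rather than only an upper bound --- while the paper's proof is shorter given its citations and makes the role of the global purity $\gamma(\rho_{AB})=1/d$ explicit; note also that your intermediate identity $\tr[\omega_A^2]+\tr[\omega_B^2]=\tfrac{d+1}{d}\,p$ is exactly the paper's step $u(\E_{X\rightarrow A})+u(\E_{X\rightarrow B})=\tfrac{d}{d+1}(\gamma(\rho_A)+\gamma(\rho_B))$ in disguise. One small correction to a parenthetical aside: equality holds iff $V\ket{\psi}$ is a product state for every $\psi$, which happens iff \emph{either} $\E_{X\rightarrow A}$ \emph{or} $\E_{X\rightarrow B}$ is an isometry (in the latter case $u(\E_{X\rightarrow A})=0$ and $u(\E_{X\rightarrow B})=1$), not only when $\E_{X\rightarrow A}$ itself is an isometry.
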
 
\begin{proof} Let $d$ be the dimension of the system $X$. It can be shown~\cite{cirstoiu2020robustness} that the unitarity of a channel can be expressed as
\begin{equation}
u(\E_{X\rightarrow A}) = \frac{d}{d^2-1} (d\tr [ \tilde{\E}_{X\rightarrow A} (\I/d)^2]  - \tr [ \E_{X\rightarrow A} (\I/d)^2])
\end{equation}
where $ \tilde{\E}_{X\rightarrow A}$ is any complementary channel to $\E_{X\rightarrow A}$, which we can choose to be $\E_{X\rightarrow B}$. Applying the above expression to the complementary pair $(\E_{X\rightarrow A}, \E_{X\rightarrow B})$ we then have that
\begin{equation}
u(\E_{X\rightarrow A}) +u(\E_{X\rightarrow B}) = \frac{d}{d+1} (\gamma(\rho_A) +\gamma(\rho_B)),
\end{equation}
where $\gamma(\rho) := \tr[\rho^2]$ is the purity of a quantum state, $\rho_A = \E_{X\rightarrow A}(\I/d)$, and $\rho_B = \E_{X\rightarrow B}(\I/d)$.

We can also consider $\rho_{AB} = \E_{X\rightarrow AB}(\I/d)$, for which $\rho_A$ and $\rho_B$ are the marginals. For a general bipartite quantum state $\rho_{AB}$ it can be shown \cite{man2014deformed} that
\begin{equation}
\gamma(\rho_A) + \gamma(\rho_B) \le 1 + \gamma(\rho_{AB}),
\end{equation}
and so we have that
\begin{equation}
u(\E_{X\rightarrow A}) +u(\E_{X\rightarrow B}) \le \frac{d}{d+1} (1 +\gamma(\rho_{AB})).
\end{equation}
However we can now use that the channel $\E_{X\rightarrow AB}$ is an isometry and so
\begin{align}
\gamma(\rho_{AB}) = \tr [ (V(\I/d) V^\dagger)^2] = \frac{1}{d}.
\end{align}
Substituting this into the previous inequality we obtain,
\begin{equation}
u(\E_{X\rightarrow A}) +u(\E_{X\rightarrow B}) \le 1,
\end{equation}
which completes the proof.
\end{proof}
The result provides a compact form of information-disturbance \cite{kretschmann2008information}, which in turn implies no-cloning and no-broadcasting~\cite{wootters1982single,nocloning2,nobroadcasting1,nobroadcasting2}. More precisely, we can consider leakage of quantum information from a system into its environment, which is of relevance to, for example, quantum computing in a noisy environment when one wishes to approximate a unitary channel as accurately as possible. We can consider a quantum channel $\E_{X\rightarrow AB}$ from a system $X$ to a composite system $AB$ such that $\E_{X\rightarrow A} \approx  \mathcal{U}_{X\rightarrow A}$, for some target isometry $\mathcal{U}_{X\rightarrow A}$. As the unitarity is a continuous function of the channel, we can quantify this as $u(\E_{X \rightarrow A}) = u(\mathcal{U}_{X \to A}) - \epsilon$ for some $\epsilon \ge 0$ quantifying the approximation. However the unitarity of a channel equals $1$ if and only if it is an isometry \cite{cirstoiu2020robustness,wallman2015estimating} and so the above monogamy relation implies that $u(\E_{X \rightarrow B}) \le \epsilon$. However it is easily shown (see Lemma \ref{lemma:unitarity-zero-iff-depolar} in the Appendices) that the unitarity vanishes if and only if the channel is a completely depolarizing channel. This in turn implies that the channel $\E_{X \rightarrow B}$ must be $\epsilon$-close in terms of unitarity to a completely depolarizing channel. In other words, the relation implies that the information leaking into the environment necessarily decreases to zero as the channel $\E_{X\rightarrow A}$ approaches an isometry channel. 

\subsection{The Liouville representation of quantum channels}
Consider quantum channels $\E$: $\cal{B}(\mathcal{H}_A) \to \cal{B}(\mathcal{H}_A)$, where $\mathcal{B}(\mathcal{H}_A)$ denotes the space of linear operators on the Hilbert space $\H_A$ for a $d$-dimensional quantum system $A$. We choose an orthonormal basis of operators $X_0, X_1 ,\dots ,X_{d^2-1}$ for $\B(\H_A)$ with $X_0 = \I/\sqrt{d}$ and with respect to the Hilbert Schmidt inner product  $\ev{X_\mu,X_\nu} := \tr [ X^\dagger_\mu X_\nu] = \delta_{\mu,\nu}$. In particular, this means that $X_1, \dots, X_{d^2-1}$ are all traceless operators.

We define vectorization of operators via  $|vec( |a\>\<b|) \> := |a\> \otimes |b\>$ for any computational basis states \cite{watrous2018theoryvec}. This definition can be extended by linearity to get the mapping $M \rightarrow |vec(M)\>$ for any operator $M \in \B(\H_A)$. Then for any quantum channel $\E: \B(\H_A) \rightarrow \B(\H_A)$ we define its Liouville representation $\mathcal{L}(\E)$ through the relation 
\begin{equation}\label{eqn:vec}
  \mathcal{L}(\E) | vec(M)\> = |vec(\E(M))\>, 
\end{equation}
for all $M$. To simplify things going forward, we shall adopt the notation that we denote all vectorized quantities in boldface (this is similar to how a vector is sometimes represented in boldface as $\mathbf{v}=(v_1,v_2, \dots , v_n)$), and so write $|\boldify{M}\> := |vec(M)\>$ and $\boldsymbol{\E}: = \mathcal{L}(\E)$. Using this boldface notation we can re-express equation (\ref{eqn:vec}) in the more compact form
\begin{equation}\label{bold+simple}
    \mathbfcal{E}\vket{\rho} = \vket{\cal{E}(\rho)},
\end{equation}
for any state $\rho$, and any channel $\E$. Using equation (\ref{bold+simple}) we can therefore decompose any channel in the orthonormal basis $\{X_\mu\}$ as
\begin{equation}
    \mathbfcal{E}=\sum_{\mu=0}^{d^2-1} \vketbra{\mathcal{E}(X_\mu)}{X_\mu},
\end{equation}

More explicitly, in terms of matrix components we have that
\begin{equation}
    \mathbfcal{E} =
        \bordermatrix{~ 
        & \vket{X_{0}} & \vket{X_{j}}   \cr
        \vbra{X_{0}} &  1 & \bm{0}  \cr
        \vbra{X_{i}}  & \mathbf{x} & T  \cr
        },
\end{equation}
where $\mathbfcal{E}_{00}=1$ and $\mathbfcal{E}_{0j}=\bm{0}$ follow from the fact that the channel is a completely positive trace-preserving operation. The $d^2-1$ component vector $\mathbf{x}$ corresponds to the generalized Bloch vector of $\cal{E}(\ident/d)$, which characterizes the degree to which the channel breaks unitality. The matrix block $T$ encodes the remaining features of the channel. In this notation, the unitarity of a channel is then given by the simple relation \cite{wallman2015estimating}
\begin{equation}\label{eqn:unitarity-T}
    u(\mathcal{E})=\frac{1}{d^2-1}\tr[T^\dagger T].
\end{equation}

This core form is the one we use to define sub-unitarities in the next subsection.

\subsection{Liouville decomposition of bipartite quantum channels and general sub-unitarities}\label{sec:sub-unitarities}
We can also compute Liouville representations of bipartite channels, $\E_{AB}$: $\cal{B}(\mathcal{H}_A \otimes \mathcal{H}_B) \to \cal{B}(\mathcal{H}_A \otimes \mathcal{H}_B)$, where we assume for simplicity that the input and output systems are identical. 

For subsystem $A$, we choose an orthonormal basis of operators $X_\mu = (X_0 =  \frac{1}{\sqrt{d_A}}\I_A, X_i)$, where $d_A$ is dimension of the subsystem $A$, and similarly for $B$  a basis $Y_\mu = (Y_0 = \frac{1}{\sqrt{d_B}}\I_B, Y_i)$. Together these provide a basis for the full system which is given in the Liouville representation as \footnote{Note that the basis $\vket{X\otimes Y}$ is  a tensor product basis for $(\mathcal{H}_A\otimes \mathcal{H}_A)\otimes (\mathcal{H}_B\otimes \mathcal{H}_B)$ and up to re-ordering of (second and third) Hilbert spaces the same as vectorization of the matrix $X\otimes Y$. As these basis are isomorphic, the Liouville representation will be invariant under such permutations. } 
\begin{equation}
\vket{X_{a}\otimes Y_b} := \vket{X_a}\otimes \vket{Y_b}.
\end{equation}
This in turn provides the following matrix decomposition of $\mathcal{E}_{AB}$,
\begin{equation}\label{eqn:liouville-bipartitechannel}
    \mathbfcal{E}_{AB} =
    \resizebox{.9\hsize}{!}{
        \bordermatrix{~ 
        & \vket{X_{0} \otimes Y_{0}} & \vket{X_{j_1} \otimes Y_{0}}  & \vket{X_{j_1} \otimes Y_{j_2}} & \vket{X_{0} \otimes Y_{j_2}}  \cr
        \vbra{X_{0} \otimes Y_{0}} &  1 & \bm{0} & \bm{0} & \bm{0}  \cr
        \vbra{X_{i_1} \otimes Y_{0}}  & \mathbf{x}_{A \rightarrow A} & T_{A \rightarrow A} & T_{AB \rightarrow A} & T_{B \rightarrow A} \cr
        \vbra{X_{i_1} \otimes Y_{i_2}}  & \mathbf{x}_{AB \rightarrow AB} & T_{A \rightarrow AB} & T_{AB \rightarrow AB} & T_{B \rightarrow AB} \cr
        \vbra{X_{0} \otimes Y_{i_2}}  & \mathbf{x}_{B \rightarrow B} & T_{A \rightarrow B} & T_{AB \rightarrow B} & T_{B \rightarrow B} \cr
        }
    }\nonumber
\end{equation}
where $i_1=\{1, 2, ... , (d_A^2 -1)\}$, $i_2=\{1, 2, ... , (d_B^2 -1)\}$ and similarly for $j$. Here we break the entire $T$ matrix of the channel up according the the subsystem contributions where, for example, the term $T_{AB \rightarrow B}$ denotes the mapping of joint degrees of freedom of the input system $AB$ into the $B$ output subsystem.

With this notation in place, we can now define the general sub-unitarities of the bipartite channel.
\begin{definition}\label{all-sub-unitarities}
    For any quantum channel $\E_{AB}$ on a bipartite quantum system $AB$ the sub-unitarity $u_{X \rightarrow Y}$ of the channel is defined as:
    \begin{equation}\label{eqn:all-sub-unitarities}
        u_{X \rightarrow Y}(\mathcal{E}_{AB}) := \alpha_X \displaystyle \tr[T_{X \rightarrow Y}^\dagger T_{X \rightarrow Y}],
    \end{equation}
   for any $X,Y \in \{A,B,AB\}$ and with $\alpha_A = 1/(d_A^2 - 1)$, $\alpha_B = 1/(d_B^2 - 1)$ and $\alpha_{AB} = \alpha_A\alpha_B$.
\end{definition}
This coincides with our previous definitions for the single subsystem sub-unitarities, and also provides the form for the remaining other choices. Also note that under a local change of bases on the input and output subsystems we have
\begin{equation}
\E_{AB} \rightarrow (\V_A \otimes \V_B) \circ \E_{AB} \circ (\U^\dagger_A \otimes \U^\dagger_B),
\end{equation}
for local unitary channels denoted with $\V$ and $\U$. These changes of bases transform the sub-matrices $T_{X\rightarrow Y}$ under multiplication by orthogonal matrices. For example
\begin{equation}
T_{A\rightarrow A} \rightarrow \mathcal{O}_1 T_{A\rightarrow A} \mathcal{O}_2 ^T,
\end{equation}
for orthogonal matrices $\mathcal{O}_1, \mathcal{O}_2$, with e.g. $\mathcal{O}_2$ arising from the $\U_A(X_i) = \sum_{m=1}^{d_A^2-1} \mathcal{O}_{2;i,m} X_m $ (see Appendix \ref{append:u_c-properties}). This implies that all the sub-unitarity terms are invariant under local changes of bases.

It is straightforward to show (see Appendix \ref{append-sub-unit-properties}) that these sub-unitarities relate to the total unitarity of the quantum channel $\E_{AB}$ as follows.
\begin{theorem}
    The unitarity of a bipartite channel $\cal{E}_{AB}$ is obtained from the weighted sum of its sub-unitarities:
    \begin{equation}
        u(\cal{E}_{AB}) = \frac{1}{d^2-1} \sum_{X,Y \in \{A, B, AB\}}   \frac{u_{X \rightarrow Y}(\cal{E}_{AB})}{\alpha_X},
    \end{equation}
    where $d=d_A d_B$ is the dimension of the total system.
    \label{thm:unitarityexpnsion}
\end{theorem}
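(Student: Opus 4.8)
The plan is to show that the full $T$-block of $\mathbfcal{E}_{AB}$ splits, entrywise, into exactly the nine rectangular sub-blocks $T_{X\rightarrow Y}$, so that the single-system identity \eqref{eqn:unitarity-T} applied to the combined system collapses to the stated weighted sum. First I would observe that the operators $\{X_a\otimes Y_b\}_{a,b}$ form an orthonormal basis of $\mathcal{B}(\mathcal{H}_A\otimes\mathcal{H}_B)$ for the Hilbert--Schmidt inner product (immediate from orthonormality of $\{X_a\}$ and $\{Y_b\}$ together with factorization of the trace over tensor factors), and that the distinguished element is $X_0\otimes Y_0 = \I_{AB}/\sqrt{d}$ with $d=d_A d_B$, so that the remaining basis elements are traceless. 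Hence this basis meets the conventions under which \eqref{eqn:unitarity-T} was established, giving $u(\mathcal{E}_{AB}) = \frac{1}{d^2-1}\tr[T^\dagger T]$, where $T$ is the $(d^2-1)\times(d^2-1)$ block of $\mathbfcal{E}_{AB}$ obtained by deleting the row and column labelled by $X_0\otimes Y_0$. One should note here, as in the footnote to \eqref{eqn:liouville-bipartitechannel}, that $\tr[T^\dagger T]$ (equivalently the unitarity) is insensitive to the harmless reordering of tensor factors relating $\vket{X_a\otimes Y_b}$ to the vectorization of $X_a\otimes Y_b$, since it is a basis-independent channel invariant.

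Next I would pin down the index set of $T$. Deleting $X_0\otimes Y_0$ leaves precisely the disjoint union of the three families $\{X_{i_1}\otimes Y_0\}$, $\{X_{i_1}\otimes Y_{i_2}\}$, $\{X_0\otimes Y_{i_2}\}$, of sizes $d_A^2-1$, $(d_A^2-1)(d_B^2-1)$, and $d_B^2-1$; these sum to $d_A^2 d_B^2 - 1 = d^2-1$, confirming the partition is exhaustive and non-overlapping. Reading off the block layout displayed in \eqref{eqn:liouville-bipartitechannel}, $T$ is partitioned into the nine blocks $T_{X\rightarrow Y}$, $X,Y\in\{A,B,AB\}$, with columns indexed by the input family $X$ and rows by the output family $Y$.

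Finally, since $\tr[T^\dagger T]=\sum_{k,l}|T_{kl}|^2$ is simply the sum of squared moduli of all entries of $T$, and those entries are partitioned among the blocks, we obtain $\tr[T^\dagger T] = \sum_{X,Y\in\{A,B,AB\}}\tr[T_{X\rightarrow Y}^\dagger T_{X\rightarrow Y}]$. Substituting $\tr[T_{X\rightarrow Y}^\dagger T_{X\rightarrow Y}] = u_{X\rightarrow Y}(\mathcal{E}_{AB})/\alpha_X$ from Definition~\ref{all-sub-unitarities} and dividing by $d^2-1$ yields the claimed expansion.

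I do not expect a serious obstacle. The only points requiring care are (i) checking that the combined operator basis genuinely satisfies the normalization/traceless conventions behind \eqref{eqn:unitarity-T}, so the single-system formula transfers verbatim with $d=d_A d_B$, and (ii) the bookkeeping that the nine sub-blocks tile $T$ without overlap, which is exactly the dimension count above. Everything else reduces to the elementary fact that the squared Frobenius norm of a block matrix equals the sum of the squared Frobenius norms of its blocks.
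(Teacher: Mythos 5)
Your proposal is correct and follows essentially the same route as the paper's proof in Appendix~\ref{append-sub-unit-properties}: the $T$-block of $\mathbfcal{E}_{AB}$ tiles exactly into the nine sub-blocks $T_{X\rightarrow Y}$, the Frobenius norm is additive over this partition, and the dimensional prefactors are rearranged via Definition~\ref{all-sub-unitarities}. Your version merely makes explicit the basis-convention check and the dimension count that the paper leaves implicit in the phrase ``block-matrix multiplication.''
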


We shall make use of this decomposition of unitarity for our benchmarking protocol to estimate the correlated unitarity. But before discussing the protocol, we first give core properties of this measure that demonstrate its usefulness for assessing the correlation structure of a given channel.

\subsection{Properties of the correlated unitarity for a bipartite quantum channel}
The correlated unitarity $u_c$ is given in terms of sub-unitarities as $u_c(\E) = u_{AB \rightarrow AB} (\E) - u_{A\rightarrow A}(\E) u_{B\rightarrow B}(\E)$, and we now address the core properties of this measure. The following result shows that it obeys natural conditions.

\begin{theorem} For any bipartite quantum channel $\E_{AB}$, we have $u_c(\E_{AB}) \le 1$, and is invariant under local unitary transformations on either the input or output systems. Moreover $u_c(\E_{AB})=0$ for product channels and $u_c(\E_{AB})=1$ when $\E_{AB}$ is the \textit{SWAP} channel modulo local unitary changes of bases. 
\end{theorem}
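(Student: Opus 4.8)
The plan is to treat the four assertions separately: local--unitary invariance, vanishing on products, and the value $1$ on the swap channel all follow by direct manipulation of the Liouville blocks $T_{X\to Y}$, while $u_c(\E_{AB})\le 1$ is the one statement that needs real work.

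Invariance is immediate from the transformation law recorded above: under $\E_{AB}\mapsto(\V_A\otimes\V_B)\circ\E_{AB}\circ(\U_A^\dagger\otimes\U_B^\dagger)$ each block obeys $T_{X\to Y}\mapsto\mathcal{O}_1T_{X\to Y}\mathcal{O}_2^{T}$ with $\mathcal{O}_1,\mathcal{O}_2$ orthogonal, so every $\tr[T_{X\to Y}^\dagger T_{X\to Y}]$, hence every sub-unitarity, and hence $u_c$, is unchanged. For a product channel I would work in the tensor basis, where $\langle X_c\otimes Y_d\vert\mathbfcal{E}_A\otimes\mathbfcal{E}_B\vert X_a\otimes Y_b\rangle=(\mathbfcal{E}_A)_{ca}(\mathbfcal{E}_B)_{db}$; restricting all indices to be nonzero gives $T_{AB\to AB}=T_A\otimes T_B$, so $\tr[T_{AB\to AB}^\dagger T_{AB\to AB}]=\tr[T_A^\dagger T_A]\tr[T_B^\dagger T_B]$ and $u_{AB\to AB}(\E_A\otimes\E_B)=u(\E_A)u(\E_B)$; together with \eqref{sub-prod} this yields $u_c=u(\E_A)u(\E_B)-u(\E_A)u(\E_B)=0$. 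For the swap channel --- which forces $d_A=d_B=:d_0$ with a common basis $X_\mu=Y_\mu$ --- I would compute $\mathcal{L}(\E_{AB})\vket{X_a\otimes X_b}=\vket{X_b\otimes X_a}$, so that for $a,b\neq 0$ the doubly-traceless basis vectors are merely permuted; hence $T_{AB\to AB}$ is a permutation matrix, $\tr[T_{AB\to AB}^\dagger T_{AB\to AB}]=(d_0^2-1)^2$, and $u_{AB\to AB}=\alpha_{AB}(d_0^2-1)^2=1$. Meanwhile the induced marginal $\E_A(\rho)=\tr_B[SW\!\!AP(\rho\otimes\tfrac{\I_B}{d_0})SW\!\!AP]=\tfrac{\I_A}{d_0}$ is completely depolarizing, so $u_{A\to A}=0$ by Lemma~\ref{lemma:unitarity-zero-iff-depolar}, and likewise $u_{B\to B}=0$; thus $u_c(SW\!\!AP)=1$, and by the invariance just established the same holds for every local-unitary dressing of it.

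The substantive claim is $u_c(\E_{AB})\le 1$. As $u_{A\to A},u_{B\to B}\ge0$ (being Hilbert--Schmidt norms), it suffices to prove $u_{AB\to AB}(\E_{AB})\le 1$. Let $\Pi_{AB}=(\mathrm{id}_A-E_A)\otimes(\mathrm{id}_B-E_B)$ be the orthogonal projection onto doubly-traceless operators, with $E_A(N)=\tfrac{\I_A}{d_A}\otimes\tr_A N$, and write $\delta_A:=\psi_A-\tfrac{\I_A}{d_A}$, $\delta_B:=\psi_B-\tfrac{\I_B}{d_B}$. A second-moment computation (using $\mathbb{E}_\psi[\tr[X_a\psi]\tr[X_{a'}\psi]]=\delta_{aa'}/(d(d+1))$ for traceless $X_a$) gives the representation
\begin{equation}\label{eq:uABAB-integral}
u_{AB\to AB}(\E_{AB})=\tfrac{d_Ad_B}{(d_A-1)(d_B-1)}\,\mathbb{E}\big[\|\Pi_{AB}\,\E_{AB}(\delta_A\otimes\delta_B)\|_2^2\big],
\end{equation}
the expectation being over independent Haar-random $\psi_A,\psi_B$. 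Because $\|\delta_A\otimes\delta_B\|_2^2=\tfrac{(d_A-1)(d_B-1)}{d_Ad_B}$ is constant, \eqref{eq:uABAB-integral} shows it suffices to check that the compression $\Pi_{AB}\circ\E_{AB}\circ\Pi_{AB}$ of the channel to the doubly-traceless subspace is Hilbert--Schmidt non-expansive. For unital $\E_{AB}$ this is free: $\E_{AB}^\dagger$ is then unital CPTP, Kadison--Schwarz gives $\|\E_{AB}^\dagger(N)\|_2\le\|N\|_2$, and taking adjoints (with $\Pi_{AB}$ an orthogonal projection) yields the bound. The non-unital case is the main obstacle. I would use the identity $\|\Pi_{AB}\E_{AB}(M)\|_2^2=\|\E_{AB}(M)\|_2^2-\tfrac1{d_A}\|\tr_A\E_{AB}(M)\|_2^2-\tfrac1{d_B}\|\tr_B\E_{AB}(M)\|_2^2$ (valid for traceless $M$), which exhibits the partial-trace terms removed by $\Pi_{AB}$ as precisely the quantities that must cancel the Hilbert--Schmidt ``stretching'' a non-unital channel can produce, and then finish by passing to a Stinespring dilation $\E_{AB}=\tr_E\circ\,\mathrm{Ad}_W$ and controlling the purities of the marginals of $WMW^\dagger$ (which is supported on a $d_Ad_B$-dimensional subspace), invoking subadditivity of purity, $\gamma(\rho_A)+\gamma(\rho_B)\le1+\gamma(\rho_{AB})$, just as in the information--disturbance theorem above. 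A possible alternative is to bound $\alpha_{AB}\|T_{AB\to AB}\|_F^2$ by the ``product-state unitarity'' $v(\E_{AB}):=\tfrac{d}{d-1}\mathbb{E}[\tr[(\E_{AB}(\psi_A\otimes\psi_B)-\E_{AB}(\I/d))^2]]$, which satisfies $v(\E_{AB})=\tfrac{d}{d-1}(\mathbb{E}[\tr[\rho^2]]-\tr[(\mathbb{E}\rho)^2])\le\tfrac{d}{d-1}(1-\tfrac1d)=1$; expanding $v$ in sub-unitarities shows it dominates $u_{AB\to AB}$ up to nonnegative terms, and the remaining task is to check that those discarded terms are small enough to keep the constant tight.
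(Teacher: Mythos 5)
Your arguments for local-unitary invariance, for $u_c=0$ on product channels, and for $u_c=1$ on the $SW\!\!AP$ channel are correct and essentially coincide with the paper's own proofs in Appendices~\ref{append-sub-unit--product-properties}, \ref{append:u_c-properties} and \ref{section:uc-for-swap-channel}: orthogonal conjugation of the $T$-blocks, the factorization $T_{AB\to AB}=T_A\otimes T_B$ for products, and the observation that $SW\!\!AP$ permutes the doubly-traceless basis while its induced marginal channels are completely depolarizing.

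The gap is in $u_c(\E_{AB})\le 1$. Your reduction to $u_{AB\to AB}(\E_{AB})\le 1$ and the integral representation are fine, but the sufficient condition you then aim for --- that the compression $\Pi_{AB}\circ\E_{AB}\circ\Pi_{AB}$ is Hilbert--Schmidt non-expansive --- is \emph{false} for non-unital channels, so no refinement of that line can close the argument. Concretely, for two qubits take the measure-and-prepare channel that measures in the computational basis and outputs $\frac{1}{\sqrt 2}(|00\rangle+|11\rangle)$ on outcomes $00,11$ and $\frac{1}{\sqrt 2}(|00\rangle-|11\rangle)$ on outcomes $01,10$; it maps the unit-norm doubly-traceless operator $\frac12\sigma_z\otimes\sigma_z$ to $|00\rangle\langle 11|+|11\rangle\langle 00|=\frac12(\sigma_x\otimes\sigma_x-\sigma_y\otimes\sigma_y)$, which is doubly traceless with Hilbert--Schmidt norm $\sqrt 2$. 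Only the averaged statement $\alpha_{AB}\tr[T_{AB\to AB}^\dagger T_{AB\to AB}]\le 1$ can be true, and your fallback via the product-state variance $v(\E_{AB})\le 1$ does not deliver it: expanding $v$ in sub-unitarities, the coefficient of $u_{AB\to AB}$ is $(d_A-1)(d_B-1)/(d_Ad_B-1)$, so discarding the remaining nonnegative terms only yields $u_{AB\to AB}\le (d_Ad_B-1)/((d_A-1)(d_B-1))$, which equals $3$ rather than $1$ for two qubits. In fairness, the appendices the paper cites for this theorem also contain no proof of the $\le 1$ bound; the main text treats $u_{X\to Y}\le 1$ as inherited from the unitarity, which is immediate only for the four sub-unitarities that are unitarities of genuine marginal channels, not for $u_{AB\to AB}$. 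You have therefore correctly isolated the one nontrivial inequality in the statement, but your proposal does not establish it for non-unital channels.
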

A proof of this can be found in Appendices \ref{append-sub-unit--product-properties} \& \ref{section:uc-for-swap-channel}. Therefore, under this measure the \textit{SWAP} channel is the farthest from being a product channel, which is consistent with the fact that it perfectly transfers quantum information from one subsystem to the other.
\begin{figure}[t!]
    \centering
    \includegraphics[width=8.6cm]{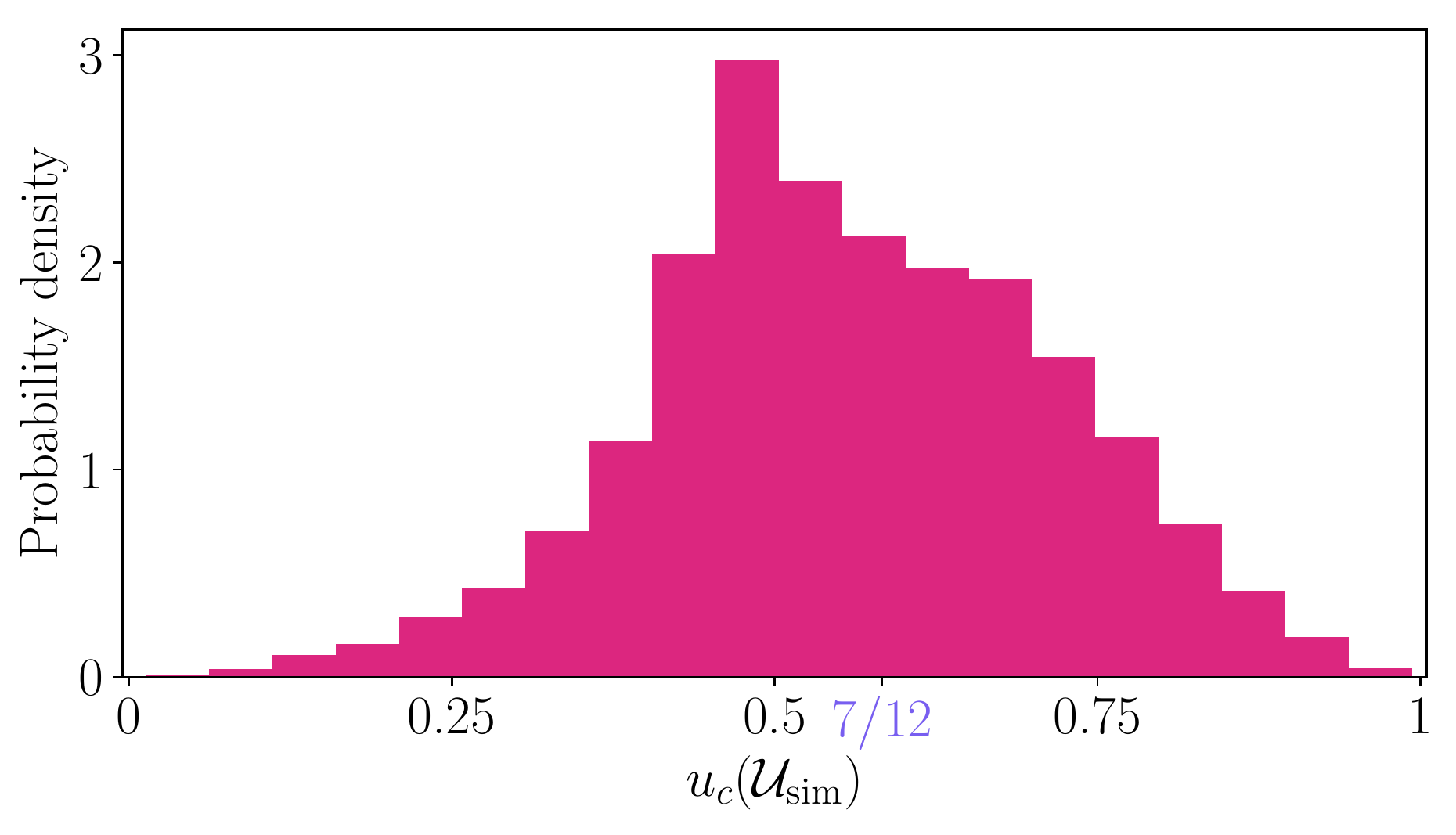}
    \caption{\textbf{Distribution of $u_c$ for $2$--qubit unitaries.} We plot the histogram of values of $u_c(\U_{\rm{sim}})$ for 20,000 random 2 qubit unitaries, $\U_{\rm{sim}}$. These correlated unitarities lie between $0$ and $1$, and take the value $u_c(\E_A \otimes \E_B)=0$ for product channels, and $u_c(SW\!AP)=1$ for the $SW\!AP$ channel. The value of $u_c$ is invariant under local unitary changes of basis. The upper bound for 2 qubit separable channels is $u_{\rm{c}}^{\rm{sep}}\leq 7/12$, and is also shown on the plot.  We sampled using the methods of \cite{bruzda2009random} and simulated using QuTip \cite{qutip2013}.}
    \label{fig:values-of-uc}
\end{figure}
However we can also consider intermediate regimes in which the bipartite channel is \emph{separable}, namely it can be written as
\begin{equation}
    \E_{AB} = \sum_k p_k \E_k \otimes \F_k,
\end{equation}
for some probability distribution $(p_k)$, and local channels $\E_k$ and $\F_k$ on $A$ and $B$ respectively. This class of channels are also known as Local Operations with Shared Randomness (LOSR) \cite{hsieh2020entanglement,gutoski2009properties}. The above definition generalizes that of separable states, and defines a convex subset of channels. A bipartite channel that is not separable is called \emph{non-separable}. It turns out that the correlated unitarity is strictly bounded over separable channels as the following establishes.
\begin{theorem}[Correlated unitarity is a witness of non-separability]\label{theorem:witness-bound}
    Given a bipartite quantum system $AB$ with subsystems $A$ \& $B$ of dimensions $d_A$ \& $d_B$ respectively, for a separable quantum channel $\E_{AB}$, we have that 
    \begin{equation}
        u_c(\E_{AB}) \le C(d_A, d_B) \leq \frac{17}{24} < 1,
    \end{equation}
    where
    \begin{equation}
        C(d_A , d_B) = \beta_A(1+\beta_B) (1-\frac{1}{\min(d_A^2,d_B^2)})   + \frac{1}{4}
    \end{equation}
    where $\beta_i = \frac{1}{d_i^2 -1}$ for $d_i = 2$ or $\beta_i = \frac{d_i}{d_i^2 -1}$ otherwise.
\end{theorem}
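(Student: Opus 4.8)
The plan is to exploit convexity of $u_{AB\to AB}$ and concavity-type behaviour of the product term $u_{A\to A}u_{B\to B}$ under convex mixtures, reducing the bound over all separable channels to a bound that can be evaluated on the ``corners'', i.e.\ on product channels $\E_k\otimes\F_k$, together with a controlled cross-term correction. First I would write $\E_{AB}=\sum_k p_k\,\E_k\otimes\F_k$ and express each relevant sub-unitarity in the Liouville picture. The key observation is that the $T$-block of $\E_{AB}$ is the convex combination $T=\sum_k p_k\,T^{(k)}$ of the $T$-blocks of the product channels, and likewise for each sub-block $T_{X\to Y}$; since $u_{X\to Y}$ is (up to the constant $\alpha_X$) the squared Hilbert--Schmidt norm $\tr[T_{X\to Y}^\dagger T_{X\to Y}]$, it is a convex quadratic function of the channel. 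For a product channel $\E_k\otimes\F_k$ one has the clean factorizations $T_{AB\to AB}^{(k)}=T_A^{(k)}\otimes T_B^{(k)}$, $T_{A\to A}^{(k)}=T_A^{(k)}\otimes(\text{scalar from }\F_k\text{ on }\ident_B)$, etc., so $u_{AB\to AB}(\E_k\otimes\F_k)=u(\E_k)u(\F_k)$ and the single-system sub-unitarities reduce to $u(\E_k)\cdot(\text{something}\le 1)$ and symmetrically.

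Next I would carry out the expansion $u_c(\E_{AB})=u_{AB\to AB}-u_{A\to A}u_{B\to B}$ term by term over the mixture. Using convexity, $u_{AB\to AB}(\E_{AB})\le \sum_k p_k\,u(\E_k)u(\F_k)$ is false in general (convexity gives $\ge$ is also false; quadratic forms give a Gram-matrix identity), so more carefully: $\tr[T_{AB\to AB}^\dagger T_{AB\to AB}]=\sum_{k,l}p_kp_l\,\tr[(T_{AB\to AB}^{(k)})^\dagger T_{AB\to AB}^{(l)}]$, and each cross term factorizes as $\tr[(T_A^{(k)})^\dagger T_A^{(l)}]\,\tr[(T_B^{(k)})^\dagger T_B^{(l)}]$. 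The same bilinear expansion applies to $u_{A\to A}u_{B\to B}$, and the two double sums partially cancel. What survives is controlled by Cauchy--Schwarz on the matrix inner products $\tr[(T_A^{(k)})^\dagger T_A^{(l)}]\le \|T_A^{(k)}\|\,\|T_A^{(l)}\|$ together with the operator-norm bounds coming from complete positivity and trace preservation — specifically that the unitality-defect block and the $T$-blocks of any channel have bounded singular values, which is where the dimension-dependent factors $\beta_i$ and the $(1-1/\min(d_A^2,d_B^2))$ term enter. I would keep careful track of the $d_A=2$ versus $d_A>2$ cases since the available bound on $\tr[T_A^\dagger T_A]$ for a single-qubit channel is tighter (this is the origin of the two cases in the definition of $\beta_i$).

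The main obstacle I anticipate is the sharp estimation of the residual cross-terms after the cancellation between $u_{AB\to AB}$ and $u_{A\to A}u_{B\to B}$: a naive triangle-inequality bound will give something weaker than $17/24$, so one needs to organize the double sum over $k,l$ into a positive-semidefinite quadratic form in the variables $\{\sqrt{p_k}\,T_A^{(k)}\}$ and $\{\sqrt{p_k}\,T_B^{(k)}\}$ and then optimize. Concretely I expect the bound $u_c\le C(d_A,d_B)$ to follow from maximizing $\langle a,b\rangle^2 - \langle a,a\rangle\langle b,b\rangle$-type expressions — wait, that is $\le 0$ — so instead the structure must be $\sum_{k,l}\sqrt{p_kp_l}\,(\mu_{kl}\nu_{kl}-\bar\mu_k\bar\nu_l)$ with $\mu_{kl}=\langle \hat T_A^{(k)},\hat T_A^{(l)}\rangle$ and appropriate normalizations, bounded using that Gram matrices are PSD with diagonal entries at most the relevant per-channel unitarity bound. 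The final constant $\tfrac{17}{24}$ should emerge by plugging the worst admissible dimensions (the bound $C(d_A,d_B)$ is increasing toward its supremum as dimensions grow, with the $\tfrac14$ additive piece coming from the $\bar\mu_k\bar\nu_l$ ``mean'' contribution and $\beta_A(1+\beta_B)\le \tfrac{1}{3}\cdot\tfrac{4}{3}$ type estimates in low dimension), so verifying that $C(d_A,d_B)\le \tfrac{17}{24}$ uniformly is a finite monotonicity check once the functional form is established. I would defer all of these singular-value bookkeeping computations to the appendix and present only the convex-reduction skeleton and the Gram-matrix/Cauchy--Schwarz step in the main argument.
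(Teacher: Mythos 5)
Your skeleton matches the paper's proof: the paper likewise expands $u_c$ of a separable channel as a double sum over the decomposition, using $T_{AB\to AB}=\sum_k p_k\, T_A^{(k)}\otimes T_B^{(k)}$ so that every cross term factorizes into $\langle T_A^{(k)},T_A^{(l)}\rangle\langle T_B^{(k)},T_B^{(l)}\rangle$, separates diagonal from off-diagonal contributions, and obtains the additive $\tfrac{1}{4}$ exactly as you predict, from the ``mean'' piece via $A(1-B)\le B(1-B)\le\tfrac{1}{4}$. However, there is a genuine gap at the one step that makes the theorem non-trivial: you propose to control the off-diagonal inner products $\langle T_A^{(k)},T_A^{(l)}\rangle$ by Cauchy--Schwarz and positive-semidefiniteness of the Gram matrix. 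Those tools only give the symmetric bound $|t_{kl}|\le\sqrt{t_{kk}t_{ll}}\le 1$ (in the normalization $t_{kl}=\alpha_A\langle T_A^{(k)},T_A^{(l)}\rangle$), and with $t_{kl}\ge -1$ the dangerous configuration --- strongly negative $t_{kl}$ and $s_{kl}$ whose products are large and positive while $A$ and $B$ stay small --- is not excluded; running the bookkeeping with $\beta_A=\beta_B=1$ yields a ``bound'' exceeding $1$, i.e.\ nothing. The paper's essential ingredient is an \emph{asymmetric} lower bound $\langle T_1,T_2\rangle\ge -d$ for any two channels (sharpened to $\ge -1$ when one channel is unital or both are qubit channels), proved not by norm estimates but from positivity of $\tr[\mathcal{J}(\E_1)^\dagger\mathcal{J}(\E_2)]$ for the Choi states, which forces $\sum_\mu\langle\E_1(X_\mu),\E_2(X_\mu)\rangle\ge 0$ and hence $\langle T_1,T_2\rangle\ge-\langle\E_1(X_0),\E_2(X_0)\rangle$. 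After normalization this gives $t_{kl}\ge-\beta_A$ with $\beta_A<1$, and it is precisely the gap between the upper bound $1$ and the lower bound $-\beta_A$ that produces a constant strictly below $1$.

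Two smaller corrections: the factor $(1-1/\min(d_A^2,d_B^2))$ does not come from singular-value bookkeeping but from $1-\sum_k p_k^2$ together with $\sum_k p_k^2\ge 1/r$ and a bound on the number of terms $r$ in the separable decomposition; and the case split in $\beta_i$ at $d_i=2$ reflects that the sharper lower bound $\langle T_1,T_2\rangle\ge -1$ is only established for qubit or unital channels, not a tighter per-qubit bound on $\tr[T^\dagger T]$.
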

The proof of this bound is non-trivial, and we provide it in Appendix \ref{section:witness-of-non-sep}. This bound is not tight in general, and we provide sharper bounds in terms of the subsystem dimensions. The $d_A=d_B=3$ qutrit case provides the upper bound in $C(d_A , d_B)$ and could be improved, albeit via a non-trivial analysis of qutrit channels.

The consequence of the result is that if the correlated unitarity can be efficiently estimated, then obtaining values above the upper bound witnesses non-separability in the channel and so provides a practical way to certify quantum information transfer between $A$ and $B$.

The bound also relates to recent work on entanglement theory. Due to the limitations of the typical LOCC set of free channels when it comes to analysing Bell non-locality \cite{schmid2020standard}, it has been argued that LOSR channels provide a more sensible set. However, as LOSR channels are precisely the set of separable channels then any violation of the bound in Theorem \ref{theorem:witness-bound} implies the consumption of a resource state with respect to LOSR.


It is straightforward to compute $u_c$ for a range of channels. For example, consider the channel
\begin{equation}
    \E_{AB} = \sum_k p_k \U_k \otimes \V_k,
\end{equation}
where $\{\U_i\}_{i=1}^{d_A^2}$ and $\{\V_j\}_{j=1}^{d_B^2}$ are local unitary error bases \cite{knill1996non} on $A$ and $B$ respectively, namely unitaries on each subsystem that also form an orthonormal basis with respect to the Hilbert-Schmidt inner product. For this channel, $u_c(\E_{AB})$ then takes the form
\begin{equation}
    u_c(\E_{AB}) = \sum_k p_k^2 - (\sum_k p_k^2)^2.
\end{equation}

Further insight into $u_c(\E)$ can be obtained by formulating it in terms of two-point correlation measures. Suppose we have local observables $O_A$ and $O_B$ for system $A$ and $B$ respectively. We define the following correlation function
\begin{equation}
    \begin{split}
        F_{O_A, O_B}(\E,\psi_{AB}) := |\< O_A & \otimes O_B\>_{\E_{AB}(\psi_{AB})}|^2 \\
        & - |\<O_A\>_{\E_A(\psi_A)} |^2 |\<O_B\>_{\E_A(\psi_B)}|^2
    \end{split}
	\label{eqn:correlationfunction}
\end{equation}
where the channels $\E_A$ and $\E_B$  are local channels on $A$ respectively $B$ defined in Defn. (\ref{defn:sub-unitarity1}) and the input states $\psi_A$ and $\psi_B$ are marginals of $\psi_{AB}.$

The correlation function above becomes related to the covariance of classical random variables when considering classical states embedded in a quantum system
\begin{equation}
    \begin{split}
        F_{O_A, O_B}(id, \rho_{AB}) = \mathrm{cov}(O_A, O_B) [\<&O_A\otimes O_B\>_{\rho_{AB}} \\
        & + \<O_A\>_{\rho_A} \<O_B\>_{\rho_B} ],
    \end{split}
\end{equation}
 where $\rho_{AB} = \sum_{x,y} p(x,y) |x\>|y\>\<x|\<y|$ for $|x\>, |y\>$ computational basis states that diagonalize the hermitian operators $O_A$ and $O_B$ and $p(x,y)$ is a joint probability distribution with marginals $p(x)$ and $p(y)$.  In this case $\mathrm{cov}(O_A,O_B) =  \<O_A\otimes O_B\>_{\rho_{AB}} - \<O_A\>_{\rho_A} \<O_B\>_{\rho_B} $ and matches the covariance of classical random variables $X,Y$.

Then the correlated unitarity can be expressed as
\begin{equation}
	u_c(\E) = \alpha_{AB} \, d_{AB}^2\sum_{i,j,k,k'}  F_{P_i, P_j} (\E, \psi_{k,k'})
	\label{eqn:corrunitarityfunc}
\end{equation}
where $P_i$ \ are the traceless Pauli operators on each subsystem, and $\psi_{k,k'} = \frac{\ident_{AB} + P_k \otimes P_{k'}}{d_{AB}}$.

Overall, the correlated unitarity amounts to a working notion of correlation in a bipartite quantum channel, and we do not delve any further into its theoretical properties. In Appendix \ref{append:u-c-compared-to-norm} we also compare $u_c(\E_{AB})$ to a norm measure of correlation. While norm-based measures are mathematically more natural, our aim is to connect to benchmarking protocols, and so ultimately the utility of this measure should be judged by how useful it is in practice. We find that sub-unitarities arise very naturally in benchmarking protocols.

\section{Estimation of correlated unitarity via benchmarking protocols}\label{spam-robust-protocol-section}
In the previous section we developed a collection of tools, based around unitarity, to address sub-system features of a quantum channel. The introduction of sub-unitarities and the correlated unitarity allow us to quantify structures specific to bipartite quantum channels in a simple and direct manner. We now turn to the question of how such quantities may be estimated in practice in a protocol that is both efficient in the number of operations required and robust against SPAM errors. 

These quantities are generalizations of the unitarity, which can be efficiently estimated in benchmarking protocols, and it turns out similar methods work for sub-unitarities, however some complications do arise as we shall discuss. 

\subsection{Randomized Benchmarking Protocols}
The certification of quantum devices is a fundamental problem of quantum technologies, so as to verify that a physical device is actually performing with a sufficiently high fidelity. In the context of quantum computing it is desirable to provide a greater abstraction from the underlying physical implementation and talk of benchmarking a logical gate-set $\Gamma = \{ \U_1, \U_2, \dots,  \U_n\}$ of target unitary gates. 

The worst-case error rate is given by the diamond norm \cite{watrous2018theoryvec} distances $||\tilde{\U}_i - \U_i||_\diamond$, which is the relevant physical parameter for the fault tolerance theorem \cite{kitaev1997quantum}. However the diamond norm is a difficult quantity to measure, and so one must instead consider weaker measures, such as the average gate infidelity, given by
\begin{equation}
    r(\E) := 1 - \int d \psi \< \psi |\E (|\psi\>\<\psi|) |\psi\>,
\end{equation}
measuring the Haar-average deviation from the identity channel of a given channel $\E$. The average gate infidelity then provides bounds on the diamond distance of the form shown in equation (\ref{eqn:r-diamond-bound}).
The problem with this route is that the bounds cannot be tightened, and for $\E$ corresponding to a non-Pauli error there is a weak link between $r(\E)$ and the diamond norm \cite{wallman2014randomized, carignan2019bounding,kueng2016comparing}.

Randomized benchmarking techniques can be used to estimate $r(\E)$ and circumvent the exponential complexity of tomography, and the unavoidable SPAM errors. The core components of a randomized benchmarking protocol generally involves the noisy preparation of some initial quantum state $\rho$, which is then subject to a number $k$ of physical gates $\tilde{\U}_i$ that approximate target unitaries $\U_i \in \Gamma$, before a final imperfect measurement is performed for some binary outcome measurement $\{M, \I-M\}$. If the gates applied correspond to a (noisy) $2$-design, such as $\Gamma$ being the Clifford group, then it can be shown that \cite{kliesch2020theory} the resulting statistics are exponentially decreasing in $k$, namely $\mathbb{E}[m(k)]= c_1 +c_2 \lambda^k$, for constants $c_1$ and $c_2$ that contain the state preparation and measurement details. The decay constant $\lambda$ is then a measure of the noisiness of the physical gate-set $\tilde{\Gamma} := \{\tilde{\U}_i\}$ employed.

In the simplified model of \emph{gate-independent} noise, in which each channel can be decomposed as $\tilde{\U}_i = \E\circ \U_i$ for some $\E$ that is independent of $i$, then it can be shown that $\lambda \propto 1- r(\E)$, where $r(\E)$ is the average gate infidelity of the noise channel $\E$. In the more realistic case of \emph{gate-dependent} noise the relationship between the decay parameter $\lambda$ and the physics of the set $\tilde{\Gamma}$ is subtle, due to gauge degrees of freedom in the representation of the physical components \cite{proctor2017randomized}. However, despite these details the decay parameter can still be related to the physical gate-set and essentially corresponds to the average gate-set infidelity \cite{wallman2018randomized}.

At a more abstract level, a randomized benchmarking scheme admits a compact description in terms of convolutions of channels $\tilde{\U}_i$ with respect to the Clifford group \cite{merkel2018randomized}. The decay law is then viewed in a Fourier-transformed basis where the channel compositions become matrix multiplication over different irreps \cite{helsen2020general}. The resultant protocol then provides a benchmark for the degree to which the physically realized channels $\{\tilde{\U}_i\}$ form an approximate representation of the Clifford group \cite{gowers2017inverse, francca2018approximate}.

In the next section we expand on the components of the benchmarking scheme for the case of unitarity benchmarking.
\subsection{Unitary 2-designs \& Unitarity Benchmarking Protocols}
We now provide an outline of how the unitarity of a quantum channel can be estimated in a benchmarking protocol.

Recall that by $\mathbfcal{U}$ we denote the Liouville representation of a unitary channel $\U(X) = UXU^\dagger$, and therefore it takes the explicit form, $\mathbfcal{U} = U \otimes U^*$. A probability measure $\mu$ over the set of unitaries $U(d)$ is called a \textit{unitary 2-design} if we have that
\begin{equation}
    \int \!\!d \mu(U)\, \mathbfcal{U}^{\otimes 2} = \int \!\!d \mu_{\mbox{\tiny Haar}}(U)\, \mathbfcal{U}^{\otimes 2},
\end{equation}
where $\mu_{\mbox{\tiny Haar}}$ is the Haar measure over the group $U(d)$. In practice we are interested in unitary 2-designs which are finite, discrete distributions of unitaries. In particular the uniform distribution over the Clifford group $\mathcal{C}$ of unitaries is a $2$-design (in fact it is a $3$-design \cite{zhu2017multiqubit}), and therefore
\begin{equation}\label{unit2design}
    \frac{1}{|\mathcal{C}|} \sum_{U\in\mathcal{C}} \mathbfcal{U}^{\otimes 2} = \int \!\!d \mu_{\mbox{\tiny Haar}}(U)\, \mathbfcal{U}^{\otimes 2} =: P
\end{equation}
where $|\mathcal{C}|$ is the number of elements in the Clifford group and we denote the resultant operator by $P$. This operator acts on the vectorized form of $\B(\H)\otimes \B(\H)$, and using Schur-Weyl duality, it can be shown that $P$ is the projector onto the subspace
\begin{equation}\label{subspace}
    S:=\rm{span} \{ \vket{\I^{\otimes 2}}, \vket{\mathbb{F}}\},
\end{equation}
where $\mathbb{F}$ is the unitary that transposes vectors in the two subsystems, $ |\phi_1\> \otimes |\phi_2\> \rightarrow |\phi_2 \> \otimes |\phi_1\>$.

We can define an effective noise channel $\E$ via $\E := \U^\dagger \circ \tilde{\U}$, and moreover in what follows we shall assume for simplicity that each gate $U\in \Gamma$ is subject to the same effective noise channel (but again this assumption can be weakened and gate-dependent noise can be assessed via interleaved benchmarking \cite{magesan2012efficient}).

The unitarity of this noise channel can then be estimated in the following way. We prepare a quantum state $\rho$ of the system and choose the Clifford group as the gate-set. We now define
\begin{equation}
    \U_s := \U_{(s_1,s_2,\dots, s_k)} := \U_{s_1} \circ  \U_{s_2} \circ \cdots \circ  \U_{s_k},
\end{equation}
where $ \U_{s_i} \in \Gamma$ for all $i$, and $s_i$ labels the particular choice of unitary in the gate-set. We also denote by $\tilde{\U}_s$ the corresponding noisy implementation of the above sequence $s=(s_1, s_2, \dots, s_k)$ of unitaries. For any sequence $s$ and some hermitian observable $M$ we estimate the quantity
\begin{equation}\label{eqn:single-circuit}
    m(s) := \tr [ M \tilde{\U}_s (\rho) ],
\end{equation}
and then by randomly sampling over the Clifford group for each step in the sequence estimate $\mathbb{E}_s [ m(s)^2] := \frac{1}{|\Gamma|^k} \sum_{s} m(s)^2$. By exploiting the fact that the Clifford group is a $2$-design, and specifically equations (\ref{unit2design}) and (\ref{subspace}), it was shown in ~\cite{wallman2015estimating} that
\begin{equation}\label{fitting_fn_eqn_2}
    \mathbb{E}_s [ m(s)^2] = c_1 + c_2 u(\E)^{k-1},
\end{equation}
for constants $c_1$ and $c_2$ that contain any errors due to state-preparation or measurement. Therefore, by repeating this estimation for sequences of varying length we may extract an estimation of $u(\E)$ as a decay constant for the quantity in an efficient and SPAM-robust manner.

\subsection{Estimation of channel sub-unitarities via local \& global twirls}

The unitarity arose from considering a global twirl using a $2$-design, it turns out that the sub-unitarities arise in a similar fashion, but now by considering local twirls for a bipartite quantum system. Specifically, we now have a bipartite quantum system $AB$ with local gate-sets $\Gamma_A$ and $\Gamma_B$, which we assume are $2$-designs, and a global gate-set $\Gamma_{AB}$. Then, we may consider the independent twirls
\begin{equation}\label{eqn:twirl-each-subsystem}
    \frac{1}{|\Gamma_A| |\Gamma_B|} \sum_{U_A \in \Gamma_A, U_B \in \Gamma_B}  \mathbfcal{U}_A^{\otimes 2} \otimes  \mathbfcal{U}_B^{\otimes 2}=P_A \otimes P_B,
\end{equation}
where we now have local projections of channels at $A$ and $B$ onto subspaces $S_A$ and $S_B$, where
\begin{equation}
    S_A = \rm{span} \{ \vket{\I_A\otimes \I_{A'}}, \vket{\mathbb{F}_{AA'}}\},
\end{equation}
where $A'$ is isomorphic to $A$, and we have a similar expression for $S_B$.

In the context of benchmarking we have the problem of determining the \emph{addressability} of qubits and the existence of \emph{crosstalk} between qubits. For example, we want to implement some target unitary $\U_i \otimes id$ on one qubit, while leaving all others unaffected. However, in reality the physical channel performed $\tilde{\U}_i$ will involve an effective noise channel $\E$ that does not factorize neatly with noise only on the target qubit. Instead, the noise channel will act non-trivially on each subsystem of the bipartite split and could involve correlations that include the leakage of quantum information.

In what follows we again consider the averaged noise channel over the gate-set, and so at the simplest level of analysis assume that we have gate-independent noise. A more general analysis involving gate-dependent noise should be possible by following perturbative approaches such as in \cite{wallman2018randomized} and by making use of interleaved benchmarking \cite{magesan2012efficient}. We also note that the channel under consideration need not be a noise channel in such a scheme, but could be a target channel on which we wish to do robust tomography. For this context it would be possible to exploit recent methods that make use of randomized benchmarking to do tomography of quantum channels such as in \cite{kimmel2014robust}. We leave this kind of analysis for later investigation.

Under this average noise model assumption, we now perform a unitarity benchmarking scheme by randomly sampling from $\Gamma_A \otimes \Gamma_B$ and obtain a circuit of depth $k$, with sequence indexed via $s = (s_A, s_B)$ with $s_A = (a_1, a_2, \dots, a_k)$ and $s_B = (b_1, b_2, \dots, b_k)$ specifying the particular target unitary in the local gate-sets. As before, we estimate the quantity $m(s) := \tr[ M \tilde{\U}_s (\rho)]$ and also $\mathbb{E}_s [ m(s)^2]$ for circuits of depth $k$. However, for these local twirls, this quantity now has a different decay profile. As we show in Appendix \ref{appendix-for-pAB-protocol} this quantity behaves as
\begin{equation}\label{triple-decay}
    \mathbb{E}_s [ m(s)^2] = c_{00} + c_{01} \lambda_1^{k-1} +c_{10} \lambda_2^{k-1}+c_{11} \lambda_3^{k-1},
\end{equation}
where $(\lambda_1, \lambda_2, \lambda_3)$ are the singular values \footnote{This assumes a non-degenerate form of a Jordan matrix decomposition. Degenerate cases give rise to similar expressions. See Appendix \ref{section:jordan-decomp-pab} for details.} of the matrix of sub-unitarities
\begin{equation}
    \S =
    \resizebox{.8\hsize}{!}{$
    \begin{pmatrix}
        u_{A \to A}(\E) & \frac{1}{\sqrt{\alpha_B}} u_{AB \to A}(\E) & \sqrt{\frac{\alpha_A}{\alpha_B}} u_{B \to A}(\E) \cr
        \sqrt{\alpha_B} u_{A \to AB}(\E) & u_{AB \to AB}(\E) & \sqrt{\alpha_A} u_{B \to AB}(\E) \cr
        \sqrt{\frac{\alpha_B}{\alpha_A}} u_{A \to B}(\E) & \frac{1}{\sqrt{\alpha_A}} u_{AB \to B}(\E) & u_{B \to B}(\E) \cr
    \end{pmatrix}
    $},
\end{equation}
with $\alpha_X = \frac{1}{d_X^2 -1}$, and the constants $c_{00}, \dots, c_{11}$ contain the SPAM-errors. Therefore, the sub-unitarities arise in the context of this benchmarking, albeit in a more non-trivial form to the global protocol. For example, we have that
\begin{equation}
    \tr(\S)=\sum_i \lambda_i = u_{A \to A} (\E)+u_{AB \to AB} (\E)+u_{B \to B} (\E),
\end{equation}
with similar relations existing for the other coefficients of the characteristic polynomial of $\S$ \cite{horn2012matrix}. Note that $\sum_i \lambda_i =3$ if and only if $\E$ is a product of unitaries, and so this sum of eigenvalues gives a blunt handle on how much $\E$ deviates from this regime.

By estimating the decay constants in equation (\ref{triple-decay}) it is possible to obtain an estimate of channel correlations that coincides with the correlated unitarity for a family of channels. It is easily checked that for a product noise channel $\E = \E_A \otimes \E_B$ we have the matrix of sub-unitarities given by
\begin{equation}\label{product-a-matrix}
    \S =
    \resizebox{.8\hsize}{!}{$
    \begin{pmatrix}
        u(\E_A) & 0 & 0 \cr
        \sqrt{\alpha_B} u(\E_A) x_B &  u(\E_A)  u(\E_B) &  \sqrt{\alpha_A} u(\E_B) x_A \cr
        0 & 0 & u(\E_B) \cr
    \end{pmatrix}
    $},
\end{equation}
where $x_A$ and $x_B$ are constants related to deviations from unitality (see Appendix \ref{section:pab-prod}). This implies that eigenvalues of $\S$ are given by
\begin{equation}
    \{ \lambda_i \} = \{u(\E_A), u(\E_B), u(\E_A)  u(\E_B) \}.
\end{equation}  
It can be checked that this simple link with sub-unitarities extends to arbitrary \emph{separable} channels, for which $\lambda_1, \lambda_2, \lambda_3$ are exactly equal to the sub-unitarities $u_{A\rightarrow A}, u_{B\rightarrow B},u_{AB\rightarrow AB}$. This provides a way to compute the correlated unitarity.
More precisely, given $\lambda_1 \geq \lambda_2 \geq \lambda_3$, we may compute the quantity
\begin{equation}\label{eqn:simple-correlation-measure}
    C = \abs{\lambda_3 - \lambda_1 \cdot \lambda_2},
\end{equation}
where we use the fact that sub-unitarities are upper bounded by one to distinguish $\lambda_3$ from the other two. 

For non-separable channels the deviation of the eigenvalues from each of the subunitarities can be bounded by using the Girshgorin Circle Theorem or Brauer's Theorem  \cite{horn2012matrix}. For example, we obtain the bounds
\begin{equation}
    |\lambda_1 - u_{A \rightarrow A} (\E)| \le \frac{1}{\sqrt{\alpha_B}}u_{AB \to A}(\E) + \sqrt{\frac{\alpha_A}{\alpha_B}} u_{B \to A}(\E).
\end{equation}
Using identities for sub-unitarities, we can further show that
\begin{equation}
    |\lambda_1 - u_{A \rightarrow A} (\E)| \le \frac{1}{\sqrt{\alpha_B}}[1 - u_{A \rightarrow A} (\E) ].
\end{equation}
These two inequalities are generally weak, due to the factors of $\alpha_B$ and $\alpha_A$, but they do imply that the approximation is very good when either the off-diagonal elements are small or when the local unitarities are large. In such regimes this protocol will return a good estimate of the correlated unitarity, as shown in Figure \ref{fig:convergence-of-measures}.
\begin{figure}[t!]
    \centering
    \includegraphics[width=8.6cm]{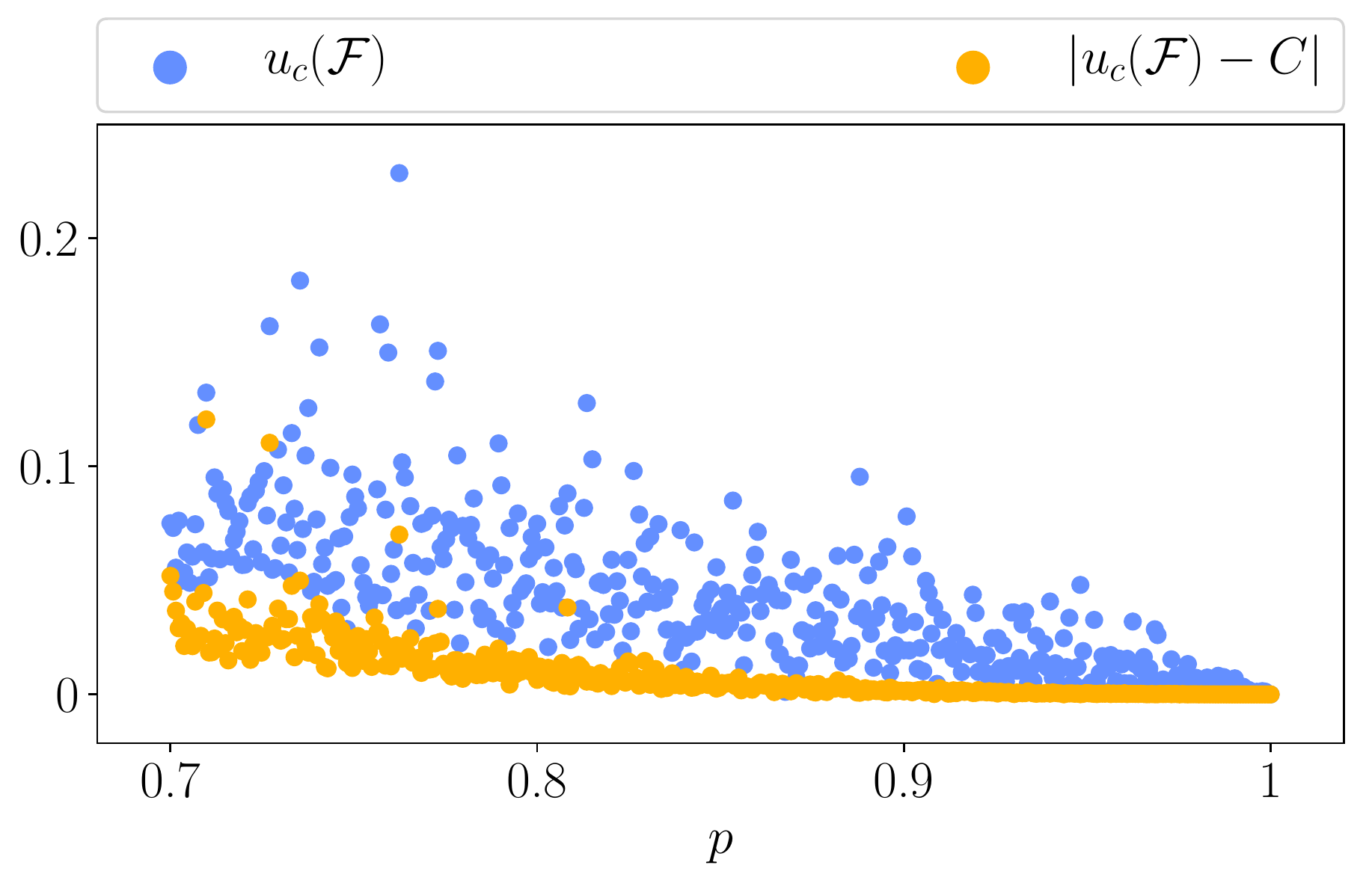}
    \caption{\textbf{SPAM error robust estimation of $u_c$ for generic quantum channels.} The convergence of the values of correlated unitarity and $C$ as gate noise takes a product form, for a 2 qubit simulation. We show $\abs{u_c - C}$ over $p$, where $\mathcal{F} = p \mathcal{E}_A \otimes \mathcal{E}_B + (1-p)\mathcal{G}$. The channels $\mathcal{E}_A$, $\mathcal{E}_B$ and $\mathcal{G}$ are sampled using the methods of \cite{bruzda2009random} and simulated using QuTip \cite{qutip2013}.}
    \label{fig:convergence-of-measures}
\end{figure}

Estimation of the three decay constants requires fitting noisy multi-exponential data which is non-trivial, but a range of methods have been developed to tackle this problem \cite{helsen2020general}. To assist with fitting, and moreover identify the sub-unitarity $u_{AB \to AB}$, we may supplement the local twirling with a global estimate of unitarity, and then make use of the decomposition of unitarity into sub-unitarities. Specifically, for the case of \emph{unital} separable channels, with $d_A=d_B=d$, we have that
\begin{equation}
    u(\E) = \frac{u_{A \to A}(\E) + u_{B \to B}(\E) + (d^2-1)u_{AB \to AB}(\E) }{d^2 + 1},
\end{equation}
and therefore we have the relation
\begin{equation}\label{uab-from-u}
    u_{AB \to AB}(\E) = \frac{ (d^2 + 1) u(\E) - \sum_i \lambda_i}{d^2 -2}.
\end{equation}
This means that separate estimations of $u(\E)$ and the decay constants $(\lambda_i)$ provide an estimate of $u_{AB \to AB}(\E)$, and so provides additional independent information on the terms entering the correlated unitarity.  In practice, this will require careful consideration as the average noise channel associated with $\Gamma_{AB}$ (employed in the estimation of unitarity) might be different than that associated with $\Gamma_{A}\otimes \Gamma_{B}$.

We note that by using \emph{randomized compiling} \cite{wallman2016noise, hashim2020randomized} for the implementation of a quantum circuit we may reduce the noise channel to being a Pauli channel. Since a general noise channel will not have $\lambda_i$ coinciding precisely with the sub-unitarities, by running the local twirling protocol  with and without randomized compiling one could witness the presence of non-Pauli noise.

\begin{protocol}[SPAM error robust, $\cal{C}\times\cal{C}$]
    \label{protocol:CxC}
    \begin{enumerate}[wide, labelwidth=!, labelindent=0pt]
        \setlength{\itemsep}{2pt}
        \setlength{\parskip}{0pt}
        \setlength{\parsep}{0pt}
        \item \textbf{Prepare} the system in a state $\rho$.
        \item \textbf{Select} a sequence of length $k$ of simultaneous random noisy Clifford gates locally on subsystems $A$ and $B$, starting with $k=1$. E.g. for each gate $\mathcal{U}_{AB,i}=\mathcal{U}_{A,i_1} \otimes \mathcal{U}_{B,i_2}$.
        \item \textbf{Estimate} the square $(m)^2$ of an expectation value of an observable $M$, for this particular sequence of gates.
        \item \textbf{Repeat 1, 2 \& 3} for many random sequences of the same length, finding the average estimation $\mathbb{E}[(m)^2]$ of $(m)^2$.
        \item \textbf{Repeat 1, 2, 3 \& 4} increasing the length of the sequence $k$ by 1.
        \item \textbf{Fit} the data $\mathbb{E}[(m)^2]$ against $k$ and obtain decay parameters as in Equation (\ref{triple-decay}). 
    \end{enumerate}
\end{protocol}

\subsection{Estimation of sub-unitarities for non-separable channels with low re-setting errors}
While the local twirling protocol provides a means to estimate the correlated unitarity in the case of any separable channel, we would like to be able to estimate such correlations for general non-separable channels. The obstacle here is to determine sub-unitarities such as $u_{A \to A} (\E_{AB})$. However, this requires preparing the maximally mixed state on subsystem $B$ and benchmarking the unitarity of the effective channel output on $A$. This presents a problem of how accurately such a re-set can be performed. Current devices, including ion-traps \cite{pino2020demonstration} and IBM's superconducting qubits \cite{corcoles2021exploiting}, allow for mid-circuit measurements and resets. These dynamical circuits capabilities can be accessed through hardware-agnostic SDKs \cite{sivarajah2020t, Qiskit}.
 
This is challenging to do in a fully SPAM-robust way, however from the form of Equation (\ref{eqn:sub-A}) we see that if it is possible to do a resetting of sub-system close to the maximally mixed state then one can obtain an estimate of the sub-unitarity $u_{A\to A}(\E_{AB})$, and similarly for other single-subsystem cases, by estimating the unitarity of the marginal channel $ \E_A = \tr_B \circ \E_{AB} \circ \R_B$, where $\R_B(\rho) = \frac{1}{d} \I_B$. Within the benchmarking circuit this would mean performing a noisy re-set $\tilde{\R}_B$ on $B$ after each $\tilde{\U}_i$ on $A$, with the aim of having $\tilde{\R}_B \approx \R_B$. This is a non-trivial assumption, and so in general the protocol will not be fully robust against re-set errors. However, if these errors are substantially smaller than the addressability errors one wishes to estimate then the protocol returns an approximate estimate.

We can summarize this sub-unitarity protocol as follows:
\begin{protocol}[SPAM error effected, $\cal{C}\times 1$]
    \label{protocol:Cx1}
    \begin{enumerate}[wide, labelwidth=!, labelindent=0pt]
        \setlength{\itemsep}{2pt}
        \setlength{\parskip}{0pt}
        \setlength{\parsep}{0pt}
        \item \textbf{Prepare} the system in the state $\rho$.
        \item \textbf{Select} a sequence of length $k$ of random noisy Clifford gates on subsystem $A$, starting with $k=1$. E.g. for each gate $\mathcal{U}_{A,i} \otimes id_B$
        \item \textbf{Estimate} the square $(m_A)^2$, of the expectation value of an observable $M_A$ on subsystem $A$ for this particular sequence of gates, while performing a \textbf{reset} $\cal{R}_B$ of the $B$ subsystem after every gate.
        \item \textbf{Repeat 1, 2 \& 3} for many random sequences of the same length, finding the average estimation $\mathbb{E}[(m_A)^2]$ of $(m_A)^2$.
        \item \textbf{Repeat 1, 2, 3 \& 4} increasing the length of the sequence $k$ by 1.
        \item \textbf{Fit} the data $\mathbb{E}[(m_A)^2]$ against $k$ and obtain decay parameters as in Equation (\ref{fitting_fn_eqn_2}).
    \end{enumerate}
\end{protocol}

\begin{figure}[t!]
	\centering
	\includegraphics[width=8.6cm]{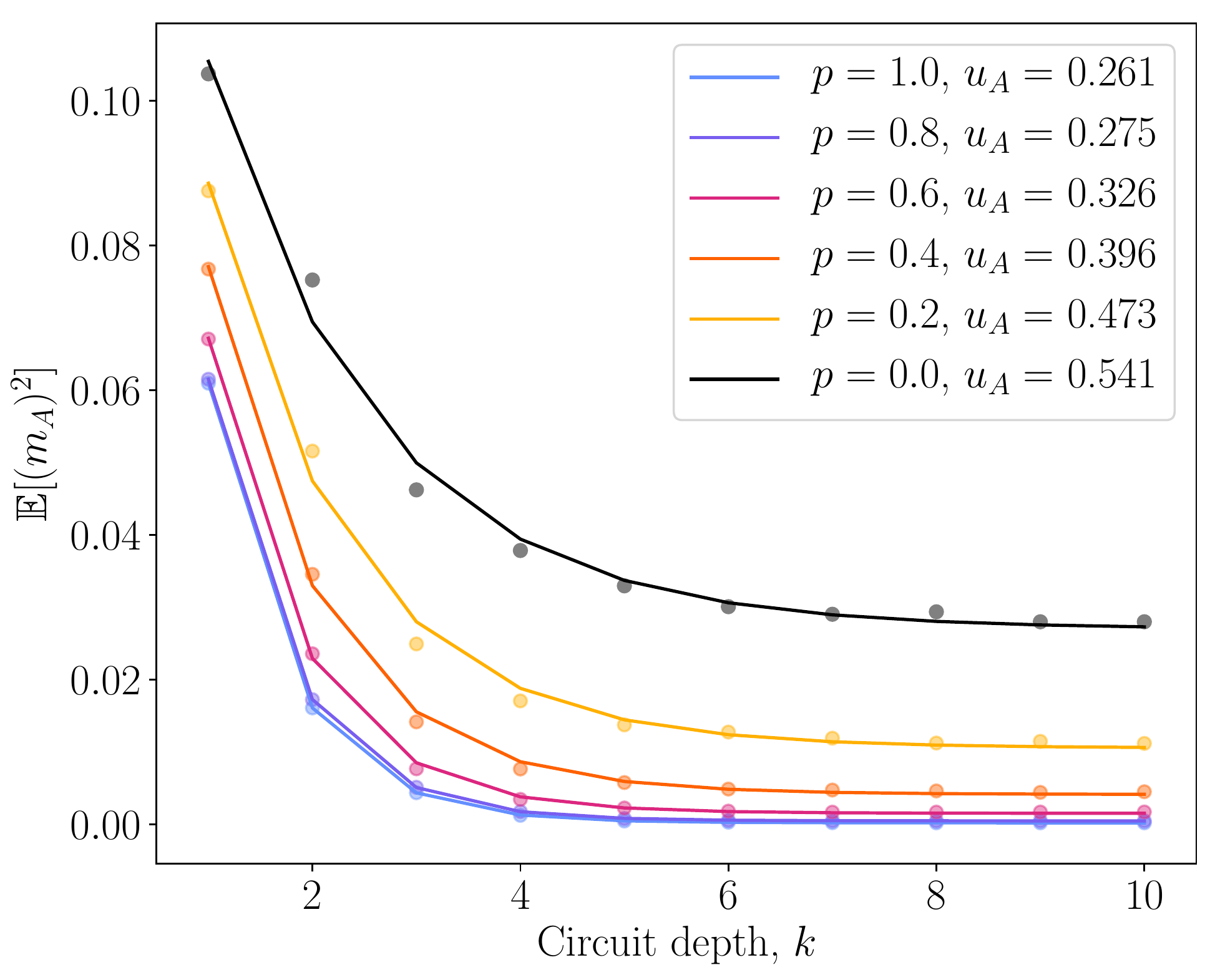}
	\caption{\textbf{Sub-unitarity estimation with re-set error.} Shown is a simulation of Protocol 2 to estimate the sub-unitarity $u_A \equiv u_{A \to A}(\E_{AB})$, modelling the re-set error associated $B$ as in Equation (\ref{eqn:coherent-reset}). This re-set error is shown for different levels of depolarization $p$, including $p=0$ i.e. no reset. The channel $\E_{AB}$ in this case has a theoretical value of $u_{A \to A}(\E_{AB})=0.261$. The protocol returns an estimate of the sub-unitarity accurate to $\sim 90$\% for re-set errors up to $\sim 20$\%.}
	\label{fig:uaa-spam-coherent-error}
\end{figure}

Given approximate estimates of $u_{A \to A}(\E_{AB})$ and $u_{B \to B}(\E_{AB})$ we may then exploit the fact that $\sum_i \lambda_i = u_{A \to A}(\E_{AB})+u_{B \to B}(\E_{AB})+u_{AB \to AB}(\E_{AB})$ to infer the value of $u_{AB \to AB}(\E_{AB})$ and thus compute the correlated unitarity for the channel $\E_{AB}$. Therefore, under the assumption of sufficiently small re-setting errors we may estimate the correlated unitarity for an arbitrary channel. Note that in the context of the local Clifford gate-sets the effective channel need not be the same in each protocol since Protocol 2 uses a different gate-set. However, we can use the same gate-set in Protocol 2 as in 1, since the application of non-trivial Clifford gates on $B$ does not change matters if $\tilde{\R}_B \approx \R_B$.

It is straightforward to numerically test how sensitive the above protocol is to re-setting errors. For example, one can model such re-set errors as depolarizing
\begin{equation}\label{eqn:coherent-reset}
\tilde{\R}_B = id_A \otimes (p \R_B + (1-p) id_B),
\end{equation}
where $p \in [0,1]$. In Figure \ref{fig:uaa-spam-coherent-error} we plot the benchmarking decay curves and find that for re-setting errors up to $\sim 20\%$ the protocol returns an estimate of the sub-unitarity $u_{A \rightarrow A}(\E)$ accurate to $\sim 90\%$. Note that such a channel will not in general destroy correlations between $A$ and $B$, in contrast to a stronger, more simplistic error model of
\begin{equation}\label{eqn:incoherent-reset}
\tilde{\R}_B (\rho_{AB}) = \rho_A \otimes ( \frac{1}{2}( \I + \mathbf{b} \cdot \boldsymbol{\sigma})),
\end{equation}
where one assumes a re-set to a local qubit state with non-zero Bloch vector $\mathbf{b}$. Under this stronger model assumption a simulation shows that such a scenario returns a good estimate for the sub-unitarity for $|\mathbf{b}| \le 0.2$.

\begin{figure}[t!]
	\centering
	\includegraphics[width=8.0cm]{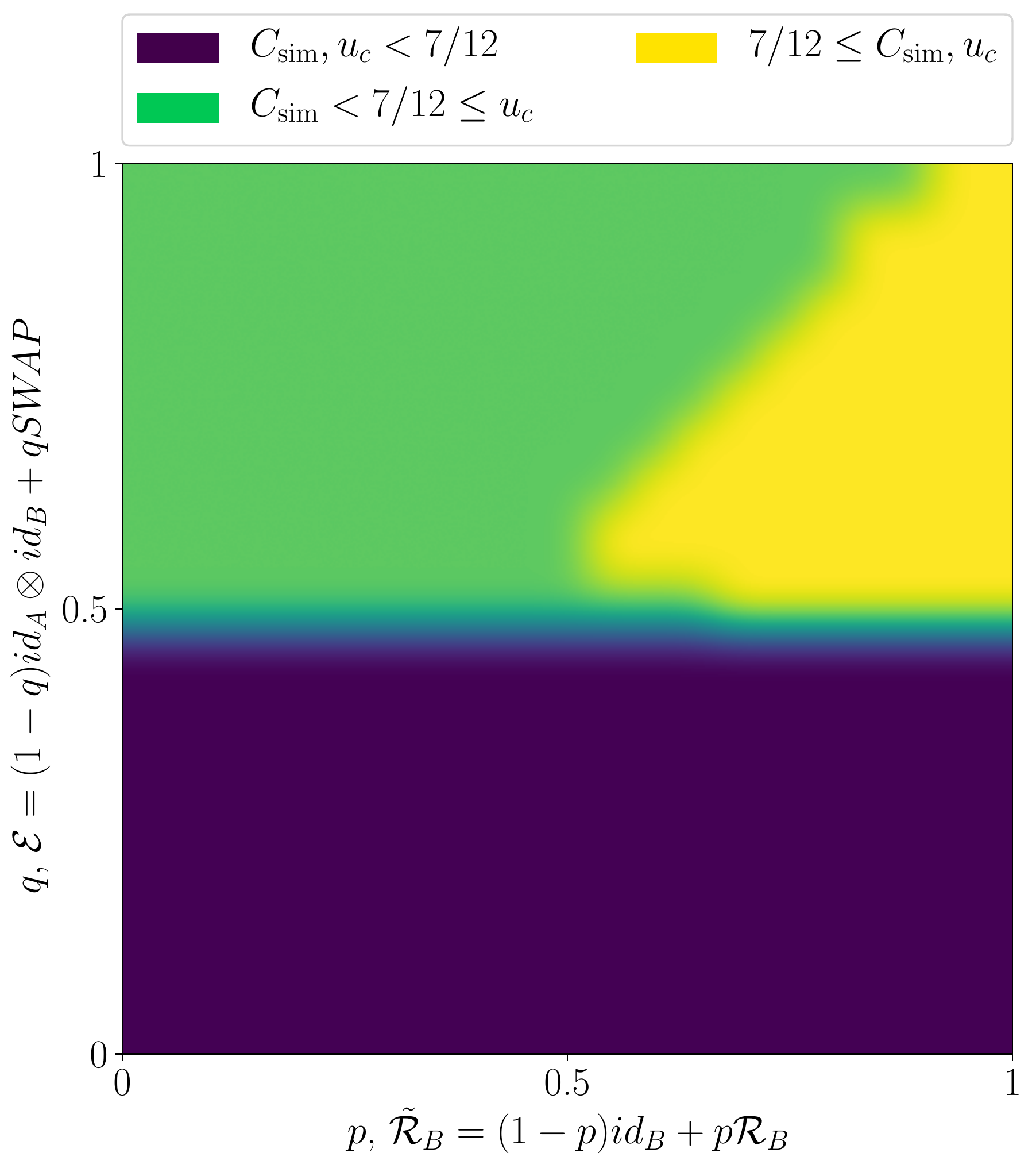}
	\caption{\textbf{Witnessing channel non-separability.} Given a quantum channel $\E_{AB}$ we consider the ability to efficiently witness its non-separability via correlated unitarity in the presence of re-setting noise. This could be realized, for example, in the context of robust tomography using randomized benchmarking \cite{kimmel2014robust}. We consider a $1$--parameter family of $2$-qubit channels obtained from a convex mixture of the maximally non-separable \textit{SWAP} channel and the identity channel (a product channel). The contour plot compares the true value of correlated unitarity $u_{c}(\E_{AB})$ with the correlation measure $C_{\rm sim}\approx C$ estimating Equation (\ref{eqn:simple-correlation-measure}) in the presence of re-set errors. For two qubits, non-separability occurs if $u_c(\E_{AB}) > 7/12$. We simulate both Protocol 1 and 2, and we find that for a wide range of re-set errors we may witness non-separability for $p,q \gtrsim 0.5$. The region of green where $p,q \ge 1/2$ is an artifact of our particular method, and with a more refined algorithm we expect detection of non-separability also in this region.}
	\label{fig:swap-witness}
\end{figure}

There are further variants around the above protocol. For example, if re-setting to non-maximally mixed states have very low errors then this provides another means to estimate $u_{A\to A}(\E_{AB})$. For example, if a low-error re-set to the pair of states $\frac{1}{2}( \I  \pm \mathbf{b} \cdot \boldsymbol{\sigma})$ is possible for some $\mathbf{b}$ then it can be shown that the average unitarity of the output on $A$ over the pair is always an upper bound on $u_{A\to A}(\E_{AB})$ (see Appendix \ref{RBforsubunit} for details), and so would provide a lower bound on the correlated unitarity. Therefore, this would allow witnessing of non-separability under the preceding assumptions.

In theory, another source of information that could be exploited is the unitarity of the channel from $AB$ to $A$, given by
\begin{equation}
    \E_{AB\to A}(\rho) = \tr_B \circ \E_{AB}(\rho).
\end{equation}
In terms of sub-unitarities this quantity can be decomposed as
\begin{align}
    u(\E_{AB\to A}) &= \frac{1}{(d_Ad_B)^2-1} \left (\frac{1}{\alpha_A}u_{A \to A}(\E_{AB}) + \right .\nonumber \\
    & \left . +\frac{1}{\alpha_B}u_{B \to A}(\E_{AB}) +\frac{1}{\alpha_A\alpha_B}u_{AB \to A}(\E_{AB}) \right).
\end{align}
However, while this provides an expression in terms of sub-unitarities without requiring re-setting, the standard benchmarking protocol will not work here due to the input and output systems being of different dimensions,  and therefore a more involved protocol would be required.

\subsection{Addressability of qubits and sub-unitarities}
Several methods have recently been developed for detection \cite{sarovar2020detecting}, characterization \cite{gambetta2012characterization, mckay2020correlated} and mitigation \cite{winick2020simulating} of unwanted correlations between subsystems (specifically cross-talk) in a quantum device from a hardware-agnostic and model independent perspective.  Our work adds to this toolkit new methods to characterize non-separable correlations and provides information about noise channels that is independent from features captured by previous works.

Simultaneous randomized benchmarking (SimRB) \cite{gambetta2012characterization} compares the increase in error rates when both subsystems are simultaneously and independently driven vs when one subsystem is driven and the other is kept idle. This quantifies the amount of new errors experienced by a subsystem as a result of simultaneously applying Clifford gates on the other. As it is the case for Protocol 2, due to the local independent Clifford twirl on one subsystem, SimRB is also affected by SPAM, and strong errors may be detected by deviations from exponential decay~\cite{gambetta2012characterization}. 

To compare with the information obtained from sub-unitarities, a quantity to detect correlations can be determined from the simultaneous Clifford twirl as in ~\cite{gambetta2012characterization}. We denote this quantity by
\begin{equation}
	a(\cal{E}_{AB}) := e_{AB} - e_A \cdot e_B.
\end{equation}
where $\E_{AB}$ is an effective noise channel associated to the Clifford gate-set acting locally on each subsystem $A$ and $B$.  The three decay parameters $e_{AB}$, $e_A$ and $e_B$ are extracted from the randomized benchmarking protocol that applies simultaneous local Clifford gates to subsystems $A$ and $B$ and are given in terms of the Liouville data for the channel as
\begin{equation}
	\begin{split}
		e_A &=  \alpha_A \tr[T_{A \to A}], \\
		e_B &=  \alpha_B \tr[T_{B \to B}], \\
		e_{AB} &=  \alpha_A  \alpha_B \tr[T_{AB \to AB}],
	\end{split}
\end{equation}
with the coefficients $\alpha_X$ as defined earlier.

\begin{figure}[t!]
	\centering
	\includegraphics[width=8cm]{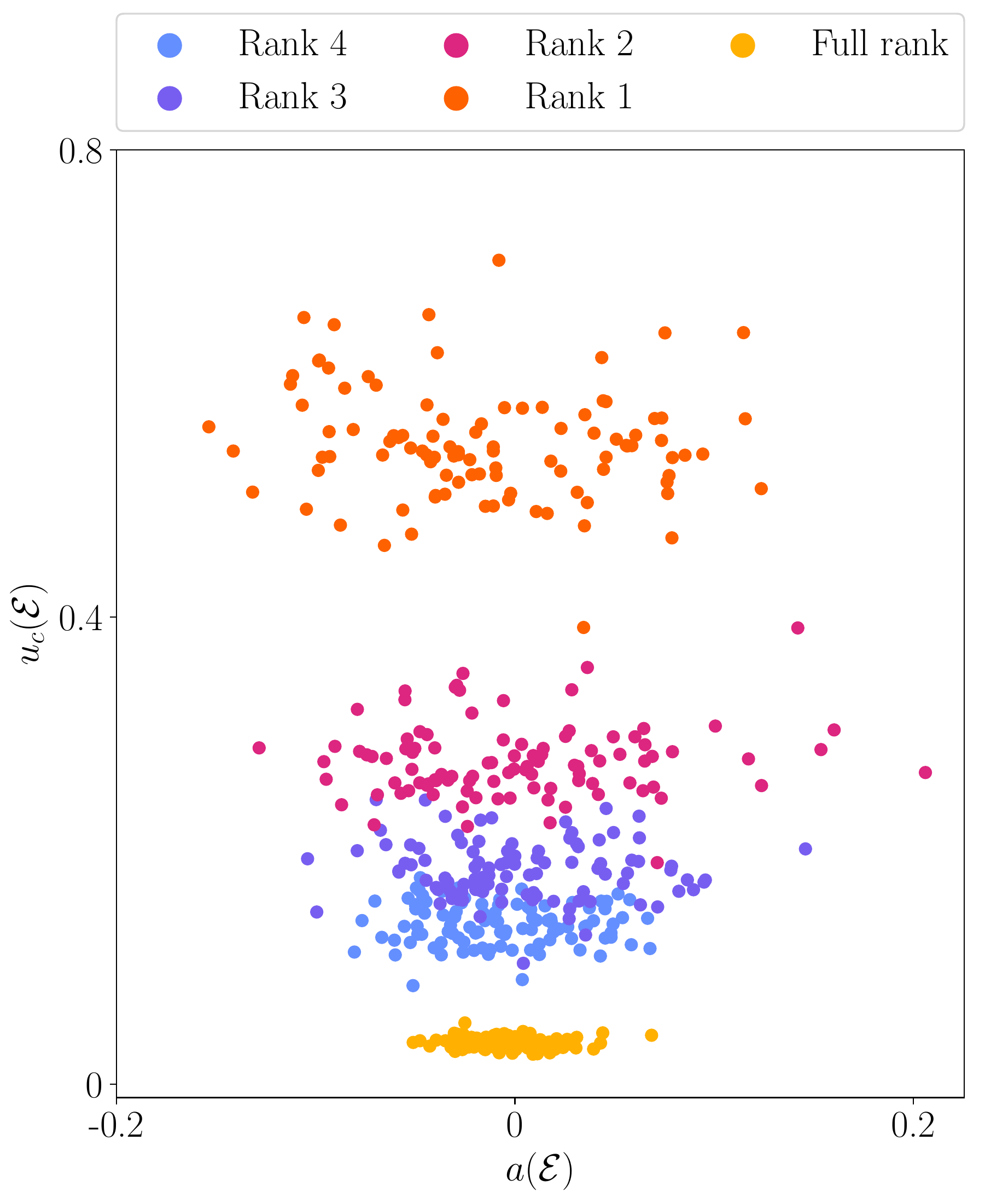}
	\caption{\textbf{Correlated unitarity vs. addressability.} Correlated unitarity is largely independent from existing addressability measures, while Kraus rank is a better indicator of the the value of $u_c$, which is consistent with it capturing the non-separable correlations between subsystems. This suggests the measure might be suitable for benchmarking $2$-qubit gates where  the unitary transfer of quantum information between subsystems is required. The above plot is for random channels of different ranks from the distributions of Bruzda \textit{et al.} and simulated using QuTip \cite{bruzda2009random, qutip2013}.}
	\label{fig:comparing-measures}
\end{figure}
For a product channel, $T_{AB \to AB} = T_{A \to A} \otimes T_{B \to B}$ and therefore $a(\cal{E}_A \otimes \cal{E}_B)=0$. In this manner, any deviation of $a(\cal{E})$ from zero is taken as detection of correlated behaviour. Note that in contrast to sub-unitarities, these measures are not invariant under local basis changes which makes it more problematic to interpret as a strict correlation measure.

It is easy to verify that the correlated unitarity provides independent information to a SimRB protocol, for example the CNOT gate is undetected by the addressability correlation measure; however it is detected by correlated unitarity. Figure \ref{fig:comparing-measures} shows this is generic for bipartite channels, and we find that there are regions where the addressability correlation measure is zero or close to it, but the correlated unitarity varies greatly. 

\section{Outlook}
Our starting point in this work was to develop simple, yet effective measures of correlations in quantum channels and means to assess sub-structures of such channels. The approach was motivated and guided by the idea of introducing measures that can be both efficiently estimated through RB-type of techniques and interpreted operationally as to quantify non-separable correlations.

Certain sub-unitarities of a general bipartite channel can be interpreted as unitarities of locally acting channels induced by state preparation and discarding on one subsystem.  Furthermore, we showed that they satisfy a set of inequalities that express an information-disturbance relation. This opens up new directions to analyse non-classical features of quantum channels directly from their robust tomographic description ~\cite{kimmel2014robust}.

In the context of benchmarking of quantum devices, it will be of interest to develop hardware implementations of the protocols here and determine how effective and useful they are in practice. Such analysis will closely investigate the effects of re-set errors for the subsystem unaddressed by target gates. Our simulations show that our second protocol, while not fully robust can still allow small re-set errors to estimate magnitudes of correlated noise, but ultimately whether this is a reasonable assumption must be assessed for the system at hand.

Throughout this work we consider the induced error to be time-independent and gate-independent and averaged for the gate-set considered. As such, relaxing these constraints would be a natural line to develop~\cite{wallman2018randomized, ball2016effect, qi2021randomized}.

Our primary protocol relies on fitting a multi-exponential decay to noisy data. In general this is a hard problem, and there will be many fits that will approximate the decay curve. The protocol could be substantially improved by exploiting recent statistical techniques~\cite{harper2019statistical}, algorithms for multi-exponential fitting~\cite{helsen2020general, xue2019benchmarking} and other approaches such as spectral tomography \cite{helsen2019spectral}.

\section*{Acknowledgments}
We would like to thank Matteo Lostaglio for helpful discussions. MG is funded by a Royal Society Studentship.
DJ is supported by the Royal Society and also a University Academic Fellowship.

\clearpage
\onecolumngrid 
\appendix
\section{Quantum operations and a review of notation}

\subsection{Review of notation}\label{notation}
Throughout these Appendices we consistently use the same notation as the main text, which we review here.

We consider an open bipartite quantum system with an associated an Hilbert space $\mathcal{H}_A \otimes \mathcal{H}_B$ and dimension $d = d_A d_B$.  Quantum channels act on the system such that $\E_{AB}$: $\cal{B}(\mathcal{H}_A \otimes \mathcal{H}_B) \to \cal{B}(\mathcal{H}_A \otimes \mathcal{H}_B)$, and unless otherwise stated we assume for simplicity that the input and output systems are identical. We denote all vectorized quantities in boldface, $|\boldify{M}\> := |vec(M)\>$ for any operator $M \in \B(\H_A \otimes \mathcal{H}_B)$ and similarly, we denote the Liouville representation $\boldsymbol{\E_{AB}}: = \mathcal{L}(\E_{AB})$ for any channel $\E_{AB}$, as detailed in the main text.

For subsystem $A$, we choose an orthonormal basis of operators $X_\mu = (X_0 =  \frac{1}{\sqrt{d_A}}\I_A, X_i)$, where $d_A$ is dimension of the subsystem $A$, and $\tr[X_\mu^\dagger X_\nu] = \delta_{\mu \nu}$. Similarly for $B$ an orthonormal basis $Y_\mu = (Y_0 = \frac{1}{\sqrt{d_B}}\I_B, Y_i)$. Together these provide a basis for the full system which is given in the Liouville representation as
\begin{equation}
    \vket{X_\nu \otimes Y_\mu} := \vket{X_\nu} \otimes \vket{Y_\mu}.
\end{equation}
Furthermore  $\{ \vket{X_\nu \otimes Y_\mu} \}_{\nu,\mu}$ forms a complete orthonormal basis for $\cal{H}_{A}\otimes \cal{H}_{A} \otimes \cal{H}_{B}\otimes \cal{H}_{B}$, and with respect to this basis, the Liouville representation of $\E_{AB}$ corresponds to a matrix $\boldsymbol{\E_{AB}}$ whose entries satisfy
\begin{equation}
	\vbra{X_\nu \otimes Y_\mu}  \boldsymbol{\E_{AB}}  	\vket{X_{\nu'} \otimes Y_{\mu'}} = \Tr(X_{\nu} \hc\otimes Y_{\mu}\hc \, \E_{AB}(X_{\nu'} \otimes Y_{\mu'})).
\end{equation}

For simplicity, where there is no ambiguity on the local labels $\mu$ and $\nu$ we will sometimes use a single-label notation $  \vket{Z_{\omega}} = 	\vket{X_\nu \otimes Y_\mu} $. In particular, we denote $  \vket{Z_{0}}  =  \vket{X_0} \otimes \vket{Y_0}$.

We highlight that we shall use the greek-labels ($\mu,\nu,\dots$) for sums that run over \emph{all} basis operators and Latin-labels ($i,j, \dots$) notation to run over just the \emph{trace-less} basis operators.

Consider a quantum channel $\E_{AB \rightarrow A' B'} : \B(\H_A \otimes \H_B) \rightarrow \B(\H_{A'} \otimes \H_{B'})$. We define a \emph{product channel} as one that takes the form
\begin{equation}
\E_{AB \rightarrow A' B'} = \E_{A \rightarrow A'} \otimes \E_{B \rightarrow B'},
\end{equation}
for channels $\E_{A \rightarrow A'} : \B(\H_A ) \rightarrow \B(\H_{A'})$ and $\E_{B \rightarrow B'} : \B(\H_B ) \rightarrow \B(\H_{B'})$. The choice of labeling of the output subsystems is for convenience, as a joint channel of the form $ \E_{A \rightarrow B'} \otimes \E_{B \rightarrow A'}$ can be cast in the above form simply by relabeling $A' \leftrightarrow B'$. A \emph{separable channel} is defined as a convex mixture of product channels, namely
\begin{equation}
\E_{AB \rightarrow A' B'} = \sum_k p_k\E^k_{A \rightarrow A'} \otimes \E^k_{B \rightarrow B'},
\end{equation}
for some distribution $p_k$ and local channels between $(A,A')$ and $(B,B')$. A channel that is not separable is defined to be \emph{non-separable.} 

\subsection{Quantum operations in the vectorized operator basis}
Using this notation we now give some useful quantum operations in the Liouville representation that we use through out this work with proofs following. Firstly, the channel to trace out ($\tr$) the system,
and a channel we define to prepare ($prep: prep(1)=\ident/d$) a new system in the maximally mixed state
\begin{equation}\label{eqn:vectorised-trace}
   \bm{tr} = \sqrt{d} \ \vbra{Z_0} \ \  \text{and} \ \   \bm{prep} = \vket{Z_0}/\sqrt{d}.
\end{equation}
A direct consequence of this is that the completely depolarizing channel $\D(\rho) := \ident/d$ is given by 
\begin{equation}\label{eqn:vectorised-depolarization}
    \bm{\D} = \bm{prep} \cdot  \bm{tr} = \vketbra{Z_0}{Z_0}.
\end{equation}
The identity channel ($id(\rho)=\rho$) also allows a very simple form in the Liouville representation: $\bm{id} = \ident^{\otimes 2}$. From these definitions we can build bipartite channels, such as the partial trace of subsystem $B$
\begin{equation}\label{eqn:vectorised-partial-trace}
    \bm{id}_A \otimes \bm{tr}_B = \sqrt{d_B} \ \bm{id}_A \otimes \vbra{Y_0}.
\end{equation}
where $\bm{id}_A$ is the Liouville representation of the identity channel on subsystem $A$. Similarly, combination of this with the preparation channel on $B$ leads to the complete depolarization channel for the $B$ subsystem
\begin{equation}\label{eqn:vectorised-depolar-B}
    \bm{id}_A \otimes \bm{\D}_B = \bm{id}_A \otimes ( \bm{prep}_B \cdot  \bm{tr}_B) = \bm{id}_A \otimes \vketbra{Y_0}{Y_0}.
\end{equation}
Finally for $d_A=d_B$ we can express the unitary operation, $SW\!AP$, that swaps the states of both subsystems compactly in the Liouville representation as
\begin{equation}\label{eqn:vectorised-swap}
    \bm{SW\!AP} = \sum_{\nu=0,\mu=0}^{d_A^2-1, d_B^2-1} \vketbra{X_\nu \otimes Y_\mu}{X_\mu \otimes Y_\nu}.
\end{equation}
Proofs for the preceding Liouville operators are now given:
\begin{proof}(of eqn. (\ref{eqn:vectorised-trace}))
   For the first part we have $ \tr[\rho_{AB}] = \tr[\ident \rho_{AB}] = \sqrt{d} \vbraket{Z_0}{\rho_{AB}}$ as $Z_0 = Z_0^\dagger$. We can then vectorize both sides and apply the definition of the Liouville representation of a channel $\ket{\tr[\boldify{\rho_{AB}}]} = \sqrt{d} \vbraket{Z_0}{\rho_{AB}}$ and $\bm{tr}\vket{\rho_{AB}} = \sqrt{d} \vbraket{Z_0}{\rho_{AB}}$. Therefore $\bm{tr} = \sqrt{d} \vbra{Z_0}$. For the second part, definitionally, $\ident / d = Z_0/\sqrt{d}$, $prep(1) = Z_0/\sqrt{d}$ and the vectorization of 1 leaves it unchanged $\vket{1}=1$. Therefore $ \vket{prep(1)} = \vket{Z_0}/\sqrt{d}$ and $\bm{prep} \vket{1} = \vket{Z_0}/\sqrt{d} \vket{1}$. As 1 is the only valid state of the trivial system, we read off $\bm{prep} = \vket{Z_0}/\sqrt{d}$ completing the proof.
\end{proof}

\begin{proof}(of eqn. (\ref{eqn:vectorised-depolarization}))
    We have $\boldify{\D}\vket{\rho}=\vket{\ident/d}=\vket{Z_0}/\sqrt{d}$. As $\vbraket{Z_0}{\rho}=1/\sqrt{d}$ for any quantum state $\rho$ we can write $\boldify{\D} = \vketbra{Z_0}{Z_0}$.
\end{proof}

\begin{proof}(of eqn. (\ref{eqn:vectorised-partial-trace}))
    This follows from eqn. (\ref{eqn:vectorised-trace}), with the identity channel on subsystem $A$.
\end{proof}

\begin{proof} (of $\bm{prep_B} = \bm{id}_A \otimes \vket{Y_0}/\sqrt{d_B}$)
    This follows from eqn. (\ref{eqn:vectorised-trace}), with the identity channel on subsystem $A$.
\end{proof}

\begin{proof}(of eqn. (\ref{eqn:vectorised-swap}))
    From definition, we can write any bipartite state in the form $\rho := \sum_{\nu,\mu} \lambda_{\nu \mu} \ X_\nu \otimes Y_\mu$. The $SW\!AP$ channel then acts on this state such that $ SW\!AP(\rho) := \sum_{\nu,\mu} \ \lambda_{\mu \nu} X_\nu \otimes Y_\mu$.
    Therefore, from inspection, the Liouville super operator of the channel is $\boldify{SW\!AP} = \sum_{\nu,\mu} \vketbra{X_\mu \otimes Y_\nu }{X_\nu \otimes Y_\mu}$.
\end{proof}

\section{Properties of subunitarity}\label{appendsubunit}

\subsection{Elementary properties of subunitarity}\label{append-sub-unit-properties}

\begin{lemma}\label{lemma:unitarity-zero-iff-depolar}
    Given a quantum channel $\E$, we have that $u(\E)=0$ if and only if $\E$ is a completely depolarizing channel. 
\end{lemma}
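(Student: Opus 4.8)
The plan is to use the Liouville representation developed in the main text, where the content of the statement becomes transparent. Recall that in the orthonormal operator basis $\{X_\mu\}$ with $X_0 = \I/\sqrt{d}$, the channel matrix $\mathbfcal{E}$ has block form with a unit entry in the $(0,0)$ slot, a zero row $\mathbfcal{E}_{0j} = \bm{0}$ (from trace preservation), a non-unitality vector $\mathbf{x}$ in the first column, and the remaining $(d^2-1)\times(d^2-1)$ block $T$. By equation (\ref{eqn:unitarity-T}) we have $u(\E) = \frac{1}{d^2-1}\tr[T^\dagger T] = \frac{1}{d^2-1}\|T\|_{\mathrm{HS}}^2$. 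Since this is a sum of squares of the real (or complex) matrix entries of $T$, it vanishes if and only if $T = 0$.

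The two directions then proceed as follows. First I would prove the easy direction: if $\E = \D$ is completely depolarizing, then by equation (\ref{eqn:vectorised-depolarization}) its Liouville matrix is $\bm{\D} = \vketbra{Z_0}{Z_0}$, i.e. $\mathbfcal{D}_{00} = 1$ and all other entries vanish; in particular $T = 0$, so $u(\D) = 0$. For the converse, suppose $u(\E) = 0$, so $T = 0$. It remains to show that the non-unitality vector $\mathbf{x}$ also vanishes, since then $\mathbfcal{E} = \vketbra{X_0}{X_0} = \bm{\D}$, which forces $\E = \D$. The cleanest way to pin down $\mathbf{x} = \bm{0}$ is positivity: with $T = 0$, the channel acts as $\E(\rho) = \frac{1}{d}\I + \sum_i x_i \, \tr[\rho]\,\frac{X_i}{\sqrt{d}} \cdot (\text{appropriate normalization})$ — more precisely, writing $\rho$ in the basis $\{X_\mu\}$, one finds $\E(\rho) = \tr[\rho]\big(\tfrac{\I}{d} + \tfrac{1}{\sqrt d}\sum_i x_i X_i\big)$, a constant channel. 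Complete positivity (indeed just positivity applied to, say, $\rho = \I/d$ and all pure states, or equivalently the requirement that this fixed output operator be a valid density matrix) forces $\tfrac{\I}{d} + \tfrac{1}{\sqrt d}\sum_i x_i X_i$ to be a density operator, which is consistent with nonzero $\mathbf{x}$. So positivity alone is not enough — one must additionally invoke that the channel sends \emph{every} pure state to the \emph{same} output and use trace-preservation together with complete positivity more carefully. Actually the simplest fix is to observe that $u(\E) = 0$ with the Haar-integral definition (\ref{unitarity:haar-defn}) directly gives $\int \dd\psi\, \tr[\E(\psi - \I/d)^2] = 0$; since each integrand $\tr[\E(\psi-\I/d)^2] = \|\E(\psi-\I/d)\|_{\mathrm{HS}}^2 \geq 0$ (as $\E(\psi - \I/d)$ is Hermitian), it must vanish for Haar-almost-every, hence by continuity every, pure state $\psi$, giving $\E(\psi) = \E(\I/d)$ for all $\psi$. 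Linearity then extends this to $\E(\rho) = \tr[\rho]\,\E(\I/d)$ for all $\rho$, so $\E$ is the constant channel with output $\sigma := \E(\I/d)$. Finally, one shows $\sigma = \I/d$: this follows because such a "replacer" channel $\rho \mapsto \tr[\rho]\sigma$ has Liouville matrix $\vketbra{\bm{\sigma}}{X_0}$, whose $T$-block consists of the traceless components of $\sigma$; but we already have $T = 0$, so $\sigma$ is proportional to $\I$, and being a density operator forces $\sigma = \I/d$. Hence $\E = \D$.

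The main obstacle, such as it is, is just bookkeeping: making sure the argument that "$T = 0$ and TP and CP (or the Haar-integral vanishing) implies the channel is the specific depolarizing channel" does not accidentally permit a nonzero non-unitality vector. The Haar-integral route circumvents this cleanly by showing $\E$ is a constant (replacer) channel outright, after which the $T=0$ condition identifies the output state as maximally mixed. I would present the proof primarily via the Liouville/Hilbert–Schmidt identity for $u(\E)$ for the "only if" direction's first step, then close with the replacer-channel observation; alternatively one can cite that $\tr[\E(\psi - \I/d)^2] \geq 0$ pointwise and run the whole argument through the integral definition (\ref{unitarity:haar-defn}), which is arguably the most self-contained. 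Either way the proof is short.
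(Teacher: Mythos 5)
Your argument is correct, and matches the paper's own proof, up to the point where you establish that $u(\E)=0$ is equivalent to $T=0$ and hence that $\E$ is a constant (replacer) channel $\rho\mapsto\tr[\rho]\,\sigma$ with $\sigma=\E(\I/d)$. The paper's proof stops exactly there, concluding that $\E$ is ``a completely depolarizing channel \emph{to a fixed quantum state}.'' The problem is your final step, which tries to go further and pin down $\sigma=\I/d$. For the replacer channel the traceless components of $\sigma$ do \emph{not} sit in the $T$ block; they sit in the first column of the Liouville matrix, i.e.\ in the non-unitality vector $\mathbf{x}$. Concretely, $\mathbfcal{E}\vket{X_0}=\sqrt{d}\,\vket{\sigma}$ while $\mathbfcal{E}\vket{X_i}=\vket{\tr[X_i]\,\sigma}=0$ for all $i\ge 1$, so $T=0$ for \emph{every} replacer channel regardless of $\sigma$, and $T=0$ therefore does not force $\sigma\propto\I$. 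Indeed the stronger claim you are aiming at is false: any replacer channel has $u(\E)=0$, since the Haar integrand $\E(\psi-\I/d)=\tr[\psi-\I/d]\,\sigma$ vanishes identically. (You had in fact already flagged this danger yourself when you noted that positivity is ``consistent with nonzero $\mathbf{x}$''; the Haar-integral detour does not escape it.)

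The fix is not a better argument but a correct reading of the statement: the lemma holds only if ``completely depolarizing channel'' means a channel that maps every input to some fixed output state (as the paper's own proof and its use in the information--disturbance discussion make clear), not specifically the channel $\D(\rho)=\I/d$. Delete the last step of your proof and state the conclusion as ``$\E$ is a constant channel $\rho\mapsto\tr[\rho]\,\sigma$''; with that reading, both your Liouville route ($u(\E)=0\Leftrightarrow T=0$) and your Haar-integral route are valid and essentially coincide with the paper's one-line argument.
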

\begin{proof}
    We have that $u(\E)=0$  if and only if $\tr[T^\dagger T] =||T||_2^2=0 $ but this occurs if and only if $T=0$. Therefore the only possible non-zero data in the channel's Liouville representation is the $\mathbf{x}$ vector. This is a completely depolarizing channel to a fixed quantum state as required.
\end{proof}

\begin{theorem}\label{thm:local-subunitarity}
    For $\E_{A}(\rho):= \tr_{B}[ \E_{AB}(\rho \otimes \frac{\ident_B}{d_B})]$ we have $u(\E_A) = u_{A \to A}(\E_{AB})$, the unitarity $u$ of the local channel equal to the subunitarity $u_{A \to A}$ of the full channel.
\end{theorem}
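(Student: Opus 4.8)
The plan is to work entirely in the Liouville representation and reduce the statement to the identification of the $T$-matrix of the local channel $\E_A$ with the block $T_{A\to A}$ of $\E_{AB}$, after which the claim is immediate from the definitions $u(\E_A)=\frac{1}{d_A^2-1}\tr[T^\dagger T]$ (Eq.~(\ref{eqn:unitarity-T})) and $u_{A\to A}(\E_{AB})=\alpha_A\tr[T_{A\to A}^\dagger T_{A\to A}]$ (Defn.~\ref{all-sub-unitarities}) with $\alpha_A=1/(d_A^2-1)$.

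First I would write $\E_A$ as a composition of elementary channels, $\E_A = (id_A \otimes \tr_B)\circ \E_{AB}\circ (id_A\otimes prep_B)$, where $prep_B$ prepares the maximally mixed state $\I_B/d_B$ on $B$ (so in particular $\E_A$ is a genuine quantum channel and its unitarity is well defined via the $T$-matrix formula). Passing to the Liouville picture and using $\bm{prep}_B = \bm{id}_A\otimes \vket{Y_0}/\sqrt{d_B}$ together with $\bm{id}_A\otimes \bm{tr}_B = \sqrt{d_B}\,\bm{id}_A\otimes\vbra{Y_0}$ from Eqs.~(\ref{eqn:vectorised-trace})--(\ref{eqn:vectorised-partial-trace}), the two factors of $\sqrt{d_B}$ cancel and one obtains the compact form
\begin{equation}
    \boldsymbol{\E_A} = (\bm{id}_A\otimes\vbra{Y_0})\,\boldsymbol{\E_{AB}}\,(\bm{id}_A\otimes\vket{Y_0}).
\end{equation}
Taking matrix elements in the operator basis $\{X_\mu\}$ of $\B(\H_A)$ then gives, for any indices $i,j$,
\begin{equation}
    \vbra{X_i}\boldsymbol{\E_A}\vket{X_j} = \vbra{X_i\otimes Y_0}\,\boldsymbol{\E_{AB}}\,\vket{X_j\otimes Y_0},
\end{equation}
which is exactly the entry of the bipartite Liouville matrix (\ref{eqn:liouville-bipartitechannel}) in the row $\vbra{X_i\otimes Y_0}$ and column $\vket{X_j\otimes Y_0}$. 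Restricting to traceless indices $i=i_1$, $j=j_1$ this is by definition $[T_{A\to A}]_{i_1 j_1}$, while the $\mu=0$ entries reproduce the CPTP structure ($\mathbfcal{E}_{00}=1$, vanishing first row). Hence the $T$-matrix of $\E_A$ is precisely $T_{A\to A}$, so $\tr[T^\dagger T]=\tr[T_{A\to A}^\dagger T_{A\to A}]$, and multiplying by $1/(d_A^2-1)=\alpha_A$ yields $u(\E_A)=u_{A\to A}(\E_{AB})$.

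There is no serious obstacle here; the only thing requiring care is the bookkeeping of the normalisation constants attached to $prep_B$ and $\tr_B$ (they must cancel exactly) and the implicit identification of the vectorised $A$-basis $\{\vket{X_\mu}\}$ with its image $\{\vket{X_\mu\otimes Y_0}\}$ inside the bipartite vectorised space, which is legitimate precisely because $Y_0=\I_B/\sqrt{d_B}$ is the normalised identity so no stray factors appear. The substance of the lemma is simply that ``preparing $\I_B/d_B$ on $B$ and then discarding $B$'' restricts the Liouville matrix of $\E_{AB}$ to its $Y_0$--$Y_0$ sector, which is the $A\to A$ block by construction.
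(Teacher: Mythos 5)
Your proof is correct and follows essentially the same route as the paper's: both arguments reduce to showing that the matrix elements $\vbra{X_i\otimes Y_0}\boldsymbol{\E_{AB}}\vket{X_j\otimes Y_0}$ of the $T_{A\to A}$ block coincide with the $T$-matrix entries $\tr_A[X_i^\dagger\,\E_A(X_j)]$ of the local channel, after which the equality of unitarities is immediate from the shared normalisation $\alpha_A = 1/(d_A^2-1)$. Your packaging via the vectorised $\bm{prep}_B$ and $\bm{tr}_B$ maps (with the cancelling $\sqrt{d_B}$ factors) is just a slightly more explicit bookkeeping of the same computation the paper does in one line.
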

\begin{proof}
    From definition the sub-unital block $T_{A\to A} = \vbra{X_i\otimes Y_0} \boldsymbol{\E_{AB}} \vket{X_j\otimes Y_0} = \tr[ X_i^{\dagger}\otimes \frac{\ident_B}{\sqrt{d_B}} \E_{AB}(X_j\otimes \frac{\ident_B}{\sqrt{d_B}})]  = \tr_{A}[ X_{i}^{\dagger} \tr_{B} [\E_{AB}(X_j\otimes \frac{\ident_B}{d_B})] = \tr_{A}[X_i^{\dagger} \E_{A}(X_j)]$, which gives the unital block $T$ of $\E_{A}$. As $T_{A \to A, \E_{AB}} = T_{\E_A}$ from definition $u_{A \to A}(\E_{AB})=u(\E_A)$. Similarly $u(\E_B) = u_{B \to B}(\E_{AB})$ for $\E_{B}(\rho):= \tr_{A}[ \E_{AB}( \frac{\ident_A}{d_A} \otimes \rho )]$.
\end{proof}

\begin{theorem}
    The unitarity of a channel $\cal{E}$ can be written as the weighted sum of its sub-unitarities
    \begin{equation}
        u(\cal{E}) = \frac{1}{d^2-1} \sum_{i,j = (A, B, AB)}   \frac{u_{i \to j}(\cal{E})}{\alpha_i},
    \end{equation}
    where $d=d_A . d_B$, $\alpha_A = 1/(d_A^2 - 1)$, $\alpha_B = 1/(d_B^2 - 1)$ and $\alpha_{AB} = \alpha_A\cdot\alpha_B$.
\end{theorem}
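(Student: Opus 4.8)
The plan is to reduce the claim to a block-decomposition of the single matrix $T$ appearing in the Liouville representation of $\mathcal{E}_{AB}$, and then to recognise each diagonal-of-the-sum term as a scaled Frobenius norm of one sub-block. The starting point is the identity $u(\mathcal{E}) = \frac{1}{d^2-1}\tr[T^\dagger T]$ from Eq.~(\ref{eqn:unitarity-T}), valid for any channel on a $d$-dimensional system, here applied with $d = d_A d_B$. So the whole content is to show that $\tr[T^\dagger T] = \sum_{X,Y\in\{A,B,AB\}} \tr[T_{X\to Y}^\dagger T_{X\to Y}]$ and then invoke Definition~\ref{all-sub-unitarities}.

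First I would fix the orthonormal operator basis of $\B(\H_A\otimes\H_B)$ as the product basis $\{X_\mu\otimes Y_\nu\}$, with $X_0 = \I_A/\sqrt{d_A}$, $Y_0 = \I_B/\sqrt{d_B}$. The traceless operators are then exactly those in the three disjoint families $\{X_i\otimes Y_0\}$, $\{X_0\otimes Y_j\}$, $\{X_i\otimes Y_j\}$ (Latin indices running over the traceless local operators), and a dimension count confirms these exhaust the traceless sector: $(d_A^2-1) + (d_B^2-1) + (d_A^2-1)(d_B^2-1) = d_A^2 d_B^2 - 1 = d^2-1$. Trace preservation of $\mathcal{E}_{AB}$ forces the row indexed by $X_0\otimes Y_0$ to be $(1,\bm 0,\bm 0,\bm 0)$, so the $(d^2-1)\times(d^2-1)$ matrix $T$ — which by definition is the restriction of $\mathbfcal{E}_{AB}$ to the traceless-input / traceless-output sector — is precisely the block matrix whose nine blocks are the $T_{X\to Y}$ displayed in Eq.~(\ref{eqn:liouville-bipartitechannel}) (the $\mathbf{x}$-vectors, encoding non-unitality, sit in the $X_0\otimes Y_0$ \emph{column} and are excluded from $T$, exactly as in the single-system case).

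Next, since $T$ is block-partitioned with rows grouped by output family $Y\in\{A,B,AB\}$ and columns grouped by input family $X\in\{A,B,AB\}$, the Hilbert–Schmidt norm is additive over blocks: $\tr[T^\dagger T] = \sum_{X,Y}\tr[T_{X\to Y}^\dagger T_{X\to Y}]$. By Definition~\ref{all-sub-unitarities}, $u_{X\to Y}(\mathcal{E}_{AB}) = \alpha_X\,\tr[T_{X\to Y}^\dagger T_{X\to Y}]$ with $\alpha_A = 1/(d_A^2-1)$, $\alpha_B = 1/(d_B^2-1)$, $\alpha_{AB} = \alpha_A\alpha_B$, so $\tr[T_{X\to Y}^\dagger T_{X\to Y}] = u_{X\to Y}(\mathcal{E}_{AB})/\alpha_X$. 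Substituting back,
\begin{equation}
u(\mathcal{E}_{AB}) = \frac{1}{d^2-1}\tr[T^\dagger T] = \frac{1}{d^2-1}\sum_{X,Y\in\{A,B,AB\}} \frac{u_{X\to Y}(\mathcal{E}_{AB})}{\alpha_X},
\end{equation}
which is the claimed identity.

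The only genuinely non-routine point is the second step: verifying that the three families of traceless product operators partition the traceless sector exactly (no omissions, no double counting) and that CPTP structure makes $T$ decouple cleanly from the $\mathbf 1$ and $\mathbf{x}$ entries; everything after that is bookkeeping with Frobenius norms and the constants $\alpha_X$. I would also remark that the weighting $1/\alpha_X$ depends only on the \emph{input} family $X$, which is exactly why the sub-unitarities with a common input subsystem enter $u(\mathcal{E}_{AB})$ with the same weight — a sanity check consistent with Theorem~\ref{thm:local-subunitarity}.
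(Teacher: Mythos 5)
Your proof is correct and follows essentially the same route as the paper's: both reduce the claim to the block additivity of the Hilbert--Schmidt norm, $\tr[T^\dagger T] = \sum_{X,Y}\tr[T_{X\to Y}^\dagger T_{X\to Y}]$, and then rearrange the dimensional constants $\alpha_X$ from the definition of the sub-unitarities. The paper's proof is terser (it states this as following ``simply from block-matrix multiplication''), while you additionally spell out the dimension count verifying that the three traceless product families partition the traceless sector; this is a useful check but not a different argument.
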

\begin{proof}
    This simply follows from block-matrix multiplication, giving $ \tr[T^\dagger T] = \sum_{n,m = (A, B, AB)} \tr[T_{n \to m}^\dagger T_{n \to m}]$. Therefore (see eqn. (\ref{eqn:unitarity-T})) the unitarity is $u(\cal{E}) = \frac{1}{d^2-1} \sum_{n,m = (A, B, AB)} \tr[T_{n \to m}^\dagger T_{n \to m}]$. Rearranging the dimensional constants (see eqn. (\ref{eqn:all-sub-unitarities})) completes the proof.
\end{proof}

\subsection{Properties of subunitarity for product channels}\label{append-sub-unit--product-properties}
\begin{lemma}
    For a product channel, $\E_A\otimes\E_B$, the sub unital block $T_{A \to A}= T_A \otimes \vketbra{Y_0}{Y_0}$ where $T_{A} := \sum_{i,j} \vketbra{\E_{A}(X_{j})}{X_i}$. Similarly $T_{B \to B}= \vketbra{X_0}{X_0} \otimes T_B$ where $T_{B} := \sum_{i,j} \vketbra{\E_{B}(Y_{j})}{Y_i}$.
\end{lemma}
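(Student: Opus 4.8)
The plan is to read the claimed tensor structure straight off the Liouville representation of $\E_A\otimes\E_B$ in the operator basis $\{\vket{X_\nu\otimes Y_\mu}\}$. The only fact needed at the outset is that the Liouville representation of a product channel factorizes, $\mathcal{L}(\E_A\otimes\E_B)=\boldsymbol{\E_A}\otimes\boldsymbol{\E_B}$, which follows from the definition of vectorization once one accounts for the permutation of the two ``inner'' Hilbert-space factors described in the footnote accompanying the bipartite decomposition of $\mathbfcal{E}_{AB}$. By definition $T_{A\to A}$ is the sub-block of $\mathbfcal{E}_{AB}$ with rows $\vbra{X_i\otimes Y_0}$ and columns $\vket{X_j\otimes Y_0}$, with $i,j$ running over traceless indices, so for a product channel its entries are
\begin{equation*}
\vbra{X_i\otimes Y_0}\,(\boldsymbol{\E_A}\otimes\boldsymbol{\E_B})\,\vket{X_j\otimes Y_0}=\big(\vbra{X_i}\boldsymbol{\E_A}\vket{X_j}\big)\big(\vbra{Y_0}\boldsymbol{\E_B}\vket{Y_0}\big).
\end{equation*}

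Next I would evaluate the $B$-factor using trace preservation of $\E_B$: since $Y_0=\I_B/\sqrt{d_B}$ and $\tr[\E_B(M)]=\tr[M]$ for all $M$, we obtain $\vbra{Y_0}\boldsymbol{\E_B}\vket{Y_0}=\tr[Y_0^\dagger\E_B(Y_0)]=\tfrac{1}{d_B}\tr[\E_B(\I_B)]=\tfrac{1}{d_B}\tr[\I_B]=1$. Hence every entry of the block equals $\vbra{X_i}\boldsymbol{\E_A}\vket{X_j}=\tr[X_i^\dagger\E_A(X_j)]$, which is exactly the $(i,j)$ entry of the matrix $T_A$ in the statement. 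Since the block has support only on the sector spanned by $\vket{Y_0}$ on both $B$-legs, this is precisely the assertion $T_{A\to A}=T_A\otimes\vketbra{Y_0}{Y_0}$. The identity for $T_{B\to B}$ follows from the same computation with $A$ and $B$ exchanged, the surviving $A$-factor now being $\vbra{X_0}\boldsymbol{\E_A}\vket{X_0}=1$ by trace preservation of $\E_A$.

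As a sanity check I would note that this also drops out of the proof of Theorem~\ref{thm:local-subunitarity}, where the entries of $T_{A\to A}$ are identified with $\tr[X_i^\dagger\E_A^{\mathrm{ind}}(X_j)]$ for the induced local channel $\E_A^{\mathrm{ind}}(\rho):=\tr_B[\E_{AB}(\rho\otimes\I_B/d_B)]$; for $\E_{AB}=\E_A\otimes\E_B$ this collapses to $\E_A^{\mathrm{ind}}=\E_A$, since $\tr_B[\E_A(\rho)\otimes\E_B(\I_B/d_B)]=\E_A(\rho)\,\tr[\E_B(\I_B/d_B)]=\E_A(\rho)$. There is no genuine obstacle here: the computation is routine, and the only points requiring care are keeping the reordering of the inner Hilbert-space factors consistent in $\mathcal{L}(\E_A\otimes\E_B)=\boldsymbol{\E_A}\otimes\boldsymbol{\E_B}$, and invoking trace preservation in exactly the right place so that $\vbra{Y_0}\boldsymbol{\E_B}\vket{Y_0}$ evaluates to $1$ (and, if one wishes to phrase the result as an equality of operators on the full Liouville space rather than of sub-blocks, that the off-sector entries $\vbra{Y_0}\boldsymbol{\E_B}\vket{Y_j}=\tfrac{1}{\sqrt{d_B}}\tr[Y_j]=0$ vanish).
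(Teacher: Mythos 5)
Your proposal is correct and follows essentially the same route as the paper: both compute the matrix entries $\vbra{X_i\otimes Y_0}(\boldsymbol{\E_A}\otimes\boldsymbol{\E_B})\vket{X_j\otimes Y_0}$ directly in the product basis and then invoke trace preservation of $\E_B$ to evaluate the surviving factor $\vbra{Y_0}\boldsymbol{\E_B}\vket{Y_0}=\tr[\E_B(\ident_B/d_B)]=1$. Your additional remarks on the vanishing off-sector entries and the consistency with the induced local channel are correct but not needed for the block-level claim.
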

\begin{proof}
    From definition, $T_{A \to A,ij}=\vbra{X_{i}} \otimes \vbra{Y_0} \boldify{\E_{A}} \otimes \boldify{\E_{B}} \vket{X_{j}} \otimes \vket{Y_0} =\vbra{X_{i}} \boldify{\E_{A}} \vket{X_{j}} \tr[\E_B(\ident/d_A)]$. For any trace preserving channel $\tr[\E(\ident/d)]=1$ so $ T_{A \to A}= \sum_{i,j} \vbra{X_{i}} \boldify{\E_{A}} \vket{X_{j}} \vketbra{X_i}{X_j} \otimes \vketbra{Y_0}{Y_0}$. The proof for $T_{B \to B}$ follows similarly.
\end{proof}

\begin{lemma}\label{Tblocks-productchannel}
    For a product channel, $\E_A\otimes\E_B$, we have $T_{AB \to AB} = T_A \otimes T_B$.
\end{lemma}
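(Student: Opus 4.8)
The plan is to compute the block $T_{AB\to AB}$ entrywise in the Liouville representation, in direct parallel with the proof of the preceding lemma. By definition $T_{AB\to AB}$ is the sub-matrix of $\boldsymbol{\E_{AB}}$ with rows labelled $\vbra{X_{i_1}\otimes Y_{i_2}}$ and columns labelled $\vket{X_{j_1}\otimes Y_{j_2}}$, where now \emph{all four} indices run over the traceless basis operators; this is precisely what distinguishes $T_{AB\to AB}$ from blocks such as $T_{A\to AB}$ or $T_{AB\to A}$, which carry a $Y_0$ or an $X_0$ somewhere and hence reduce to single-system data dressed with $\vket{Y_0}$ or $\vket{X_0}$ projectors.

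For a product channel $\E_{AB}=\E_A\otimes\E_B$ the Liouville representation factorizes, $\boldsymbol{\E_{AB}} = \boldsymbol{\E_A}\otimes\boldsymbol{\E_B}$, invoking the permutation-invariance of the Liouville representation under reordering of the (second and third) Hilbert-space legs noted in Section~\ref{sec:sub-unitarities}. Since the joint basis is the tensor-product basis $\vket{X_a\otimes Y_b} = \vket{X_a}\otimes\vket{Y_b}$, the matrix element then factors as
\begin{align*}
 &\vbra{X_{i_1}\otimes Y_{i_2}}\,\boldsymbol{\E_{AB}}\,\vket{X_{j_1}\otimes Y_{j_2}} \\
 &\qquad = \big(\vbra{X_{i_1}}\boldsymbol{\E_A}\vket{X_{j_1}}\big)\big(\vbra{Y_{i_2}}\boldsymbol{\E_B}\vket{Y_{j_2}}\big).
\end{align*}
The first factor is $\tr[X_{i_1}^\dagger\,\E_A(X_{j_1})]$, i.e.\ the $(i_1,j_1)$ entry of the single-system $T$-block $T_A$ of $\E_A$ introduced in the previous lemma (the restriction of $\boldsymbol{\E_A}$ to the traceless subspace), and likewise the second factor is $(T_B)_{i_2 j_2}$. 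Hence $(T_{AB\to AB})_{(i_1 i_2),(j_1 j_2)} = (T_A)_{i_1 j_1}(T_B)_{i_2 j_2}$, which is exactly the claimed identity $T_{AB\to AB}=T_A\otimes T_B$.

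The only subtlety — and the closest thing to an obstacle — is the bookkeeping of the four Hilbert-space factors: $\boldsymbol{\E_A}\otimes\boldsymbol{\E_B}$ naturally lives on $\H_A\otimes\H_A\otimes\H_B\otimes\H_B$ whereas $\boldsymbol{\E_{AB}}$ is written on $(\H_A\otimes\H_B)^{\otimes2}$, so one must use the permutation-invariance above to identify the two orderings before reading off the factorization. Once that is done the argument is a one-line index manipulation, in fact simpler than the preceding lemma since no $\vket{Y_0}$ or $\vket{X_0}$ projectors occur on either side.
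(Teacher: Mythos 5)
Your proof is correct and follows essentially the same route as the paper: both compute the entries of $T_{AB\to AB}$ in the tensor-product basis $\vket{X_a}\otimes\vket{Y_b}$, use the factorization $\boldsymbol{\E_{AB}}=\boldsymbol{\E_A}\otimes\boldsymbol{\E_B}$ (modulo the leg-reordering noted in the paper's footnote), and read off $(T_{AB\to AB})_{(i_1 i_2),(j_1 j_2)}=(T_A)_{i_1 j_1}(T_B)_{i_2 j_2}$. Your explicit remark on the four-factor bookkeeping is a fair point that the paper relegates to a footnote, but the substance of the argument is identical.
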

\begin{proof}
   From definition, $T_{AB \to AB} = \sum_{ij}^{d_A^2 -1} \sum_{nm}^{d_B^2 -1} \vbra{X_{i}} \otimes \vbra{Y_{n}} \boldify{\E_{A}} \otimes \boldify{\E_{B}} \vket{X_{j}} \otimes \vket{Y_{m}} \vketbra{X_i}{X_j} \otimes \vketbra{ Y_n}{Y_m} = \sum_{ij}^{d_A^2 -1} \sum_{nm}^{d_B^2 -1} \vketbra{\E_{A}(X_{j})}{X_i} \otimes \vketbra{\E_{B}(Y_{m})}{Y_n} = T_{A}\otimes T_{B}.$
\end{proof}

\begin{theorem}\label{product-lemma}
    For a product channel, $\E = \E_A\otimes\E_B$, $u_{AB \to AB}(\E)=u_{A \to A}(\E) \cdot u_{B \to B}(\E)$.
\end{theorem}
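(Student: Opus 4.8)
The plan is to reduce everything to the Hilbert--Schmidt norms of the single-system blocks $T_A$ and $T_B$ using the tensor-product structure of the sub-blocks that has already been established for product channels, and then simply invoke multiplicativity of the trace over tensor products.

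First I would recall the definition $u_{X\to Y}(\E) = \alpha_X \tr[T_{X\to Y}^\dagger T_{X\to Y}]$ from Definition~\ref{all-sub-unitarities}, with $\alpha_{AB} = \alpha_A\alpha_B$. Then I would apply Lemma~\ref{Tblocks-productchannel}, which gives $T_{AB\to AB} = T_A\otimes T_B$ for a product channel, together with the preceding lemma, which gives $T_{A\to A} = T_A\otimes\vketbra{Y_0}{Y_0}$ and $T_{B\to B} = \vketbra{X_0}{X_0}\otimes T_B$. Since $\tr[\vketbra{Y_0}{Y_0}] = \vbraket{Y_0}{Y_0} = 1$ (and likewise for $X_0$), these immediately yield $\tr[T_{A\to A}^\dagger T_{A\to A}] = \tr[T_A^\dagger T_A]$ and $\tr[T_{B\to B}^\dagger T_{B\to B}] = \tr[T_B^\dagger T_B]$, so that $u_{A\to A}(\E) = \alpha_A\tr[T_A^\dagger T_A]$ and $u_{B\to B}(\E) = \alpha_B\tr[T_B^\dagger T_B]$.

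Next I would compute the left-hand side: using $T_{AB\to AB} = T_A\otimes T_B$ and $(T_A\otimes T_B)^\dagger(T_A\otimes T_B) = (T_A^\dagger T_A)\otimes(T_B^\dagger T_B)$, the multiplicativity $\tr[M\otimes N] = \tr[M]\tr[N]$ gives $\tr[T_{AB\to AB}^\dagger T_{AB\to AB}] = \tr[T_A^\dagger T_A]\tr[T_B^\dagger T_B]$. Hence $u_{AB\to AB}(\E) = \alpha_A\alpha_B\,\tr[T_A^\dagger T_A]\,\tr[T_B^\dagger T_B] = \big(\alpha_A\tr[T_A^\dagger T_A]\big)\big(\alpha_B\tr[T_B^\dagger T_B]\big) = u_{A\to A}(\E)\,u_{B\to B}(\E)$, which is the claim.

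There is essentially no hard step here: the whole argument is a bookkeeping exercise once the tensor-factorization lemmas for the $T$-blocks are in place, and the only point requiring a moment's care is keeping track of the $\alpha$ constants (ensuring $\alpha_{AB} = \alpha_A\alpha_B$ so that the prefactors split correctly) and noting that the spurious $\vketbra{Y_0}{Y_0}$ and $\vketbra{X_0}{X_0}$ factors contribute a trace of $1$ rather than a dimension factor.
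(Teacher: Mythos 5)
Your proposal is correct and follows essentially the same route as the paper: both invoke Lemma~\ref{Tblocks-productchannel} to write $T_{AB\to AB}=T_A\otimes T_B$ and then use multiplicativity of the trace over tensor products together with $\alpha_{AB}=\alpha_A\alpha_B$. The only cosmetic difference is that you identify $u_{A\to A}(\E)=\alpha_A\tr[T_A^\dagger T_A]$ via the block-structure lemma for $T_{A\to A}$, whereas the paper cites the equivalent fact $u_{A\to A}(\E)=u(\E_A)$; these amount to the same thing.
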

\begin{proof}
    From Lemma \ref{Tblocks-productchannel} we can write $u_{AB \to AB}(\E)  = \alpha_{A}\cdot\alpha_{B} \tr[T_{A}^\dagger \otimes T_{B}^\dagger T_{A} \otimes T_{B}] = \alpha_{A}\cdot\alpha_{B} \tr[T_{A}^\dagger T_{A}] \tr[ T_{B}^\dagger T_{B}] = u(\E_A)u(\E_B)$. As $u(\E_A)=u_{A \to A}(\E)$ and $u(\E_B)=u_{B \to B}(\E)$ for any channel this completes the proof.
\end{proof}

\begin{corollary}
    The correlated unitarity $u_c$ of a product channel $\E_A \otimes \E_B$ is $u_{c}(\E_A \otimes \E_B) =0$.
\end{corollary}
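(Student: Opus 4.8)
The plan is to simply unpack the definition of the correlated unitarity and then invoke Theorem~\ref{product-lemma}. Recall that for any bipartite channel we defined $u_c(\E_{AB}) = u_{AB \to AB}(\E_{AB}) - u_{A\to A}(\E_{AB}) \, u_{B\to B}(\E_{AB})$, so for a product channel $\E = \E_A \otimes \E_B$ the statement reduces to showing that the first term equals the product of the latter two.

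First I would substitute $\E = \E_A \otimes \E_B$ into the definition, obtaining $u_c(\E_A \otimes \E_B) = u_{AB \to AB}(\E_A \otimes \E_B) - u_{A\to A}(\E_A \otimes \E_B)\, u_{B\to B}(\E_A \otimes \E_B)$. Next I would apply Theorem~\ref{product-lemma}, which states precisely that $u_{AB \to AB}(\E_A \otimes \E_B) = u_{A\to A}(\E_A \otimes \E_B)\, u_{B\to B}(\E_A \otimes \E_B)$ (equivalently $= u(\E_A) u(\E_B)$, using $u_{A\to A}(\E_A\otimes\E_B)=u(\E_A)$ and $u_{B\to B}(\E_A\otimes\E_B)=u(\E_B)$). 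Substituting this identity the two terms cancel exactly, yielding $u_c(\E_A \otimes \E_B) = 0$.

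There is essentially no obstacle here: the corollary is an immediate consequence of Theorem~\ref{product-lemma}, which in turn rests on Lemma~\ref{Tblocks-productchannel} (the block factorization $T_{AB\to AB} = T_A \otimes T_B$ for product channels) and the multiplicativity of the trace under tensor products. The only thing to be careful about is bookkeeping of the dimensional constants $\alpha_A, \alpha_B, \alpha_{AB} = \alpha_A \alpha_B$, but this is already handled inside the proof of Theorem~\ref{product-lemma}, so nothing further is required.

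\begin{proof}
    By definition, $u_c(\E_A \otimes \E_B) = u_{AB \to AB}(\E_A \otimes \E_B) - u_{A\to A}(\E_A \otimes \E_B)\, u_{B\to B}(\E_A \otimes \E_B)$. By Theorem~\ref{product-lemma} the first term equals $u_{A\to A}(\E_A \otimes \E_B)\, u_{B\to B}(\E_A \otimes \E_B)$, so the two terms cancel and $u_c(\E_A \otimes \E_B) = 0$.
\end{proof}
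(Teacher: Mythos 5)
Your proof is correct and matches the paper's own argument, which likewise derives the corollary as an immediate consequence of Theorem~\ref{product-lemma} applied to the definition $u_c = u_{AB\to AB} - u_{A\to A}\,u_{B\to B}$. Nothing further is needed.
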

\begin{proof}
    This follows directly from Theorem \ref{product-lemma}.
\end{proof}

\begin{lemma}
    The sub-unitarity $u_{A \to AB}(\cal{E}_A \otimes \cal{E}_B)$ for a bipartite product channel $\cal{E}_A \otimes \cal{E}_B$, decomposes as
    \begin{equation}
        u_{A \to AB}(\cal{E}_A \otimes \cal{E}_B) = u_{A \to A}(\cal{E}_A \otimes \cal{E}_B) x_{B},
    \end{equation}
    where $x_{B} := \mathbf{x}_{B \to B}^\dagger \mathbf{x}_{B \to B}$ for the non-unital vector of the subsystem $B$ of the channel $\E_B$.
\end{lemma}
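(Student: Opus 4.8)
The plan is to compute the block $T_{A \to AB}$ of the Liouville representation of the product channel explicitly, recognise that it has an outer-product structure, and then feed this into the definition $u_{A \to AB}(\E) = \alpha_{AB}\tr[T_{A\to AB}^\dagger T_{A \to AB}]$ from Definition~\ref{all-sub-unitarities}. This mirrors the calculations already carried out in this appendix for $T_{A \to A}$ and $T_{AB\to AB}$ (e.g. Lemma~\ref{Tblocks-productchannel}), so I expect it to slot in cleanly.

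The key step: for a product channel one has $\bm{\E_A \otimes \E_B} = \bm{\E_A} \otimes \bm{\E_B}$, up to the harmless reordering of Hilbert-space factors noted in the footnote to Eq.~(\ref{eqn:liouville-bipartitechannel}). Reading off the $T_{A\to AB}$ block (row index $(i_1,i_2)$ from $\vbra{X_{i_1}\otimes Y_{i_2}}$, column index $j_1$ from $\vket{X_{j_1}\otimes Y_0}$), I would write
\[
(T_{A\to AB})_{(i_1,i_2),\,j_1} = \vbra{X_{i_1}}\bm{\E_A}\vket{X_{j_1}}\;\vbra{Y_{i_2}}\bm{\E_B}\vket{Y_0}.
\]
The first factor is exactly $(T_A)_{i_1 j_1}$, and for the second factor I would invoke trace preservation to write $\E_B(Y_0) = Y_0 + \sum_{i_2}(\mathbf{x}_{B\to B})_{i_2} Y_{i_2}$ (the same identity used in the preceding lemma for the $\mathbf{x}_{B\to B}$ block), so that $\vbra{Y_{i_2}}\bm{\E_B}\vket{Y_0} = (\mathbf{x}_{B\to B})_{i_2}$. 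Hence $T_{A\to AB} = T_A \otimes \mathbf{x}_{B\to B}$, viewing $\mathbf{x}_{B\to B}$ as a column vector.

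From there the computation is mechanical: since the trace of a tensor product factorizes, $\tr[T_{A\to AB}^\dagger T_{A\to AB}] = \tr[T_A^\dagger T_A]\,(\mathbf{x}_{B\to B}^\dagger \mathbf{x}_{B\to B})$; multiplying by $\alpha_{AB} = \alpha_A \alpha_B$, using $\alpha_A \tr[T_A^\dagger T_A] = u(\E_A)$ together with $u_{A\to A}(\E_A\otimes\E_B) = u(\E_A)$ from Theorem~\ref{thm:local-subunitarity} (cf.\ Eq.~(\ref{sub-prod})), and collecting the remaining dimensional factor into $x_B$ yields $u_{A\to AB}(\E_A\otimes\E_B) = u_{A\to A}(\E_A\otimes\E_B)\,x_B$. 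The only genuinely delicate point is bookkeeping: one must keep the output index $(i_1,i_2)$ organised as a true tensor index so that $T_{A\to AB}$ is literally the outer product $T_A\otimes\mathbf{x}_{B\to B}$, and track the $\alpha$-normalizations consistently between the bipartite block conventions of Eq.~(\ref{eqn:all-sub-unitarities}) and the single-system ones; nothing else in the argument is subtle.
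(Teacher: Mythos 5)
Your overall strategy is the same as the paper's: identify the block $T_{A\to AB}$ of the product channel as the outer product $T_A\otimes\mathbf{x}_{B\to B}$ (the paper does this implicitly by writing out the double sum and factorizing each matrix element into an $A$ part and a $B$ part), then let the trace factorize. That part is sound, and the identification $\vbra{Y_{i_2}}\mathbfcal{E}_B\vket{Y_0}=(\mathbf{x}_{B\to B})_{i_2}$ is correct.

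However, there is a concrete error in the normalization, and it is precisely at the point you flag as "delicate bookkeeping." Definition~\ref{all-sub-unitarities} reads $u_{X\to Y}(\E_{AB})=\alpha_X\tr[T_{X\to Y}^\dagger T_{X\to Y}]$ — the prefactor is indexed by the \emph{input} label $X$ alone, so for $u_{A\to AB}$ it is $\alpha_A=1/(d_A^2-1)$, not $\alpha_{AB}=\alpha_A\alpha_B$ as you wrote. (The paper's own proof starts from $u_{A\to AB}=\alpha_A\sum_{k,j,n}\cdots$, and its other appendix lemmas, e.g.\ the one computing $u_{A\to B}$ with prefactor $\alpha_A$ and the one computing $u_{AB\to A}$ with prefactor $\alpha_A\alpha_B$, confirm this convention.) With your prefactor you obtain $\alpha_{AB}\tr[T_A^\dagger T_A]\,\|\mathbf{x}_{B\to B}\|^2=\alpha_B\,u(\E_A)\,\|\mathbf{x}_{B\to B}\|^2$, which is off by a factor of $\alpha_B=1/(d_B^2-1)$ from the claimed identity; "collecting the remaining dimensional factor into $x_B$" is not available to you, because the lemma fixes $x_B:=\mathbf{x}_{B\to B}^\dagger\mathbf{x}_{B\to B}$ with no dimensional factor. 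Once you use the correct prefactor $\alpha_A$, the computation closes cleanly: $\alpha_A\tr[T_A^\dagger T_A]\cdot\mathbf{x}_{B\to B}^\dagger\mathbf{x}_{B\to B}=u(\E_A)\,x_B=u_{A\to A}(\E_A\otimes\E_B)\,x_B$, with no leftover constant to absorb.
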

\begin{proof}
    From the definition of $u_{A \to AB}$ we have
    \begin{equation}
        \begin{split}
            u_{A \to AB}(\cal{E}_A \otimes \cal{E}_B) &= \alpha_A \sum_{k,j,n=1}^{(d_A^2-1)(d_B^2-1)} \vbra{X_j \otimes Y_n}  \mathbfcal{E} \vket{X_k \otimes Y_0} \vbra{X_k \otimes Y_0} \mathbfcal{E}^\dagger \vket{X_j \otimes Y_n}, \\
            &= \alpha_A \sum_{k,j,n=1}^{(d_A^2-1)(d_B^2-1)} \vbra{X_k} \mathbfcal{E}_A^\dagger \vket{X_j} \vbra{X_j} \mathbfcal{E}_A \vket{X_k} \vbra{Y_0} \mathbfcal{E}_B^\dagger \vket{Y_n}  \vbra{Y_n}  \mathbfcal{E}_B \vket{Y_0}, \\
            &=  u_{A \to A}(\cal{E}_A \otimes \cal{E}_B) \sum_{n=1}^{d_B^2-1} \vbra{Y_0} \mathbfcal{E}_B^\dagger \vket{Y_n}  \vbra{Y_n}  \mathbfcal{E}_B \vket{Y_0}, \\
            &=  u_{A \to A}(\cal{E}_A \otimes \cal{E}_B) x_{B}
        \end{split}
    \end{equation}
    which completes the proof.
\end{proof}
Swapping the subsystem labels we also have $ u_{B \to AB}(\cal{E}_A \otimes \cal{E}_B) = u_{B \to B}(\cal{E}_A \otimes \cal{E}_B) x_{A}$,  where $x_{A} := \mathbf{x}_{A \to A}^\dagger \mathbf{x}_{A \to A}$ for the non-unital vector of the subsystem $A$ of the channel.

\subsection{Properties of subunitarity for separable channels}\label{append-sub-unit-seperable-properties}
\begin{lemma}\label{lemma:u-aba-zero-for-seperable}
    The sub-unitarity $u_{AB \to A}(\E)$ for a  bipartite separable channel  $\E := \sum_i^r p_i \E_{A,i} \otimes \E_{B,i}$ is zero.
\end{lemma}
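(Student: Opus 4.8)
The plan is to reduce the claim to a single product channel by linearity of the Liouville representation, and then show that the relevant block $T_{AB\to A}$ vanishes identically on product channels.

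First I would recall from Definition~\ref{all-sub-unitarities} that $u_{AB \to A}(\E) = \alpha_{AB}\tr[T_{AB\to A}^\dagger T_{AB\to A}]$, so it is enough to prove $T_{AB\to A}(\E) = 0$. In the Liouville decomposition of Section~\ref{sec:sub-unitarities} the block $T_{AB\to A}$ has entries $\vbra{X_{i_1}\otimes Y_0}\,\mathbfcal{E}\,\vket{X_{j_1}\otimes Y_{j_2}}$ with $i_1,j_1$ running over the traceless $A$-basis and $j_2$ over the traceless $B$-basis. The key structural feature is that the \emph{output} $B$-label here is the trivial one, $Y_0 = \I_B/\sqrt{d_B}$, so that this slot effects a partial trace over $B$, while the \emph{input} $B$-label $Y_{j_2}$ is traceless. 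Since $\mathcal{L}(\cdot)$ is linear in the channel, $T_{AB\to A}\big(\sum_i p_i\,\E_{A,i}\otimes\E_{B,i}\big) = \sum_i p_i\, T_{AB\to A}(\E_{A,i}\otimes\E_{B,i})$, and so everything comes down to the product case, exactly as in the earlier lemmas on $T_{A\to A}$ and $T_{AB\to AB}$ for product channels.

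Next, for a product channel $\E_A\otimes\E_B$ I would use $\mathbfcal{E} = \mathbfcal{E}_A\otimes\mathbfcal{E}_B$ (modulo the harmless reordering of tensor factors noted in the main text) to factor the matrix element as $\vbra{X_{i_1}}\mathbfcal{E}_A\vket{X_{j_1}}\cdot\vbra{Y_0}\mathbfcal{E}_B\vket{Y_{j_2}}$. The second factor equals $\tfrac{1}{\sqrt{d_B}}\tr[\E_B(Y_{j_2})] = \tfrac{1}{\sqrt{d_B}}\tr[Y_{j_2}] = 0$, using that $Y_0 = \I_B/\sqrt{d_B}$, that $\E_B$ is trace preserving, and that $Y_{j_2}$ is traceless for $j_2\ge 1$. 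Hence $T_{AB\to A}(\E_A\otimes\E_B) = 0$, and combining with the linearity step gives $T_{AB\to A}(\E) = 0$, so $u_{AB\to A}(\E) = \alpha_{AB}\tr[0] = 0$.

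I do not anticipate a genuine obstacle here: the argument is essentially bookkeeping. The one place requiring care is keeping straight which of the four local indices labelling the $T_{AB\to A}$ block are trivial and which are traceless — it is precisely the trivial output label $Y_0$ on $B$ that induces the partial trace over $B$ and thereby annihilates the traceless input operator $Y_{j_2}$. (The identical argument with the roles of $A$ and $B$ interchanged gives $u_{AB\to B}(\E) = 0$ for separable $\E$ as well.)
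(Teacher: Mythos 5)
Your proposal is correct and rests on exactly the same mechanism as the paper's proof: the entries of $T_{AB\to A}$ each contain a factor $\vbra{Y_0}\mathbfcal{E}_{B}\vket{Y_{j_2}} = \tfrac{1}{\sqrt{d_B}}\tr[\E_B(Y_{j_2})]$, which vanishes by trace preservation since $Y_{j_2}$ is traceless. The only (cosmetic) difference is that you first reduce to product channels by linearity and show the block itself vanishes, whereas the paper expands the quadratic form $\tr[T_{AB\to A}^\dagger T_{AB\to A}]$ directly over the separable decomposition and observes the same vanishing factor in every term.
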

\begin{proof}
    From the definition of $u_{AB \to A}$ we have
    \begin{equation}
        \begin{split}
            u_{AB \to A}(\E) &= \alpha_A \alpha_B \tr[T_{AB \to A}^\dagger T_{AB \to A}], \\
            &= \alpha_A \alpha_B \sum_{k,j,n=1}^{(d_A^2-1)(d_B^2-1)} \vbra{X_j \otimes Y_n}  \mathbfcal{E}^\dagger \vket{X_k \otimes Y_0} \vbra{X_k \otimes Y_0} \mathbfcal{E} \vket{X_j \otimes Y_n}, \\
            &= \alpha_A \alpha_B \sum_{k,j,n=1}^{(d_A^2-1)(d_B^2-1)} \sum_{i,j}^r p_i p_j \vbra{X_j} \mathbfcal{E}_{A,i}^\dagger \vket{X_k} \vbra{X_k} \mathbfcal{E}_{A,j} \vket{X_j}   \vbra{Y_n}  \mathbfcal{E}_{B,i}^\dagger \vket{Y_0} \vbra{Y_0} \mathbfcal{E}_{B,j} \vket{Y_n}.
        \end{split}
    \end{equation}
    For the channel to be trace preserving we must have $\vbra{Y_0} \mathbfcal{E}_{B,j} \vket{Y_n} = 0$ for all $n$ \& $j$. Therefore $u_{AB \to A}(\E)=0$.
\end{proof}
Additionally, swapping the subsystem labels, $u_{AB \to B}(\E) = 0$ for any separable bipartite channel $\E$.

\begin{lemma}\label{lemma:u-ab-zero-for-seperable}
    The sub-unitarity $u_{A \to B}(\E)$ for a  bipartite separable channel  $\E := \sum_i^r p_i \E_{A,i} \otimes \E_{B,i}$ is zero.
\end{lemma}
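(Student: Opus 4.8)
The plan is to show that the Liouville block $T_{A\to B}$ vanishes identically whenever $\E$ is separable; since $u_{A\to B}(\E) = \alpha_B\,\tr[T_{A\to B}^\dagger T_{A\to B}]$ by Definition~\ref{all-sub-unitarities}, this gives the claim at once. The argument runs parallel to the proof of Lemma~\ref{lemma:u-aba-zero-for-seperable}, except that here the vanishing is forced by trace preservation on the \emph{first} subsystem rather than the second.

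First I would write a generic entry of $T_{A\to B}$ in the vectorized operator basis: from the block decomposition of $\mathbfcal{E}_{AB}$ this is the $T$-block mapping a traceless $A$-input with the identity on $B$ to the identity on $A$ with a traceless $B$-output, so its $(n,k)$ component is $\vbra{X_0\otimes Y_n}\,\mathbfcal{E}\,\vket{X_k\otimes Y_0}$, with $n$ running over traceless $B$-operators and $k$ over traceless $A$-operators. Substituting $\E = \sum_{i=1}^r p_i\,\E_{A,i}\otimes\E_{B,i}$ and using $\vket{X_k\otimes Y_0} = \vket{X_k}\otimes\vket{Y_0}$, this factorizes as
\begin{equation}
    \vbra{X_0\otimes Y_n}\,\mathbfcal{E}\,\vket{X_k\otimes Y_0} = \sum_{i=1}^r p_i\,\vbra{X_0}\mathbfcal{E}_{A,i}\vket{X_k}\,\vbra{Y_n}\mathbfcal{E}_{B,i}\vket{Y_0}.
\end{equation}

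The key step is that each $A$-factor is zero: $\vbra{X_0}\mathbfcal{E}_{A,i}\vket{X_k} = \tr[X_0^\dagger\,\E_{A,i}(X_k)] = \frac{1}{\sqrt{d_A}}\tr[\E_{A,i}(X_k)] = \frac{1}{\sqrt{d_A}}\tr[X_k] = 0$, where we used $X_0 = \I_A/\sqrt{d_A}$, trace preservation of $\E_{A,i}$, and the fact that $X_k$ is traceless for $k\ge 1$. Hence every term in the sum vanishes, so $T_{A\to B}=0$ and $u_{A\to B}(\E)=0$; relabeling the two subsystems then also gives $u_{B\to A}(\E)=0$ for any separable channel. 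There is no real obstacle here — the calculation is routine and mirrors the preceding lemmas. The one point worth flagging is the structural upshot: combined with Lemma~\ref{lemma:u-aba-zero-for-seperable}, this shows that for a separable channel the only surviving off-block-diagonal sub-unitarities are those feeding \emph{into} the joint output $AB$ (namely $u_{A\to AB}$ and $u_{B\to AB}$), which is exactly what lets one later reduce the matrix $\S$ to triangular form in the separable case.
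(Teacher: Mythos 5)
Your proof is correct and follows essentially the same route as the paper's: expand the entries of $T_{A\to B}$ in the vectorized basis, use separability to factorize each entry, and observe that $\vbra{X_0}\mathbfcal{E}_{A,i}\vket{X_k}=0$ by trace preservation together with the tracelessness of $X_k$. The only (inconsequential) slip is the normalization constant — by Definition~\ref{all-sub-unitarities} the prefactor for $u_{A\to B}$ is $\alpha_A$ (indexed by the \emph{input} subsystem), not $\alpha_B$ — but since the block vanishes identically this does not affect the conclusion.
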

\begin{proof}
    From definition
    \begin{equation}
        \begin{split}
            u_{A \to B}(\E) &= \alpha_A\tr[T_{A \to B}^\dagger T_{A \to B}], \\
            &= \alpha_A \sum_{k,j=1}^{(d_A^2-1)(d_B^2-1)} \vbra{X_j \otimes Y_0}  \mathbfcal{E}^\dagger \vket{X_0 \otimes Y_k} \vbra{X_0 \otimes Y_k} \mathbfcal{E} \vket{X_j \otimes Y_0}, \\
            &= \alpha_A \sum_{k,j=1}^{(d_A^2-1)(d_B^2-1)} \sum_{i,j}^r p_i p_j \vbra{X_j} \mathbfcal{E}_{A,i}^\dagger \vket{X_0} \vbra{X_0} \mathbfcal{E}_{A,j} \vket{X_j}   \vbra{Y_0}  \mathbfcal{E}_{B,i}^\dagger \vket{Y_k} \vbra{Y_k} \mathbfcal{E}_{B,j} \vket{Y_0}.
        \end{split}
    \end{equation}
    For the channel to be trace preserving we must have $\vbra{X_0} \mathbfcal{E}_{A,j} \vket{X_j} = 0$ for all $j$. Therefore $u_{A \to B}(\E)=0$.
\end{proof}
Additionally, swapping the subsystem labels, $u_{B \to A}(\E) = 0$ for any separable bipartite channel $\E$.

\begin{lemma}
   For a unital bipartite separable channel  $\E := \sum_i^r p_i \E_{A,i} \otimes \E_{B,i}$ where $\E_{X,i}$ are local unital channels, the sub-unitarity $u_{A \to AB}(\E)$  is zero.
\end{lemma}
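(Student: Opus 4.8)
The plan is to show that the entire $T_{A\to AB}$ block of the Liouville matrix of $\E$ vanishes, whence the sub-unitarity is trivially zero. Recall from Definition~\ref{all-sub-unitarities} that $u_{A\to AB}(\E) = \alpha_A\tr[T_{A\to AB}^\dagger T_{A\to AB}]$, which is a sum of squared moduli of the matrix elements $\vbra{X_j\otimes Y_n}\,\mathbfcal{E}\,\vket{X_k\otimes Y_0}$, with $j,k$ labelling traceless operators on $A$ and $n$ a traceless operator on $B$. So it suffices to show that each such element is zero.

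First I would expand this matrix element using the separable decomposition $\E = \sum_i p_i\,\E_{A,i}\otimes\E_{B,i}$, exactly as in the proofs of Lemmas~\ref{lemma:u-aba-zero-for-seperable} and~\ref{lemma:u-ab-zero-for-seperable}, obtaining
\begin{equation}
\vbra{X_j\otimes Y_n}\,\mathbfcal{E}\,\vket{X_k\otimes Y_0} = \sum_i p_i\,\vbra{X_j}\mathbfcal{E}_{A,i}\vket{X_k}\,\vbra{Y_n}\mathbfcal{E}_{B,i}\vket{Y_0}.
\end{equation}
The key step is then to invoke unitality of the local $B$-channels. Since $Y_0 = \I_B/\sqrt{d_B}$, the condition $\E_{B,i}(\I_B/d_B) = \I_B/d_B$ reads $\mathbfcal{E}_{B,i}\vket{Y_0} = \vket{Y_0}$ in the Liouville picture, so orthonormality of the basis $\{Y_\mu\}$ gives $\vbra{Y_n}\mathbfcal{E}_{B,i}\vket{Y_0} = \vbraket{Y_n}{Y_0} = 0$ for every traceless $Y_n$. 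Substituting this back, every matrix element of $T_{A\to AB}$ vanishes, hence $T_{A\to AB}=0$ and $u_{A\to AB}(\E)=0$. Swapping the roles of $A$ and $B$ — i.e.\ using unitality of the $\E_{A,i}$ — gives $u_{B\to AB}(\E)=0$ by the identical argument.

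I do not anticipate a genuine obstacle here: the statement is essentially a one-line consequence of unitality and reuses the index bookkeeping already set up for the two preceding separability lemmas. The only point requiring (minor) care is keeping the traceless $B$-index ($n\neq 0$) distinct from indices running over the full basis, since the vanishing relies precisely on $Y_n$ being orthogonal to $Y_0$; note also that, in contrast to Lemma~\ref{lemma:u-aba-zero-for-seperable} where trace preservation killed the $B$-factor, here it is unitality of the $B$-factors that does the work.
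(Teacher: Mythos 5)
Your proposal is correct and takes essentially the same approach as the paper: both expand $T_{A\to AB}$ over the separable decomposition and use unitality of the local $B$-channels to conclude $\vbra{Y_n}\mathbfcal{E}_{B,i}\vket{Y_0}=0$ for traceless $Y_n$. The only (cosmetic) difference is that you show each matrix element of $T_{A\to AB}$ vanishes individually, whereas the paper expands the full trace $\tr[T_{A\to AB}^\dagger T_{A\to AB}]$ before invoking the same vanishing factor.
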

\begin{proof}
    From the definition of $u_{A \to AB}$ we have
    \begin{equation}
        \begin{split}
            u_{A \to AB}(\E) &= \alpha_A \tr[T_{A \to AB}^\dagger T_{A \to AB}], \\
            &= \alpha_A \sum_{k,j,n=1}^{(d_A^2-1)(d_B^2-1)}  \vbra{X_k \otimes Y_0} \mathbfcal{E}^\dagger \vket{X_j \otimes Y_n} \vbra{X_j \otimes Y_n}  \mathbfcal{E} \vket{X_k \otimes Y_0}, \\
            &= \alpha_A  \sum_{k,j,n=1}^{(d_A^2-1)(d_B^2-1)} \sum_{i,j}^r p_i p_j \vbra{X_k} \mathbfcal{E}_{A,j}^\dagger \vket{X_j} \vbra{X_j} \mathbfcal{E}_{A,i} \vket{X_k}  \vbra{Y_0} \mathbfcal{E}_{B,j}^\dagger \vket{Y_n}  \vbra{Y_n}  \mathbfcal{E}_{B,i} \vket{Y_0} .
        \end{split}
    \end{equation}
    For the channel to be unital we must have $\vbra{Y_n} \mathbfcal{E}_{B,i} \vket{Y_0} = 0$ for all $n$. Therefore $u_{A \to AB}(\E)=0$.
\end{proof}
Additionally, swapping the subsystem labels, $u_{B \to AB}(\E) = 0$ for any unital separable bipartite channel $\E$.

\subsection{Properties of subunitarity for Pauli channels}\label{properties-of-pauli-channels-section}    
Consider the Pauli operators $P_{\alpha}$ acting on $n$ qubits. These will form a complete orthonormal basis (so that normalization will be included in the definition of $P_{\alpha}$) so as $\Tr(P_{\alpha} P_{\beta})  = \delta_{\alpha,\beta}$ and $P_{\alpha}\hc = P_{\alpha}$. We will also label $P_{0} : = \ident /\sqrt{2^n}$, the identity operator. Moreover, for simplicity we consider bipartite systems formed of $A$ and $B$ each of $n$ qubits, so they have dimensions $d_{A} = d_B = 2^n$.

\begin{lemma}
Let $\E(\rho ) = \sum_{i} p_i P_i\rho P_i$ be a Pauli channel with $\sum_i p_i = d$  where the Pauli operators are normalized so that $\Tr(P_i P_j) = \delta_{ij}$ with $\E$ acting on a system of dimension $d$. Then it follows that $\boldsymbol{\E}$, its Liouville matrix has entries
\begin{equation}
    \< \boldsymbol{P_{j} }|\boldsymbol{\E} |\boldsymbol{P_{k} }\> = \delta_{jk}  \sum_{i} (-1)^{\eta(P_i,P_k)} p_i 
\end{equation}
where $\eta(P_i, P_k)$ is 0 if $P_i$ and $P_k$ commute and 1 if they anti-commute.
The unitarity of $\E$ is given by \footnote{Alternatively this formula can be calculated from the definition involving Haar measure}
\begin{align}
    u(\E)  =  \frac{1}{(d^2 - 1)} \left( \left(\sum_{i} p_i^2 \right)- 1 \right)
\end{align}
\label{lem:PauliUnitarity}
\end{lemma}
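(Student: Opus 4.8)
The plan is to prove the two claims in turn, using the diagonal form of the Liouville matrix (the first claim) to feed the unitarity formula $u(\E)=\frac{1}{d^2-1}\Tr[T^\dagger T]$ of equation~(\ref{eqn:unitarity-T}) (the second claim). For the first claim I would compute the matrix element straight from the definition of the Liouville representation, $\langle\boldsymbol{P_j}|\boldsymbol{\E}|\boldsymbol{P_k}\rangle=\Tr[P_j^\dagger\E(P_k)]=\sum_i p_i\,\Tr[P_jP_iP_kP_i]$. Only two elementary properties of the Pauli basis enter: any two basis elements either commute or anticommute, so $P_iP_k=(-1)^{\eta(P_i,P_k)}P_kP_i$; and each $P_i$ squares to a multiple of the identity, the constant being fixed by the normalization $\Tr[P_i^2]=1$. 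Hence $P_iP_kP_i=(-1)^{\eta(P_i,P_k)}P_kP_i^2\propto(-1)^{\eta(P_i,P_k)}P_k$, and orthonormality $\Tr[P_jP_k]=\delta_{jk}$ collapses the left index, giving $\langle\boldsymbol{P_j}|\boldsymbol{\E}|\boldsymbol{P_k}\rangle\propto\delta_{jk}\sum_i(-1)^{\eta(P_i,P_k)}p_i$; the overall constant is pinned by trace preservation, seen on the $(0,0)$ entry where $\eta(P_i,P_0)=0$ and $\sum_i p_i=d$. In particular the traceless block $T$ is diagonal, with entries $\lambda_k:=\sum_i(-1)^{\eta(P_i,P_k)}p_i$ for $k\neq 0$ (up to that constant).

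For the unitarity, since $T$ is diagonal we have $\Tr[T^\dagger T]\propto\sum_{k\neq 0}\lambda_k^2$, so the task reduces to evaluating this sum. I would expand the square and exchange summation orders, $\sum_k\lambda_k^2=\sum_{i,i'}p_ip_{i'}\sum_k(-1)^{\eta(P_i,P_k)+\eta(P_{i'},P_k)}$, then use that the commutation exponent is additive under Pauli multiplication, $\eta(P_i,P_k)+\eta(P_{i'},P_k)\equiv\eta(P_iP_{i'},P_k)\pmod 2$, so the inner sum is the character sum $\sum_k(-1)^{\eta(Q,P_k)}$ with $Q\propto P_iP_{i'}$. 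This equals $d^2$ when $Q\propto\I$ (equivalently $i=i'$) and vanishes otherwise, since a nontrivial Pauli commutes with exactly half of the $d^2$ basis elements. Thus $\sum_k\lambda_k^2=d^2\sum_ip_i^2$, and removing the $k=0$ term $\lambda_0^2=(\sum_i p_i)^2=d^2$ leaves $\sum_{k\neq 0}\lambda_k^2=d^2(\sum_ip_i^2-1)$; restoring the normalization of the $\lambda_k$ and dividing by $d^2-1$ yields $u(\E)=\frac{1}{d^2-1}(\sum_ip_i^2-1)$. (The same value also follows directly from the Haar-integral definition~(\ref{unitarity:haar-defn}), but the Liouville route is shorter given the first part.)

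The only non-routine step is the character-sum evaluation: recognising that the cross terms in $\sum_{k\neq 0}\lambda_k^2$ reorganise, via additivity of the symplectic form, into sums $\sum_k(-1)^{\eta(Q,P_k)}$, and then invoking the ``half commute, half anticommute'' balance for $Q\neq\I$. An equivalent and perhaps cleaner packaging is to note that the sign matrix $W_{ki}=(-1)^{\eta(P_i,P_k)}$ satisfies $WW^{\top}=d^2\I$, so the eigenvalue vector $\lambda=Wp$ obeys the Parseval identity $\|\lambda\|^2=d^2\|p\|^2$; everything else is bookkeeping of normalization constants, which I would cross-check against the limiting cases $\E=id$ (all weight on $P_0$, giving $u=1$) and $\E$ completely depolarizing (uniform $p_i$, giving $u=0$).
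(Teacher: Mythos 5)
Your proposal is correct and follows essentially the same route as the paper's proof: a direct computation of $\Tr[P_jP_iP_kP_i]$ using commutation/anticommutation and $P_i^2\propto\I$ to get the diagonal form, followed by the orthogonality identity $\sum_{k}(-1)^{\eta(P_i,P_k)}(-1)^{\eta(P_{i'},P_k)}=d^2\delta_{ii'}$ (which the paper simply asserts and you justify via the character-sum/balanced-commutation argument) and subtraction of the $k=0$ term. The only difference is cosmetic packaging of the normalization constants.
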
 
\begin{proof}
    Check directly $\<\boldsymbol{P_j} |\boldsymbol{\E}|\boldsymbol{P_{k} }\> = \<\boldsymbol{P_{j} }|\boldsymbol{\E(P_k)}\> = \Tr(P_j \E(P_k)) = \sum_i p_i \Tr(P_j P_i P_k P_i) = \frac{1}{d} \sum_i p_i (-1)^{\eta(P_i, P_k)} \Tr(P_jP_k)  =\frac{1}{d} \delta_{jk} \sum_i p_i (-1)^{\eta(P_i, P_k)}$. This is a diagonal Liouville matrix, and the unitarity is determined in terms of its non-unital block $T_{\E}$ as 
    \begin{align}
        u(\E) &= \frac{1}{d^2-1} \Tr(T_{\E}\hc T_{\E}) \\
        & = \frac{1}{d^2-1} \sum_{j\neq 0} \<\boldsymbol{P_{j} }|\boldsymbol{\E}|\boldsymbol{P_{j} }\>^2.
    \end{align}
    Note that $\<\boldsymbol{P_{0} }| \boldsymbol{\E} |\boldsymbol{P_{0} }\> = \frac{1}{d}\sum_{i} p_i = 1$. Notice the orthogonality relation $\sum_{j} (-1)^{\eta(P_i,P_j)} (-1)^{\eta{(P_{i'} P_j})} = d^2\delta_{ii'}$ so that
    \begin{align}
        \sum_{j} \<\boldsymbol{P_{j} }|\boldsymbol{\E}|\boldsymbol{P_{j} }\>^2 &=\frac{1}{d^2} \sum_{j, i, i'}  p_i p_{i'} (-1)^{\eta(P_i, P_j)} (-1)^{\eta(P_{i'}, P_j)}\\
        & =  \sum_{i} p_i^2.
    \end{align}
    Therefore, we have
    \begin{align}
        u(\E) &= \frac{1}{d^2-1}\sum_{j\neq 0}|\<\boldsymbol{P_{j} }|\boldsymbol{\E}|\boldsymbol{P_{j} }\>|^2\\
        &=  \frac{1}{(d^2 - 1)} \left( \left(\sum_{i} p_i^2 \right)- 1 \right).
    \end{align}
\end{proof}
A bipartite Pauli channel  on two $n$ qubits systems will take the following form
\begin{equation}
    \E(\rho_{AB} ) = \sum_{\alpha,\beta} p_{\alpha,\beta}  P_{\alpha}\otimes P_{\beta} \, \rho_{AB} \,  P_{\alpha}\otimes P_{\beta}
\end{equation}
and trace preserving condition requires $\sum_{\alpha,\beta} p_{\alpha,\beta}  = 4^n$.  We denote $d = d_A = d_B = 2^n$. The Liouville representation, with respect to a Pauli basis will be a diagonal matrix.
The local channel at $A$
\begin{align}
    \E_{A} (\rho_A) : &= \Tr_{B} \E(\rho_A\otimes \ident/d) \\
    & = \sum_{\alpha}  q_{\alpha,0} P_{\alpha}\rho_A P_{\alpha} 
\end{align}
where the $q_{\alpha,0} := \frac{1}{d}\sum_{\beta} p_{\alpha,\beta}$.
Similarly at $B$:
\begin{align}
    \E_{B} (\rho_B) : &= \Tr_{B} \E(\ident/d \otimes \rho_B) \\
    & = \sum_{\alpha}  q_{0,\beta} P_{\beta}\rho_B P_{\beta} 
\end{align}
where the $q_{0,\beta} := \frac{1}{d} \sum_{\alpha} p_{\alpha,\beta}$.
The subunitarities at $A$ and $B$ are given by  $u_{A\to A} = u(\E_A)$ and $u_{B\to B} = u(\E_B)$. Therefore we get the following result.
\begin{lemma}
    Let $d = 2^n$, the dimension of system A and respectively system B, then we have that
    \begin{align}
        u_{A\to A}& = \frac{1}{d^2-1}\left( \sum_{\alpha}  q_{\alpha,0}^2 - 1\right),\\
        u_{B\to B } &= \frac{1}{d^2-1}\left( \sum_{\beta} q_{0,\beta}^2 -1 \right),\\
        u_{AB\to AB} & = \frac{d^2 + 1}{d^{2}-1}u(\E) -\frac{1}{d^2-1} (u_{A\to A} + u_{B\to B}).
    \end{align}		
\end{lemma}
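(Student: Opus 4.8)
The plan is to read off the first two identities directly from Lemma~\ref{lem:PauliUnitarity} applied to the local marginals, and to obtain the third from the unitarity decomposition of Theorem~\ref{thm:unitarityexpnsion} combined with the fact that a Pauli channel has a diagonal Liouville matrix.

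\textbf{First two identities.} As recorded just above the statement, the $A$-marginal $\E_A(\rho_A)=\sum_\alpha q_{\alpha,0}\,P_\alpha\rho_A P_\alpha$ is itself a Pauli channel on a system of dimension $d=2^n$, and trace preservation of $\E$ (i.e.\ $\sum_{\alpha,\beta}p_{\alpha,\beta}=4^n$) together with $q_{\alpha,0}=\tfrac1d\sum_\beta p_{\alpha,\beta}$ forces $\sum_\alpha q_{\alpha,0}=\tfrac1d\,4^n=d$, so the normalization hypothesis of Lemma~\ref{lem:PauliUnitarity} is satisfied. That lemma then yields $u(\E_A)=\tfrac{1}{d^2-1}\big(\sum_\alpha q_{\alpha,0}^2-1\big)$, and since $u_{A\to A}(\E)=u(\E_A)$ by Theorem~\ref{thm:local-subunitarity} the first identity follows; the second is the identical computation with $A\leftrightarrow B$ and $q_{\alpha,0}\mapsto q_{0,\beta}$.

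\textbf{Third identity.} Write $D=d_Ad_B=d^2$ for the dimension of the full system, so that $\alpha_A=\alpha_B=1/(d^2-1)$ and $\alpha_{AB}=\alpha_A\alpha_B=1/(d^2-1)^2$. The key observation is that, exactly as in the single-system computation in Lemma~\ref{lem:PauliUnitarity}, the Liouville matrix $\boldsymbol{\E}$ of a bipartite Pauli channel is diagonal in the Pauli product basis $\{\vket{P_\alpha\otimes P_\beta}\}$. Hence each of the six cross-blocks $T_{X\to Y}$ with $X\neq Y$ collects only off-diagonal entries of $\boldsymbol{\E}$ — for instance $T_{AB\to A}$ has entries $\vbra{P_i\otimes P_0}\boldsymbol{\E}\vket{P_j\otimes P_k}$ with $k\neq0$ — and therefore vanishes; only $T_{A\to A}$, $T_{B\to B}$, $T_{AB\to AB}$ survive. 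Substituting this into Theorem~\ref{thm:unitarityexpnsion} gives
\begin{equation}
u(\E)=\frac{1}{D^2-1}\left(\frac{u_{A\to A}}{\alpha_A}+\frac{u_{B\to B}}{\alpha_B}+\frac{u_{AB\to AB}}{\alpha_{AB}}\right)=\frac{(d^2-1)(u_{A\to A}+u_{B\to B})+(d^2-1)^2\,u_{AB\to AB}}{d^4-1}.
\end{equation}
Multiplying through by $d^4-1=(d^2-1)(d^2+1)$ and then dividing by $d^2-1$ yields $(d^2+1)u(\E)=u_{A\to A}+u_{B\to B}+(d^2-1)u_{AB\to AB}$, which rearranges into the stated formula for $u_{AB\to AB}$.

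\textbf{Main obstacle.} There is no genuinely hard step: everything is bookkeeping on top of two already-established facts, namely the Pauli-channel unitarity formula and the unitarity decomposition. The one point requiring care is the notational clash between $d$, used in this lemma for the common subsystem dimension, and the $d=d_Ad_B$ appearing in Theorem~\ref{thm:unitarityexpnsion}; carrying an explicit symbol such as $D=d^2$ through the decomposition step keeps this unambiguous. One should also verify properly — rather than merely asserting — that all six cross-blocks are zero, which is immediate once one writes out (as above) which matrix elements of the diagonal matrix $\boldsymbol{\E}$ each block selects.
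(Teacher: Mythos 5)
Your proposal is correct and follows essentially the same route as the paper: the first two identities come from applying the Pauli-channel unitarity formula to the local marginals, and the third from the diagonality of the Pauli channel's Liouville matrix (so that only the three diagonal blocks of $T$ survive) combined with the decomposition of $u(\E)$ into sub-unitarities. Your explicit verification of the normalization $\sum_\alpha q_{\alpha,0}=d$ and of the vanishing of the cross-blocks merely spells out steps the paper leaves implicit.
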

\begin{proof}
    The relations for $u_{A\to A}$ and $u_{B\to B}$ follow directly from Lemma \ref{lem:PauliUnitarity}. The relation for $u_{AB\to AB} $ follows from the fact that the Liouville representation of $\E$ is diagonal so that $T_{\E} = T_{A\to A} \oplus T_{AB\to AB} \oplus T_{B\to B}$ and thus 
    \begin{align*}
        \Tr(T_{\E} \hc T_{\E}) = & \Tr(T_{A\to A}\hc T_{A\to A}) + \Tr(T_{B\to B}\hc T_{B\to B}) + \\ &+ \Tr(T_{AB\to AB}\hc T_{AB\to AB})
    \end{align*}
    In terms of the unitarities, $u(\E)  = \frac{1}{4^{2n}-1}\Tr(T_{\E}\hc T_{\E})$ and  $u_{AB\to AB} = \frac{1}{(2^{2n}-1)^2} \Tr(T_{AB\to AB}\hc T_{AB\to AB})$ then the above is equivalent to
    \begin{equation}
            u_{AB\to AB}  = \frac{4^{2n}-1}{(2^{(2n)} - 1)^2} u(\E)  -  \frac{1}{2^{2n} -1} \left(u_{A\to A} + u_{B\to B} \right).
    \end{equation}
\end{proof}

\begin{lemma}
The correlated unitarity for Pauli noise channel on a bipartite system $AB$ with $\rm{dim}(A) = \rm{dim}(B) = d = 2^n$ is given by
\begin{align}
    u_{c} &=\frac{1}{(d^2-1)^2}\left( \sum_{\alpha,\beta} p_{\alpha,\beta}^2  - (\sum_{\alpha} q_{\alpha,0}^2) \sum_{\beta} q_{0,\beta}^2 \right).
\end{align}
\end{lemma}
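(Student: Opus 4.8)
The plan is to derive the formula directly from the three closed forms established in the preceding lemma together with the definition $u_c(\E) = u_{AB \to AB}(\E) - u_{A \to A}(\E)\, u_{B \to B}(\E)$, letting the dimensional prefactors collapse. Throughout I would abbreviate $S_A := \sum_\alpha q_{\alpha,0}^2$ and $S_B := \sum_\beta q_{0,\beta}^2$, so that the preceding lemma reads $u_{A \to A} = (S_A - 1)/(d^2-1)$, $u_{B \to B} = (S_B - 1)/(d^2-1)$, and $u_{AB \to AB} = \tfrac{d^2+1}{d^2-1} u(\E) - \tfrac{1}{d^2-1}(u_{A \to A} + u_{B \to B})$.

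First I would pin down $u(\E)$ for the joint channel. Regarding $\E$ as a Pauli channel on the $D$-dimensional system $AB$ with $D = d^2$, Kraus operators $P_\alpha \otimes P_\beta$, and weights $p_{\alpha,\beta}$, Lemma \ref{lem:PauliUnitarity} applied directly yields
\begin{equation}
    u(\E) = \frac{1}{D^2-1}\Big( \sum_{\alpha,\beta} p_{\alpha,\beta}^2 - 1 \Big) = \frac{1}{d^4-1}\Big( \sum_{\alpha,\beta} p_{\alpha,\beta}^2 - 1 \Big).
\end{equation}
Substituting this, and $u_{A\to A} + u_{B\to B} = (S_A + S_B - 2)/(d^2-1)$, into the expression for $u_{AB\to AB}$, and using $\tfrac{d^2+1}{(d^2-1)(d^4-1)} = \tfrac{1}{(d^2-1)^2}$, gives $u_{AB\to AB} = \tfrac{1}{(d^2-1)^2}\big( \sum_{\alpha,\beta} p_{\alpha,\beta}^2 + 1 - S_A - S_B \big)$.

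Finally I would expand $u_{A\to A}\, u_{B\to B} = \tfrac{1}{(d^2-1)^2}(S_A - 1)(S_B - 1) = \tfrac{1}{(d^2-1)^2}(S_A S_B - S_A - S_B + 1)$ and subtract it from the expression for $u_{AB\to AB}$: the $\pm S_A$, $\pm S_B$, and constant terms cancel, leaving $u_c = \tfrac{1}{(d^2-1)^2}\big( \sum_{\alpha,\beta} p_{\alpha,\beta}^2 - S_A S_B \big)$, which is the asserted identity once $S_A$ and $S_B$ are written back out. The whole argument is routine bookkeeping; the one spot deserving care is tracking the dimensional constants, in particular verifying that the coefficient of $u(\E)$ in $u_{AB\to AB}$ combines with the $1/(D^2-1)$ inside $u(\E)$ to produce exactly $1/(d^2-1)^2$. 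A more self-contained alternative would bypass $u(\E)$ entirely and instead evaluate $\Tr(T_{AB\to AB}\hc T_{AB\to AB}) = \sum_{\alpha,\beta \neq 0} \lambda_{\alpha,\beta}^2$ from the diagonal Liouville entries $\lambda_{\alpha,\beta}$ of the bipartite Pauli channel, resumming via the Pauli orthogonality relation $\sum_j (-1)^{\eta(P_i,P_j)}(-1)^{\eta(P_{i'},P_j)} = d^2 \delta_{i i'}$ on each $n$-qubit tensor factor; but reusing the already-proven relations is shorter.
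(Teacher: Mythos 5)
Your proposal is correct and follows exactly the route the paper intends: its own proof is just the phrase ``Directly from above,'' i.e.\ an appeal to the preceding lemma's formulas for $u_{A\to A}$, $u_{B\to B}$, $u_{AB\to AB}$ and the Pauli-unitarity lemma applied to the joint $D=d^2$-dimensional channel, which is precisely the substitution and cancellation you carry out. Your explicit verification that $\tfrac{d^2+1}{(d^2-1)(d^4-1)}=\tfrac{1}{(d^2-1)^2}$ is the only nontrivial bookkeeping step, and it is right.
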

\begin{proof}
    Directly from above.
\end{proof}

\section{Properties of correlated unitarity}\label{append:u_c-properties}

\subsection{Comparison of correlated unitarity with norm measures}\label{append:u-c-compared-to-norm}
We can compare the choice of definition for correlated unitarity with a norm, which sheds light on its structure and limitations. Consider the Hilbert-Schmidt norm expression
\begin{equation}
    \Delta^2 := ||T_{AB} - T_A \otimes T_B||^2
\end{equation}
where $T_{AB} \equiv T_{AB\rightarrow AB}$ and similarly for $T_A$ and $T_B$. As this is a norm we have $\Delta = 0$ if and only if $T_{AB} = T_A \otimes T_B$, namely if and only if the channel is a product channel. We can expand this expression in terms of the Hilbert-Schmidt inner product to obtain
\begin{align}
    \Delta^2 &= \< T_{AB} - T_A\otimes T_B,  T_{AB} - T_A\otimes T_B\> \nonumber \\
    &= \< T_{AB}, T_{AB}\> + \<T_A\otimes T_B,  T_A\otimes T_B\> -\<T_{AB} , T_A \otimes T_B\> - \<T_A \otimes T_B, T_{AB}\> \nonumber\\
    &= ||T_{AB}||^2 + ||T_A||^2 ||T_B||^2 - 2 Re\left[ \<T_{AB},T_A \otimes T_B\> \right] \nonumber \\
    &= ||T_{AB}||^2 + ||T_A||^2 ||T_B||^2 - 2  ||T_{AB}|| \ ||T_A|| \ ||T_B|| \cos \theta  \nonumber \\
    \Delta^2 &= t_{AB}^2 +t_A^2 t_B^2 - 2 t_{AB} t_A t_B \cos \theta  \nonumber,
\end{align}
where we have defined an angular variable $\theta$ via the inner product between $T_{AB}$ and $T_A \otimes T_B$ and replaced the norm values with $t_{AB},t_A, t_B$ in the obvious way. Now the correlated unitarity is given by $u_c = \alpha_{AB} (t_{AB}^2 - t_A^2 t_B^2),$ with the dimensional prefactor $\alpha_{AB} =\frac{1}{(d_A^2 -1)(d_B^2 -1)}$. 
Substituting for $t_{AB}$ into $\Delta^2$ we have that
\begin{equation}
    \Delta^2 = \frac{u_c}{\alpha_{AB}} + 2 (t_At_B)^2 - 2 \sqrt{\frac{u_c}{\alpha_{AB}} + (t_At_B)^2} (t_A t_B) \cos \theta.
\end{equation}
This implies a few things. Firstly, for $u_c =0$ we have
\begin{equation}
    \Delta^2 = 2(t_At_B)^2 (1-\cos \theta),
\end{equation}
and so we see that $u_c$ vanishing does not imply a product channel unless one of the $t_A, t_B$ vanishes or if $\theta =0$. The expression also implies that $\theta$ is an independent parameter that will in general vary the norm distance. Note that the benchmarking protocol gives us both $(t_At_B)$ and $u_c$ but does not give us $\theta$. Therefore our existing benchmarking does not return enough to determine norm distance measure.

The above highlights relevant data at quadratic order that our approach is not sensitive to, but note that the $\cos \theta$ term is bounded and so it still is the case that $u_c$ is acting as a ``distance'' from being a product channel. Specifically, we have
\begin{equation}
    \frac{u_c}{\alpha_{AB}} + 2 (t_At_B)^2 - 2 \sqrt{ \frac{u_c}{\alpha_{AB}} + (t_At_B)^2} (t_A t_B)  \le \Delta^2 \text{ \ and \ } \Delta^2 \le  \frac{u_c}{\alpha_{AB}} +2 (t_At_B)^2 + 2 \sqrt{ \frac{u_c}{\alpha_{AB}} + (t_At_B)^2} (t_A t_B) 
\end{equation}
This implies that estimating $u_c$ and $t_At_B$ allows us to estimate the norm distance $\Delta$.

\subsection{Operational interpretation of $u_c$}\label{operational-uc}
\begin{proof} (Of Eqn ~\ref{eqn:corrunitarityfunc})
    Using the definition of correlated unitarity,
    \begin{align} 
        u_c(\E_{AB})&=\alpha_A \alpha_B \left( \Tr(T^{\dagger}_{AB\to AB} T_{AB\to AB}) - \Tr(T^{\dagger}_{A\to A} T_{A\to A}) \Tr(T^{\dagger}_{B\to B} T_{B\to B})\right)\\
        & = \alpha_A \alpha_B \left( \sum_{n,m,a,b \neq 0} |\vbra{ X_n\otimes Y_m} T_{AB\to AB}\vket{X_a\otimes Y_b}|^2 -  |\vbra{X_n}T_{A \to A}\vket{X_a}|^2 |\vbra{Y_m} T_{B \to B}\vket{Y_b}|^2\right)\\
        &=\alpha_A \alpha_B \left( \sum_{n,m,a,b \neq 0} |\Tr(X_n\otimes Y_m \E_{AB}(X_a\otimes Y_b))|^2 - |\Tr(X_n \E_{A}(X_a)) |^2 |\Tr(Y_m \E_{B}(Y_b))|^2 \right).
    \end{align}
\end{proof}

\subsection{$u_c$ is invariant under local unitaries}
\begin{corollary}\label{cor:local-subs-invariant}
    The local subunitarities of any channel $u_{A \to A}(\E)$ \& $u_{B \to B}(\E)$ are invariant under local unitaries.
\end{corollary}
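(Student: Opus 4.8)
The plan is to reduce the claim to the already-established invariance of the single-system unitarity under pre- and post-composition with unitary channels. By Theorem~\ref{thm:local-subunitarity} we have $u_{A\to A}(\E_{AB}) = u(\E_A)$ with $\E_A(\rho) = \tr_B[\E_{AB}(\rho\otimes \ident_B/d_B)]$, so it suffices to track how the marginal channel $\E_A$ transforms under a local change of basis $\E_{AB}\mapsto \E'_{AB} := (\V_A\otimes\V_B)\circ\E_{AB}\circ(\U_A^\dagger\otimes\U_B^\dagger)$ and then apply $u(\E'_A)=u(\E_A)$.

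First I would compute $\E'_A$ directly. Feeding $\rho\otimes \ident_B/d_B$ into $\E'_{AB}$ and using that a unitary fixes the maximally mixed state, $U_B^\dagger \ident_B U_B = \ident_B$, gives $\E'_{AB}(\rho\otimes \ident_B/d_B) = (V_A\otimes V_B)\,\E_{AB}(U_A^\dagger\rho U_A\otimes \ident_B/d_B)\,(V_A^\dagger\otimes V_B^\dagger)$. Taking the partial trace over $B$ and using that the partial trace is unchanged by conjugating the discarded factor by a unitary, $\tr_B[(\I\otimes V_B)N(\I\otimes V_B^\dagger)] = \tr_B[N]$, the $V_B$ drops out and one obtains $\E'_A = \V_A\circ\E_A\circ\U_A^\dagger$, i.e.\ the old marginal channel conjugated by unitaries acting on $A$ alone.

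Next I would invoke invariance of the unitarity of a single-system channel under such conjugation: $u(\V_A\circ\E_A\circ\U_A^\dagger)=u(\E_A)$. This is immediate either from the Haar-integral definition~(\ref{unitarity:haar-defn}), substituting $\psi\mapsto U_A^\dagger\psi U_A$ (which leaves the Haar measure invariant) and using cyclicity of the trace to absorb $V_A$, or from the $T$-block formula~(\ref{eqn:unitarity-T}), since pre- and post-composition with unitary channels sends $T\mapsto \mathcal{O}_1 T\mathcal{O}_2^{T}$ for orthogonal $\mathcal{O}_i$ and hence preserves $\tr[T^\dagger T]$. Combining the two steps yields $u_{A\to A}(\E'_{AB}) = u(\E'_A) = u(\E_A) = u_{A\to A}(\E_{AB})$, and the identical argument with $A$ and $B$ exchanged handles $u_{B\to B}$.

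There is no real obstacle here; the only point needing a little care is the bookkeeping around the partial trace — verifying that post-composition by $\V_B$ on the traced-out subsystem and pre-composition by $\U_B^\dagger$ applied to the maximally mixed input each act trivially after $\tr_B$. Once those two elementary facts are recorded, the corollary follows at once from the invariance of the unitarity under local unitary conjugation.
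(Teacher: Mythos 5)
Your proposal is correct and follows essentially the same route as the paper: reduce to $u_{A\to A}(\E_{AB})=u(\E_A)$ via Theorem~\ref{thm:local-subunitarity}, show the marginal channel transforms as $\E_A\mapsto \V_A\circ\E_A\circ\U_A^\dagger$ under the local change of basis, and invoke invariance of the single-system unitarity under unitary conjugation. If anything, your explicit bookkeeping of how $\U_B^\dagger$ and $\V_B$ drop out through the maximally mixed input and the partial trace is more careful than the paper's, which asserts that step without derivation.
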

\begin{proof}
    From definition $u_c := u_{AB \to AB} - u_{A \to A} \cdot u_{B \to B}$.  It is easy to show that each term is invariant under local unitaries.
    
    Firstly, the local subunitarities of any channel $u_{A \to A}(\E)$ \& $u_{B \to B}(\E)$ are invariant under local unitaries. This is because from Theorem \ref{thm:local-subunitarity} we have that $u_{A \to A}(\E)=u(\E_A)$, therefore sandwiching with any product unitaries $\E' = \cal{U}_{1,A} \otimes \cal{U}_{1,B} \ \circ \ \E \ \circ \ \cal{U}_{2,A} \otimes \cal{U}_{2,B}$ we have $u_{A \to A}(\E')=u(\cal{U}_{1,A} \circ \E_A \circ \cal{U}_{2,A})$. From the invariance of unitarity under unitaries \cite{wallman2015estimating}, $u_{A \to A}(\E') = u(\E_A)$.
    
    It remains to prove that $u_{AB}$ is invariant. We can write the Liouville representation of any product unitary in the our basis as $ \boldify{\cal{U}_{i,A} \otimes \cal{U}_{i,B}} = (1 \oplus \cal{O}_{i,A}) \otimes (1 \oplus \cal{O}_{i,B})$ where $\cal{O}_{i,X}$ are unitary matrices of dimension $(d_X^2 -1) \cross (d_X^2 -1)$ obeying $\cal{O}_{i,X} \cal{O}_{i,X}^\dagger = \ident_{T_{X}}$. Product channels have the additional property that $T_{AB,\cal{U}_i} = T_{A,\cal{U}_i} \otimes T_{B,\cal{U}_i} = \cal{O}_{i,A} \otimes \cal{O}_{i,B}$.
 
    We define a channel $\E' = \cal{U}_{1,A} \otimes \cal{U}_{1,B} \ \circ \ \E \ \circ \ \cal{U}_{2,A} \otimes \cal{U}_{2,B}$: namely, the channel with product unitaries before and after. The product unitaries will have block diagonal unital blocks which can be seen from considering their only non-zero subunitarities are $u_{A \to A}$, $u_{B \to B}$, \& $u_{AB \to AB}$. Because of this simple structure the sub-unital block $T_{AB}$ of $\E'$ is
    \begin{equation}\label{eqn:tab-product-sandwich}
        T_{AB,\E'} = T_{AB,\cal{U}_1} T_{AB,\E} T_{AB,\cal{U}_2} = \cal{O}_{1,A} \otimes \cal{O}_{1,B} T_{AB,\E} \cal{O}_{2,A} \otimes \cal{O}_{2,B}.
    \end{equation}
    We can now calculate the required subunitarity $ u_{AB}(\E') = \alpha_{AB} \tr[T_{AB, \E'}^\dagger T_{AB, \E'}]$, and from the cyclical properties of the trace,
    \begin{equation}
        \begin{split}
            u_{AB}(\E') &= \alpha_{AB} \tr[ T_{AB,\E}^\dagger  T_{AB,\E} \cal{O}_{2,A}^\dagger \otimes \cal{O}_{2,B}^\dagger \cal{O}_{2,A} \otimes \cal{O}_{2,B}], \\
            &= \alpha_{AB} \tr[ T_{AB,\E}^\dagger  T_{AB,\E}] = u_{AB}(\E).
        \end{split}
    \end{equation}
 This implies $u_c$ is invariant under local unitarities.
\end{proof}

\subsection{Maximal value of correlated unitarity}\label{section:uc-for-swap-channel}

It is readily seen that the $SW\!AP$ channel has correlated unitarity,
\begin{equation}
    u_c(SW\!AP) = 1.
\end{equation}
This follows since, from Equation (\ref{eqn:vectorised-swap}) we have $\boldify{SW\!AP} = \sum_{\nu,\mu} \vketbra{X_\mu \otimes Y_\nu }{X_\nu \otimes Y_\mu}$. In our basis, this makes the unital block $T$ a matrix with 1 along the minor diagonal and zero everywhere else. We can then simply read off that $u_{AB \to AB}=u_{A \to B}=u_{B \to A}=1$ and all other subunitarities are zero. Correlated unitarity is then $u_c = u_{AB \to AB} - u_{A \to A} \cdot u_{B \to B} = 1$. The following shows the converse, that if the sub-unitarities for $AB\rightarrow AB$, $A\leftrightarrow B$ are maximized then the channel must be a $SW\!AP$ channel, modulo local changes of basis.

\begin{lemma}\label{lemma:swap-up-to-local-unitaries}
    Any channel $\E$ with $u_{AB \to AB}(\E)=u_{A \to B}(\E)=u_{B \to A}(\E)=1$ is equivalent to the $SW\!AP$ channel up to local unitaries.
\end{lemma}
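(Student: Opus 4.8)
The plan is to first show that the three maximality conditions force $\E$ to be a unitary channel with all other sub-unitarities equal to zero, then read off enough structure from the Liouville matrix~(\ref{eqn:liouville-bipartitechannel}) to pin $\E$ down as $SW\!AP$ up to a local change of basis.

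First I would invoke Theorem~\ref{thm:unitarityexpnsion}, which gives $u(\E_{AB}) = \tfrac{1}{d^2-1}\sum_{X,Y}u_{X\to Y}(\E)/\alpha_X$ with $d = d_A d_B$. Substituting $u_{AB\to AB}=u_{A\to B}=u_{B\to A}=1$ into just those three terms contributes $\tfrac{1}{d^2-1}\big[(d_A^2-1)(d_B^2-1)+(d_A^2-1)+(d_B^2-1)\big]=1$, since with $a=d_A^2$, $b=d_B^2$ the bracket collapses to $ab-1$. All remaining sub-unitarities are non-negative, so $u(\E)\ge 1$; combined with the general bound $u(\E)\le 1$, with equality iff $\E$ is an isometry channel (cf.~\cite{cirstoiu2020robustness,wallman2015estimating}), this forces $u(\E)=1$, and since $AB$ has equal input and output dimension the isometry is a unitary, i.e.\ $\E=\U$ with $\E(\rho)=U\rho U^\dagger$. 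Moreover $u(\E)=1$ makes the sum exactly saturate, so every other sub-unitarity must vanish; in particular $u_{A\to A}=u_{A\to AB}=u_{B\to B}=u_{B\to AB}=0$, and hence (as in Lemma~\ref{lemma:unitarity-zero-iff-depolar}, using $u_{X\to Y}=0\iff T_{X\to Y}=0$) the blocks $T_{A\to A},T_{A\to AB},T_{B\to B},T_{B\to AB}$ of~(\ref{eqn:liouville-bipartitechannel}) all vanish, as does the $\mathbf{x}$-column since $\U$ is unital.

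Next I would translate this block structure into a statement about operator algebras. The column of $\mathbfcal{E}$ indexed by $\vket{X_{j_1}\otimes Y_0}$ is then supported only on the row block $\vbra{X_0\otimes Y_{i_2}}$, and the column indexed by $\vket{X_0\otimes Y_{j_2}}$ only on $\vbra{X_{i_1}\otimes Y_0}$. Reading these off, and using $X_0\propto\I_A$, $Y_0\propto\I_B$, gives $U(X_{j_1}\otimes\I_B)U^\dagger\in\I_A\otimes\B(\H_B)$ for every traceless $X_{j_1}$ and $U(\I_A\otimes Y_{j_2})U^\dagger\in\B(\H_A)\otimes\I_B$ for every traceless $Y_{j_2}$; together with $U\I U^\dagger=\I$ and a dimension count ($d_A=d_B$) this yields $U(\B(\H_A)\otimes\I_B)U^\dagger=\I_A\otimes\B(\H_B)$ and $U(\I_A\otimes\B(\H_B))U^\dagger=\B(\H_A)\otimes\I_B$.

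Finally I would set $W:=SW\!AP\cdot U$ and observe that $W$ now normalises each tensor-factor subalgebra: $W(\B(\H_A)\otimes\I_B)W^\dagger=SW\!AP(\I_A\otimes\B(\H_B))SW\!AP=\B(\H_A)\otimes\I_B$, and likewise for the other factor. A unitary preserving each of the two commuting factor algebras induces a $*$-automorphism on each factor, which by Skolem--Noether (equivalently Wigner's theorem) is implemented by a unitary on that factor, so $W=V_A\otimes V_B$ up to an irrelevant global phase; hence $U=SW\!AP\,(V_A\otimes V_B)$ up to phase, i.e.\ $\E=SW\!AP\circ(\V_A\otimes\V_B)$ is the $SW\!AP$ channel modulo a local change of basis. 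I expect the last step---that a factor-algebra-preserving unitary is necessarily a product $V_A\otimes V_B$---to be the only point needing genuine care, though it is the standard structure theorem for such normalisers; everything else is bookkeeping with the decomposition~(\ref{eqn:liouville-bipartitechannel}).
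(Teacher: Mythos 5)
Your proposal is correct, and its first half coincides with the paper's: both use Theorem~\ref{thm:unitarityexpnsion} to show that the three maximal sub-unitarities already saturate $u(\E)=1$, forcing $\E$ to be a unitary channel with every remaining sub-unitarity (and the non-unital column) equal to zero. Where you diverge is in the second half. The paper's proof uses the identity $u_{A\to B}(\E)=u_{A\to A}(SW\!AP\circ\E)$ to conclude that both marginal channels of $SW\!AP\circ\E$ have unitarity one, and then asserts that this makes $SW\!AP\circ\E$ a product of local unitaries; that final deduction (a unitary channel whose two marginals are both unitary must factorize) is left implicit. You instead read the vanishing of $T_{A\to A}$, $T_{A\to AB}$, $T_{B\to B}$, $T_{B\to AB}$ directly off the block decomposition~(\ref{eqn:liouville-bipartitechannel}) to get $U(\B(\H_A)\otimes\I_B)U^\dagger=\I_A\otimes\B(\H_B)$ (and its mirror), then show $W=SW\!AP\cdot U$ normalizes each tensor factor and invoke Skolem--Noether to factor $W=V_A\otimes V_B$. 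This is a genuinely different, operator-algebraic route, and it has the virtue of making rigorous exactly the step the paper glosses over; indeed normalizing a single factor already suffices, since $V_A^\dagger\otimes\I_B\cdot W$ then commutes with $\B(\H_A)\otimes\I_B$ and so lies in the commutant $\I_A\otimes\B(\H_B)$. The paper's route is shorter and stays entirely within the sub-unitarity formalism; yours is longer but self-contained at the crucial final step. Both correctly require $d_A=d_B$, which is implicit in the existence of $SW\!AP$.
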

\begin{proof}
    From Theorem \ref{thm:unitarityexpnsion} under the given conditions the channel is unitary, and all other subunitarities are zero. We can use that $u_{A \to B}(\E) = u_{A \to A}(SW\!AP \circ \E) = 1$ and similarly $u_{B \to B}(SW\!AP \circ \E) = 1$. Since the unitarity equals $1$ only for a unitary we deduce that $SW\!AP \circ \E$ must be a product channel $\U_A \otimes \U_B$ of local unitaries on each subsystem. Since $SW\!AP^2=id$, this implies that $\E = SW\!AP \circ \U_A \otimes \U_B$.
\end{proof}

\subsection{Proof of $u_c$ as Witness of non-separability}\label{section:witness-of-non-sep}
The proof of the upper bound on separable channels turns out to be non-trivial, and relies on bounds on the inner product of $T$--matrices for quantum channels. We first establish basic ingredients we need for the analysis.
\begin{lemma}\label{lemma:liouivlle-to-choi}
	For a channel $\E:$ $\B(\H) \to \B(\H')$ the Choi–Jamiołkowski state can be expressed in the $X_\nu$ basis as
	\begin{equation}
		\mathcal{J}(\E) = \frac{1}{d} \sum_{\nu=0}^{d^2} \cal{E}(X_\nu) \otimes X_\nu^*,
	\end{equation}
	with the complete orthonormal basis of both $\H$ \& $\H'$ as $X_\mu = (X_0=\ident/\sqrt{d}, X_i)$ with $\dim(\H)=\dim(\H')=d$.
\end{lemma}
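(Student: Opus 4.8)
The plan is to reduce the claimed formula to the standard definition of the Choi–Jamio\l kowski state together with the resolution of the identity supplied by the orthonormal operator basis $\{X_\nu\}$. The key observation is that $\E$ acts only on the first tensor factor, so everything follows from linearity once each matrix unit is rewritten in the $X_\nu$ basis.

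First I would fix conventions and write the (normalised) Choi state as $\mathcal{J}(\E) = \frac{1}{d}(\E \otimes id)(\ketbra{\Omega}{\Omega})$, where $\ket{\Omega} = \sum_k \ket{k}\otimes\ket{k}$ is the unnormalised maximally entangled vector on $\H\otimes\H$; expanding the projector gives $\mathcal{J}(\E) = \frac{1}{d}\sum_{i,j}\E(\ketbra{i}{j})\otimes\ketbra{i}{j}$. Next, since $\{X_\nu\}$ is orthonormal for the Hilbert–Schmidt inner product, every matrix unit expands as $\ketbra{i}{j} = \sum_\nu \Tr(X_\nu^\dagger \ketbra{i}{j})\,X_\nu = \sum_\nu \overline{(X_\nu)_{ij}}\,X_\nu$. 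Applying linearity of $\E$ to the first factor, $\E(\ketbra{i}{j}) = \sum_\nu \overline{(X_\nu)_{ij}}\,\E(X_\nu)$, and substituting into the double sum lets the sum over $i,j$ be carried out on the second factor, where $\sum_{i,j}\overline{(X_\nu)_{ij}}\ketbra{i}{j} = \overline{X_\nu} =: X_\nu^*$. This yields $\mathcal{J}(\E) = \frac{1}{d}\sum_\nu \E(X_\nu)\otimes X_\nu^*$. Equivalently, one can first record the basis-independent completeness relation $\sum_\nu X_\nu\otimes X_\nu^* = \sum_{i,j}\ketbra{i}{j}\otimes\ketbra{i}{j}$ — true in the matrix-unit basis by inspection, and invariant under a unitary change of orthonormal basis — and then apply $\E\otimes id$ to both sides; a third option is a direct comparison of matrix elements $\bra{a}\bra{\alpha}\mathcal{J}(\E)\ket{b}\ket{\beta}$ on each side, both of which reduce to $\frac{1}{d}\bra{a}\E(\ketbra{\alpha}{\beta})\ket{b}$.

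There is no substantive obstacle here; the only points to watch are the conventions — the normalisation of $\mathcal{J}(\E)$, which tensor factor $\E$ acts on, and that $X_\nu^*$ denotes entrywise complex conjugation (hence basis-dependent, although the sum $\sum_\nu \E(X_\nu)\otimes X_\nu^*$ is not). Note also that the structural facts $X_0 = \ident/\sqrt{d}$ and $X_i$ traceless play no role in the identity; only orthonormality of $\{X_\nu\}$ and equality of the input and output dimensions are used.
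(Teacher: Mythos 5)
Your proof is correct and takes essentially the same route as the paper: both reduce the claim to the completeness identity $\sum_\nu X_\nu \otimes X_\nu^{*} = \sum_{i,j}\ketbra{i}{j}\otimes\ketbra{i}{j}$ (the paper phrases this as $\vketbra{\ident_d}{\ident_d}=\sum_\nu X_\nu\otimes X_\nu^{*}$ via Hilbert--Schmidt coefficients) and then apply $\E\otimes id$ by linearity. Your closing remarks on conventions and on the fact that only orthonormality is used are accurate and consistent with the paper's argument.
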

\begin{proof}
	$ \mathcal{J}(\E):=\cal{E} \otimes id (\frac{1}{d} \vketbra{\ident_{d}}{\ident_{d}})$, but one can directly show  $\vketbra{\ident_{d}}{\ident_{d}} = \sum_{\nu} X_\nu \otimes X_{\nu}^{*}$. This follows from the fact that $\Tr( X_{\nu}\hc\otimes X_\mu\hc  \vketbra{\ident_{d}}{\ident_{d}}) = \vbraket{\ident}{X_{\nu} \hc  X_\mu^{*}} = \Tr X_{\nu}\hc X_{\mu}^{*} $ and therefore $\vketbra{\ident_{d}}{\ident_{d}} = \sum_{\mu,\nu}\Tr(X_\nu\hc X_\mu^{*})  X_\nu\otimes X_{\mu} $. However $ \sum_\mu \Tr(X_\nu\hc X_\mu^{*}) X_\mu =  \sum_\mu \Tr( X_\mu\hc X_\nu^{*}) X_\mu= X_\nu^{*} $ since $\Tr(A^T)= \Tr(A) $, and the result follows.
\end{proof}
We now have the following estimates.
\begin{lemma}\label{lemma:range-of-t-inner-product}
	Given two channels $\E_{1}$ and $\E_{2}$ with unital blocks in the Liouville representation $T_{1}$ and $T_2$, we have
	\begin{equation}
	    -d\leq \<T_{1}, T_2\> \leq d^2 - 1,
    \end{equation}
    where $d$ is the dimension of the Hilbert space.
\end{lemma}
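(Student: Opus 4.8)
The plan is to rewrite the Hilbert--Schmidt inner product $\langle T_1,T_2\rangle := \Tr[T_1\hc T_2]$ of the two unital blocks as the overlap of the two Choi states, which exposes a positivity that controls the lower bound, and then to obtain the upper bound by a separate (and strictly sharper) Cauchy--Schwarz argument. Concretely, writing $\rho_k := \E_k(\I/d)$ for the image of the maximally mixed state, the first target is the identity
\begin{equation}
\langle T_1,T_2\rangle = d^2\,\Tr[\mathcal{J}(\E_1)\mathcal{J}(\E_2)] - d\,\Tr[\rho_1\rho_2].
\end{equation}

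To establish this I would start from Lemma~\ref{lemma:liouivlle-to-choi}, $\mathcal{J}(\E_k) = \tfrac1d\sum_\nu \E_k(X_\nu)\otimes X_\nu^*$, and use that complex conjugation preserves the Hilbert--Schmidt structure, so that $\{X_\nu^*\}$ is again an orthonormal operator basis, $\Tr[(X_\mu^*)\hc X_\nu^*] = \overline{\delta_{\mu\nu}} = \delta_{\mu\nu}$. Then the off-diagonal terms collapse and $\Tr[\mathcal{J}(\E_1)\mathcal{J}(\E_2)] = \tfrac1{d^2}\sum_\nu \langle \E_1(X_\nu),\E_2(X_\nu)\rangle = \tfrac1{d^2}\Tr[\boldsymbol{\E_1}\hc\boldsymbol{\E_2}]$. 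Splitting this Liouville overlap along the block form of the main text (with $(\boldsymbol{\E_k})_{00}=1$, vanishing off-diagonal top row, non-unital vector $\mathbf{x}_k$ and unital block $T_k$) gives $\Tr[\boldsymbol{\E_1}\hc\boldsymbol{\E_2}] = 1 + \langle\mathbf{x}_1,\mathbf{x}_2\rangle + \langle T_1,T_2\rangle$; finally $\E_k(X_0)=\sqrt d\,\rho_k$ together with trace preservation ($\Tr[\E_k(X_i)]=\Tr[X_i]=0$ for $i\neq0$) and completeness of $\{X_\mu\}$ give $\langle\mathbf{x}_1,\mathbf{x}_2\rangle = d\,\Tr[\rho_1\rho_2] - 1$, so the two stray constants cancel and the identity follows.

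With the identity in hand, the lower bound is immediate: $\mathcal{J}(\E_1)$ and $\mathcal{J}(\E_2)$ are positive semidefinite, hence $\Tr[\mathcal{J}(\E_1)\mathcal{J}(\E_2)]\ge 0$, while $\rho_1$ and $\rho_2$ are density operators, so $\Tr[\rho_1\rho_2]\le 1$; therefore $\langle T_1,T_2\rangle \ge 0 - d = -d$. For the upper bound I would \emph{not} go through the Choi overlap -- that route only yields $\langle T_1,T_2\rangle\le d^2$, since the overlap of two states can reach $1$ -- but instead apply Cauchy--Schwarz directly to the $T$-blocks together with the standard fact that the unitarity of any channel is at most $1$ \cite{wallman2015estimating}: from Eq.~(\ref{eqn:unitarity-T}), $\Tr[T_k\hc T_k]=(d^2-1)u(\E_k)\le d^2-1$, so $\langle T_1,T_2\rangle\le |\langle T_1,T_2\rangle|\le \sqrt{\Tr[T_1\hc T_1]\,\Tr[T_2\hc T_2]}\le d^2-1$.

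The only step needing genuine care is the key identity: one must check that the Choi overlap reproduces exactly the full Liouville inner product $\Tr[\boldsymbol{\E_1}\hc\boldsymbol{\E_2}]$, and then peel off the constant and non-unital ($\mathbf{x}_k$) contributions cleanly enough that the spurious $\pm 1$ cancel. Everything downstream is a one-line estimate, with the mild caveat that the two directions rely on genuinely different inputs -- positive semidefiniteness of the Choi states (equivalently, complete positivity of the channels, since $\Tr[\boldsymbol{\E_1}\hc\boldsymbol{\E_2}] = \sum_{a,b}|\Tr[K_{1,a}\hc K_{2,b}]|^2\ge 0$ for any Kraus decompositions) for the lower bound, and $u(\E)\le 1$ for the upper bound, since Cauchy--Schwarz alone would give only the weaker $-(d^2-1)$ below.
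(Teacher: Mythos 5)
Your proof is correct and follows essentially the same route as the paper's: the lower bound comes from positivity of the Choi-state overlap $\Tr[\mathcal{J}(\E_1)\mathcal{J}(\E_2)]\ge 0$ combined with the purity bound $\Tr[\rho_1\rho_2]\le 1$, exactly as in Lemma~\ref{lemma:range-of-t-inner-product}'s proof (you merely package the intermediate steps into an explicit identity, which is a nice clarification). The only cosmetic difference is the upper bound, where you use Cauchy--Schwarz together with $u(\E)\le 1$ in place of the paper's H\"older argument $\<T_1,T_2\>\le \|T_1\|_\infty\|T_2\|_1$; both rest on the same underlying fact and give $d^2-1$.
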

We shall use this lemma to establish the upper bound on correlated unitarity for separable channels. However, we conjecture a stronger result that for any two quantum channels $\E_1$, $\E_2$ that $ \<T_1 , T_2\> \ge -1$. This, for example implies the bound for optimal inversion of the coherence vector of a quantum state \cite{byrd2003characterization,rungta2001universal} as a special case. The analyse to establish this sharper bound appears to be non-trivial. Since it is not essential to our work we leave it as an open problem. We do, however, establish this lower beyond for a subset of channels (see Lemma \ref{lemma:t-lower-bound-qubits} below).
\begin{proof}
    In the Choi representation we have 
    \begin{equation}
        \mathcal{J}(\E_1) = \frac{1}{d}\sum_{\mu}^{d^2}  \E_1(X_\mu) \otimes X_\mu^{*} \ \ \ \ \text{and} \ \ \ \
        \mathcal{J}(\E_2) = \frac{1}{d}\sum_{\mu}^{d^2}  \E_2(X_\mu) \otimes X_\mu^{*}\\
    \end{equation}
    with $X_\mu = (X_0=\ident/\sqrt{d}, X_i)$.
    Therefore we have that
    \begin{equation}
        \Tr( \mathcal{J}(\E_1)^{\dagger} \mathcal{J}(\E_2) ) = \frac{1}{d^2} \sum_{\mu,\nu}^{d^2} \Tr (\E_1(X_\mu)^{\dagger} \E_2(X_\nu)) \Tr(X_\mu^{T} X_\nu^{*}).
    \end{equation}
    Since Choi matrices are positive semidefinite, then so is the above quantity. Furthermore, $\Tr(X_\mu^{T} X_\nu^{*})  = \delta_{\mu\nu}$ and so
    \begin{equation}
        \Tr( \mathcal{J}(\E_1)^{\dagger} \mathcal{J}(\E_2) ) = \frac{1}{d^2}\sum_{\mu}^{d^2} \Tr (\E_1(X_\mu)^{\dagger} \E_2(X_\mu)) \geq 0,
    \end{equation}
    and therefore we have
    \begin{equation}
        \sum_{\mu}^{d^2}  \vbraket{\E_1(X_\mu)}{\E_2(X_\mu)} = \sum_{\mu}^{d^2} \Tr (\E_1(X_\mu)^{\dagger} \E_2(X_\mu)) \geq 0.
    \end{equation}
    Now we look at $\<T_1, T_2\>  = \Tr (T_1^{\dagger} T_2)$ and expand with respect to same basis. 
    \begin{equation}
        \<T_1, T_2\> = \sum_{i=1}^{d^2-1}  \vbra{X_i} T_1^\dagger T_2 \vket{X_i} = \sum_{i=1}^{d^2-1}  \vbraket{\E_1(X_i)}{\E_2(X_i)}  = \sum_{\mu}^{d^2}  \vbraket{\E_1(X_\mu)}{\E_2(X_\mu)} - \vbraket{\E_1(X_0)}{\E_2(X_0)}.
    \end{equation}
    Then it follows that 
    \begin{equation}
        \<T_1, T_2\> \geq - \vbraket{\E_1(X_0)}{\E_2(X_0)}.
    \end{equation}
    However,
    \begin{equation}
        |\vbraket{\E_1(X_0)}{\E_2(X_0)}|^2 \leq \vbraket{\E_1(X_0)}{\E_1(X_0)}\vbraket{\E_2(X_0)}{\E_2(X_0)}.
    \end{equation} 
    and since $\vbraket{\E_i\left(\frac{\ident}{d}\right)}{\E_i\left(\frac{\ident}{d}\right)} \le 1$, we deduce that
    \begin{equation}
        |\vbraket{\E_1\left(\frac{\ident}{\sqrt{d}}\right)}{\E_2\left(\frac{\ident}{\sqrt{d}}\right)}| \leq d,
    \end{equation} 
    and so we obtain the lower bound of
    \begin{equation}
        -d \le \<T_1, T_2\>.
    \end{equation}
    The upper bound follows directly form Holder's inequality
    \begin{equation}
    \<T_1, T_2 \> \leq  ||T_1||_{\infty} ||T_2||_1 \leq (d^2-1)
    \end{equation}
    where we have used in the above that the eigenvalues of $T_1$ and $T_2$ have modulus at most 1, and their rank is at most $d^2-1$.
\end{proof}
We also have the following lower bound on the inner product of two $T$-matrices for subsets of quantum channels.
\begin{lemma}\label{lemma:t-lower-bound-qubits}
    Let $\E_1$ and $\E_2$ be two quantum channels. If we have that either
    \begin{enumerate}
        \item One of the channels is unital,
        \item The channels are arbitrary $d=2$ qubit channels,
    \end{enumerate}
    then it follows that $-1 \le \<T_1, T_2\> \le d^2-1$.
\end{lemma}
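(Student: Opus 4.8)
The upper bound $\langle T_1, T_2\rangle \le d^2-1$ is already contained in Lemma~\ref{lemma:range-of-t-inner-product}, so the entire content is the sharpened lower bound $\langle T_1, T_2\rangle \ge -1$ in the two listed cases. The plan is to start from the identity extracted inside the proof of Lemma~\ref{lemma:range-of-t-inner-product},
\begin{equation}
\langle T_1, T_2\rangle = \sum_{\mu=0}^{d^2-1}\vbraket{\E_1(X_\mu)}{\E_2(X_\mu)} \;-\; \vbraket{\E_1(X_0)}{\E_2(X_0)},
\end{equation}
where the first sum equals $d^2\,\Tr[\mathcal{J}(\E_1)\hc\mathcal{J}(\E_2)]\ge 0$ by positivity of the Choi states; hence $\langle T_1, T_2\rangle \ge -\vbraket{\E_1(X_0)}{\E_2(X_0)} = -d\,\Tr[\rho_1\rho_2]$ with $\rho_i := \E_i(\I/d)$. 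Case~1 (one channel unital) is then immediate: if $\E_1$ is unital then $\rho_1=\I/d$, so $\Tr[\rho_1\rho_2]=1/d$ and $\langle T_1, T_2\rangle \ge -1$. This argument is dimension independent, which is why the qubit restriction enters only in Case~2.

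For Case~2 ($d=2$) the estimate above only gives $\langle T_1, T_2\rangle \ge -2\Tr[\rho_1\rho_2]\ge -2$, so a different idea is needed. First I would use that $\langle T_1, T_2\rangle = \Tr[T_1\hc T_2]$ is linear in $T_2$, so its minimum over the convex compact set of $T$-blocks of qubit channels is attained at an extreme point; and the extreme points of this set are precisely the rotation matrices $R\in SO(3)$, since every qubit channel is a convex combination of unitary conjugations (equivalently the $T$-block set is $\mathrm{conv}(SO(3))$, the image of the Fujiwara--Algoet tetrahedron under the left/right $SO(3)$-action). Therefore $\langle T_1, T_2\rangle \ge \min_{R\in SO(3)}\Tr[T_1\hc R]$.

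Next I would evaluate this minimum. Writing the SVD $T_1=U\Sigma V^{\!T}$ with $\Sigma=\mathrm{diag}(\sigma_1\!\ge\!\sigma_2\!\ge\!\sigma_3\!\ge\!0)$, and using $\Tr[M R'']=\cos\theta\,\Tr M+(1-\cos\theta)\,\hat n^{\!T} M\hat n$ for a rotation $R''$ by angle $\theta$ about axis $\hat n$ (the antisymmetric part of $R''$ drops out against the symmetric $M$), a short optimisation over $\theta$ and $\hat n$ gives
\begin{equation}
\min_{R\in SO(3)}\Tr[T_1\hc R]=\begin{cases}\sigma_3-\sigma_1-\sigma_2,&\det T_1\ge 0,\\ -(\sigma_1+\sigma_2+\sigma_3),&\det T_1<0,\end{cases}
\end{equation}
the two branches corresponding to whether the relevant extremal matrix is a $\pi$-rotation or a reflection. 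Finally I would invoke that the Fujiwara--Algoet inequalities hold for the (signed) normal form of \emph{any} qubit channel, unital or not: its normal form $\vec\lambda$ lies in the tetrahedron $\{\lambda_a+\lambda_b-\lambda_c\le 1\ \forall c,\ \lambda_1+\lambda_2+\lambda_3\ge -1\}$. Reading off the binding facet gives $\sigma_1+\sigma_2-\sigma_3\le 1$ when $\det T_1\ge 0$ and $\sigma_1+\sigma_2+\sigma_3\le 1$ when $\det T_1<0$, so in either case the displayed minimum is $\ge -1$. The value $-1$ is attained, e.g., by $\E_1=id$ with $\E_2$ a Pauli conjugation ($T_2=\mathrm{diag}(1,-1,-1)$), or with $\E_2$ the universal-NOT channel ($T_2=-\tfrac13\I$).

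The routine parts (the Choi-positivity identity and the unital one-liner) are harmless. The hard part — and the reason only a subset of channels is handled here — is assembling the geometry of the qubit-channel set in Case~2: that its $T$-blocks are $\mathrm{conv}(SO(3))$ so that the minimisation collapses to $SO(3)$, the elementary but slightly fiddly evaluation of $\min_{SO(3)}\Tr[M R]$ with its split on $\mathrm{sign}(\det T_1)$, and identifying exactly which Fujiwara--Algoet inequality closes each sub-case. An alternative, more hands-on route would be to first diagonalise $T_1$ by simultaneous local-unitary changes of basis (admissible because $\langle T_1,T_2\rangle$ is invariant under them) and then optimise $\sum_k\lambda_k (T_2)_{kk}$ directly over the tetrahedron, but the extreme-point argument above appears cleaner.
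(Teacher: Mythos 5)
Your Case~1 is the paper's Case~1 essentially verbatim: the Choi-positivity identity gives $\<T_1,T_2\> \ge -\vbraket{\E_1(X_0)}{\E_2(X_0)}$, and unitality of one channel collapses the right-hand side to $-1$. For Case~2 you take a genuinely different, and considerably longer, route than the paper. The paper's proof is a one-step reduction: it invokes the Horodecki/Fujiwara--Algoet structure of qubit channels only to conclude that for any qubit channel with $T$-block $T$ there exists a \emph{unital} qubit channel with the same $T$-block (validity of $(\mathbf{t},\mathbf{x})$ implies validity of $(\mathbf{t},\mathbf{0})$), so without loss of generality one of the two channels is unital and Case~1 applies. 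You use that same structural fact twice --- once to place $T_2$ inside $\mathrm{conv}(SO(3))$, once to put the signed normal form of $T_1$ in the tetrahedron --- and then carry out the explicit minimization $\min_{R\in SO(3)}\Tr[T_1^\dagger R]$, reading the bound $\ge -1$ off the tetrahedron facets. Your computation is correct (including the $\mathrm{sign}(\det T_1)$ split and the identification of which facet is binding in each branch), and it buys strictly more than the lemma asks for: an explicit worst-case value in terms of the singular values of $T_1$ and the extremal channels attaining $-1$. But it is redundant for the stated claim; once you know every qubit $T$-block is also the $T$-block of a unital channel, Case~1 already finishes the job. One phrasing should be repaired: ``every qubit channel is a convex combination of unitary conjugations'' is false for non-unital channels (amplitude damping is a counterexample); what is true, and what your argument actually relies on, is that the $T$-block of every qubit channel coincides with that of some unital qubit channel, and unital qubit channels are random-unitary --- exactly the fact the paper cites.
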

The proof of this is as follows.
\begin{proof}
    If one of the channels, $\E_1$ say, is unital then
    \begin{equation}
        \<T_1, T_2\> \geq - \vbraket{\E_1(X_0)}{\E_2(X_0)} = -\vbraket{X_0}{\E_2(X_0)} - \vbraket{X_0 }{X_0} = -1,
    \end{equation}
    where we use the orthonormality $\vbraket{X_0}{X_i}$ for all $i = 1, \dots d^2-1$ and the fact that if $\E_1$ is unital then $\E_1(X_0) = X_0$.

    Now suppose that both $\E_1$ and $\E_2$ are qubit channels. Given any qubit channel $\E$, the corresponding Choi state take the form
    \begin{align}
        \mathcal{J}(\E) &= \frac{1}{4} ( \I + \mathbf{x} \cdot \boldsymbol{\sigma} \otimes \I + \sum_{i,j} T_{ij} \sigma_i \otimes \sigma_j),
    \end{align}
    where $\{\sigma_i\}$ are the Pauli matrices. As shown in \cite{horodecki1996information} it is possible to perform local unitary changes $U_A \otimes U_B$ of basis so that
    \begin{align}
        \U_A \otimes \U_B [\mathcal{J}(\E)] &= \frac{1}{4} ( \I + \mathbf{x} \cdot \boldsymbol{\sigma} \otimes \I + \sum_i t_i \sigma_i \otimes \sigma_i),
    \end{align}
    and so the channel is described, modulo local choices of basis, by the two vectors $\mathbf{x}$ and $\mathbf{t} = (t_1, t_2, t_3)$. The link between $T_{ij}$ and $\mathbf{t}$ is that $ T = O_A \rm{diag}(t_1, t_2, t_3) O_B^T$ for orthogonal matrices $O_A, O_B$ corresponding to the local unitary rotations. It can be shown that if $\mathcal{J}(\E)$ is a valid quantum state (and so $\E$ a valid quantum channel) the vector $\mathbf{x}$ lies in the Bloch sphere, and $\mathbf{t}$ lies in a particular tetrahedron $\mathcal{T}$ in $\mathbf{R}^3$. Moreover, if $\mathbf{x} = \mathbf{0}$ then every $\mathbf{t} \in \mathcal{T}$ corresponds to a valid quantum state. Since $\mathbf{x}$ corresponds to the non-unitality of the quantum channel $\E$, this implies that if $\E$ is a quantum channel with non-unitality vector $\mathbf{x}$ and $T$--matrix $T$ then there exists another quantum channel $\E_u$ with the same $T$--matrix, but which is unital. This implies that for the inner product $\<T_1, T_2\>$ we can without loss of generality assume that one channel is unital, and thus from the previous part of our proof we obtain $-1 \le \<T_1, T_2\>$. The upper bound for $\<T_1,T_2\>$ is unchanged from the previous lemma.
\end{proof}
\begin{lemma}\label{lemma:uc-for-separable-channel}
    For a bipartite separable channel  $\E := \sum_i^r p_i \E_{A,i} \otimes \E_{B,i}$ the correlated unitarity $u_c(\E)$ can be decomposed as
    \begin{equation}
        u_c(\E) = \alpha_A \alpha_B(\sum_{i,j}^{r,r} p_{i} p_{j} \ev{ T_{A}^{i}, T_{A}^{j}} \ev{ T_{B}^{i}, T_{B}^{j}} - \sum_{i,j}^{r,r} p_{i} p_{j} \ev{ T_{A}^{i}, T_{A}^{j}} \sum_{m,n}^{r,r} p_{m} p_n \ev{ T_{B}^{m}, T_{B}^{n}} )
    \end{equation}
    where $T_{A}^{i}$ is the unital block in the Liouville representation of  $\E_{A,i}$ and $T_{B}^{i}$ is the unital block of  $\E_{B,i}$.
\end{lemma}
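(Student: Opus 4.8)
The plan is to compute each of the three sub-unitarities appearing in $u_c(\E) = u_{AB\to AB}(\E) - u_{A\to A}(\E)\,u_{B\to B}(\E)$ separately, using the linearity of the Liouville representation in the channel together with the product-channel identities already established.

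First I would handle the term $u_{AB\to AB}(\E)$. Since $\E \mapsto \mathbfcal{E}$ is linear, the relevant sub-block decomposes as $T_{AB\to AB} = \sum_i p_i\, T^{(i)}_{AB\to AB}$, where $T^{(i)}_{AB\to AB}$ is the corresponding block of the product channel $\E_{A,i}\otimes\E_{B,i}$. By Lemma~\ref{Tblocks-productchannel} this equals $T_A^i \otimes T_B^i$, so $T_{AB\to AB} = \sum_i p_i\, T_A^i \otimes T_B^i$. Substituting into Definition~\ref{all-sub-unitarities} with $\alpha_{AB}=\alpha_A\alpha_B$ and expanding the trace via the factorization $\tr[(A_1\otimes B_1)^\dagger(A_2\otimes B_2)] = \tr[A_1^\dagger A_2]\,\tr[B_1^\dagger B_2]$ yields $u_{AB\to AB}(\E) = \alpha_A\alpha_B \sum_{i,j} p_i p_j\, \langle T_A^i, T_A^j\rangle\,\langle T_B^i, T_B^j\rangle$.

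Next I would treat the two local terms. By Theorem~\ref{thm:local-subunitarity}, $u_{A\to A}(\E) = u(\E_A)$ with $\E_A(\rho) = \tr_B[\E(\rho\otimes\I_B/d_B)]$; since each $\E_{B,i}$ is trace-preserving we have $\tr[\E_{B,i}(\I_B/d_B)] = 1$, hence $\E_A = \sum_i p_i\,\E_{A,i}$. Linearity of the Liouville representation then shows the unital block of $\E_A$ is $\sum_i p_i\, T_A^i$, so by~(\ref{eqn:unitarity-T}) we get $u_{A\to A}(\E) = \alpha_A \sum_{i,j} p_i p_j\, \langle T_A^i, T_A^j\rangle$, and symmetrically $u_{B\to B}(\E) = \alpha_B \sum_{m,n} p_m p_n\, \langle T_B^m, T_B^n\rangle$. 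Forming the product $u_{A\to A}(\E)\,u_{B\to B}(\E)$ and subtracting from $u_{AB\to AB}(\E)$, then factoring out $\alpha_A\alpha_B$, gives exactly the claimed identity.

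There is no serious obstacle: once one has the product-channel block identity of Lemma~\ref{Tblocks-productchannel} and the local-channel identification of Theorem~\ref{thm:local-subunitarity}, the result is bookkeeping. The only point requiring mild care is the spectator factors $\vketbra{Y_0}{Y_0}$ and $\vketbra{X_0}{X_0}$ in the $T_{A\to A}$ and $T_{B\to B}$ blocks of a product channel, whose normalized trace is $1$ so that the cross terms in the Hilbert--Schmidt norm collapse cleanly; invoking Theorem~\ref{thm:local-subunitarity} and passing directly to the marginal channel $\E_A = \sum_i p_i \E_{A,i}$ avoids this entirely.
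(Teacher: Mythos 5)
Your proposal is correct and follows essentially the same route as the paper's proof: both use linearity of the Liouville representation to write $T_{AB\to AB}=\sum_i p_i\,T_A^i\otimes T_B^i$, $T_{A\to A}=\sum_i p_i\,T_A^i$, $T_{B\to B}=\sum_i p_i\,T_B^i$, and then expand the Hilbert--Schmidt inner products using the tensor-product factorization of the trace. Your detour through Theorem~\ref{thm:local-subunitarity} and the marginal channel $\E_A=\sum_i p_i\E_{A,i}$ for the local blocks is a harmless cosmetic variation on the paper's direct reading-off of the sub-blocks.
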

\begin{proof}
    From definition the correlated unitarity is
	\begin{equation}
	u_c(\E) = \alpha_A \alpha_B( \ev{T_{AB \to AB}, T_{AB \to AB}} -\ev{T_{A \to A}, T_{A \to A}}\ev{T_{B \to B}, T_{B \to B}} ).
	\end{equation}
    Since $\E$ is separable, in the Liouville representation linearity implies
    \begin{equation}
       \vket{\E(\rho)} = \vket{\sum_i^r p_i \E_{A,i} \otimes \E_{B,i}(\rho)}  = \sum_i^r p_i \boldify{\E_{A,i}} \otimes \boldify{\E_{B,i}}\vket{\rho} = \boldify{\E} \vket{\rho}
    \end{equation}
    therefore it follows that the relevant subunital blocks of the channel are simply the weighted sum of the subunital blocks of each product channel:
	\begin{equation}
	T_{AB\to AB} = \sum_{i}^r p_{i} T_{A}^{i} \otimes T_{B}^{i}, \ \ T_{A\to A}  = \sum_{i}^r p_{i} T_{A}^{i}, \ \ T_{B \to B}  = \sum_{i}^r p_{i} T_{B}^{i},
	\end{equation}
	where $T_{A}^{i}$ is the unital block in the Liouville representation of  $\E_{A,i}$ and $T_{B}^{i}$ is the unital block in the Liouville representation of  $\E_{B,i}$.
	Thus the correlated unitarity is
    \begin{equation}
        \begin{split}
            u_c(\E_{AB}) &= \alpha_A \alpha_B( \sum_{i,j}^{r,r} p_{i} p_{j} \ev{ T_{A}^{i} \otimes T_{B}^{i}, T_{A}^{j} \otimes T_{B}^{j}} - \sum_{i,j}^{r,r} p_{i} p_{j} \ev{ T_{A}^{i}, T_{A}^{j}} \sum_{m,n}^{r,r} p_{m} p_n \ev{ T_{B}^{m}, T_{B}^{n}} ), \\
            &= \alpha_A \alpha_B(\sum_{i,j}^{r,r} p_{i} p_{j} \ev{ T_{A}^{i}, T_{A}^{j}} \ev{ T_{B}^{i}, T_{B}^{j}} - \sum_{i,j}^{r,r} p_{i} p_{j} \ev{ T_{A}^{i}, T_{A}^{j}} \sum_{m,n}^{r,r} p_{m} p_n \ev{ T_{B}^{m}, T_{B}^{n}} ).
        \end{split}
    \end{equation}
    Which completes the proof.
\end{proof}

\begin{theorem}[Correlated unitarity is a witness of non-separability]
    Given a bipartite quantum system $AB$ with subsystems $A$ \& $B$ of dimensions $d_A$ \& $d_B$ respectively, for a separable quantum channel $\E_{AB}$, we have that 
    \begin{equation}
        u_c(\E_{AB}) \le C(d_A, d_B) \leq \frac{17}{24} < 1,
    \end{equation}
    where
    \begin{equation}
        C(d_A , d_B) = \beta_A(1+\beta_B) (1-\frac{1}{\min(d_A^2,d_B^2)})   + \frac{1}{4}
    \end{equation}
    where $\beta_i = \frac{1}{d_i^2 -1}$ for $d_i = 2$ or $\beta_i = \frac{d_i}{d_i^2 -1}$ otherwise.
\end{theorem}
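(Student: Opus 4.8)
The plan is to start from the decomposition of the correlated unitarity for separable channels obtained in Lemma~\ref{lemma:uc-for-separable-channel}. Setting $x_{ij} := \alpha_A \langle T_A^i, T_A^j\rangle$ and $y_{ij} := \alpha_B\langle T_B^i, T_B^j\rangle$, that lemma says precisely
\[
    u_c(\E_{AB}) = \sum_{i,j} p_i p_j\, x_{ij} y_{ij} - \Big(\sum_{i,j} p_i p_j\, x_{ij}\Big)\Big(\sum_{m,n} p_m p_n\, y_{mn}\Big) = \mathrm{Cov}(x_{IJ}, y_{IJ}),
\]
where $(I,J)$ is a pair of indices drawn independently from the distribution $(p_k)$. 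Thus the whole problem reduces to bounding the covariance of two scalar random variables whose joint support we can constrain.

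The essential inputs are the inner-product estimates of Lemma~\ref{lemma:range-of-t-inner-product}, sharpened by Lemma~\ref{lemma:t-lower-bound-qubits} in the qubit and unital cases: for $T$-matrices of channels on a $d_X$-dimensional system one has $-d_X \le \langle T_1,T_2\rangle \le d_X^2-1$, the lower end improving to $-1$ when $d_X=2$. After normalisation this gives $x_{ij}\in[-\beta_A,1]$ and $y_{ij}\in[-\beta_B,1]$ with $\beta_X$ exactly as in the statement (note $\beta_X(d_X^2-1)$ is the magnitude of the relevant lower bound). I would also use that the averaged local channels $\E_A=\sum_k p_k\E_{A,k}$ and $\E_B$ are themselves legitimate channels, so $T_{A\to A}=\sum_k p_k T_A^k$ and $T_{B\to B}$ obey the same bounds; in particular $u_{A\to A}(\E_{AB})=\sum_{i,j}p_ip_j x_{ij}\in[0,1]$ and likewise for $B$. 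It is precisely this positivity -- Choi positivity powers the $-d_X$ bound -- that keeps $u_c$ strictly below $1$: the purely dimension-counting estimates all degrade to the trivial bound $u_c\le 1$.

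From here I would bound the covariance by a Grüss/Chebyshev-sum argument, but splitting the index sum into its diagonal ($i=j$) and off-diagonal ($i\ne j$) parts. On the diagonal $x_{ii}=u(\E_{A,i})$ and $y_{ii}=u(\E_{B,i})$ both lie in $[0,1]$, so the diagonal contribution to $\sum p_ip_j x_{ij}y_{ij}$ is at most $\sum_k p_k^2$; on the off-diagonal, using $x_{ij}\le1$ where $y_{ij}\ge0$ and $x_{ij}\ge-\beta_A$ where $y_{ij}<0$, together with the fact that only off-diagonal pairs can make $y$ negative (and then only as far as $-\beta_B$), one reduces everything to an elementary optimisation of $u_c$ over the three nonnegative parameters $u_{A\to A}$, $u_{B\to B}$ and $\sum_k p_k^2$, each in $[0,1]$. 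Relabelling $A\leftrightarrow B$ so that $d_A\le d_B$ produces the factor $\min(d_A^2,d_B^2)$ and the asymmetric prefactor $\beta_A(1+\beta_B)$, and carrying out the optimisation gives $u_c\le C(d_A,d_B)$. I expect this last step to be the main obstacle: extracting the exact combination $\beta_A(1+\beta_B)\big(1-1/\min(d_A^2,d_B^2)\big)+\tfrac14$, and in particular checking that the apparently dangerous regime $u_{B\to B}\to1$ does not beat it -- there the channel is forced close to a product channel, for which $u_{AB\to AB}=u_{A\to A}u_{B\to B}$ by Theorem~\ref{product-lemma} and hence $u_c\to0$ -- requires care and is where the Choi-positivity constraints must be used in a sharp form.

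Finally, $C(d_A,d_B)$ is an explicit function of the two subsystem dimensions, and a short computation shows $C(2,2)=7/12$, $C(3,3)=17/24$, while $C$ decreases once either dimension exceeds $3$ and mixed cases (e.g. $d_A=2$) are smaller still. Hence $C(d_A,d_B)\le C(3,3)=17/24<1$ for all bipartite systems, which is the claimed uniform witness bound.
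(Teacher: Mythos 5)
Your proposal follows essentially the same route as the paper's proof: the same decomposition of $u_c$ for separable channels into weighted inner products $\langle T_A^i,T_A^j\rangle\langle T_B^i,T_B^j\rangle$, the same Choi-positivity bounds $-\beta_X (d_X^2-1)\le \langle T_1,T_2\rangle \le d_X^2-1$ (sharpened for qubits/unital channels), the same diagonal/off-diagonal split with sign-dependent bounding of the cross terms, and the same final reduction to an optimisation over $u_{A\to A}$, $u_{B\to B}$ and $\sum_k p_k^2$, which in the paper yields the nonpositive diagonal remainder $\sum_i p_i^2 t_i(s_i-1)\le 0$, the term $\beta_A(1+\beta_B)(1-\sum_k p_k^2)$, and $A(1-B)\le \tfrac14$. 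The one point where your account diverges from the paper is the origin of the $\min(d_A^2,d_B^2)$ factor and of the asymmetric prefactor $\beta_A(1+\beta_B)$: in the paper the former comes from Cauchy--Schwarz, $\sum_k p_k^2\ge 1/r\ge 1/\min(d_A^2,d_B^2)$ applied to the number of terms $r$ in the separable decomposition, and the latter from the choice to split the off-diagonal sum on the signs of the $A$-side inner products (with $s_{ij}\ge-\beta_B$ entering only on the negative-$t$ terms), not from relabelling $A\leftrightarrow B$.
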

\begin{proof}
    From Lemma \ref{lemma:uc-for-separable-channel}, 
    \begin{equation}
        u_c(\E_{AB}) = \alpha_A \alpha_B(\sum_{i,j}^{r,r} p_{i} p_{j} \ev{ T_{A}^{i}, T_{A}^{j}} \ev{ T_{B}^{i}, T_{B}^{j}} - \sum_{i,j}^{r,r} p_{i} p_{j} \ev{ T_{A}^{i}, T_{A}^{j}} \sum_{m,n}^{r,r} p_{m} p_n \ev{ T_{B}^{m}, T_{B}^{n}} )
    \end{equation}
    where $T_{A}^{i}$ is the unital block in the Liouville representation of  $\E_{A,i}$ and $T_{B}^{i}$ is the unital block of  $\E_{B,i}$. To simplify notation we label the normalized inner products
    \begin{equation}
        t_{ij} := \alpha_{A }\ev{ T_{A}^{i}, T_{A}^{j}} \text{ and } s_{ij} := \alpha_{B}\ev{ T_{B}^{i}, T_{B}^{j}},
    \end{equation}
    and define $A := \sum_{ij}^{r,r} p_i p_j t_{ij}$ and $B := \sum_{ij}^{r,r} p_i p_j s_{ij}$. In this notation the correlated unitarity of the separable channel is just
    \begin{equation}\label{eqn:uc-for-sep-channel}
        u_c(\E_{AB}) = \sum_{ij}^{r,r} p_i p_j t_{ij} s_{ij} - AB.
    \end{equation}
    From Lemma \ref{lemma:range-of-t-inner-product} the range of any particular $t_{ij}$ is
    \begin{equation}
        -\beta_A \leq t_{ij} \leq 1
    \end{equation}
    where $\beta_A = d_A \alpha_A$ applies to all channels and $\beta_A = \alpha_A $ holds for the case of qubit channels or if one of the channels is unital. Additionally from the non-negativity of the Hilbert Schmidt inner product $t_i \equiv t_{ii} \geq 0$. Similarly for the $B$ subsystem: $-\beta_B \leq s_{ij} \leq 1$ and $s_i \equiv s_{ii} \geq 0$.
    
    We now bound the first term in equation (\ref{eqn:uc-for-sep-channel}) in relation to the second. Out of the $r^2$ possible terms in the first term there are $r$ terms that are equal to $p_i^2 t_{i}s_{i}$ (namely when $i=j$). Now suppose that out of the $r^2-r$ remaining terms there are $k$ terms where $t_{ij}$ is negative: $t_{-,m}$, ($m=\{0,1,...,k-1,k\}$), and $r^2-(r+k)$ other terms where $t_{ij}$ is positive: $t_{+,n}$, ($n=\{0,1,...,r^2-(r+k)-1,r^2-(r+k)\}$). We can then write the correlated unitarity as
    \begin{equation}
        \begin{split}
            u_c(\E_{AB}) &= \sum_i^r p_i^2 t_{i}s_{i} + \sum_{i \neq j}^{r^2 - r} p_i p_j t_{ij} s_{ij} -AB, \\
            &=  \sum_i^r p_i^2 t_{i}s_{i} + \sum_{m=(ij), i \neq j}^k p_i p_j t_{-,m} s_{m} + \sum_{n=(ij), i\neq j}^{r^2 -r -k} p_i p_j t_{+,n} s_n -  AB, \\
            &=  \sum_i^r p_i^2 t_{i}s_{i} - \sum_{m=(ij), i \neq j}^k p_i p_j  \abs{t_{-,m}} s_{m} + \sum_{n=(ij), i\neq j}^{r^2 -r -k} p_i p_j  t_{+,n} s_n -  AB. \\
        \end{split}
    \end{equation}
    We now bound the summation of positive and negative $t_{i \neq j}$ terms. As all $t_{-,m} \leq 0$ then since $|t_{-m}|\leq \beta_A$ we can bound the summation of negative terms as
    \begin{equation}\label{eqn:bound-negative-a}
        \sum_{m=(ij), i \neq j}^k p_i p_j \abs{t_{-,m}} \leq \sum_{m=(ij), i \neq j}^k \beta_A p_i p_j \leq \sum_{i \neq j}^{r^2 -r} \beta_A p_i p_j = \beta_A (1-\sum_i^r p_i^2)
    \end{equation}
    where we have maximized $k$ to include all $r^2 -r$ possible terms, and used the simple relation that $\sum_i^r p_i^2 + \sum_{ij}^{r^2-r} p_i p_j =1$. From definition, $ A = \sum_i^r p_i^2 t_{i} + \sum_{i \neq j}^{r^2 - r} p_i p_j t_{ij}$ therefore the whole summation of cross terms can be written as
    \begin{equation}
        \sum_{i \neq j}^{r^2 - r} p_i p_j t_{ij} =  \sum_{n=(ij), i\neq j}^{r^2 -r -k} p_i p_j  t_{+,n} - \sum_{m=(ij), i \neq j}^k p_i p_j  \abs{t_{-,m}}  = A - \sum_i^r p_i^2 t_{i}.
    \end{equation}
    From this we can bound the summation of the positive terms using the previous bound in eqn. (\ref{eqn:bound-negative-a}):
    \begin{equation}
        \begin{split}
            \sum_{n=(ij), i\neq j}^{r^2 -r -k} p_i p_j  t_{+,n} &= A - \sum_i^r p_i^2 t_{i} + \sum_{m=(ij), i \neq j}^k p_i p_j \abs{t_{-,m}}, \\
            \sum_{n=(ij), i\neq j}^{r^2 -r} p_i p_j  t_{+,n} & \leq A - \sum_i^r p_i^2 t_{i} +  \beta_A (1-\sum_i^r p_i^2).
        \end{split}
    \end{equation}
    Since both $t_{+,n}\geq 0$ and $\abs{t_{-,m}} \geq 0$ and all elements $-\min(\beta_B, \sqrt{s_i s_j}) \leq s_{i \neq j} \leq \sqrt{s_i s_j} \leq 1$, then we can bound the summation containing $t_{+,n}  s_n$ elements as
    \begin{equation}
        \sum_{n=(ij), i\neq j}^{r^2 -r -k} p_i p_j  t_{+,n} s_n \leq \sum_{n=(ij), i\neq j}^{r^2 -r -k} p_i p_j  t_{+,n} \leq  A - \sum_i^r p_i^2 t_{i} +  \beta_A (1-\sum_i^r p_i^2)
    \end{equation}
    and the summation containing $t_{-,m} s_m$ elements (assuming $\sqrt{s_i s_j} \geq \beta_B$)
    \begin{equation}
        - \sum_{m=(ij), i \neq j}^k p_i p_j  \abs{t_{-,m}} s_{m} \leq \beta_B \sum_{m=(ij), i \neq j}^k p_i p_j  \abs{t_{-,m}} \leq \beta_B (\beta_A (1-\sum_i^r p_i^2)).
    \end{equation}
    Putting all this together we get a bound on the correlated unitarity of
    \begin{equation}
        \begin{split}
            u_c(\E_{AB}) &\leq \sum_i^r p_i^2 t_{i}s_{i} + \beta_B (\beta_A (1-\sum_i^r p_i^2)) + A - \sum_i^r p_i^2 t_{i} +  \beta_A (1-\sum_i^r p_i^2) -  AB, \\
            &\leq \sum_i^r p_i^2 t_{i}(s_{i}-1) + \beta_A(1+\beta_B) (1-\sum_i^r p_i^2)  + A(1-B).
        \end{split}
    \end{equation}
    With no loss of generality we can set $A \leq B$ as $A$ and $B$ are interchangeable. Therefore we have that $A(1-B) \leq B(1-B)$. As $0 \leq B \leq 1$, this is maximized when $B=1/2$. Additionally as $s_i \leq 1$ then $s_i - 1 \leq 0$ and the whole first term is negative. Therefore
    \begin{equation}
        \begin{split}
            u_c(\E_{AB}) &\leq \beta_A(1+\beta_B) (1-\sum_i^r p_i^2)  + \frac{1}{4}.
        \end{split}
    \end{equation}
    Further from the Cauchy-Schwartz inequality $\sum_i^r p_i^2 \geq \frac{1}{r} \geq \frac{1}{\min(d_A^2,d_B^2)}$. Putting this together we have:
        \begin{equation}
            \begin{split}
                u_c(\E_{AB}) &\leq \beta_A(1+\beta_B) (1-\frac{1}{\min(d_A^2,d_B^2)})   + \frac{1}{4}.
            \end{split}
        \end{equation}
        where we have $\beta_i = \frac{d_i}{d_i^2 -1}$ for $d_i>2$ and $\beta_i = \frac{1}{d_i^2 -1}$ for $d_i = 2$. Firstly, for $d_A = d_B=2$ we find that
        \begin{equation}
            \begin{split}
                u_c(\E_{AB}) &\leq \frac{d_B^2}{(d_A^2 -1)(d_B^2-1)} (1-\frac{1}{d_B^2})  + \frac{1}{4} = \frac{1}{3} + \frac{1}{4} = \frac{7}{12}.
            \end{split}
        \end{equation}
        We now eliminate the two other cases with a qubit subsystem. Firstly, ($d_A =2, d_B > 2$) yields
        \begin{equation}
            \begin{split}
                u_c(\E_{AB}) &\leq \frac{1}{d_A^2 -1}(1+\frac{d_B}{d_B^2 -1}) (1-\frac{1}{d_A^2})  + \frac{1}{4}, \\
                &\leq \frac{1}{3}(1+\frac{d_B}{d_B^2 -1}) (\frac{3}{4})  + \frac{1}{4}, \\
            \end{split}
        \end{equation}
        which is maximised for $d_B = 3$ giving $  u_c(\E_{AB}) \leq 17/32 \approx 0.53$. Secondly, ($d_A > 2, d_B = 2$) yields
        \begin{equation}
            \begin{split}
                u_c(\E_{AB}) &\leq \frac{d_A}{d_A^2 -1}(1+\frac{1}{d_B^2 -1}) (1-\frac{1}{d_B^2})  + \frac{1}{4}, \\
                &\leq \frac{d_A}{d_A^2 -1}(1+\frac{1}{3}) (\frac{3}{4})  + \frac{1}{4}, \\
                &\leq \frac{d_A}{d_A^2 -1}(\frac{4}{3}) (\frac{3}{4})  + \frac{1}{4}, \\
                &\leq \frac{d_A}{d_A^2 -1}  + \frac{1}{4}, \\
            \end{split}
        \end{equation}
        which is maximised for $d_A = 3$ giving $  u_c(\E_{AB}) \leq 5/8 \approx 0.63$.
        
        Now we consider the two broader cases. Firstly, ($d_A, d_B > 2$ with  $d_B \geq d_A$) which yields
        \begin{equation}
            \begin{split}
                u_c(\E_{AB}) &\leq \frac{d_A}{d_A^2 -1}(1+\frac{d_B}{d_B^2 -1}) (1-\frac{1}{d_A^2})  + \frac{1}{4}, \\
                &\leq \frac{1}{d_A}(1+\frac{d_B}{d_B^2 -1})  + \frac{1}{4}, \\
            \end{split}
        \end{equation}
        which is maximised for $d_A = d_B = 3$ giving $  u_c(\E_{AB}) \leq 17/24 (\approx 0.71)$. Secondly, ($d_A, d_B > 2$ with  $d_A > d_B$) which yields
        \begin{equation}
            \begin{split}
                u_c(\E_{AB}) &\leq \frac{d_A}{d_A^2 -1}(1+\frac{d_B}{d_B^2 -1}) (1-\frac{1}{d_B^2})  + \frac{1}{4}, \\
            \end{split}
        \end{equation}
        which is minimised for  $d_A =4, d_B = 3$ giving $  u_c(\E_{AB}) \leq 311/540 \approx 0.58$.
        This completes the proof.
\end{proof}

\section{Analysis of local independent twirls on $A$ and $B$}\label{appendix-for-pAB-protocol}
\subsection{Definition of subspace projectors}
\begin{lemma}\label{lemma:projector-P}
    The operator
    \begin{equation}
        P := \int \!\!d \mu_{\mbox{\tiny Haar}}(U)\ \mathbfcal{U}^{\otimes 2} = \int \!\!d \mu_{\mbox{\tiny Haar}}(U)\ (U \otimes U^{*})^{\otimes2},
    \end{equation}
    on $\mathcal{H}^{\otimes4}=V \oplus V^\perp$ is a projector into the subspace $V=\textrm{span}(\vket{\ident^{\otimes 2}}, \vket{\mathbb{F}})$, where $\mathbb{F}$ is the Flip operator on the sub-systems, and therefore $P=0$ on $V^\perp$.
\end{lemma}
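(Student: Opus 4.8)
The plan is to recognize $P$ as a group-averaging projector and then pin down its range via Schur--Weyl duality. First I would record the elementary fact that, for any finite-dimensional unitary representation, the Haar average is the orthogonal projector onto the subspace of invariant vectors. Concretely, write $R(U) := \mathbfcal{U}^{\otimes 2} = (U\otimes U^*)^{\otimes 2}$, a unitary representation of $U(d)$ on $\H^{\otimes 4}$, and set $\mathrm{Fix} := \{\, |\boldsymbol{v}\rangle \in \H^{\otimes 4} : R(U)|\boldsymbol{v}\rangle = |\boldsymbol{v}\rangle \text{ for all } U\in U(d) \,\}$. By left-invariance of the Haar measure one has $R(V)\,P = P = P\,R(V)$ for every $V\in U(d)$, hence $\mathrm{ran}(P)\subseteq\mathrm{Fix}$; conversely $P|\boldsymbol{v}\rangle = \int d\mu_{\mbox{\tiny Haar}}(U)\,|\boldsymbol{v}\rangle = |\boldsymbol{v}\rangle$ for $|\boldsymbol{v}\rangle\in\mathrm{Fix}$, so $\mathrm{Fix}\subseteq\mathrm{ran}(P)$. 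Unitarity of $R(U)$ together with invariance of the measure under $U\mapsto U^{-1}$ gives $P^\dagger=P$, while the relation $R(V)P=P$ gives $P^2=P$; therefore $P$ is the orthogonal projector onto $\mathrm{Fix}$, and in particular $P=0$ on $\mathrm{Fix}^\perp$.

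It then remains to compute $\mathrm{Fix}$. Under the identification of $\H^{\otimes 4}=(\H\otimes\H)^{\otimes 2}$ with the vectorization space of $\B(\H^{\otimes 2})$ --- that is, after the reordering of tensor factors flagged in the footnote accompanying the definition of $\vket{X_a\otimes Y_b}$ --- the operator $R(U)$ acts as the Liouville representation of the conjugation channel $X\mapsto U^{\otimes 2}\,X\,(U^\dagger)^{\otimes 2}$ on $\B(\H^{\otimes 2})$. Consequently $|\mathrm{vec}(X)\rangle\in\mathrm{Fix}$ if and only if $U^{\otimes 2}\,X\,(U^\dagger)^{\otimes 2}=X$ for all $U\in U(d)$, i.e.\ if and only if $X$ lies in the commutant of $\{\,U^{\otimes 2}:U\in U(d)\,\}$ inside $\B(\H^{\otimes 2})$.

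By Schur--Weyl duality this commutant is precisely the algebra spanned by the permutation operators exchanging the two tensor copies, namely $\mathrm{span}\{\I^{\otimes 2},\mathbb{F}\}$, and these two operators are linearly independent whenever $d\ge 2$. Vectorizing, $\mathrm{Fix}=\mathrm{span}\{\vket{\I^{\otimes 2}},\vket{\mathbb{F}}\}=V$. Hence $P$ is the orthogonal projector onto $V$ and vanishes on $V^\perp$, which is the claim.

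The only real subtlety I anticipate is bookkeeping: making precise the correspondence between $R(U)$ on $\H^{\otimes 4}$ and the conjugation superoperator on $\B(\H^{\otimes 2})$, which requires permuting the four tensor legs so as to group the two ``$U$'' factors and the two ``$U^*$'' factors, and then checking that this permutation is exactly the one identifying $\vket{\mathbb{F}}$ and $\vket{\I^{\otimes 2}}$ with the vectorizations of the flip and the identity on $\H^{\otimes 2}$. Once this is set up, the rest is the standard group-averaging argument together with the textbook Schur--Weyl statement, and no further estimates are needed.
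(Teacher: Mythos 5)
Your proposal is correct and follows essentially the same route as the paper: both identify $P$ as the group-averaging projector onto the invariant subspace of $(U\otimes U^*)^{\otimes 2}$, reorder tensor legs to reduce the fixed-point condition to $[X,U^{\otimes 2}]=0$, and then characterize the commutant as $\mathrm{span}\{\I^{\otimes 2},\mathbb{F}\}$ (the paper via the symmetric/antisymmetric decomposition plus Schur's lemma, which is just the two-copy instance of the Schur--Weyl statement you invoke). Your extra care in verifying $P^\dagger=P$ and $P^2=P$ is a welcome elaboration of a step the paper asserts in one line.
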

\begin{proof}
    (Of Lemma \ref{lemma:projector-P})
    For any group $G$ with an invariant measure (i.e. finite or compact) and a representation V, the averaging over all elements of the group gives a projector,
    \begin{equation}
        P=\int V(g) \ dg,
    \end{equation}
    onto the invariant subspace $ \{ |\psi \>: V(g)\ket{\psi}=\ket{\psi} \forall \ g \in  G\}.$ To find the invariant subspace for $V(U)=(U \otimes U^{*})^{\otimes2}$ it is easier to look at $V'(U) = U \otimes U \otimes U^* \otimes U^*$. According to the definition of the invariant subspace, we must find $X$ such that 
    \begin{equation}
        V'(U)\vket{X}=\vket{X},
    \end{equation}
    or equivalently $[X,U \otimes U]=0$. 

    We can decompose $U \otimes U$ into irreducible representations of $U(d)$. There are 2 of them: the symmetric subspace and the alternating subspace. This is related to the fact that symmetric group on two elements has two irreducible representations: the trivial one ($\ident$) and the alternating one ($\mathbb{F}$). 

    Using Schur's lemma \footnote{Schur's Lemma states that the only matrices that commute with all elements of an irreducible representation of a group are scalar multiples of $\ident$.} the operator $X$ must be a multiple of the identity when restricted to either of these two subspaces. Putting everything together,  (up to reordering of spaces) the invariant subspace is spanned by $\vket{\ident^{\otimes 2}}$ and $\vket{\mathbb{F}}$.
\end{proof}

\begin{lemma}\label{lemma:eigenvectorsofv}
    A normalized basis for the invariant vector space  $V=\textrm{span}(\vket{\ident^{\otimes 2}}, \vket{\mathbb{F}})$ is given by
    \begin{equation}
        \begin{split}
            \ket{0} &= \vket{X_0} \otimes \vket{X_0}, \\
            \ket1 &= \frac{1}{\sqrt{d^2-1}} \sum_{k=1}^{d^2-1} \vket{X_k} \otimes \vket{X_k^\dagger},
        \end{split}
    \end{equation}
    where $X_\mu=(X_0=\ident/\sqrt{d}, X_i)$.
\end{lemma}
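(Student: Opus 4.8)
The plan is to establish the claim by directly rewriting the two spanning vectors $\vket{\I^{\otimes2}}$ and $\vket{\mathbb{F}}$ in the operator basis $\{X_\mu\otimes X_\nu\}$ of $\B(\H\otimes\H)$, reading off $\ket0$ and $\ket1$ as explicit elements of $V$, and then verifying that $\{\ket0,\ket1\}$ is orthonormal. Orthonormality forces $\mathrm{span}\{\ket0,\ket1\}$ to be two-dimensional; since it sits inside the at-most-two-dimensional space $V$, the two must coincide, so $\{\ket0,\ket1\}$ is an orthonormal (in particular normalized) basis of $V$.

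First I would treat $\vket{\I^{\otimes2}}$: since $X_0 = \I/\sqrt{d}$, the identity on $\H\otimes\H$ is $\I^{\otimes2} = d\,(X_0\otimes X_0)$, so (up to the fixed reordering of the inner tensor factors noted in the main text) $\vket{\I^{\otimes2}} = d\,\vket{X_0}\otimes\vket{X_0} = d\,\ket0$; in particular $\ket0\in V$ and $\langle 0|0\rangle = \vbraket{X_0}{X_0}^2 = 1$. Next I would expand the flip operator using the identity $\mathbb{F} = \sum_{\mu=0}^{d^2-1} X_\mu\otimes X_\mu^\dagger$, valid for any Hilbert–Schmidt orthonormal basis $\{X_\mu\}$: this follows by writing $\mathbb{F} = \sum_{a,b}E_{ab}\otimes E_{ab}^\dagger$ in the matrix-unit basis, expanding each $E_{ab}$ over $\{X_\mu\}$, and collapsing the resulting double sum with $\sum_{a,b}\Tr(X_\mu^\dagger E_{ab})\overline{\Tr(X_\nu^\dagger E_{ab})} = \Tr(X_\mu^\dagger X_\nu) = \delta_{\mu\nu}$. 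Vectorizing and separating the $\mu=0$ term (for which $X_0$ is Hermitian) gives $\vket{\mathbb{F}} = \vket{X_0}\otimes\vket{X_0} + \sum_{k=1}^{d^2-1}\vket{X_k}\otimes\vket{X_k^\dagger} = \ket0 + \sqrt{d^2-1}\,\ket1$, so $\ket1 = (d^2-1)^{-1/2}\big(\vket{\mathbb{F}} - \tfrac1d\vket{\I^{\otimes2}}\big)\in V$.

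It then remains to check orthonormality. Using $\vbraket{X_k^\dagger}{X_l^\dagger} = \Tr(X_k X_l^\dagger) = \delta_{kl}$, one gets $\langle 1|1\rangle = (d^2-1)^{-1}\sum_{k,l=1}^{d^2-1}\vbraket{X_k}{X_l}\vbraket{X_k^\dagger}{X_l^\dagger} = 1$, and $\langle 0|1\rangle = (d^2-1)^{-1/2}\sum_{k=1}^{d^2-1}\vbraket{X_0}{X_k}\vbraket{X_0}{X_k^\dagger} = 0$ since $X_0$ is orthogonal to the traceless $X_k$. The only step requiring genuine care is the expansion $\mathbb{F} = \sum_\mu X_\mu\otimes X_\mu^\dagger$ in a general operator basis (and keeping track of the $X_k^\dagger$ that it produces); everything else is bookkeeping with Hilbert–Schmidt inner products. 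An alternative that avoids this identity is plain Gram–Schmidt: normalize $\vket{\I^{\otimes2}}$ to obtain $\ket0$, subtract its component from $\vket{\mathbb{F}}$ using $\vbraket{\I^{\otimes2}}{\mathbb{F}} = \Tr(\mathbb{F}) = d$ together with $\vbraket{\mathbb{F}}{\mathbb{F}} = d^2$, and match the resulting unit vector against the stated $\ket1$ by a short direct computation; but expanding $\mathbb{F}$ in the $X_\mu$ basis is the most transparent route.
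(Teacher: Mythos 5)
Your proof is correct and follows essentially the same route as the paper: express $\vket{\I^{\otimes 2}}$ as $d\,\vket{X_0}\otimes\vket{X_0}$ and expand $\mathbb{F}=\sum_\mu X_\mu\otimes X_\mu^\dagger$ in the operator basis to read off the two normalized spanning vectors. You are in fact somewhat more careful than the paper, which asserts the flip-operator expansion "up to a dimensional factor" and leaves orthonormality to inspection, whereas you derive the (exact, factor-free) identity and check $\langle 0|1\rangle=0$, $\langle 1|1\rangle=1$ explicitly.
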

\begin{proof}
    We defined the tensor product of two vectorized matrices as:
    \begin{equation}
        \vket{A \otimes B} := \vket{A} \otimes \vket{B},
    \end{equation}
    Applying this definition to the $1^\st$ vector that spans the space $\vket{\ident^{\otimes 2}} = \vket{\ident} \otimes \vket{\ident} = d \vket{X_0} \otimes \vket{X_0}$. Normalizing, the first eigenvector is therefore $\ket{0} := \vket{X_0} \otimes \vket{X_0}$.

    The Flip operator in our basis is given by considering the permutation of computational basis states:
    \begin{equation}
        \mathbb{F} := \sum_{i,j}^{d,d} \ketbra{j}{i} \otimes \ketbra{i}{j} = \sum_{i,j}^{d,d} \ketbra{j}{i} \otimes (\ketbra{j}{i})^\dagger = \sum_{\mu}^{d^2} X_\mu \otimes X_\mu^\dagger
    \end{equation}
    up to a dimensional factor. Therefore $\vket{\mathbb{F}} =  \sum_{\mu=0}^{d^2-1} \vket{X_\mu} \otimes \vket{X_\mu^\dagger}$.
    From inspection the $2^\nd$ normalized eigenvector that spans this subspace is
    \begin{equation}
            \ket{1} = \frac{1}{\sqrt{d^2-1}} \sum_{k=1}^{d^2-1} \vket{X_k} \otimes \vket{X_k^\dagger}.
    \end{equation}
\end{proof}

We can now write the decomposition of the projector $ P := \ketbra{0}{0} + \ketbra{1}{1}$ as
\begin{equation}
    P = \vketbra{X_0}{X_0} \otimes \vketbra{X_0}{X_0} + \frac{1}{d^2-1} \sum_{i,j}^{d^2-1}  \vketbra{X_i}{X_j} \otimes \vketbra{X_i^\dagger}{X_j^\dagger}.
\end{equation}

\begin{definition}
    The projector
    \begin{equation}
        P_{AB} := \int \!\!d \mu_{\mbox{\tiny Haar}}(U_A)\ \int \!\!d \mu_{\mbox{\tiny Haar}}(U_B)\ (\mathbfcal{U}_A \otimes \mathbfcal{U}_B )^{\otimes 2},
    \end{equation}
    for the tensor product of two copies of a bipartite system with subsystems $A$ \& $B$.
\end{definition}
Since the integrals are independent, it is readily seen that,
\begin{equation}
    P_{AB} = P_A \otimes P_B = \sum_{i,j} \ketbra{ij}{ij}
\end{equation}
where $P_A$ is the projector $P$ on subsystem $A$, and similarly for $B$. We can now calculate the action of the projector $P_{AB}$ on two copies of the Liouville representation of a bipartite channel $P_{AB} \mathbfcal{E}^{\otimes 2} P_{AB}$.

\subsection{Calulation of elements of $P_{AB} \mathbfcal{E}^{\otimes 2} P_{AB}$ \& the matrix of sub-unitarities $\S$.}\label{appendpab}
We now show that the operator $P_{AB}\mathbfcal{E}^{\otimes 2} P_{AB}$ can be viewed as encoding the quadratic order invariants of the quantum channel, and in particular the traceless components form a $3\times 3$ \emph{matrix of sub-unitarities} $\S$ for the bipartite quantum channel. A basis of four eigenvectors of $P_{AB}$ can be written in the basis $(\vket{X_{\mu}} \otimes \vket{Y_{\nu}})^{\otimes 2}$ to match the order of the subspaces of $\mathbfcal{E}^{\otimes 2}$. This gives
\begin{equation}
    \begin{split}
        \ket{00} &= \vket{X_0} \otimes \vket{Y_0} \otimes \vket{X_0} \otimes \vket{Y_0}, \\
        \ket{10} &= \sqrt{\alpha_A} \sum_{n=1}^{d_A^2-1} \vket{X_n} \otimes \vket{Y_0} \otimes \vket{X_n^\dagger} \otimes \vket{Y_0}, \\
        \ket{01} &= \sqrt{\alpha_B} \sum_{m=1}^{d_B^2-1} \vket{X_0} \otimes \vket{Y_m} \otimes \vket{X_0} \otimes \vket{Y_m^\dagger}, \\
        \ket{11} &= \sqrt{\alpha_A \alpha_B} \sum_{n,m=1}^{d_A^2-1,d_B^2-1} \vket{X_n} \otimes \vket{Y_m} \otimes \vket{X_n^\dagger} \otimes \vket{Y_m^\dagger}, \\
    \end{split}
\end{equation}
where $\alpha_i = 1/(d_i^2 -1)$.
We can now calculate the matrix elements of $ P_{AB}\mathbfcal{E}^{\otimes 2} P_{AB}$ in this basis. Firstly, for each subsystem, as the $\mu=0$  elements are proportional to the identity we have $X_0^\dagger=X_0$ \& $Y_0^\dagger=Y_0$. Secondly, as the channel $\E$ is a CPTP map, we have that $\E( (X_{\mu} \otimes Y_{\nu})^\dagger) = (\E(X_{\mu} \otimes Y_{\nu}))^\dagger$ for any elements of the basis, and so
\begin{equation}
    \begin{split}
        \vbra{X_\mu^\dagger \otimes Y_\nu^\dagger} \mathbfcal{E} \vket{X_\sigma^\dagger \otimes Y_\omega^\dagger} &= \vbraket{X_\mu^\dagger \otimes Y_\nu^\dagger}{\E(X_\sigma^\dagger \otimes Y_\omega^\dagger)}, \\
        &= \tr[(X_\mu^\dagger \otimes Y_\nu^\dagger)^\dagger \E(X_\sigma^\dagger \otimes Y_\omega^\dagger)], \\
        &= \tr[\E(X_\sigma^\dagger \otimes Y_\omega^\dagger) X_\mu \otimes Y_\nu], \\
        &= \tr[\E(X_\sigma \otimes Y_\omega)^\dagger X_\mu \otimes Y_\nu], \\
        &=  \vbra{\E(X_\sigma \otimes Y_\omega)} \vket{X_\mu \otimes Y_\nu}, \\
        &=  \vbra{X_\sigma \otimes Y_\omega} \mathbfcal{E}^\dagger \vket{X_\mu \otimes Y_\nu},
    \end{split}
\end{equation}
where $\cal{E}\hc$ corresponds to the adjoint of $\E$ that is defined via $\Tr(A\E(B)) = \Tr(\E\hc(A) B)$. Futhermore note that if the non-unital block of $\E$ is $T$, then the non-unital block of $\E\hc$ is $T\hc$. 

We can now calculate the 16 possible combinations $\bra{a} \mathbfcal{E}^{\otimes 2}\ket{b}$. One element is simply equivalent to the trace preserving property of a quantum channel $\bra{00} \mathbfcal{E}^{\otimes 2} \ket{00} = (\tr[ \frac{\ident}{\sqrt{d}} \E(\frac{\ident}{\sqrt{d}}) ])^2 = 1$. The remaining elements can be divided into 3 sub-blocks to be defined
\begin{equation}
    P_{AB} \mathbfcal{E}^{\otimes 2} P_{AB} =
    \bordermatrix{~ 
    & \ket{00} & \ket{ij}  \cr
    \bra{00} & 1 & \bm{0} \cr
    \bra{ij} & \bm{x} & \S \cr
    }
    \text{ where } ij \in \{01,10,11\}.
\end{equation}

Consider a diagonal $\bra{10} \mathbfcal{E}^{\otimes 2} \ket{10}$ element in the matrix $\S$, from the above properties it follows that
\begin{equation}
    \begin{split}
        \bra{10} \mathbfcal{E}^{\otimes 2} \ket{10} &= \alpha_A \sum_{i,j=1}^{d_A^2-1} \vbra{X_i} \otimes \vbra{Y_0} \mathbfcal{E} \vket{X_j} \otimes \vket{Y_0}  \vbra{X_i^\dagger} \otimes \vbra{Y_0} \mathbfcal{E} \vket{X_j^\dagger} \otimes \vket{Y_0}, \\
        &= \alpha_A \sum_{i,j=1}^{d_A^2-1} \vbra{X_i \otimes Y_0} \mathbfcal{E} \vket{X_j \otimes Y_0}  \vbra{X_j \otimes Y_0} \mathbfcal{E}^\dagger \vket{X_i \otimes Y_0}, \\
        &= \alpha_A \tr[T_{A \to A} T_{A \to A}^\dagger] =\alpha_A \tr[T_{A \to A}^\dagger T_{A \to A}] = u_{A \to A}(\E).
    \end{split}
\end{equation}
and similarly $\bra{01} \mathbfcal{E}^{\otimes 2} \ket{01} = u_{B \to B}(\E)$ \& $\bra{11} \mathbfcal{E}^{\otimes 2} \ket{11} = u_{AB \to AB}(\E)$.
Off diagonal elements in $A$ can be calculated with an additional dimensional factor. For example, following the same line
\begin{equation}
    \begin{split}
        \bra{01} \mathbfcal{E}^{\otimes 2} \ket{10} &= \sqrt{\alpha_A \alpha_B} \sum_{i,j=1}^{(d_B^2-1)(d_A^2-1)} \vbra{X_0 \otimes Y_i} \mathbfcal{E} \vket{X_j \otimes Y_0} \vbra{X_j \otimes Y_0} \mathbfcal{E}^\dagger \vket{X_0 \otimes Y_i}, \\
        &= \sqrt{\alpha_A \alpha_B} \tr[T_{A \to B} T_{A \to B}^\dagger] = \sqrt{\frac{\alpha_B}{\alpha_A}} u_{A \to B}(\cal{E}). \\
    \end{split}
\end{equation}
Further we have elements such as
\begin{equation}\label{lop-eqn}
    \begin{split}
        \bra{11} \mathbfcal{E}^{\otimes 2} \ket{10} =& \alpha_A \sqrt{\alpha_B} \sum_{k,j,n=1}^{(d_A^2-1)(d_B^2-1)} \vbra{X_j \otimes Y_n}  \mathbfcal{E} \vket{X_k \otimes Y_0} \vbra{X_k \otimes Y_0} \mathbfcal{E}^\dagger \vket{X_j \otimes Y_n}, \\
        = & \alpha_A \sqrt{\alpha_B} \tr[T_{A \to AB}^\dagger T_{A \to AB}] = \sqrt{\alpha_B} u_{A \to AB}(\cal{E})
    \end{split}
\end{equation}
and $\bra{10} \mathbfcal{E}^{\otimes 2} \ket{11} = \alpha_A \sqrt{\alpha_B} \tr[T_{AB \to A}^\dagger T_{AB \to A}] = \frac{1}{\sqrt{\alpha_B}} u_{AB \to A}(\cal{E})$. The remaining elements of $\S$ can be found by swapping the labeling of the subsystems.
Putting this together we have the matrix of sub-unitarities given by,
\begin{equation}
    \S =
    \bordermatrix{~ 
        & \ket{10}  & \ket{11} & \ket{01}  \cr
    \bra{10}   & u_{A \to A}(\cal{E}) & \frac{1}{\sqrt{\alpha_B}} u_{AB \to A}(\cal{E}) & \sqrt{\frac{\alpha_A}{\alpha_B}} u_{B \to A}(\cal{E}) \cr
    \bra{11}   & \sqrt{\alpha_B} u_{A \to AB}(\cal{E}) & u_{AB \to AB}(\cal{E}) & \sqrt{\alpha_A} u_{B \to AB}(\cal{E}) \cr
    \bra{01}  & \sqrt{\frac{\alpha_B}{\alpha_A}} u_{A \to B}(\cal{E}) & \frac{1}{\sqrt{\alpha_A}} u_{AB \to B}(\cal{E}) & u_{B \to B}(\cal{E}) \cr
    }.
\end{equation}

The three elements $\bra{ij} \mathbfcal{E}^{\otimes 2} \ket{00}$ with $ij \in\{01,10,11\}$ quantify the non-unitality of the channel for each subsystem to quadratic order, through the H-S inner product of the generalized Bloch vector $\mathbf{x}$ for each subsystem. We can define $ x_{i} := \mathbf{x}_{i \to i}^\dagger \mathbf{x}_{i \to i}$. Therefore we have
\begin{equation}
    \begin{split}
        \bra{10} \mathbfcal{E}^{\otimes 2} \ket{00} &= \sqrt{\alpha_A} \sum_{i=1}^{d_A^2-1} \vbra{X_i \otimes Y_0} \mathbfcal{E} \vket{X_0 \otimes Y_0}  \vbra{X_0 \otimes Y_0} \mathbfcal{E}^\dagger \vket{X_i \otimes Y_0}, \\
        &= \sqrt{\alpha_A}  \mathbf{x}_{A \to A}^\dagger \mathbf{x}_{A \to A} = \sqrt{\alpha_A} x_{A},
    \end{split}
\end{equation}
similarly $\bra{11} \mathbfcal{E}^{\otimes 2} \ket{00} = \sqrt{\alpha_A \alpha_B} x_{AB}$, $\bra{01} \mathbfcal{E}^{\otimes 2} \ket{00} = \sqrt{\alpha_B} x_{B}$. Therefore $\bm{x}^T = (\sqrt{\alpha_A} x_{A}, \sqrt{\alpha_A \alpha_B} x_{AB}, \sqrt{\alpha_B} x_{B})$.

The final three elements $\bra{00} \mathbfcal{E}^{\otimes 2} \ket{ij}$ with $ij \in \{01,10,11\}$ are required to the zero from the trace preserving properties of a quantum channel. For example, considering $\bra{00} \mathbfcal{E}^{\otimes 2} \ket{10}$ for $\E$ to be a valid TP map we must have $\vbra{X_0 \otimes Y_0} \mathbfcal{E} \vket{X_i \otimes Y_0} = 0$ for all $i$. Therefore
\begin{equation}
    \bra{00} \mathbfcal{E}^{\otimes 2} \ket{10} = \sqrt{\alpha_A} \sum_{i=1}^{d_A^2-1} \vbra{X_0 \otimes Y_0} \mathbfcal{E} \vket{X_i \otimes Y_0}  \vbra{X_i \otimes Y_0} \mathbfcal{E}^\dagger \vket{X_0 \otimes Y_0} = 0.
\end{equation}
Through the same argument $\bra{00} \mathbfcal{E}^{\otimes 2} \ket{01} = \bra{00} \mathbfcal{E}^{\otimes 2} \ket{11} =0$.

Finally, putting all elements together we have,
\begin{equation}
    P_{AB} \mathbfcal{E}^{\otimes 2} P_{AB} =
        \bordermatrix{~ 
        & \ket{00} & \ket{10}  & \ket{11} & \ket{01}  \cr
        \bra{00} &  1 & 0 & 0 & 0  \cr
        \bra{10}  & \sqrt{\alpha_A} x_{A} & u_{A \to A}(\cal{E}) & \frac{1}{\sqrt{\alpha_B}} u_{AB \to A}(\cal{E}) & \sqrt{\frac{\alpha_A}{\alpha_B}} u_{B \to A}(\cal{E}) \cr
        \bra{11}  & \sqrt{\alpha_A \alpha_B} x_{AB} & \sqrt{\alpha_B} u_{A \to AB}(\cal{E}) & u_{AB \to AB}(\cal{E}) & \sqrt{\alpha_A} u_{B \to AB}(\cal{E}) \cr
        \bra{01}  & \sqrt{\alpha_B} x_{B} & \sqrt{\frac{\alpha_B}{\alpha_A}} u_{A \to B}(\cal{E}) & \frac{1}{\sqrt{\alpha_A}} u_{AB \to B}(\cal{E}) & u_{B \to B}(\cal{E}) \cr
        }.
\end{equation}
Comparing this with decomposition of the Liouville representation of a bipartite channel $\cal{E}$ in eqn. (\ref{eqn:liouville-bipartitechannel}), we see that $P_{AB}$ produces the normalized purity of every sub-block of $\cal{E}$. As sub-unitarities are the normalized purity of sub-blocks of the unital block $T$, these values are extracted, as well as the absolute value of the non-unital vector for both sub-systems. Using the form of the top row of $P_{AB} \mathbfcal{E}^{\otimes 2} P_{AB}$, it is easily seen that
\begin{equation}
    \det(P_{AB} \mathbfcal{E}^{\otimes 2} P_{AB} - \lambda \ident) = (1-\lambda)\det(\S - \lambda \ident)
\end{equation}
and therefore for any channel $\E$ the 4 eigenvalues of $P_{AB} \mathbfcal{E}^{\otimes 2} P_{AB}$ will be $\lambda_0 = 1$ and the 3 eigenvalues of $\S$.

\subsection{The matrix components for separable channels}\label{section:pab-prod}
For a product channel $\E = \cal{E}_A \otimes \cal{E}_B$ the sub-unitarity matrix $\S$ takes a particularly simple form. Since the channel is separable quantum information does not flow between $A$ and $B$ and Theorem \ref{product-lemma} tells us that $u_{A \to B}(\cal{E}_A \otimes \cal{E}_B)=u_{B \to A}(\cal{E}_A \otimes \cal{E}_B)=0$ and $u_{AB \to AB}(\cal{E}_A \otimes \cal{E}_B)=u_{A \to A}(\cal{E}_A \otimes \cal{E}_B) \cdot u_{B \to B}(\cal{E}_A \otimes \cal{E}_B)$. Thus, for a product channel $\E = \cal{E}_A \otimes \cal{E}_B$
\begin{equation}
    P_{AB} \mathbfcal{E}^{\otimes 2} P_{AB} =
        \bordermatrix{~ 
        & \ket{0 0} & \ket{1 0}  & \ket{1 1} & \ket{0 1}  \cr
        \bra{0 0} &  1 & 0 & 0 & 0  \cr
        \bra{1 0}  & \sqrt{\alpha_A} x_{A} & u(\cal{E}_A) & 0 & 0 \cr
        \bra{1 1}  & \sqrt{\alpha_A \alpha_B} x_A x_B & \sqrt{\alpha_B} u(\cal{E}_A) x_B & u(\cal{E}_A) u(\cal{E}_B) & \sqrt{\alpha_A} u(\cal{E}_B) x_A \cr
        \bra{0 1}  & \sqrt{\alpha_B} x_{B} & 0 & 0 & u(\cal{E}_B) \cr
        }.
\end{equation}
From this it is readily seen that the eigenvalues for a product channel are $\{1, u(\E_A), u(\E_B), u(\E_A) u(\E_B)\}$. More generally, for the case of a \emph{separable} channel $\E_{AB}$ from Lemmas \ref{lemma:u-aba-zero-for-seperable} \& \ref{lemma:u-ab-zero-for-seperable} we find instead that
\begin{equation}
    P_{AB} \mathbfcal{E}^{\otimes 2} P_{AB} =
        \bordermatrix{~ 
        & \ket{0 0} & \ket{1 0}  & \ket{1 1} & \ket{0 1}  \cr
        \bra{0 0} &  1 & 0 & 0 & 0  \cr
        \bra{1 0}  & \sqrt{\alpha_A} x_{A} & u_{A \rightarrow A}(\cal{E}_{AB}) & 0 & 0 \cr
        \bra{1 1}  & \sqrt{\alpha_A \alpha_B} x_{A B} & \sqrt{\alpha_B} u_{A \to AB}(\cal{E}_{AB}) & u_{AB \rightarrow AB} (\E_{AB}) & \sqrt{\alpha_A} u_{B \to AB}(\cal{E}_{AB}) \cr
        \bra{0 1}  & \sqrt{\alpha_B} x_{B} & 0 & 0 & u_{B \rightarrow B}(\cal{E}_{AB}) \cr
        }.
\end{equation}
and so now the eigenvalues are $\{1, u_{A \rightarrow A}(\cal{E}_{AB}),u_{B \rightarrow B}(\cal{E}_{AB}),u_{AB \rightarrow AB}(\cal{E}_{AB})\}$., Therefore $(P_{AB} \mathbfcal{E}^{\otimes 2} P_{AB})^m$, will have eigenvalues
\begin{equation}\label{eqn:eigens-pab-product-channel}
    \{ \lambda_i \} = \{1,u_{A \rightarrow A}(\cal{E}_{AB})^m,u_{B \rightarrow B}(\cal{E}_{AB})^m,u_{AB \rightarrow AB}(\cal{E}_{AB})^m \},
\end{equation}  
which implies that the sub-unitarities are the decay constants for the benchmarking protocol. 

More generally we do not have such a simple link between the eigenvalues and sub-unitarities. Indeed, it may be the case that the matrix cannot be diagonalized fully, and so one must instead use a Jordan decomposition to determine the decay law for the protocol. We provide the details for the fully general case in the next section.
\subsection{Jordan decomposition for arbitrary bipartite channels}\label{section:jordan-decomp-pab}
For a general bipartite channel $\E$ we can use the Jordan normal form of the matrix $P_{AB} \mathbfcal{E}^{\otimes 2} P_{AB}$  to study the structure scales with a power, $(P_{AB} \mathbfcal{E}^{\otimes 2} P_{AB})^m$.
\begin{definition}
    Using the Jordan matrix decomposition of any square matrix $M$, we can find the Jordan normal form such that
    \begin{equation}
        M = S^{-1} J S,
    \end{equation}
    where $S$ is a invertible matrix, and $J$ is a block diagonal matrix of Jordan blocks \cite{horn2012matrix}.
\end{definition}

\begin{corollary}\label{jordanpower}
    The Jordan matrix decomposition of a square matrix $M$ to the power $n$ follows
    \begin{equation}
        M^n = S^{-1} J^n S.
    \end{equation}
\end{corollary}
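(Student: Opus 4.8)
The plan is to prove this by a straightforward induction on the exponent $n$, using nothing more than the given Jordan decomposition $M = S^{-1} J S$ together with associativity of matrix multiplication and the cancellation $S S^{-1} = \ident$.

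First I would establish the base case: for $n=1$ the claim $M^1 = S^{-1} J^1 S$ is just the definition of the Jordan decomposition, and the case $n=0$ (if one wishes to include it) reads $\ident = S^{-1} \ident \, S$, which also holds. Then, for the inductive step, I would assume $M^{n} = S^{-1} J^{n} S$ and compute
\begin{equation}
    M^{n+1} = M^{n} M = (S^{-1} J^{n} S)(S^{-1} J S) = S^{-1} J^{n} (S S^{-1}) J S = S^{-1} J^{n+1} S,
\end{equation}
where the only nontrivial manipulation is inserting $S S^{-1} = \ident$ in the middle. This closes the induction and establishes the result for all non-negative integers $n$.

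There is no real obstacle here — the statement is an elementary algebraic identity, and the point of recording it is purely to set up the later analysis of $(P_{AB}\,\mathbfcal{E}^{\otimes 2}\, P_{AB})^m$, where taking powers of the Jordan form makes the decay structure of the benchmarking data transparent. If anything, the only thing to be slightly careful about is making clear that the same invertible $S$ works for every power, which the induction above makes manifest; one could equally note directly that $M^{n}=(S^{-1} J S)^{n}$ and that all interior factors $S S^{-1}$ telescope.
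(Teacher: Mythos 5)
Your proof is correct and takes essentially the same route as the paper, which simply notes that the result follows from $S S^{-1} = \ident$; your induction merely spells out the telescoping of the interior factors. Nothing further is needed.
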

\begin{proof}
    This follows simply from $S S^{-1} = \ident$.
\end{proof}
This implies that if  write $P_{AB} \mathbfcal{E}^{\otimes 2} P_{AB}$, in a Jordan normal form, $J$, then the decay law of $(P_{AB} \mathbfcal{E}^{\otimes 2} P_{AB})^m$ will be determined entirely by $J^m$. There are 3 possibilities that could occur:
\begin{equation}
    J = 
    \begin{pmatrix}
        1 & 0 & 0 & 0  \cr
        0 & \lambda_1 & 0 & 0 \cr
        0 & 0 & \lambda_3 & 0 \cr
        0 & 0 & 0 & \lambda_2 \cr
    \end{pmatrix}
    , \ J =
    \begin{pmatrix}
        1 & 0 & 0 & 0  \cr
        0 & \lambda_1 & 1 & 0 \cr
        0 & 0 & \lambda_1 & 0 \cr
        0 & 0 & 0 & \lambda_2 \cr
    \end{pmatrix}
    , \ J =
    \begin{pmatrix}
        1 & 0 & 0 & 0  \cr
        0 & \lambda_1 & 1 & 0 \cr
        0 & 0 & \lambda_1 & 1 \cr
        0 & 0 & 0 & \lambda_1 \cr
    \end{pmatrix}
    ,
\end{equation}
where $\lambda_i$ are the eigenvalues of the block $\S$. Which form the Jordan decomposition takes depends on the degeneracy of $\lambda_i$ and whether the geometric and algebraic multiplicities of each $\lambda_i$ coincide \cite{horn2012matrix}. 

For $J$ diagonal, we have that
\begin{equation}
    (P_{AB} \mathbfcal{E}^{\otimes 2} P_{AB})^m = S^{-1} J^m S = S^{-1}
    \begin{pmatrix}
        1 & 0 & 0 & 0  \cr
        0 & \lambda_1^m & 0 & 0 \cr
        0 & 0 & \lambda_3^m & 0 \cr
        0 & 0 & 0 & \lambda_2^m \cr
    \end{pmatrix}
    S,
\end{equation}
where $\{ \lambda_i \}$ are the eigenvalues of $\S$.
Therefore,
\begin{equation}
    (P_{AB} \mathbfcal{E}^{\otimes 2} P_{AB})^m = S^{-1} ( \ketbra{00} +  \lambda_1^m \ketbra{10} +  \lambda_2^m \ketbra{01} +  \lambda_3^m \ketbra{11} ) S.
\end{equation}

If the Jordan decomposition of $P_{AB} \mathbfcal{E}^{\otimes 2} P_{AB}$ is not completely diagonal, then $(P_{AB} \mathbfcal{E}^{\otimes 2} P_{AB})^m$ still scales with the eigenvalues of $\S$ but in a slightly more complex manner. From above, the 2 remaining options are
\begin{equation}
    (P_{AB} \mathbfcal{E}^{\otimes 2} P_{AB})^m = S^{-1}
    \begin{pmatrix}
        1 & 0 & 0 & 0  \cr
        0 & \lambda_1 & 1 & 0 \cr
        0 & 0 & \lambda_1 & 0 \cr
        0 & 0 & 0 & \lambda_2 \cr
    \end{pmatrix}^{\!\!m}
    S = S^{-1}
    \begin{pmatrix}
        1 & 0 & 0 & 0  \cr
        0 & \lambda_1^m & m \lambda_1^{m-1} & 0 \cr
        0 & 0 & \lambda_1^m & 0 \cr
        0 & 0 & 0 & \lambda_2^m \cr
    \end{pmatrix}
    S,
\end{equation}
and
\begin{equation}
    (P_{AB} \mathbfcal{E}^{\otimes 2} P_{AB})^m = S^{-1}
    \begin{pmatrix}
        1 & 0 & 0 & 0  \cr
        0 & \lambda_1 & 1 & 0 \cr
        0 & 0 & \lambda_1 & 1 \cr
        0 & 0 & 0 & \lambda_1 \cr
    \end{pmatrix}^{\!\!m}
    S = S^{-1}
    \begin{pmatrix}
        1 & 0 & 0 & 0  \cr
        0 & \lambda_1^m & \lambda_1^{m-1} & \frac{m(m-1)}{2} \lambda_1^{m-2} \cr
        0 & 0 & \lambda_1^m & \lambda_1^{m-1} \cr
        0 & 0 & 0 & \lambda_1^m \cr
    \end{pmatrix}
    S.
\end{equation}
Therefore, in this more general scenario the decay law behaviour of $(P_{AB} \mathbfcal{E}^{\otimes 2} P_{AB})^m$ is still described by the constants $\{\lambda_i\}$.

\subsection{Analysis of the $\cal{C}\times\cal{C}$ unitarity benchmarking protocol }\label{pab-protocol}

We now show that the unitarity benchmarking protocol detailed in Protocol \ref{protocol:CxC} generates the claimed decay law for the noise channel associated to the gate-set $\Gamma_{AB}$.
\begin{lemma}\label{projector2appears}
    Over all sequences $\textbf{s}$, and for a gate-independent noise channel $\E$, the expectation value of a observable $M$ squared can be written as:
    \begin{equation}
        \mathbb{E}_s [ m(s)^2] = \vbra{M}^{\otimes 2} (P_{AB} \mathbfcal{E}^{\otimes 2} P_{AB})^{k-1} \vket{\E(\rho)}^{\otimes 2}.
    \end{equation}
    with circuit of depth $k$, and sequences indexed via $s = (s_A, s_B)$ with $s_A = (a_1, a_2, \dots, a_k)$ and $s_B = (b_1, b_2, \dots, b_k)$ specifying the particular target unitary in each of the local gate-sets $\Gamma_{AB} = \Gamma_A \otimes \Gamma_B$.
    \end{lemma}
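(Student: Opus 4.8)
The plan is to carry out the usual two-copy Liouville computation of unitarity benchmarking, but with the global Clifford twirl replaced by the local one. First I would lift the square to the doubled space: since $m(s)=\vbra{M}\,\boldsymbol{\tilde{\U}_s}\,\vket{\rho}$ is a scalar, $m(s)^2=\vbra{M}^{\otimes 2}\,\boldsymbol{\tilde{\U}_s}^{\otimes 2}\,\vket{\rho}^{\otimes 2}$. In the gate-independent noise model the physical implementation of each target gate is $\tilde{\U}_{s_i}=\U_{s_i}\circ\E$, so $\boldsymbol{\tilde{\U}_s}=\mathbfcal{U}_{s_1}\mathbfcal{E}\,\mathbfcal{U}_{s_2}\mathbfcal{E}\cdots\mathbfcal{U}_{s_k}\mathbfcal{E}$, and since tensoring commutes with composition the doubled string factors as $\boldsymbol{\tilde{\U}_s}^{\otimes 2}=\mathbfcal{U}_{s_1}^{\otimes 2}\mathbfcal{E}^{\otimes 2}\mathbfcal{U}_{s_2}^{\otimes 2}\mathbfcal{E}^{\otimes 2}\cdots\mathbfcal{U}_{s_k}^{\otimes 2}\mathbfcal{E}^{\otimes 2}$.

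Next I would take the expectation. Because the sequence $s=(s_A,s_B)$ is drawn from the product of uniform measures on $\Gamma_A^{\times k}$ and $\Gamma_B^{\times k}$, $\mathbb{E}_s$ factorises as $\mathbb{E}_{(a_1,b_1)}\cdots\mathbb{E}_{(a_k,b_k)}$ and the pair $(a_i,b_i)$ enters only through $\mathbfcal{U}_{s_i}^{\otimes 2}=(\mathbfcal{U}_{A,a_i}\otimes\mathbfcal{U}_{B,b_i})^{\otimes 2}$. Regrouping the Hilbert-space copies into the ordering $(\H_A\otimes\H_A)^{\otimes 2}\otimes(\H_B\otimes\H_B)^{\otimes 2}$ used throughout this appendix, this factor equals $\mathbfcal{U}_{A,a_i}^{\otimes 2}\otimes\mathbfcal{U}_{B,b_i}^{\otimes 2}$, and averaging each leg over its Clifford $2$-design and using equation~(\ref{eqn:twirl-each-subsystem}) replaces it by $P_A\otimes P_B=P_{AB}$. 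Doing this for every $i$ gives
\begin{equation}
\mathbb{E}_s[m(s)^2]=\vbra{M}^{\otimes 2}\,\bigl(P_{AB}\,\mathbfcal{E}^{\otimes 2}\bigr)^{k}\,\vket{\rho}^{\otimes 2}.
\end{equation}

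Finally I would tidy the expression: peeling off the rightmost factor as $\mathbfcal{E}^{\otimes 2}\vket{\rho}^{\otimes 2}=\vket{\E(\rho)}^{\otimes 2}$ leaves $(P_{AB}\mathbfcal{E}^{\otimes 2})^{k-1}P_{AB}$, and idempotency $P_{AB}^2=P_{AB}$ turns this into $(P_{AB}\mathbfcal{E}^{\otimes 2}P_{AB})^{k-1}$, which is the claimed identity. I do not expect any genuine difficulty here; the only points needing care are (i) the bookkeeping of the reordering that takes $(\mathbfcal{U}_A\otimes\mathbfcal{U}_B)^{\otimes 2}$ to $\mathbfcal{U}_A^{\otimes 2}\otimes\mathbfcal{U}_B^{\otimes 2}$, so that the operator produced by the twirl is exactly the $P_{AB}$ whose matrix elements are computed in Appendix~\ref{appendpab}, and (ii) the placement of the boundary noise channel, which depends on the exact convention for where $\E$ acts relative to each gate but affects only the fixed boundary vectors and not the $(P_{AB}\mathbfcal{E}^{\otimes 2}P_{AB})^{k-1}$ structure. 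Everything else is just linearity together with the $2$-design property.
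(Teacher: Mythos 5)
Your proposal is correct and follows essentially the same route as the paper's proof: lift $m(s)^2$ to the doubled Liouville space, use gate-independence to factor the string into $\mathbfcal{U}_{s_i}^{\otimes 2}\mathbfcal{E}^{\otimes 2}$ blocks, invoke the local $2$-design property of equation (\ref{eqn:twirl-each-subsystem}) to replace each averaged gate factor by $P_{AB}=P_A\otimes P_B$, absorb the boundary noise into $\vket{\E(\rho)}^{\otimes 2}$, and use idempotency of $P_{AB}$ to obtain $(P_{AB}\mathbfcal{E}^{\otimes 2}P_{AB})^{k-1}$. The two bookkeeping caveats you flag (the copy reordering and the boundary convention) are exactly the points the paper handles implicitly, so nothing further is needed.
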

    \begin{proof}
    From eqn.(\ref{eqn:single-circuit}) over all sequences we have
    \begin{equation}
        \begin{split}
        \mathbb{E}_s [ m(s)^2] :=& \frac{1}{|\Gamma_{AB}|^k} \sum_{s} m(s)^2 = \frac{1}{|\Gamma_{AB}|^k} \sum_{s} (\tr [ M \tilde{\U}_s (\rho) ])^2 \\
        =& \frac{1}{|\Gamma_{AB}|^k} \sum_{s} \vbraket{M}{\tilde{\mathcal{U}}_{\textbf{s}} (\rho)}^{2} = \frac{1}{|\Gamma_{AB}|^k} \sum_{s} \vbra{M}\tilde{\mathbfcal{U}}_{\textbf{s}}\vket{\rho}^{2}, \\
        =& \frac{1}{|\Gamma_{AB}|^k} \sum_{s}  \vbra{M} \tilde{\mathbfcal{U}}_{s_k} \tilde{\mathbfcal{U}}_{s_{k-1}} ... \ \tilde{\mathbfcal{U}}_{s_1} \vket{\rho}^{2}, \\
        =& \frac{1}{|\Gamma_{AB}|^k} \sum_{s}  \vbra{M} (\mathbfcal{U}_{s_k} \mathbfcal{E}) (\mathbfcal{U}_{s_{k-1}} \mathbfcal{E}) \ ... \ (\mathbfcal{U}_{s_1} \mathbfcal{E}) \vket{\rho}^{2}. \\
        \end{split}
    \end{equation}
    Which we can write equivalently as a bipartite system,
    \begin{equation}
        \mathbb{E}_s [ m(s)^2] =  \frac{1}{|\Gamma_{AB}|^k} \sum_{s} \vbra{M}^{\otimes 2} (\mathbfcal{U}_{s_k}^{\otimes 2} \mathbfcal{E}^{\otimes 2}) (\mathbfcal{U}_{s_{k-1}}^{\otimes 2} \mathbfcal{E}^{\otimes 2}) \ ... \ (\mathbfcal{U}_{s_1}^{\otimes 2} \mathbfcal{E}^{\otimes 2}) \vket{\rho}^{\otimes 2}.
    \end{equation}
    The summation over $\U_s$ for each gate $k$, can be expanded
    \begin{equation}
        \mathbb{E}_s [ m(s)^2] = \vbra{M}^{\otimes 2} ( \frac{1}{\abs{\Gamma_{AB}}} \sum_{U_{s_k}\in \Gamma_{AB}} \mathbfcal{U}_{s_k}^{\otimes 2} \mathbfcal{E}^{\otimes 2}) ( \frac{1}{\abs{\Gamma_{AB}}} \sum_{U_{s_{k-1}}\in \Gamma_{AB}} \mathbfcal{U}_{s_{k-1}}^{\otimes 2} \mathbfcal{E}^{\otimes 2}) ... ( \frac{1}{\abs{\Gamma_{AB}}} \sum_{U_{s_1}\in \Gamma_{AB}} \mathbfcal{U}_{s_1}^{\otimes 2} \mathbfcal{E}^{\otimes 2}) \vket{\rho}^{\otimes 2},
    \end{equation}
    recalling $\cal{U}_s=\mathcal{U}_{s_A} \otimes \mathcal{U}_{s_B}$ and the sequences expand as $\sum_s^{k^2} = \sum_{s_A,s_B}^{k,k}$. We now have the form to use the property of the unitarity 2-design gate-set on each sub-system from eqns. (\ref{unit2design}) \& (\ref{eqn:twirl-each-subsystem}),
    \begin{equation}
        \mathbb{E}_s [ m(s)^2] = \vbra{M}^{\otimes 2} (\int \!\!d \mu_{\mbox{\tiny Haar}}(U_A)\ \int \!\!d \mu_{\mbox{\tiny Haar}}(U_B)\ (\mathbfcal{U}_A \otimes \mathbfcal{U}_B )^{\otimes 2} \mathbfcal{E}^{\otimes 2}) \ ... \ ( \int \!\!d \mu_{\mbox{\tiny Haar}}(U_A)\ \int \!\!d \mu_{\mbox{\tiny Haar}}(U_B)\ (\mathbfcal{U}_A \otimes \mathbfcal{U}_B )^{\otimes 2} \mathbfcal{E}^{\otimes 2}) \vket{\rho}^{\otimes 2},
    \end{equation}
    there are now $k$ identical integrals over $U_{A}$ and $U_{B}$. So we can write
    \begin{equation}
        \mathbb{E}_s [ m(s)^2] = \vbra{M}^{\otimes 2} (\int \!\!d \mu_{\mbox{\tiny Haar}}(U_A)\ \int \!\!d \mu_{\mbox{\tiny Haar}}(U_B)\ (\mathbfcal{U}_A \otimes \mathbfcal{U}_B )^{\otimes 2}  \mathbfcal{E}^{\otimes 2})^k \vket{\rho}^{\otimes 2}.
    \end{equation}
    This is just the projector $P_{AB}$, where $P_{AB} = P_A \otimes P_B$ up to reordering of subsystems, given by 
    \begin{equation}
        \begin{split}
            \mathbb{E}_s [ m(s)^2] &= \vbra{M}^{\otimes 2} (P_{AB} \mathbfcal{E}^{\otimes 2})^k \vket{\rho}^{\otimes 2}, \\
            &= \vbra{M}^{\otimes 2} (P_{AB} \mathbfcal{E}^{\otimes 2})^{k-1} (P_{AB} \mathbfcal{E}^{\otimes 2}) \vket{\rho}^{\otimes 2}, \\
            &= \vbra{M}^{\otimes 2} (P_{AB} \mathbfcal{E}^{\otimes 2})^{k-1} (P_{AB}) \vket{\E(\rho)}^{\otimes 2},
        \end{split}
    \end{equation}
    where we have absorbed the first noise channel to the initial state of the system $\rho$. As $P_{AB} = (P_{AB})^2$, we are free to write all the intermediate projectors twice
    \begin{equation}
        \begin{split}
            \mathbb{E}_s [ m(s)^2] &= \vbra{M}^{\otimes 2} (P_{AB} \mathbfcal{E}^{\otimes 2})(P_{AB}^2 \mathbfcal{E}^{\otimes 2})^{k-2}(P_{AB}) \vket{\E(\rho)}^{\otimes 2}, \\
            &= \vbra{M}^{\otimes 2} (P_{AB} \mathbfcal{E}^{\otimes 2} P_{AB})(P_{AB} \mathbfcal{E}^{\otimes 2} P_{AB})^{k-2} \vket{\E(\rho)}^{\otimes 2}, \\
            &= \vbra{M}^{\otimes 2} (P_{AB} \mathbfcal{E}^{\otimes 2} P_{AB})^{k-1} \vket{\E(\rho)}^{\otimes 2}.
        \end{split}
    \end{equation}
    Which completes the proof.
\end{proof}

From Section \ref{section:jordan-decomp-pab}  if the Jordan decomposition is diagonal
\begin{equation}
    (P_{AB} \mathbfcal{E}^{\otimes 2} P_{AB})^{k-1} = S^{-1} ( \ketbra{00} +  \lambda_1^{k-1} \ketbra{10} +  \lambda_2^{k-1} \ketbra{01} +  \lambda_3^{k-1} \ketbra{11} ) S,
\end{equation}
where $\lambda_i$ are the eigenvalues of the matrix $\S$.
Therefore from Lemma \ref{projector2appears} we can write
\begin{equation}
    \begin{split}
        \mathbb{E}_s [ m(s)^2] &= \vbra{M}^{\otimes 2} (P_{AB} \mathbfcal{E}^{\otimes 2} P_{AB})^{k-1} \vket{\E(\rho)}^{\otimes 2}, \\
        &= \vbra{M}^{\otimes 2} S^{-1} J^{k-1} S \vket{\E(\rho)}^{\otimes 2}, \\
        &= \vbra{M}^{\otimes 2} S^{-1} ( \ketbra{00} +  \lambda_1^{k-1} \ketbra{10} +  \lambda_2^{k-1} \ketbra{01} +  \lambda_3^{k-1} \ketbra{11} ) S \vket{\E(\rho)}^{\otimes 2}.
    \end{split}
\end{equation}
The transformation matrix $S$ can be absorbed into the initial state of the system and the final measurement
\begin{equation}
    \begin{split}
        \mathbb{E}_s [ m(s)^2]  &= \vbra{S^{-1 \dagger}(M^{\otimes 2})} J^{k-1} \vket{S(\E(\rho)^{\otimes 2})}, \\
        &=  \vbra{S^{-1 \dagger}(M^{\otimes 2})} ( \ketbra{00} +  \lambda_1^{k-1} \ketbra{10} +  \lambda_2^{k-1} \ketbra{01} +  \lambda_3^{k-1} \ketbra{11} ) \vket{S(\E(\rho)^{\otimes 2})},
    \end{split}
\end{equation}
and further expanded as
\begin{equation}
    \begin{split}
        \mathbb{E}_s [ m(s)^2]  &= \braket{\boldify{S^{-1 \dagger}(M^{\otimes 2})}}{00} \braket{00}{\boldify{S(\E(\rho)^{\otimes 2})}} \\
        & \ \ \ \ \ \ \ \ \ \  + \lambda_1^{k-1} \braket{\boldify{S^{-1 \dagger}(M^{\otimes 2})}}{10}\braket{10}{\boldify{S(\E(\rho)^{\otimes 2})}} \\
        & \ \ \ \ \ \ \ \ \ \ \ \ \ \ \ \ \ \ \ \ + \lambda_2^{k-1} \braket{\boldify{S^{-1 \dagger}(M^{\otimes 2})}}{01}\braket{01}{\boldify{S(\E(\rho)^{\otimes 2})}} \\
        & \ \ \ \ \ \ \ \ \ \ \ \ \ \ \ \ \ \ \ \ \ \ \ \  \ \ \ \ + \lambda_3^{k-1} \braket{\boldify{S^{-1 \dagger}(M^{\otimes 2})}}{11}\braket{11}{\boldify{S(\E(\rho)^{\otimes 2})}}.
    \end{split}
\end{equation}
Or simply,
\begin{equation}
    \mathbb{E}_s [ m(s)^2]  = c_{00} + c_{10} \ \lambda_1^{k-1} + c_{01} \ \lambda_2^{k-1} + c_{11} \ \lambda_3^{k-1}.
\end{equation}
So if a channel $\E$ produces a diagonal Jordan decomposition $J$, the protocol will produce a fit of this form where $\lambda_i$ are the eigenvalues of $\S$.

If the Jordan decomposition is not diagonal, then there are 2 remaining options. Firstly,
\begin{equation}
    J^{k-1} =
    \begin{pmatrix}
        1 & 0 & 0 & 0  \cr
        0 & \lambda_1^{k-1} & (k-1) \lambda_1^{k-2} & 0 \cr
        0 & 0 & \lambda_1^{k-1} & 0 \cr
        0 & 0 & 0 & \lambda_2^{k-1} \cr
    \end{pmatrix}
    ,
\end{equation}
where the fit will take the following form: $\mathbb{E}_s [ m(s)^2] = c_{0} + c_{1} \ \lambda_1^{k-1} + c_{2} \ \lambda_2^{k-1}$, where $\lambda_i$ are the degenerate eigenvalues of $\S$, and constants $c_i$ are dependent on $M, \rho, S, S^{-1} \& \ \bm{x}$.  Secondly, 
\begin{equation}
    J^{k-1} =
    \begin{pmatrix}
        1 & 0 & 0 & 0  \cr
        0 & \lambda_1^{k-1} & \lambda_1^{k-2} & \frac{(k-1)(k-2)}{2} \lambda_1^{k-3} \cr
        0 & 0 & \lambda_1^{k-1} & \lambda_1^{k-2} \cr
        0 & 0 & 0 & \lambda_1^{k-1} \cr
    \end{pmatrix}
    ,
\end{equation}
where the fit will take the following form: $ \mathbb{E}_s [ m(s)^2] = c_{0} + c_{1} \ \lambda_1^{k-1}$, where $\lambda_1$ is the degenerate eigenvalue of $\S$, and for different constants $c_i$ dependent on $M, \rho, S, S^{-1}$ and $ \bm{x}$.

\section{Estimating sub-unitarities via mid-circuit re-set protocols}\label{RBforsubunit}
The local subunitarities $u_{A \to A}(\E_{AB})$ and $u_{B \to B}(\E_{AB})$ of any bipartite channel $\E_{AB}$ are measures of interest in their own right. However the exact estimation of the subunitarity of gate noise through unitarity benchmarking requires the repeated preparation of the maximally mixed state on the ancillary subsystem. As shown in \cite{combes2017logical}, this introduces additional noise from the imperfect depolarization. 

In the main text, figures were given of simulations of the estimation of local subunitarities under the assumption that any error in the preparation of the maximally mixed state was purely local to the ancillary subsystem. What follows is a discussion of possible methods to extract estimates of local subunitarities under more physically realistic assumptions about the nature of induced re-set errors and the quantum device in question.

\subsection{Estimating local sub-unitarities with re-set errors}
The manner in which the induced error is modelled determines the accuracy of the predicted estimate of the subunitarity. If we model the noisy re-set channel $\tilde{\R}_B$ as
\begin{equation}
    \tilde{\R}_B = \E_P \circ (id_A \otimes \R_B) \circ \E_M,
\end{equation}
where $\R_B$ is the exact reset, and where $\E_M$ and $\E_P$ are SPAM errors on whole system related to the imperfect reset of the sub-system $B$. Then it can be shown that Protocol \ref{protocol:Cx1} allows the estimation of the subunitarity of the combined channel
\begin{equation}\label{fitting_fn_eqn_3}
    \mathbb{E}_{s_A} [ m(s_{A})^2] = c_1 + c_2 u_{A \to A}(\E_M \circ \E \circ \E_P)^{k-1}
\end{equation}
for a sequence of length $k$ where $\E$ is the noise channel associated to the gate-set. The constants $c_1$ \& $c_2$ depend on the initial and final SPAM and non-unitality of the channel $\E$.

The Protocol \ref{protocol:Cx1} requires the preparation of the maximally mixed state  ($\vket{Y_0}/\sqrt{d_B}$ in our notation) on subsystem $B$, albeit noisily. However, we can consider an alternative, where we randomly re-set to one of the computational basis states. For two qubits, we can consider the Liouville representation of the preparation channel
\begin{equation}
    \bm{prep}_{B,\pm Z} := \bm{id}_A \otimes (\vket{Y_0}/\sqrt{2} \pm \vket{Y_Z}/\sqrt{2}),
\end{equation}
which prepares the state $\ketbra{0} = \frac{1}{2}(\ident_B \pm Z)$ on sub-system $B$. For a bipartite channel $\E$, the related channel $\cal{E}_{+Z}$ on qubit $A$ is defined as
\begin{equation}
    \boldify{\cal{E}}_{+Z} := \bm{tr}_B \cdot \boldify{\cal{E}} \cdot \bm{prep}_{B,+Z} = (\bm{id}_A \otimes \vbra{Y_0}) \ \boldify{\cal{E}} \ ( \bm{id}_A \otimes (\vket{Y_0} + \vket{Y_Z})),
\end{equation}
and similarly $\boldify{\cal{E}}_{-Z} := \bm{tr}_B \cdot \boldify{\cal{E}} \cdot \bm{prep}_{B,-Z}$.

We can calculate the structure of the unitarity of these channels using the Liouville representation. The definition of unitarity can be written in our basis as
\begin{equation}
    u(\cal{E}_A) = \frac{1}{d^2 - 1} \sum_{ij} \vbra{X_i} \boldify{\cal{E}}^\dagger \vket{X_j} \vbra{X_j} \boldify{\cal{E}} \vket{X_i},
\end{equation}
for some channel $\cal{E}_A$ that maps $\cal{B}(\mathcal{H}_A) \to \cal{B}(\mathcal{H}_A)$. The unitarity of the channel $\cal{E}_{+Z}$ can then be related to the local sub-unitarity of the channel $\cal{E}$ as
\begin{equation}
    \begin{split}
        u(\cal{E}_{+Z}) = u_{A \to A}(\cal{E}) + \frac{1}{3} \sum_{ij} & \vbra{X_i \otimes Y_Z} \boldify{\cal{E}}^\dagger \vket{X_j \otimes Y_0} \vbra{X_j \otimes Y_0} \boldify{\cal{E}} \vket{X_i \otimes Y_Z} \\
        & \ \ \ \ \ + \vbra{X_i \otimes Y_Z} \boldify{\cal{E}}^\dagger \vket{X_j \otimes Y_0} \vbra{X_j \otimes Y_0} \boldify{\cal{E}} \vket{X_i \otimes Y_0} \\
        & \ \ \ \ \ \ \ \ \ \ + \vbra{X_i \otimes Y_0} \boldify{\cal{E}}^\dagger \vket{X_j \otimes Y_0} \vbra{X_j \otimes Y_0} \boldify{\cal{E}} \vket{X_i \otimes Y_Z},
    \end{split}
\end{equation}
and similarly
\begin{equation}
    \begin{split}
        u(\cal{E}_{-Z}) = u_{A \to A}(\cal{E}) + \frac{1}{3} \sum_{ij} & \vbra{X_i \otimes Y_Z} \boldify{\cal{E}}^\dagger \vket{X_j \otimes Y_0} \vbra{X_j \otimes Y_0} \boldify{\cal{E}} \vket{X_i \otimes Y_Z} \\
        & \ \ \ \ \ - \vbra{X_i \otimes Y_Z} \boldify{\cal{E}}^\dagger \vket{X_j \otimes Y_0} \vbra{X_j \otimes Y_0} \boldify{\cal{E}} \vket{X_i \otimes Y_0} \\
        & \ \ \ \ \ \ \ \ \ \ - \vbra{X_i \otimes Y_0} \boldify{\cal{E}}^\dagger \vket{X_j \otimes Y_0} \vbra{X_j \otimes Y_0} \boldify{\cal{E}} \vket{X_i \otimes Y_Z}.
    \end{split}
\end{equation}
This follows from expansion of the definitions of the channels and the Liouville definition of unitarity. By taking the mean of the unitarity of these two channels we find
\begin{equation}\label{extraterm}
    \frac{1}{2}( u(\cal{E}_{+Z}) + u(\cal{E}_{-Z})) = u_{A \to A}(\cal{E}) + \frac{1}{3} \sum_{ij} \vbra{X_i \otimes Y_Z} \boldify{\cal{E}}^\dagger \vket{X_j \otimes Y_0} \vbra{X_j \otimes Y_0} \boldify{\cal{E}} \vket{X_i \otimes Y_Z}.
\end{equation}
As the $2^{\nd}$ term in eqn. (\ref{extraterm}) is strictly non-negative we can use this measure to bound the sub-unitarity of the target channel. Therefore, if we can better re-set to one of the computational basis state, then we can estimate the $A\rightarrow A$ sub-unitarity via the following. 
\begin{lemma}\label{local-unit-inequal}
    The local sub-unitarity of a bipartite channel $u_{A \to A}(\cal{E})$ can be bounded by the average unitarity of the channel with two specific initial conditions,
    \begin{equation}
        u_{A \to A}(\cal{E}) \leq \frac{1}{2}( u(\cal{E}_{+Z}) + u(\cal{E}_{-Z}))
    \end{equation}
    where $\cal{E}_{+Z}(\rho)=\tr_B[\cal{E}(\rho \otimes \ketbra{0})]$ and $\cal{E}_{-Z}(\rho)=\tr_B[\cal{E}(\rho \otimes \ketbra{1})]$.
\end{lemma}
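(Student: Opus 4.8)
The plan is to obtain the inequality as an immediate consequence of the identity already established as Equation~(\ref{extraterm}), namely
\begin{equation}
\frac{1}{2}\bigl(u(\cal{E}_{+Z}) + u(\cal{E}_{-Z})\bigr) = u_{A \to A}(\cal{E}) + \frac{1}{3}\sum_{ij} \vbra{X_i \otimes Y_Z}\boldify{\cal{E}}^\dagger\vket{X_j \otimes Y_0}\,\vbra{X_j \otimes Y_0}\boldify{\cal{E}}\vket{X_i \otimes Y_Z},
\end{equation}
where the sums run over traceless basis operators and $Y_Z$ is the normalised Pauli $Z$ on the qubit $B$. It therefore suffices to show that the second term on the right-hand side is non-negative; the bound $u_{A\to A}(\cal{E}) \le \tfrac{1}{2}(u(\cal{E}_{+Z}) + u(\cal{E}_{-Z}))$ then follows directly.

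To see the positivity, first I would recall that $\boldify{\cal{E}}^\dagger$ is the matrix adjoint of $\boldify{\cal{E}}$, so that for any two product basis vectors $\vbra{X_a \otimes Y_b}\boldify{\cal{E}}^\dagger\vket{X_c \otimes Y_d} = \overline{\vbra{X_c \otimes Y_d}\boldify{\cal{E}}\vket{X_a \otimes Y_b}}$. Applying this with $(a,b) = (i,Z)$ and $(c,d) = (j,0)$, each summand collapses to $\bigl|\vbra{X_j \otimes Y_0}\boldify{\cal{E}}\vket{X_i \otimes Y_Z}\bigr|^2 \ge 0$. Equivalently, writing $M$ for the matrix with entries $M_{ji} := \vbra{X_j \otimes Y_0}\boldify{\cal{E}}\vket{X_i \otimes Y_Z}$ --- which is precisely the block of $\boldify{\cal{E}}$ sending the traceless $X \otimes Z$ sector into the traceless $X \otimes \I$ sector --- the sum equals $\tr[M^\dagger M] = \lVert M\rVert_2^2 \ge 0$. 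This yields the claimed inequality, and also shows it is saturated exactly when that block of $\boldify{\cal{E}}$ vanishes.

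The only substantive computation sits upstream, in deriving Equation~(\ref{extraterm}) itself: one expands $\cal{E}_{\pm Z}(\rho) = \tr_B[\cal{E}(\rho \otimes \tfrac{1}{2}(\I_B \pm Z))]$ in the Liouville picture, substitutes the basis expression for the unitarity of a qubit channel, and collects terms, the cross terms linear in $Y_Z$ cancelling between $\cal{E}_{+Z}$ and $\cal{E}_{-Z}$ while the $\I_B$-part reproduces $u_{A\to A}(\cal{E})$ and the $Z$-squared part gives the displayed remainder. Since that derivation is carried out in the body of this section it can simply be invoked here; I expect no real obstacle beyond this bookkeeping, the one point to handle carefully being the adjoint identity that recasts the leftover sum as a manifestly non-negative Hilbert--Schmidt norm.
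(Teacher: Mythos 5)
Your proposal is correct and follows essentially the same route as the paper: the paper's proof likewise invokes the identity in Equation~(\ref{extraterm}) and concludes by noting the non-negativity of each summand $\boldify{\cal{E}}_{ij}^{*}\boldify{\cal{E}}_{ij}$, which is exactly your observation that each term is $\bigl|\vbra{X_j \otimes Y_0}\boldify{\cal{E}}\vket{X_i \otimes Y_Z}\bigr|^2 \ge 0$. Your reformulation of the remainder as $\tr[M^\dagger M] \ge 0$ and the resulting saturation condition is a slightly more explicit packaging of the same argument.
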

\begin{proof}
    This follows from Lemma E1. and the non-negativity of any element $\boldify{\cal{E}}_{ij}^* \boldify{\cal{E}}_{ij}$.
\end{proof}

\begin{corollary}
    If $\cal{E}$ is a product channel.
    \begin{equation}
        u_{A \to A}(\cal{E}) = \frac{1}{2}( u(\cal{E}_{+Z}) + u(\cal{E}_{-Z}))
    \end{equation}
\end{corollary}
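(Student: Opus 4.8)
The plan is to start from the identity already established for the averaged unitarity, namely Eqn.~(\ref{extraterm}) in the preceding lemma,
\begin{equation}
\tfrac{1}{2}\big( u(\cal{E}_{+Z}) + u(\cal{E}_{-Z})\big) = u_{A \to A}(\cal{E}) + \tfrac{1}{3} \sum_{ij} \vbra{X_i \otimes Y_Z} \boldify{\cal{E}}^\dagger \vket{X_j \otimes Y_0} \vbra{X_j \otimes Y_0} \boldify{\cal{E}} \vket{X_i \otimes Y_Z},
\end{equation}
and to show that, when $\cal{E} = \cal{E}_A \otimes \cal{E}_B$ is a product channel, the entire correction term on the right vanishes. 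Since that term is a sum of manifestly non-negative quantities $|\vbra{X_j \otimes Y_0} \boldify{\cal{E}} \vket{X_i \otimes Y_Z}|^2$, it suffices to show each matrix element $\vbra{X_j \otimes Y_0} \boldify{\cal{E}} \vket{X_i \otimes Y_Z}$ is zero.

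First I would use that for a product channel the Liouville matrix factorizes, so that $\vbra{X_j \otimes Y_0} \boldify{\cal{E}} \vket{X_i \otimes Y_Z} = \vbra{X_j}\boldify{\cal{E}_A}\vket{X_i}\cdot \vbra{Y_0}\boldify{\cal{E}_B}\vket{Y_Z}$. The crucial factor is the second one: by definition $\vbra{Y_0}\boldify{\cal{E}_B}\vket{Y_Z} = \tr[Y_0^\dagger \cal{E}_B(Y_Z)] = \tfrac{1}{\sqrt{d_B}}\tr[\cal{E}_B(Y_Z)] = \tfrac{1}{\sqrt{d_B}}\tr[Y_Z] = 0$, where the penultimate equality is trace preservation of $\cal{E}_B$ and the last is the fact that $Y_Z$ (a traceless basis element $Y_i$) has vanishing trace. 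This is exactly the same trace-preservation mechanism used e.g.\ in Lemma~\ref{lemma:u-aba-zero-for-seperable}. Hence every summand in the correction term vanishes and we conclude $u_{A \to A}(\cal{E}) = \tfrac{1}{2}( u(\cal{E}_{+Z}) + u(\cal{E}_{-Z}))$, as claimed.

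I do not expect a genuine obstacle here; the only care needed is to match the index ranges and normalization conventions of Eqn.~(\ref{extraterm}) so that the product-channel factorization is applied to precisely the right matrix elements. One could alternatively bypass the identity entirely and compute $u(\cal{E}_{\pm Z})$ directly from $\boldify{\cal{E}}_{\pm Z} = \bm{tr}_B \cdot \boldify{\cal{E}_A\otimes\cal{E}_B} \cdot \bm{prep}_{B,\pm Z}$, noting that $\cal{E}_{\pm Z} = \cal{E}_A \circ (\tr_B[\,\cdot\otimes\ketbra{0}])$ reduces to $\cal{E}_A$ up to the scalar $\tr[\cal{E}_B(\ketbra{0})] = 1$, so that $u(\cal{E}_{+Z}) = u(\cal{E}_{-Z}) = u(\cal{E}_A) = u_{A\to A}(\cal{E})$; but invoking the already-derived Eqn.~(\ref{extraterm}) is the shortest route and keeps the argument uniform with the lemma it follows.
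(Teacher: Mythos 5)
Your proposal is correct and follows essentially the same route as the paper: the paper's proof likewise notes that for a product channel the correction term in Eqn.~(\ref{extraterm}) contains the factor $\vbra{Y_0}\boldsymbol{\cal{E}_B}\vket{Y_Z}$, which vanishes by trace preservation. You merely spell out the factorization and the trace computation in more detail than the paper does.
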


\begin{proof}
    If $\cal{E}=\cal{E}_A \otimes \cal{E}_B$, the $2^{\nd}$ term always contains the element $\vbra{Y_0}\boldify{\cal{E}_B}\vket{Y_i}$, which must be zero for a valid CPTP map.
\end{proof}

Additionally, it can be shown that Lemma \ref{local-unit-inequal} holds for any two orthogonal initial states on qubit $B$.

\begin{corollary}\label{wider-theorem}
    The local sub-unitarity of a bipartite channel $u_{A \to A}(\cal{E})$ can be bounded by the average unitarity of the channel with two specific initial conditions,
    \begin{equation}
        u_{A \to A}(\cal{E}) \leq \min [ \ \frac{1}{2}( u(\cal{E}_{+\bm{b}}) + u(\cal{E}_{-\bm{b}})) \ ],
    \end{equation}
    where $\cal{E}_{\pm \bm{b}}(\rho)=\tr_B[\cal{E}(\rho \otimes \frac{1}{2}(\ident_B \pm \bm{b} \cdot \bm{\sigma}) )]$.
\end{corollary}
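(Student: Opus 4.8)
The plan is to reduce the claim to Lemma~\ref{local-unit-inequal} by rotating the ancillary qubit $B$ with a local unitary and then invoking the local-unitary invariance of the sub-unitarity $u_{A\to A}$. Fix a Bloch vector $\bm{b}$ with $|\bm{b}|=1$ and pick a single-qubit unitary $V_B$ with $V_B Z V_B^\dagger = \bm{b}\cdot\bm{\sigma}$; such a $V_B$ always exists since this is merely a rotation of the Bloch sphere. Let $\mathcal{V}_B(\cdot):=V_B(\cdot)V_B^\dagger$ denote the corresponding unitary channel and define the rotated bipartite channel $\E^{\bm{b}}:=\E\circ(id_A\otimes\mathcal{V}_B)$. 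Then $\mathcal{V}_B$ sends $\ketbra{0}=\tfrac12(\I_B+Z)$ to $\tfrac12(\I_B+\bm{b}\cdot\bm{\sigma})$ and $\ketbra{1}$ to $\tfrac12(\I_B-\bm{b}\cdot\bm{\sigma})$, so the channels on $A$ induced by computational-basis resets of $B$ satisfy $\E^{\bm{b}}_{+Z}=\E_{+\bm{b}}$ and $\E^{\bm{b}}_{-Z}=\E_{-\bm{b}}$ in the notation of the corollary.

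Next I would argue that $u_{A\to A}(\E^{\bm{b}})=u_{A\to A}(\E)$. This is exactly the statement of Corollary~\ref{cor:local-subs-invariant}, applied with the trivial unitary on the output side and $\mathcal{V}_B$ on the input side of $\E$. Equivalently, and perhaps more transparently, one checks directly that the marginal channel is untouched: $\E^{\bm{b}}_A(\rho)=\tr_B[\E(\rho\otimes\mathcal{V}_B(\I_B/2))]=\tr_B[\E(\rho\otimes\I_B/2)]=\E_A(\rho)$, since $\mathcal{V}_B$ fixes the maximally mixed state, whence $u_{A\to A}(\E^{\bm{b}})=u(\E^{\bm{b}}_A)=u(\E_A)=u_{A\to A}(\E)$ by Definition~\ref{defn:sub-unitarity1}. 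Applying Lemma~\ref{local-unit-inequal} to $\E^{\bm{b}}$ now gives $u_{A\to A}(\E^{\bm{b}})\le\tfrac12\big(u(\E^{\bm{b}}_{+Z})+u(\E^{\bm{b}}_{-Z})\big)$, which after the two substitutions above reads $u_{A\to A}(\E)\le\tfrac12\big(u(\E_{+\bm{b}})+u(\E_{-\bm{b}})\big)$. Since $\bm{b}$ was an arbitrary unit vector, taking the minimum over $\bm{b}$ on the right-hand side preserves the inequality and yields the stated bound.

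I do not expect a genuine obstacle here: the argument is essentially bookkeeping on top of Lemma~\ref{local-unit-inequal}. The two points that need care are (i) verifying that the pair of single-qubit states $\tfrac12(\I_B\pm\bm{b}\cdot\bm{\sigma})$ is really the image of $\{\ketbra{0},\ketbra{1}\}$ under one unitary conjugation, which forces $|\bm{b}|=1$ so that they form an orthogonal pair of pure states, and (ii) confirming that pre-composition by $id_A\otimes\mathcal{V}_B$ lies within the scope of the local-unitary invariance in Corollary~\ref{cor:local-subs-invariant}. If one wanted the statement for sub-unit $|\bm{b}|$ as well, the clean rotation trick breaks and one must instead rerun the Liouville-representation computation underlying Lemma~\ref{local-unit-inequal}; there the extra non-negative term simply picks up a factor $|\bm{b}|^2$, so it is monotone in $|\bm{b}|$ and the minimum is attained on the Bloch sphere regardless.
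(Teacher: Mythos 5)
Your argument is correct, and it takes a genuinely different route from the paper. The paper's own proof is a one-line appeal to re-running the Liouville computation of Lemma~\ref{local-unit-inequal} with $Y_Z$ replaced by a general Bloch direction; you instead reduce the general case to the $Z$-case by conjugating the $B$ input with a unitary $V_B$ satisfying $V_B Z V_B^\dagger=\bm{b}\cdot\bm{\sigma}$ and invoking the invariance of $u_{A\to A}$ under local unitaries (your direct check that $\E^{\bm{b}}_A=\E_A$ because $\V_B$ fixes $\I_B/2$ is the cleanest way to see this, and pre-composition by $id_A\otimes\V_B$ is indeed within the scope of Corollary~\ref{cor:local-subs-invariant}). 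What your route buys is conceptual economy -- no new computation is needed beyond the already-established $Z$-case; what it costs is that it only directly covers $|\bm{b}|=1$, whereas the paper's direct computation handles all $|\bm{b}|\le 1$ at once: the extra non-negative term becomes the quadratic form $\frac{1}{3}\sum_{ij}|\sum_k b_k\vbra{X_j\otimes Y_0}\boldsymbol{\E}\vket{X_i\otimes Y_k}|^2=\bm{b}^T M\bm{b}\ge 0$ with $M$ positive semidefinite. One small correction to your closing remark: since that correction term grows with $|\bm{b}|$ along any fixed direction, the right-hand side is \emph{minimized} at small $|\bm{b}|$ (trivially becoming an equality at $\bm{b}=0$), not on the Bloch sphere; this does not affect the corollary, which only needs the inequality to hold for each admissible $\bm{b}$ before taking the minimum.
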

\begin{proof}
    This follows from Lemma \ref{local-unit-inequal}, replacing $Z$ with a general Bloch vector on qubit $B$.
\end{proof}

Under the assumption that computational basis states induce fewer errors when prepared compared to the maximally mixed state, then estimating $u(\E_{+Z})$ and $u(\E_{-Z})$ with a RB protocol allows an upper bound to be placed on the local sub-unitarity $u_{A \to A}(\E)$, where $\E$ is the noisy channel associated with the target gate-set.

In such a case, the RB protocol would simply entail two experiments: firstly performing unitarity RB on qubit $A$ with a reset of qubit $B$ to $\ket{0}$, and then secondly with a reset to $\ket{1}$. If we assume the reset is performed completely incoherently, but with bipartite SPAM errors we have for the $1^{\st}$ experiment will produce a fit of the form
\begin{equation}
    \mathbb{E}_{s_A} [ m(s_{A})^2] = c_1 + c_2 u(\E_{+Z,M} \circ \E_{+Z} \circ \E_{+Z,P})^{k-1},
\end{equation}
where $\Lambda_{+Z,M}$ \& $\Lambda_{+Z,P}$ are the bipartite SPAM errors associated with the noisy reset of qubit $B$ to $\ket{0}$. Similarly the $2^{\nd}$ experiment will produce a fit of the form
\begin{equation}
    \mathbb{E}_{s_A} [ m(s_{A})^2] = c_1 + c_2 \ u(\E_{-Z,M} \circ \E_{-Z} \circ \E_{-Z,P})^{m-1},
\end{equation}
where $\E_{\pm Z,M}$ \& $\E_{\pm Z,P}$ are the bipartite SPAM errors associated with the noisy reset of qubit $B$.
Such a modification could then be used when the preparation of a maximally mixed state is significantly noisier compared to computational basis state preparation and reset which would detrimentally affect estimation of $u_{A\to A}(\E)$.  In the case when $\E_{\pm Z, M,P} \approx id$ an upper bound could be estimated as shown above.

\section{Unitarity bounds on the diamond norm}\label{append:unitarity-diamond-bound}
The unitarity provides improved bounds on the diamond norm compared to infidelity, while still being efficiently estimatable in a SPAM robust manner. From (eqn. (32), \cite{wallman2015bounding}) we have:
\begin{equation}
	\frac{K}{\sqrt{2}} \leq \frac{1}{2} || id - \mathcal{E}||_\diamond \leq \sqrt{ \frac{d^3 K^2}{4} + \frac{(d+1)^2 r(\mathcal{E})^2}{2} }
\end{equation}
where $K^2 = \frac{d^2 -1}{d^2} (u(\mathcal{E}) + \frac{2 d }{d-1} r(\mathcal{E}) - 1)$. Therefore when the channel $\mathcal{E}$ is unitary ($u(\mathcal{E})=1$) both bounds scale as $\mathcal{O}(\sqrt{r(\mathcal{E})})$. For purely a purely stochastic channel, where the unitarity is directly related to the infidelity, the bounds scale as $\mathcal{O}({r(\mathcal{E})})$, thereby tightening the bound of eqn. (1).

\clearpage
\small
\bibliography{references.bib}

\begin{thebibliography}{90}%
\makeatletter
\providecommand \@ifxundefined [1]{%
 \@ifx{#1\undefined}
}%
\providecommand \@ifnum [1]{%
 \ifnum #1\expandafter \@firstoftwo
 \else \expandafter \@secondoftwo
 \fi
}%
\providecommand \@ifx [1]{%
 \ifx #1\expandafter \@firstoftwo
 \else \expandafter \@secondoftwo
 \fi
}%
\providecommand \natexlab [1]{#1}%
\providecommand \enquote  [1]{``#1''}%
\providecommand \bibnamefont  [1]{#1}%
\providecommand \bibfnamefont [1]{#1}%
\providecommand \citenamefont [1]{#1}%
\providecommand \href@noop [0]{\@secondoftwo}%
\providecommand \href [0]{\begingroup \@sanitize@url \@href}%
\providecommand \@href[1]{\@@startlink{#1}\@@href}%
\providecommand \@@href[1]{\endgroup#1\@@endlink}%
\providecommand \@sanitize@url [0]{\catcode `\\12\catcode `\$12\catcode
  `\&12\catcode `\#12\catcode `\^12\catcode `\_12\catcode `\%12\relax}%
\providecommand \@@startlink[1]{}%
\providecommand \@@endlink[0]{}%
\providecommand \url  [0]{\begingroup\@sanitize@url \@url }%
\providecommand \@url [1]{\endgroup\@href {#1}{\urlprefix }}%
\providecommand \urlprefix  [0]{URL }%
\providecommand \Eprint [0]{\href }%
\providecommand \doibase [0]{https://doi.org/}%
\providecommand \selectlanguage [0]{\@gobble}%
\providecommand \bibinfo  [0]{\@secondoftwo}%
\providecommand \bibfield  [0]{\@secondoftwo}%
\providecommand \translation [1]{[#1]}%
\providecommand \BibitemOpen [0]{}%
\providecommand \bibitemStop [0]{}%
\providecommand \bibitemNoStop [0]{.\EOS\space}%
\providecommand \EOS [0]{\spacefactor3000\relax}%
\providecommand \BibitemShut  [1]{\csname bibitem#1\endcsname}%
\let\auto@bib@innerbib\@empty
\bibitem [{\citenamefont {Kliesch}\ and\ \citenamefont
  {Roth}(2021)}]{kliesch2020theory}%
  \BibitemOpen
  \bibfield  {author} {\bibinfo {author} {\bibfnamefont {M.}~\bibnamefont
  {Kliesch}}\ and\ \bibinfo {author} {\bibfnamefont {I.}~\bibnamefont {Roth}},\
  }\bibfield  {title} {\bibinfo {title} {Theory of quantum system
  certification},\ }\href@noop {} {\bibfield  {journal} {\bibinfo  {journal}
  {PRX Quantum}\ }\textbf {\bibinfo {volume} {2}},\ \bibinfo {pages} {010201}
  (\bibinfo {year} {2021})}\BibitemShut {NoStop}%
\bibitem [{\citenamefont {Helsen}\ \emph {et~al.}(2020)\citenamefont {Helsen},
  \citenamefont {Roth}, \citenamefont {Onorati}, \citenamefont {Werner},\ and\
  \citenamefont {Eisert}}]{helsen2020general}%
  \BibitemOpen
  \bibfield  {author} {\bibinfo {author} {\bibfnamefont {J.}~\bibnamefont
  {Helsen}}, \bibinfo {author} {\bibfnamefont {I.}~\bibnamefont {Roth}},
  \bibinfo {author} {\bibfnamefont {E.}~\bibnamefont {Onorati}}, \bibinfo
  {author} {\bibfnamefont {A.~H.}\ \bibnamefont {Werner}},\ and\ \bibinfo
  {author} {\bibfnamefont {J.}~\bibnamefont {Eisert}},\ }\bibfield  {title}
  {\bibinfo {title} {A general framework for randomized benchmarking},\
  }\href@noop {} {\bibfield  {journal} {\bibinfo  {journal} {arXiv preprint
  arXiv:2010.07974}\ } (\bibinfo {year} {2020})}\BibitemShut {NoStop}%
\bibitem [{\citenamefont {Proctor}\ \emph {et~al.}(2019)\citenamefont
  {Proctor}, \citenamefont {Carignan-Dugas}, \citenamefont {Rudinger},
  \citenamefont {Nielsen}, \citenamefont {Blume-Kohout},\ and\ \citenamefont
  {Young}}]{proctor2019direct}%
  \BibitemOpen
  \bibfield  {author} {\bibinfo {author} {\bibfnamefont {T.~J.}\ \bibnamefont
  {Proctor}}, \bibinfo {author} {\bibfnamefont {A.}~\bibnamefont
  {Carignan-Dugas}}, \bibinfo {author} {\bibfnamefont {K.}~\bibnamefont
  {Rudinger}}, \bibinfo {author} {\bibfnamefont {E.}~\bibnamefont {Nielsen}},
  \bibinfo {author} {\bibfnamefont {R.}~\bibnamefont {Blume-Kohout}},\ and\
  \bibinfo {author} {\bibfnamefont {K.}~\bibnamefont {Young}},\ }\bibfield
  {title} {\bibinfo {title} {Direct randomized benchmarking for multiqubit
  devices},\ }\href@noop {} {\bibfield  {journal} {\bibinfo  {journal}
  {Physical review letters}\ }\textbf {\bibinfo {volume} {123}},\ \bibinfo
  {pages} {030503} (\bibinfo {year} {2019})}\BibitemShut {NoStop}%
\bibitem [{\citenamefont {Gaebler}\ \emph {et~al.}(2012)\citenamefont
  {Gaebler}, \citenamefont {Meier}, \citenamefont {Tan}, \citenamefont
  {Bowler}, \citenamefont {Lin}, \citenamefont {Hanneke}, \citenamefont {Jost},
  \citenamefont {Home}, \citenamefont {Knill}, \citenamefont {Leibfried},\ and\
  \citenamefont {Wineland}}]{gaebler2012randomized}%
  \BibitemOpen
  \bibfield  {author} {\bibinfo {author} {\bibfnamefont {J.~P.}\ \bibnamefont
  {Gaebler}}, \bibinfo {author} {\bibfnamefont {A.~M.}\ \bibnamefont {Meier}},
  \bibinfo {author} {\bibfnamefont {T.~R.}\ \bibnamefont {Tan}}, \bibinfo
  {author} {\bibfnamefont {R.}~\bibnamefont {Bowler}}, \bibinfo {author}
  {\bibfnamefont {Y.}~\bibnamefont {Lin}}, \bibinfo {author} {\bibfnamefont
  {D.}~\bibnamefont {Hanneke}}, \bibinfo {author} {\bibfnamefont {J.~D.}\
  \bibnamefont {Jost}}, \bibinfo {author} {\bibfnamefont {J.~P.}\ \bibnamefont
  {Home}}, \bibinfo {author} {\bibfnamefont {E.}~\bibnamefont {Knill}},
  \bibinfo {author} {\bibfnamefont {D.}~\bibnamefont {Leibfried}},\ and\
  \bibinfo {author} {\bibfnamefont {D.~J.}\ \bibnamefont {Wineland}},\
  }\bibfield  {title} {\bibinfo {title} {Randomized benchmarking of multiqubit
  gates},\ }\href@noop {} {\bibfield  {journal} {\bibinfo  {journal} {Phys.
  Rev. Lett.}\ }\textbf {\bibinfo {volume} {108}},\ \bibinfo {pages} {260503}
  (\bibinfo {year} {2012})}\BibitemShut {NoStop}%
\bibitem [{\citenamefont {Erhard}\ \emph {et~al.}(2019)\citenamefont {Erhard},
  \citenamefont {Wallman}, \citenamefont {Postler}, \citenamefont {Meth},
  \citenamefont {Stricker}, \citenamefont {Martinez}, \citenamefont
  {Schindler}, \citenamefont {Monz}, \citenamefont {Emerson},\ and\
  \citenamefont {Blatt}}]{erhard2019characterizing}%
  \BibitemOpen
  \bibfield  {author} {\bibinfo {author} {\bibfnamefont {A.}~\bibnamefont
  {Erhard}}, \bibinfo {author} {\bibfnamefont {J.~J.}\ \bibnamefont {Wallman}},
  \bibinfo {author} {\bibfnamefont {L.}~\bibnamefont {Postler}}, \bibinfo
  {author} {\bibfnamefont {M.}~\bibnamefont {Meth}}, \bibinfo {author}
  {\bibfnamefont {R.}~\bibnamefont {Stricker}}, \bibinfo {author}
  {\bibfnamefont {E.~A.}\ \bibnamefont {Martinez}}, \bibinfo {author}
  {\bibfnamefont {P.}~\bibnamefont {Schindler}}, \bibinfo {author}
  {\bibfnamefont {T.}~\bibnamefont {Monz}}, \bibinfo {author} {\bibfnamefont
  {J.}~\bibnamefont {Emerson}},\ and\ \bibinfo {author} {\bibfnamefont
  {R.}~\bibnamefont {Blatt}},\ }\bibfield  {title} {\bibinfo {title}
  {Characterizing large-scale quantum computers via cycle benchmarking},\
  }\href@noop {} {\bibfield  {journal} {\bibinfo  {journal} {Nature
  communications}\ }\textbf {\bibinfo {volume} {10}},\ \bibinfo {pages} {1}
  (\bibinfo {year} {2019})}\BibitemShut {NoStop}%
\bibitem [{\citenamefont {Eisert}\ \emph {et~al.}(2020)\citenamefont {Eisert},
  \citenamefont {Hangleiter}, \citenamefont {Walk}, \citenamefont {Roth},
  \citenamefont {Markham}, \citenamefont {Parekh}, \citenamefont {Chabaud},\
  and\ \citenamefont {Kashefi}}]{eisert2020quantum}%
  \BibitemOpen
  \bibfield  {author} {\bibinfo {author} {\bibfnamefont {J.}~\bibnamefont
  {Eisert}}, \bibinfo {author} {\bibfnamefont {D.}~\bibnamefont {Hangleiter}},
  \bibinfo {author} {\bibfnamefont {N.}~\bibnamefont {Walk}}, \bibinfo {author}
  {\bibfnamefont {I.}~\bibnamefont {Roth}}, \bibinfo {author} {\bibfnamefont
  {D.}~\bibnamefont {Markham}}, \bibinfo {author} {\bibfnamefont
  {R.}~\bibnamefont {Parekh}}, \bibinfo {author} {\bibfnamefont
  {U.}~\bibnamefont {Chabaud}},\ and\ \bibinfo {author} {\bibfnamefont
  {E.}~\bibnamefont {Kashefi}},\ }\bibfield  {title} {\bibinfo {title} {Quantum
  certification and benchmarking},\ }\href@noop {} {\bibfield  {journal}
  {\bibinfo  {journal} {Nature Reviews Physics}\ }\textbf {\bibinfo {volume}
  {2}},\ \bibinfo {pages} {382} (\bibinfo {year} {2020})}\BibitemShut {NoStop}%
\bibitem [{\citenamefont {Derbyshire}\ \emph {et~al.}(2021)\citenamefont
  {Derbyshire}, \citenamefont {Mezher}, \citenamefont {Kapourniotis},\ and\
  \citenamefont {Kashefi}}]{derbyshire2021randomized}%
  \BibitemOpen
  \bibfield  {author} {\bibinfo {author} {\bibfnamefont {E.}~\bibnamefont
  {Derbyshire}}, \bibinfo {author} {\bibfnamefont {R.}~\bibnamefont {Mezher}},
  \bibinfo {author} {\bibfnamefont {T.}~\bibnamefont {Kapourniotis}},\ and\
  \bibinfo {author} {\bibfnamefont {E.}~\bibnamefont {Kashefi}},\ }\bibfield
  {title} {\bibinfo {title} {Randomized benchmarking with stabilizer
  verification and gate synthesis},\ }\href@noop {} {\bibfield  {journal}
  {\bibinfo  {journal} {arXiv preprint arXiv:2102.13044}\ } (\bibinfo {year}
  {2021})}\BibitemShut {NoStop}%
\bibitem [{\citenamefont {Preskill}(2012)}]{preskill2012sufficient}%
  \BibitemOpen
  \bibfield  {author} {\bibinfo {author} {\bibfnamefont {J.}~\bibnamefont
  {Preskill}},\ }\bibfield  {title} {\bibinfo {title} {Sufficient condition on
  noise correlations for scalable quantum computing},\ }\href@noop {}
  {\bibfield  {journal} {\bibinfo  {journal} {arXiv preprint arXiv:1207.6131}\
  } (\bibinfo {year} {2012})}\BibitemShut {NoStop}%
\bibitem [{\citenamefont {Nickerson}\ and\ \citenamefont
  {Brown}(2019)}]{nickerson2019analysing}%
  \BibitemOpen
  \bibfield  {author} {\bibinfo {author} {\bibfnamefont {N.~H.}\ \bibnamefont
  {Nickerson}}\ and\ \bibinfo {author} {\bibfnamefont {B.~J.}\ \bibnamefont
  {Brown}},\ }\bibfield  {title} {\bibinfo {title} {Analysing correlated noise
  on the surface code using adaptive decoding algorithms},\ }\href@noop {}
  {\bibfield  {journal} {\bibinfo  {journal} {Quantum}\ }\textbf {\bibinfo
  {volume} {3}},\ \bibinfo {pages} {131} (\bibinfo {year} {2019})}\BibitemShut
  {NoStop}%
\bibitem [{\citenamefont {Iverson}\ and\ \citenamefont
  {Preskill}(2020)}]{iverson2020coherence}%
  \BibitemOpen
  \bibfield  {author} {\bibinfo {author} {\bibfnamefont {J.~K.}\ \bibnamefont
  {Iverson}}\ and\ \bibinfo {author} {\bibfnamefont {J.}~\bibnamefont
  {Preskill}},\ }\bibfield  {title} {\bibinfo {title} {Coherence in logical
  quantum channels},\ }\href@noop {} {\bibfield  {journal} {\bibinfo  {journal}
  {New Journal of Physics}\ }\textbf {\bibinfo {volume} {22}},\ \bibinfo
  {pages} {073066} (\bibinfo {year} {2020})}\BibitemShut {NoStop}%
\bibitem [{\citenamefont {Preskill}(2018)}]{preskill2018quantum}%
  \BibitemOpen
  \bibfield  {author} {\bibinfo {author} {\bibfnamefont {J.}~\bibnamefont
  {Preskill}},\ }\bibfield  {title} {\bibinfo {title} {Quantum computing in the
  nisq era and beyond},\ }\href@noop {} {\bibfield  {journal} {\bibinfo
  {journal} {Quantum}\ }\textbf {\bibinfo {volume} {2}},\ \bibinfo {pages} {79}
  (\bibinfo {year} {2018})}\BibitemShut {NoStop}%
\bibitem [{\citenamefont {D'Ariano}\ and\ \citenamefont
  {Lo~Presti}(2001)}]{d2001quantum}%
  \BibitemOpen
  \bibfield  {author} {\bibinfo {author} {\bibfnamefont {G.~M.}\ \bibnamefont
  {D'Ariano}}\ and\ \bibinfo {author} {\bibfnamefont {P.}~\bibnamefont
  {Lo~Presti}},\ }\bibfield  {title} {\bibinfo {title} {Quantum tomography for
  measuring experimentally the matrix elements of an arbitrary quantum
  operation},\ }\href@noop {} {\bibfield  {journal} {\bibinfo  {journal}
  {Physical review letters}\ }\textbf {\bibinfo {volume} {86}},\ \bibinfo
  {pages} {4195} (\bibinfo {year} {2001})}\BibitemShut {NoStop}%
\bibitem [{\citenamefont {Greenbaum}(2015)}]{greenbaum2015introduction}%
  \BibitemOpen
  \bibfield  {author} {\bibinfo {author} {\bibfnamefont {D.}~\bibnamefont
  {Greenbaum}},\ }\bibfield  {title} {\bibinfo {title} {Introduction to quantum
  gate set tomography},\ }\href@noop {} {\bibfield  {journal} {\bibinfo
  {journal} {arXiv preprint arXiv:1509.02921}\ } (\bibinfo {year}
  {2015})}\BibitemShut {NoStop}%
\bibitem [{\citenamefont {Watrous}(2018)}]{watrous2018theoryvec}%
  \BibitemOpen
  \bibfield  {author} {\bibinfo {author} {\bibfnamefont {J.}~\bibnamefont
  {Watrous}},\ }\href@noop {} {\emph {\bibinfo {title} {The theory of quantum
  information}}}\ (\bibinfo  {publisher} {Cambridge University Press},\
  \bibinfo {year} {2018})\BibitemShut {NoStop}%
\bibitem [{\citenamefont {Wilde}(2013)}]{wilde2013quantum}%
  \BibitemOpen
  \bibfield  {author} {\bibinfo {author} {\bibfnamefont {M.~M.}\ \bibnamefont
  {Wilde}},\ }\href@noop {} {\emph {\bibinfo {title} {Quantum information
  theory}}}\ (\bibinfo  {publisher} {Cambridge University Press},\ \bibinfo
  {year} {2013})\BibitemShut {NoStop}%
\bibitem [{\citenamefont {Nielsen}\ and\ \citenamefont
  {Chuang}(2010)}]{nielsen2002quantum}%
  \BibitemOpen
  \bibfield  {author} {\bibinfo {author} {\bibfnamefont {M.~A.}\ \bibnamefont
  {Nielsen}}\ and\ \bibinfo {author} {\bibfnamefont {I.~L.}\ \bibnamefont
  {Chuang}},\ }\href@noop {} {\emph {\bibinfo {title} {Quantum Computation and
  Quantum Information}}}\ (\bibinfo  {publisher} {Cambridge University Press},\
  \bibinfo {year} {2010})\BibitemShut {NoStop}%
\bibitem [{\citenamefont {Bengtsson}\ and\ \citenamefont
  {{\.Z}yczkowski}(2017)}]{bengtsson2017geometry}%
  \BibitemOpen
  \bibfield  {author} {\bibinfo {author} {\bibfnamefont {I.}~\bibnamefont
  {Bengtsson}}\ and\ \bibinfo {author} {\bibfnamefont {K.}~\bibnamefont
  {{\.Z}yczkowski}},\ }\href@noop {} {\emph {\bibinfo {title} {Geometry of
  quantum states: an introduction to quantum entanglement}}}\ (\bibinfo
  {publisher} {Cambridge university press},\ \bibinfo {year}
  {2017})\BibitemShut {NoStop}%
\bibitem [{\citenamefont {Wallman}\ and\ \citenamefont
  {Flammia}(2014)}]{wallman2014randomized}%
  \BibitemOpen
  \bibfield  {author} {\bibinfo {author} {\bibfnamefont {J.~J.}\ \bibnamefont
  {Wallman}}\ and\ \bibinfo {author} {\bibfnamefont {S.~T.}\ \bibnamefont
  {Flammia}},\ }\bibfield  {title} {\bibinfo {title} {Randomized benchmarking
  with confidence},\ }\href@noop {} {\bibfield  {journal} {\bibinfo  {journal}
  {New Journal of Physics}\ }\textbf {\bibinfo {volume} {16}},\ \bibinfo
  {pages} {103032} (\bibinfo {year} {2014})}\BibitemShut {NoStop}%
\bibitem [{\citenamefont {Aharonov}\ and\ \citenamefont
  {Ben-Or}(2008)}]{aharonov2008fault}%
  \BibitemOpen
  \bibfield  {author} {\bibinfo {author} {\bibfnamefont {D.}~\bibnamefont
  {Aharonov}}\ and\ \bibinfo {author} {\bibfnamefont {M.}~\bibnamefont
  {Ben-Or}},\ }\bibfield  {title} {\bibinfo {title} {Fault-tolerant quantum
  computation with constant error rate},\ }\href@noop {} {\bibfield  {journal}
  {\bibinfo  {journal} {SIAM Journal on Computing}\ } (\bibinfo {year}
  {2008})}\BibitemShut {NoStop}%
\bibitem [{\citenamefont {Harper}\ and\ \citenamefont
  {Flammia}(2019)}]{harper2019fault}%
  \BibitemOpen
  \bibfield  {author} {\bibinfo {author} {\bibfnamefont {R.}~\bibnamefont
  {Harper}}\ and\ \bibinfo {author} {\bibfnamefont {S.~T.}\ \bibnamefont
  {Flammia}},\ }\bibfield  {title} {\bibinfo {title} {Fault-tolerant logical
  gates in the ibm quantum experience},\ }\href@noop {} {\bibfield  {journal}
  {\bibinfo  {journal} {Physical review letters}\ }\textbf {\bibinfo {volume}
  {122}},\ \bibinfo {pages} {080504} (\bibinfo {year} {2019})}\BibitemShut
  {NoStop}%
\bibitem [{\citenamefont {Nakata}\ \emph {et~al.}(2021)\citenamefont {Nakata},
  \citenamefont {Zhao}, \citenamefont {Okuda}, \citenamefont {Bannai},
  \citenamefont {Suzuki}, \citenamefont {Tamiya}, \citenamefont {Heya},
  \citenamefont {Yan}, \citenamefont {Zuo}, \citenamefont {Tamate},
  \citenamefont {Tabuchi},\ and\ \citenamefont {Nakamura}}]{nakata2021quantum}%
  \BibitemOpen
  \bibfield  {author} {\bibinfo {author} {\bibfnamefont {Y.}~\bibnamefont
  {Nakata}}, \bibinfo {author} {\bibfnamefont {D.}~\bibnamefont {Zhao}},
  \bibinfo {author} {\bibfnamefont {T.}~\bibnamefont {Okuda}}, \bibinfo
  {author} {\bibfnamefont {E.}~\bibnamefont {Bannai}}, \bibinfo {author}
  {\bibfnamefont {Y.}~\bibnamefont {Suzuki}}, \bibinfo {author} {\bibfnamefont
  {S.}~\bibnamefont {Tamiya}}, \bibinfo {author} {\bibfnamefont
  {K.}~\bibnamefont {Heya}}, \bibinfo {author} {\bibfnamefont {Z.}~\bibnamefont
  {Yan}}, \bibinfo {author} {\bibfnamefont {K.}~\bibnamefont {Zuo}}, \bibinfo
  {author} {\bibfnamefont {S.}~\bibnamefont {Tamate}}, \bibinfo {author}
  {\bibfnamefont {Y.}~\bibnamefont {Tabuchi}},\ and\ \bibinfo {author}
  {\bibfnamefont {Y.}~\bibnamefont {Nakamura}},\ }\href@noop {} {\bibinfo
  {title} {Quantum circuits for exact unitary $t$-designs and applications to
  higher-order randomized benchmarking}} (\bibinfo {year} {2021}),\ \Eprint
  {https://arxiv.org/abs/2102.12617} {arXiv:2102.12617 [quant-ph]} \BibitemShut
  {NoStop}%
\bibitem [{\citenamefont {Helsen}\ \emph
  {et~al.}(2019{\natexlab{a}})\citenamefont {Helsen}, \citenamefont {Xue},
  \citenamefont {Vandersypen},\ and\ \citenamefont {Wehner}}]{helsen2019new}%
  \BibitemOpen
  \bibfield  {author} {\bibinfo {author} {\bibfnamefont {J.}~\bibnamefont
  {Helsen}}, \bibinfo {author} {\bibfnamefont {X.}~\bibnamefont {Xue}},
  \bibinfo {author} {\bibfnamefont {L.~M.}\ \bibnamefont {Vandersypen}},\ and\
  \bibinfo {author} {\bibfnamefont {S.}~\bibnamefont {Wehner}},\ }\bibfield
  {title} {\bibinfo {title} {A new class of efficient randomized benchmarking
  protocols},\ }\href@noop {} {\bibfield  {journal} {\bibinfo  {journal} {npj
  Quantum Information}\ }\textbf {\bibinfo {volume} {5}},\ \bibinfo {pages} {1}
  (\bibinfo {year} {2019}{\natexlab{a}})}\BibitemShut {NoStop}%
\bibitem [{\citenamefont {Combes}\ \emph {et~al.}(2017)\citenamefont {Combes},
  \citenamefont {Granade}, \citenamefont {Ferrie},\ and\ \citenamefont
  {Flammia}}]{combes2017logical}%
  \BibitemOpen
  \bibfield  {author} {\bibinfo {author} {\bibfnamefont {J.}~\bibnamefont
  {Combes}}, \bibinfo {author} {\bibfnamefont {C.}~\bibnamefont {Granade}},
  \bibinfo {author} {\bibfnamefont {C.}~\bibnamefont {Ferrie}},\ and\ \bibinfo
  {author} {\bibfnamefont {S.~T.}\ \bibnamefont {Flammia}},\ }\bibfield
  {title} {\bibinfo {title} {Logical randomized benchmarking},\ }\href@noop {}
  {\bibfield  {journal} {\bibinfo  {journal} {arXiv preprint arXiv:1702.03688}\
  } (\bibinfo {year} {2017})}\BibitemShut {NoStop}%
\bibitem [{\citenamefont {Derbyshire}\ \emph {et~al.}(2020)\citenamefont
  {Derbyshire}, \citenamefont {Malo}, \citenamefont {Daley}, \citenamefont
  {Kashefi},\ and\ \citenamefont {Wallden}}]{derbyshire2020randomized}%
  \BibitemOpen
  \bibfield  {author} {\bibinfo {author} {\bibfnamefont {E.}~\bibnamefont
  {Derbyshire}}, \bibinfo {author} {\bibfnamefont {J.~Y.}\ \bibnamefont
  {Malo}}, \bibinfo {author} {\bibfnamefont {A.}~\bibnamefont {Daley}},
  \bibinfo {author} {\bibfnamefont {E.}~\bibnamefont {Kashefi}},\ and\ \bibinfo
  {author} {\bibfnamefont {P.}~\bibnamefont {Wallden}},\ }\bibfield  {title}
  {\bibinfo {title} {Randomized benchmarking in the analogue setting},\
  }\href@noop {} {\bibfield  {journal} {\bibinfo  {journal} {Quantum Science
  and Technology}\ }\textbf {\bibinfo {volume} {5}},\ \bibinfo {pages} {034001}
  (\bibinfo {year} {2020})}\BibitemShut {NoStop}%
\bibitem [{\citenamefont {Gambetta}\ \emph {et~al.}(2012)\citenamefont
  {Gambetta}, \citenamefont {C\'orcoles}, \citenamefont {Merkel}, \citenamefont
  {Johnson}, \citenamefont {Smolin}, \citenamefont {Chow}, \citenamefont
  {Ryan}, \citenamefont {Rigetti}, \citenamefont {Poletto}, \citenamefont
  {Ohki}, \citenamefont {Ketchen},\ and\ \citenamefont
  {Steffen}}]{gambetta2012characterization}%
  \BibitemOpen
  \bibfield  {author} {\bibinfo {author} {\bibfnamefont {J.~M.}\ \bibnamefont
  {Gambetta}}, \bibinfo {author} {\bibfnamefont {A.~D.}\ \bibnamefont
  {C\'orcoles}}, \bibinfo {author} {\bibfnamefont {S.~T.}\ \bibnamefont
  {Merkel}}, \bibinfo {author} {\bibfnamefont {B.~R.}\ \bibnamefont {Johnson}},
  \bibinfo {author} {\bibfnamefont {J.~A.}\ \bibnamefont {Smolin}}, \bibinfo
  {author} {\bibfnamefont {J.~M.}\ \bibnamefont {Chow}}, \bibinfo {author}
  {\bibfnamefont {C.~A.}\ \bibnamefont {Ryan}}, \bibinfo {author}
  {\bibfnamefont {C.}~\bibnamefont {Rigetti}}, \bibinfo {author} {\bibfnamefont
  {S.}~\bibnamefont {Poletto}}, \bibinfo {author} {\bibfnamefont {T.~A.}\
  \bibnamefont {Ohki}}, \bibinfo {author} {\bibfnamefont {M.~B.}\ \bibnamefont
  {Ketchen}},\ and\ \bibinfo {author} {\bibfnamefont {M.}~\bibnamefont
  {Steffen}},\ }\bibfield  {title} {\bibinfo {title} {Characterization of
  addressability by simultaneous randomized benchmarking},\ }\href@noop {}
  {\bibfield  {journal} {\bibinfo  {journal} {Physical review letters}\
  }\textbf {\bibinfo {volume} {109}},\ \bibinfo {pages} {240504} (\bibinfo
  {year} {2012})}\BibitemShut {NoStop}%
\bibitem [{\citenamefont {Gutoski}(2008)}]{gutoski2008properties}%
  \BibitemOpen
  \bibfield  {author} {\bibinfo {author} {\bibfnamefont {G.}~\bibnamefont
  {Gutoski}},\ }\bibfield  {title} {\bibinfo {title} {Properties of local
  quantum operations with shared entanglement},\ }\href@noop {} {\bibfield
  {journal} {\bibinfo  {journal} {arXiv preprint arXiv:0805.2209}\ } (\bibinfo
  {year} {2008})}\BibitemShut {NoStop}%
\bibitem [{\citenamefont {De~Vicente}(2014)}]{de2014nonlocality}%
  \BibitemOpen
  \bibfield  {author} {\bibinfo {author} {\bibfnamefont {J.~I.}\ \bibnamefont
  {De~Vicente}},\ }\bibfield  {title} {\bibinfo {title} {On nonlocality as a
  resource theory and nonlocality measures},\ }\href@noop {} {\bibfield
  {journal} {\bibinfo  {journal} {Journal of Physics A: Mathematical and
  Theoretical}\ }\textbf {\bibinfo {volume} {47}},\ \bibinfo {pages} {424017}
  (\bibinfo {year} {2014})}\BibitemShut {NoStop}%
\bibitem [{\citenamefont {Geller}\ and\ \citenamefont
  {Piani}(2014)}]{geller2014quantifying}%
  \BibitemOpen
  \bibfield  {author} {\bibinfo {author} {\bibfnamefont {J.}~\bibnamefont
  {Geller}}\ and\ \bibinfo {author} {\bibfnamefont {M.}~\bibnamefont {Piani}},\
  }\bibfield  {title} {\bibinfo {title} {Quantifying non-classical and
  beyond-quantum correlations in the unified operator formalism},\ }\href@noop
  {} {\bibfield  {journal} {\bibinfo  {journal} {Journal of Physics A:
  Mathematical and Theoretical}\ }\textbf {\bibinfo {volume} {47}},\ \bibinfo
  {pages} {424030} (\bibinfo {year} {2014})}\BibitemShut {NoStop}%
\bibitem [{\citenamefont {Gallego}\ and\ \citenamefont
  {Aolita}(2017)}]{gallego2017nonlocality}%
  \BibitemOpen
  \bibfield  {author} {\bibinfo {author} {\bibfnamefont {R.}~\bibnamefont
  {Gallego}}\ and\ \bibinfo {author} {\bibfnamefont {L.}~\bibnamefont
  {Aolita}},\ }\bibfield  {title} {\bibinfo {title} {Nonlocality free wirings
  and the distinguishability between bell boxes},\ }\href@noop {} {\bibfield
  {journal} {\bibinfo  {journal} {Physical Review A}\ }\textbf {\bibinfo
  {volume} {95}},\ \bibinfo {pages} {032118} (\bibinfo {year}
  {2017})}\BibitemShut {NoStop}%
\bibitem [{\citenamefont {Rosset}\ \emph {et~al.}(2019)\citenamefont {Rosset},
  \citenamefont {Schmid},\ and\ \citenamefont
  {Buscemi}}]{rosset2019characterizing}%
  \BibitemOpen
  \bibfield  {author} {\bibinfo {author} {\bibfnamefont {D.}~\bibnamefont
  {Rosset}}, \bibinfo {author} {\bibfnamefont {D.}~\bibnamefont {Schmid}},\
  and\ \bibinfo {author} {\bibfnamefont {F.}~\bibnamefont {Buscemi}},\
  }\bibfield  {title} {\bibinfo {title} {Characterizing nonclassicality of
  arbitrary distributed devices},\ }\href@noop {} {\bibfield  {journal}
  {\bibinfo  {journal} {arXiv preprint arXiv:1911.12462}\ } (\bibinfo {year}
  {2019})}\BibitemShut {NoStop}%
\bibitem [{\citenamefont {Schmid}\ \emph
  {et~al.}(2020{\natexlab{a}})\citenamefont {Schmid}, \citenamefont {Rosset},\
  and\ \citenamefont {Buscemi}}]{schmid2020type}%
  \BibitemOpen
  \bibfield  {author} {\bibinfo {author} {\bibfnamefont {D.}~\bibnamefont
  {Schmid}}, \bibinfo {author} {\bibfnamefont {D.}~\bibnamefont {Rosset}},\
  and\ \bibinfo {author} {\bibfnamefont {F.}~\bibnamefont {Buscemi}},\
  }\bibfield  {title} {\bibinfo {title} {The type-independent resource theory
  of local operations and shared randomness},\ }\href@noop {} {\bibfield
  {journal} {\bibinfo  {journal} {Quantum}\ }\textbf {\bibinfo {volume} {4}},\
  \bibinfo {pages} {262} (\bibinfo {year} {2020}{\natexlab{a}})}\BibitemShut
  {NoStop}%
\bibitem [{\citenamefont {Schmid}\ \emph
  {et~al.}(2020{\natexlab{b}})\citenamefont {Schmid}, \citenamefont {Fraser},
  \citenamefont {Kunjwal}, \citenamefont {Sainz}, \citenamefont {Wolfe},\ and\
  \citenamefont {Spekkens}}]{schmid2020standard}%
  \BibitemOpen
  \bibfield  {author} {\bibinfo {author} {\bibfnamefont {D.}~\bibnamefont
  {Schmid}}, \bibinfo {author} {\bibfnamefont {T.~C.}\ \bibnamefont {Fraser}},
  \bibinfo {author} {\bibfnamefont {R.}~\bibnamefont {Kunjwal}}, \bibinfo
  {author} {\bibfnamefont {A.~B.}\ \bibnamefont {Sainz}}, \bibinfo {author}
  {\bibfnamefont {E.}~\bibnamefont {Wolfe}},\ and\ \bibinfo {author}
  {\bibfnamefont {R.~W.}\ \bibnamefont {Spekkens}},\ }\bibfield  {title}
  {\bibinfo {title} {Why standard entanglement theory is inappropriate for the
  study of bell scenarios},\ }\href@noop {} {\bibfield  {journal} {\bibinfo
  {journal} {arXiv preprint arXiv:2004.09194}\ } (\bibinfo {year}
  {2020}{\natexlab{b}})}\BibitemShut {NoStop}%
\bibitem [{\citenamefont {Wolfe}\ \emph {et~al.}(2020)\citenamefont {Wolfe},
  \citenamefont {Schmid}, \citenamefont {Sainz}, \citenamefont {Kunjwal},\ and\
  \citenamefont {Spekkens}}]{wolfe2020quantifying}%
  \BibitemOpen
  \bibfield  {author} {\bibinfo {author} {\bibfnamefont {E.}~\bibnamefont
  {Wolfe}}, \bibinfo {author} {\bibfnamefont {D.}~\bibnamefont {Schmid}},
  \bibinfo {author} {\bibfnamefont {A.~B.}\ \bibnamefont {Sainz}}, \bibinfo
  {author} {\bibfnamefont {R.}~\bibnamefont {Kunjwal}},\ and\ \bibinfo {author}
  {\bibfnamefont {R.~W.}\ \bibnamefont {Spekkens}},\ }\bibfield  {title}
  {\bibinfo {title} {Quantifying bell: The resource theory of nonclassicality
  of common-cause boxes},\ }\href@noop {} {\bibfield  {journal} {\bibinfo
  {journal} {Quantum}\ }\textbf {\bibinfo {volume} {4}},\ \bibinfo {pages}
  {280} (\bibinfo {year} {2020})}\BibitemShut {NoStop}%
\bibitem [{\citenamefont {Hsieh}\ \emph {et~al.}(2020)\citenamefont {Hsieh},
  \citenamefont {Lostaglio},\ and\ \citenamefont
  {Ac{\'\i}n}}]{hsieh2020entanglement}%
  \BibitemOpen
  \bibfield  {author} {\bibinfo {author} {\bibfnamefont {C.-Y.}\ \bibnamefont
  {Hsieh}}, \bibinfo {author} {\bibfnamefont {M.}~\bibnamefont {Lostaglio}},\
  and\ \bibinfo {author} {\bibfnamefont {A.}~\bibnamefont {Ac{\'\i}n}},\
  }\bibfield  {title} {\bibinfo {title} {Entanglement preserving local
  thermalization},\ }\href@noop {} {\bibfield  {journal} {\bibinfo  {journal}
  {Physical Review Research}\ }\textbf {\bibinfo {volume} {2}},\ \bibinfo
  {pages} {013379} (\bibinfo {year} {2020})}\BibitemShut {NoStop}%
\bibitem [{\citenamefont {Gutoski}(2009)}]{gutoski2009properties}%
  \BibitemOpen
  \bibfield  {author} {\bibinfo {author} {\bibfnamefont {G.}~\bibnamefont
  {Gutoski}},\ }\href@noop {} {\bibinfo {title} {Properties of local quantum
  operations with shared entanglement}} (\bibinfo {year} {2009}),\ \Eprint
  {https://arxiv.org/abs/0805.2209} {arXiv:0805.2209 [quant-ph]} \BibitemShut
  {NoStop}%
\bibitem [{\citenamefont {Gour}\ and\ \citenamefont
  {Scandolo}(2020)}]{gour2020dynamical}%
  \BibitemOpen
  \bibfield  {author} {\bibinfo {author} {\bibfnamefont {G.}~\bibnamefont
  {Gour}}\ and\ \bibinfo {author} {\bibfnamefont {C.~M.}\ \bibnamefont
  {Scandolo}},\ }\bibfield  {title} {\bibinfo {title} {Dynamical
  entanglement},\ }\href@noop {} {\bibfield  {journal} {\bibinfo  {journal}
  {Physical Review Letters}\ }\textbf {\bibinfo {volume} {125}},\ \bibinfo
  {pages} {180505} (\bibinfo {year} {2020})}\BibitemShut {NoStop}%
\bibitem [{\citenamefont {B{\"a}uml}\ \emph {et~al.}(2019)\citenamefont
  {B{\"a}uml}, \citenamefont {Das}, \citenamefont {Wang},\ and\ \citenamefont
  {Wilde}}]{bauml2019resource}%
  \BibitemOpen
  \bibfield  {author} {\bibinfo {author} {\bibfnamefont {S.}~\bibnamefont
  {B{\"a}uml}}, \bibinfo {author} {\bibfnamefont {S.}~\bibnamefont {Das}},
  \bibinfo {author} {\bibfnamefont {X.}~\bibnamefont {Wang}},\ and\ \bibinfo
  {author} {\bibfnamefont {M.~M.}\ \bibnamefont {Wilde}},\ }\bibfield  {title}
  {\bibinfo {title} {Resource theory of entanglement for bipartite quantum
  channels},\ }\href@noop {} {\bibfield  {journal} {\bibinfo  {journal} {arXiv
  preprint arXiv:1907.04181}\ } (\bibinfo {year} {2019})}\BibitemShut {NoStop}%
\bibitem [{\citenamefont {Poulin}\ \emph {et~al.}(2011)\citenamefont {Poulin},
  \citenamefont {Qarry}, \citenamefont {Somma},\ and\ \citenamefont
  {Verstraete}}]{poulin2011quantum}%
  \BibitemOpen
  \bibfield  {author} {\bibinfo {author} {\bibfnamefont {D.}~\bibnamefont
  {Poulin}}, \bibinfo {author} {\bibfnamefont {A.}~\bibnamefont {Qarry}},
  \bibinfo {author} {\bibfnamefont {R.}~\bibnamefont {Somma}},\ and\ \bibinfo
  {author} {\bibfnamefont {F.}~\bibnamefont {Verstraete}},\ }\bibfield  {title}
  {\bibinfo {title} {Quantum simulation of time-dependent hamiltonians and the
  convenient illusion of hilbert space},\ }\href@noop {} {\bibfield  {journal}
  {\bibinfo  {journal} {Physical review letters}\ }\textbf {\bibinfo {volume}
  {106}},\ \bibinfo {pages} {170501} (\bibinfo {year} {2011})}\BibitemShut
  {NoStop}%
\bibitem [{\citenamefont {Eisert}(2013)}]{eisert2013entanglement}%
  \BibitemOpen
  \bibfield  {author} {\bibinfo {author} {\bibfnamefont {J.}~\bibnamefont
  {Eisert}},\ }\bibfield  {title} {\bibinfo {title} {Entanglement and tensor
  network states},\ }\href@noop {} {\bibfield  {journal} {\bibinfo  {journal}
  {Modeling and Simulation 3, 520}\ } (\bibinfo {year} {2013})}\BibitemShut
  {NoStop}%
\bibitem [{\citenamefont {Wallman}\ \emph {et~al.}(2015)\citenamefont
  {Wallman}, \citenamefont {Granade}, \citenamefont {Harper},\ and\
  \citenamefont {Flammia}}]{wallman2015estimating}%
  \BibitemOpen
  \bibfield  {author} {\bibinfo {author} {\bibfnamefont {J.}~\bibnamefont
  {Wallman}}, \bibinfo {author} {\bibfnamefont {C.}~\bibnamefont {Granade}},
  \bibinfo {author} {\bibfnamefont {R.}~\bibnamefont {Harper}},\ and\ \bibinfo
  {author} {\bibfnamefont {S.~T.}\ \bibnamefont {Flammia}},\ }\bibfield
  {title} {\bibinfo {title} {Estimating the coherence of noise},\ }\href@noop
  {} {\bibfield  {journal} {\bibinfo  {journal} {New Journal of Physics}\
  }\textbf {\bibinfo {volume} {17}},\ \bibinfo {pages} {113020} (\bibinfo
  {year} {2015})}\BibitemShut {NoStop}%
\bibitem [{\citenamefont {Dirkse}\ \emph {et~al.}(2019)\citenamefont {Dirkse},
  \citenamefont {Helsen},\ and\ \citenamefont {Wehner}}]{dirkse2019efficient}%
  \BibitemOpen
  \bibfield  {author} {\bibinfo {author} {\bibfnamefont {B.}~\bibnamefont
  {Dirkse}}, \bibinfo {author} {\bibfnamefont {J.}~\bibnamefont {Helsen}},\
  and\ \bibinfo {author} {\bibfnamefont {S.}~\bibnamefont {Wehner}},\
  }\bibfield  {title} {\bibinfo {title} {Efficient unitarity randomized
  benchmarking of few-qubit clifford gates},\ }\href@noop {} {\bibfield
  {journal} {\bibinfo  {journal} {Physical Review A}\ }\textbf {\bibinfo
  {volume} {99}},\ \bibinfo {pages} {012315} (\bibinfo {year}
  {2019})}\BibitemShut {NoStop}%
\bibitem [{\citenamefont {Sundaresan}\ \emph {et~al.}(2020)\citenamefont
  {Sundaresan}, \citenamefont {Lauer}, \citenamefont {Pritchett}, \citenamefont
  {Magesan}, \citenamefont {Jurcevic},\ and\ \citenamefont
  {Gambetta}}]{sundaresan2020reducing}%
  \BibitemOpen
  \bibfield  {author} {\bibinfo {author} {\bibfnamefont {N.}~\bibnamefont
  {Sundaresan}}, \bibinfo {author} {\bibfnamefont {I.}~\bibnamefont {Lauer}},
  \bibinfo {author} {\bibfnamefont {E.}~\bibnamefont {Pritchett}}, \bibinfo
  {author} {\bibfnamefont {E.}~\bibnamefont {Magesan}}, \bibinfo {author}
  {\bibfnamefont {P.}~\bibnamefont {Jurcevic}},\ and\ \bibinfo {author}
  {\bibfnamefont {J.~M.}\ \bibnamefont {Gambetta}},\ }\bibfield  {title}
  {\bibinfo {title} {Reducing unitary and spectator errors in cross resonance
  with optimized rotary echoes},\ }\href@noop {} {\bibfield  {journal}
  {\bibinfo  {journal} {PRX Quantum}\ }\textbf {\bibinfo {volume} {1}},\
  \bibinfo {pages} {020318} (\bibinfo {year} {2020})}\BibitemShut {NoStop}%
\bibitem [{\citenamefont {Kukulski}\ \emph {et~al.}(2020)\citenamefont
  {Kukulski}, \citenamefont {Nechita}, \citenamefont {Pawela}, \citenamefont
  {Pucha{\l}a},\ and\ \citenamefont {{\.Z}yczkowski}}]{kukulski2020generating}%
  \BibitemOpen
  \bibfield  {author} {\bibinfo {author} {\bibfnamefont {R.}~\bibnamefont
  {Kukulski}}, \bibinfo {author} {\bibfnamefont {I.}~\bibnamefont {Nechita}},
  \bibinfo {author} {\bibfnamefont {{\L}.}~\bibnamefont {Pawela}}, \bibinfo
  {author} {\bibfnamefont {Z.}~\bibnamefont {Pucha{\l}a}},\ and\ \bibinfo
  {author} {\bibfnamefont {K.}~\bibnamefont {{\.Z}yczkowski}},\ }\bibfield
  {title} {\bibinfo {title} {Generating random quantum channels},\ }\href@noop
  {} {\bibfield  {journal} {\bibinfo  {journal} {arXiv preprint
  arXiv:2011.02994}\ } (\bibinfo {year} {2020})}\BibitemShut {NoStop}%
\bibitem [{\citenamefont {Korzekwa}\ \emph {et~al.}(2018)\citenamefont
  {Korzekwa}, \citenamefont {Czach{\'o}rski}, \citenamefont {Pucha{\l}a},\ and\
  \citenamefont {{\.Z}yczkowski}}]{korzekwa2018coherifying}%
  \BibitemOpen
  \bibfield  {author} {\bibinfo {author} {\bibfnamefont {K.}~\bibnamefont
  {Korzekwa}}, \bibinfo {author} {\bibfnamefont {S.}~\bibnamefont
  {Czach{\'o}rski}}, \bibinfo {author} {\bibfnamefont {Z.}~\bibnamefont
  {Pucha{\l}a}},\ and\ \bibinfo {author} {\bibfnamefont {K.}~\bibnamefont
  {{\.Z}yczkowski}},\ }\bibfield  {title} {\bibinfo {title} {Coherifying
  quantum channels},\ }\href@noop {} {\bibfield  {journal} {\bibinfo  {journal}
  {New Journal of Physics}\ }\textbf {\bibinfo {volume} {20}},\ \bibinfo
  {pages} {043028} (\bibinfo {year} {2018})}\BibitemShut {NoStop}%
\bibitem [{\citenamefont {Hsieh}\ \emph {et~al.}(2021)\citenamefont {Hsieh},
  \citenamefont {Lostaglio},\ and\ \citenamefont
  {Ac{\'\i}n}}]{hsieh2021quantum}%
  \BibitemOpen
  \bibfield  {author} {\bibinfo {author} {\bibfnamefont {C.-Y.}\ \bibnamefont
  {Hsieh}}, \bibinfo {author} {\bibfnamefont {M.}~\bibnamefont {Lostaglio}},\
  and\ \bibinfo {author} {\bibfnamefont {A.}~\bibnamefont {Ac{\'\i}n}},\
  }\href@noop {} {\bibinfo {title} {Quantum channel marginal problem}}
  (\bibinfo {year} {2021}),\ \Eprint {https://arxiv.org/abs/2102.10926}
  {arXiv:2102.10926 [quant-ph]} \BibitemShut {NoStop}%
\bibitem [{\citenamefont {Kretschmann}\ \emph {et~al.}(2008)\citenamefont
  {Kretschmann}, \citenamefont {Schlingemann},\ and\ \citenamefont
  {Werner}}]{kretschmann2008information}%
  \BibitemOpen
  \bibfield  {author} {\bibinfo {author} {\bibfnamefont {D.}~\bibnamefont
  {Kretschmann}}, \bibinfo {author} {\bibfnamefont {D.}~\bibnamefont
  {Schlingemann}},\ and\ \bibinfo {author} {\bibfnamefont {R.~F.}\ \bibnamefont
  {Werner}},\ }\bibfield  {title} {\bibinfo {title} {The
  information-disturbance tradeoff and the continuity of stinespring's
  representation},\ }\href@noop {} {\bibfield  {journal} {\bibinfo  {journal}
  {IEEE transactions on information theory}\ }\textbf {\bibinfo {volume}
  {54}},\ \bibinfo {pages} {1708} (\bibinfo {year} {2008})}\BibitemShut
  {NoStop}%
\bibitem [{\citenamefont {Heinosaari}\ \emph {et~al.}(2016)\citenamefont
  {Heinosaari}, \citenamefont {Miyadera},\ and\ \citenamefont
  {Ziman}}]{heinosaari2016invitation}%
  \BibitemOpen
  \bibfield  {author} {\bibinfo {author} {\bibfnamefont {T.}~\bibnamefont
  {Heinosaari}}, \bibinfo {author} {\bibfnamefont {T.}~\bibnamefont
  {Miyadera}},\ and\ \bibinfo {author} {\bibfnamefont {M.}~\bibnamefont
  {Ziman}},\ }\bibfield  {title} {\bibinfo {title} {An invitation to quantum
  incompatibility},\ }\href@noop {} {\bibfield  {journal} {\bibinfo  {journal}
  {Journal of Physics A: Mathematical and Theoretical}\ }\textbf {\bibinfo
  {volume} {49}},\ \bibinfo {pages} {123001} (\bibinfo {year}
  {2016})}\BibitemShut {NoStop}%
\bibitem [{\citenamefont {C\^{\i}rstoiu}\ \emph {et~al.}(2020)\citenamefont
  {C\^{\i}rstoiu}, \citenamefont {Korzekwa},\ and\ \citenamefont
  {Jennings}}]{cirstoiu2020robustness}%
  \BibitemOpen
  \bibfield  {author} {\bibinfo {author} {\bibfnamefont {C.}~\bibnamefont
  {C\^{\i}rstoiu}}, \bibinfo {author} {\bibfnamefont {K.}~\bibnamefont
  {Korzekwa}},\ and\ \bibinfo {author} {\bibfnamefont {D.}~\bibnamefont
  {Jennings}},\ }\bibfield  {title} {\bibinfo {title} {Robustness of noether's
  principle: Maximal disconnects between conservation laws and symmetries in
  quantum theory},\ }\href@noop {} {\bibfield  {journal} {\bibinfo  {journal}
  {Physical Review X}\ }\textbf {\bibinfo {volume} {10}},\ \bibinfo {pages}
  {041035} (\bibinfo {year} {2020})}\BibitemShut {NoStop}%
\bibitem [{\citenamefont {Man’ko}\ and\ \citenamefont
  {Man’ko}(2014)}]{man2014deformed}%
  \BibitemOpen
  \bibfield  {author} {\bibinfo {author} {\bibfnamefont {M.~A.}\ \bibnamefont
  {Man’ko}}\ and\ \bibinfo {author} {\bibfnamefont {V.~I.}\ \bibnamefont
  {Man’ko}},\ }\bibfield  {title} {\bibinfo {title} {Deformed subadditivity
  condition for qudit states and hybrid positive maps},\ }\href@noop {}
  {\bibfield  {journal} {\bibinfo  {journal} {Journal of Russian Laser
  Research}\ }\textbf {\bibinfo {volume} {35}},\ \bibinfo {pages} {509}
  (\bibinfo {year} {2014})}\BibitemShut {NoStop}%
\bibitem [{\citenamefont {Wootters}\ and\ \citenamefont
  {Zurek}(1982)}]{wootters1982single}%
  \BibitemOpen
  \bibfield  {author} {\bibinfo {author} {\bibfnamefont {W.~K.}\ \bibnamefont
  {Wootters}}\ and\ \bibinfo {author} {\bibfnamefont {W.~H.}\ \bibnamefont
  {Zurek}},\ }\bibfield  {title} {\bibinfo {title} {A single quantum cannot be
  cloned},\ }\href@noop {} {\bibfield  {journal} {\bibinfo  {journal} {Nature}\
  }\textbf {\bibinfo {volume} {299}},\ \bibinfo {pages} {802} (\bibinfo {year}
  {1982})}\BibitemShut {NoStop}%
\bibitem [{\citenamefont {Yuen}(1986)}]{nocloning2}%
  \BibitemOpen
  \bibfield  {author} {\bibinfo {author} {\bibfnamefont {H.~P.}\ \bibnamefont
  {Yuen}},\ }\bibfield  {title} {\bibinfo {title} {Amplification of quantum
  states and noiseless photon amplifiers},\ }\href@noop {} {\bibfield
  {journal} {\bibinfo  {journal} {Physics Letters A}\ }\textbf {\bibinfo
  {volume} {113}},\ \bibinfo {pages} {405} (\bibinfo {year}
  {1986})}\BibitemShut {NoStop}%
\bibitem [{\citenamefont {Barnum}\ \emph {et~al.}(1996)\citenamefont {Barnum},
  \citenamefont {Caves}, \citenamefont {Fuchs}, \citenamefont {Jozsa},\ and\
  \citenamefont {Schumacher}}]{nobroadcasting1}%
  \BibitemOpen
  \bibfield  {author} {\bibinfo {author} {\bibfnamefont {H.}~\bibnamefont
  {Barnum}}, \bibinfo {author} {\bibfnamefont {C.~M.}\ \bibnamefont {Caves}},
  \bibinfo {author} {\bibfnamefont {C.~A.}\ \bibnamefont {Fuchs}}, \bibinfo
  {author} {\bibfnamefont {R.}~\bibnamefont {Jozsa}},\ and\ \bibinfo {author}
  {\bibfnamefont {B.}~\bibnamefont {Schumacher}},\ }\bibfield  {title}
  {\bibinfo {title} {Noncommuting mixed states cannot be broadcast},\
  }\href@noop {} {\bibfield  {journal} {\bibinfo  {journal} {Physical Review
  Letters}\ }\textbf {\bibinfo {volume} {76}},\ \bibinfo {pages} {2818}
  (\bibinfo {year} {1996})}\BibitemShut {NoStop}%
\bibitem [{\citenamefont {Barnum}\ \emph {et~al.}(2007)\citenamefont {Barnum},
  \citenamefont {Barrett}, \citenamefont {Leifer},\ and\ \citenamefont
  {Wilce}}]{nobroadcasting2}%
  \BibitemOpen
  \bibfield  {author} {\bibinfo {author} {\bibfnamefont {H.}~\bibnamefont
  {Barnum}}, \bibinfo {author} {\bibfnamefont {J.}~\bibnamefont {Barrett}},
  \bibinfo {author} {\bibfnamefont {M.}~\bibnamefont {Leifer}},\ and\ \bibinfo
  {author} {\bibfnamefont {A.}~\bibnamefont {Wilce}},\ }\bibfield  {title}
  {\bibinfo {title} {Generalized no-broadcasting theorem},\ }\href@noop {}
  {\bibfield  {journal} {\bibinfo  {journal} {Physical review letters}\
  }\textbf {\bibinfo {volume} {99}},\ \bibinfo {pages} {240501} (\bibinfo
  {year} {2007})}\BibitemShut {NoStop}%
\bibitem [{Note1()}]{Note1}%
  \BibitemOpen
  \bibinfo {note} {Note that the basis $\ket {\protect \textpdfrender
  {TextRenderingMode=FillStroke,LineWidth=.60pt,}{X\otimes Y}}$ is a tensor
  product basis for $(\protect \mathcal {H}_A\otimes \protect \mathcal
  {H}_A)\otimes (\protect \mathcal {H}_B\otimes \protect \mathcal {H}_B)$ and
  up to re-ordering of (second and third) Hilbert spaces the same as
  vectorization of the matrix $X\otimes Y$. As these basis are isomorphic, the
  Liouville representation will be invariant under such
  permutations.}\BibitemShut {Stop}%
\bibitem [{\citenamefont {Bruzda}\ \emph {et~al.}(2009)\citenamefont {Bruzda},
  \citenamefont {Cappellini}, \citenamefont {Sommers},\ and\ \citenamefont
  {{\.Z}yczkowski}}]{bruzda2009random}%
  \BibitemOpen
  \bibfield  {author} {\bibinfo {author} {\bibfnamefont {W.}~\bibnamefont
  {Bruzda}}, \bibinfo {author} {\bibfnamefont {V.}~\bibnamefont {Cappellini}},
  \bibinfo {author} {\bibfnamefont {H.-J.}\ \bibnamefont {Sommers}},\ and\
  \bibinfo {author} {\bibfnamefont {K.}~\bibnamefont {{\.Z}yczkowski}},\
  }\bibfield  {title} {\bibinfo {title} {Random quantum operations},\
  }\href@noop {} {\bibfield  {journal} {\bibinfo  {journal} {Physics Letters
  A}\ }\textbf {\bibinfo {volume} {373}},\ \bibinfo {pages} {320} (\bibinfo
  {year} {2009})}\BibitemShut {NoStop}%
\bibitem [{\citenamefont {Johansson}\ \emph {et~al.}(2013)\citenamefont
  {Johansson}, \citenamefont {Nation},\ and\ \citenamefont {Nori}}]{qutip2013}%
  \BibitemOpen
  \bibfield  {author} {\bibinfo {author} {\bibfnamefont {J.}~\bibnamefont
  {Johansson}}, \bibinfo {author} {\bibfnamefont {P.}~\bibnamefont {Nation}},\
  and\ \bibinfo {author} {\bibfnamefont {F.}~\bibnamefont {Nori}},\ }\bibfield
  {title} {\bibinfo {title} {Qutip 2: A python framework for the dynamics of
  open quantum systems},\ }\href@noop {} {\bibfield  {journal} {\bibinfo
  {journal} {Computer Physics Communications}\ }\textbf {\bibinfo {volume}
  {184}},\ \bibinfo {pages} {1234} (\bibinfo {year} {2013})}\BibitemShut
  {NoStop}%
\bibitem [{\citenamefont {Knill}(1996)}]{knill1996non}%
  \BibitemOpen
  \bibfield  {author} {\bibinfo {author} {\bibfnamefont {E.}~\bibnamefont
  {Knill}},\ }\bibfield  {title} {\bibinfo {title} {Non-binary unitary error
  bases and quantum codes},\ }\href@noop {} {\bibfield  {journal} {\bibinfo
  {journal} {arXiv preprint quant-ph/9608048}\ } (\bibinfo {year}
  {1996})}\BibitemShut {NoStop}%
\bibitem [{\citenamefont {Kitaev}(1997)}]{kitaev1997quantum}%
  \BibitemOpen
  \bibfield  {author} {\bibinfo {author} {\bibfnamefont {A.~Y.}\ \bibnamefont
  {Kitaev}},\ }\bibfield  {title} {\bibinfo {title} {Quantum computations:
  algorithms and error correction},\ }\href@noop {} {\bibfield  {journal}
  {\bibinfo  {journal} {Uspekhi Matematicheskikh Nauk}\ }\textbf {\bibinfo
  {volume} {52}},\ \bibinfo {pages} {53} (\bibinfo {year} {1997})}\BibitemShut
  {NoStop}%
\bibitem [{\citenamefont {Carignan-Dugas}\ \emph {et~al.}(2019)\citenamefont
  {Carignan-Dugas}, \citenamefont {Wallman},\ and\ \citenamefont
  {Emerson}}]{carignan2019bounding}%
  \BibitemOpen
  \bibfield  {author} {\bibinfo {author} {\bibfnamefont {A.}~\bibnamefont
  {Carignan-Dugas}}, \bibinfo {author} {\bibfnamefont {J.~J.}\ \bibnamefont
  {Wallman}},\ and\ \bibinfo {author} {\bibfnamefont {J.}~\bibnamefont
  {Emerson}},\ }\bibfield  {title} {\bibinfo {title} {Bounding the average gate
  fidelity of composite channels using the unitarity},\ }\href@noop {}
  {\bibfield  {journal} {\bibinfo  {journal} {New Journal of Physics}\ }\textbf
  {\bibinfo {volume} {21}},\ \bibinfo {pages} {053016} (\bibinfo {year}
  {2019})}\BibitemShut {NoStop}%
\bibitem [{\citenamefont {Kueng}\ \emph {et~al.}(2016)\citenamefont {Kueng},
  \citenamefont {Long}, \citenamefont {Doherty},\ and\ \citenamefont
  {Flammia}}]{kueng2016comparing}%
  \BibitemOpen
  \bibfield  {author} {\bibinfo {author} {\bibfnamefont {R.}~\bibnamefont
  {Kueng}}, \bibinfo {author} {\bibfnamefont {D.~M.}\ \bibnamefont {Long}},
  \bibinfo {author} {\bibfnamefont {A.~C.}\ \bibnamefont {Doherty}},\ and\
  \bibinfo {author} {\bibfnamefont {S.~T.}\ \bibnamefont {Flammia}},\
  }\bibfield  {title} {\bibinfo {title} {Comparing experiments to the
  fault-tolerance threshold},\ }\href@noop {} {\bibfield  {journal} {\bibinfo
  {journal} {Physical review letters}\ }\textbf {\bibinfo {volume} {117}},\
  \bibinfo {pages} {170502} (\bibinfo {year} {2016})}\BibitemShut {NoStop}%
\bibitem [{\citenamefont {Proctor}\ \emph {et~al.}(2017)\citenamefont
  {Proctor}, \citenamefont {Rudinger}, \citenamefont {Young}, \citenamefont
  {Sarovar},\ and\ \citenamefont {Blume-Kohout}}]{proctor2017randomized}%
  \BibitemOpen
  \bibfield  {author} {\bibinfo {author} {\bibfnamefont {T.}~\bibnamefont
  {Proctor}}, \bibinfo {author} {\bibfnamefont {K.}~\bibnamefont {Rudinger}},
  \bibinfo {author} {\bibfnamefont {K.}~\bibnamefont {Young}}, \bibinfo
  {author} {\bibfnamefont {M.}~\bibnamefont {Sarovar}},\ and\ \bibinfo {author}
  {\bibfnamefont {R.}~\bibnamefont {Blume-Kohout}},\ }\bibfield  {title}
  {\bibinfo {title} {What randomized benchmarking actually measures},\
  }\href@noop {} {\bibfield  {journal} {\bibinfo  {journal} {Physical review
  letters}\ }\textbf {\bibinfo {volume} {119}},\ \bibinfo {pages} {130502}
  (\bibinfo {year} {2017})}\BibitemShut {NoStop}%
\bibitem [{\citenamefont {Wallman}(2018)}]{wallman2018randomized}%
  \BibitemOpen
  \bibfield  {author} {\bibinfo {author} {\bibfnamefont {J.~J.}\ \bibnamefont
  {Wallman}},\ }\bibfield  {title} {\bibinfo {title} {Randomized benchmarking
  with gate-dependent noise},\ }\href@noop {} {\bibfield  {journal} {\bibinfo
  {journal} {Quantum}\ }\textbf {\bibinfo {volume} {2}},\ \bibinfo {pages} {47}
  (\bibinfo {year} {2018})}\BibitemShut {NoStop}%
\bibitem [{\citenamefont {Merkel}\ \emph {et~al.}(2018)\citenamefont {Merkel},
  \citenamefont {Pritchett},\ and\ \citenamefont
  {Fong}}]{merkel2018randomized}%
  \BibitemOpen
  \bibfield  {author} {\bibinfo {author} {\bibfnamefont {S.~T.}\ \bibnamefont
  {Merkel}}, \bibinfo {author} {\bibfnamefont {E.~J.}\ \bibnamefont
  {Pritchett}},\ and\ \bibinfo {author} {\bibfnamefont {B.~H.}\ \bibnamefont
  {Fong}},\ }\bibfield  {title} {\bibinfo {title} {Randomized benchmarking as
  convolution: Fourier analysis of gate dependent errors},\ }\href@noop {}
  {\bibfield  {journal} {\bibinfo  {journal} {arXiv preprint arXiv:1804.05951}\
  } (\bibinfo {year} {2018})}\BibitemShut {NoStop}%
\bibitem [{\citenamefont {Gowers}\ and\ \citenamefont
  {Hatami}(2017)}]{gowers2017inverse}%
  \BibitemOpen
  \bibfield  {author} {\bibinfo {author} {\bibfnamefont {W.~T.}\ \bibnamefont
  {Gowers}}\ and\ \bibinfo {author} {\bibfnamefont {O.}~\bibnamefont
  {Hatami}},\ }\bibfield  {title} {\bibinfo {title} {Inverse and stability
  theorems for approximate representations of finite groups},\ }\href@noop {}
  {\bibfield  {journal} {\bibinfo  {journal} {Sbornik: Mathematics}\ }\textbf
  {\bibinfo {volume} {208}},\ \bibinfo {pages} {1784} (\bibinfo {year}
  {2017})}\BibitemShut {NoStop}%
\bibitem [{\citenamefont {Fran{\c{c}}a}\ and\ \citenamefont
  {Hashagen}(2018)}]{francca2018approximate}%
  \BibitemOpen
  \bibfield  {author} {\bibinfo {author} {\bibfnamefont {D.~S.}\ \bibnamefont
  {Fran{\c{c}}a}}\ and\ \bibinfo {author} {\bibfnamefont {A.}~\bibnamefont
  {Hashagen}},\ }\bibfield  {title} {\bibinfo {title} {Approximate randomized
  benchmarking for finite groups},\ }\href@noop {} {\bibfield  {journal}
  {\bibinfo  {journal} {Journal of Physics A: Mathematical and Theoretical}\
  }\textbf {\bibinfo {volume} {51}},\ \bibinfo {pages} {395302} (\bibinfo
  {year} {2018})}\BibitemShut {NoStop}%
\bibitem [{\citenamefont {Zhu}(2017)}]{zhu2017multiqubit}%
  \BibitemOpen
  \bibfield  {author} {\bibinfo {author} {\bibfnamefont {H.}~\bibnamefont
  {Zhu}},\ }\bibfield  {title} {\bibinfo {title} {Multiqubit clifford groups
  are unitary 3-designs},\ }\href@noop {} {\bibfield  {journal} {\bibinfo
  {journal} {Physical Review A}\ }\textbf {\bibinfo {volume} {96}},\ \bibinfo
  {pages} {062336} (\bibinfo {year} {2017})}\BibitemShut {NoStop}%
\bibitem [{\citenamefont {Magesan}\ \emph {et~al.}(2012)\citenamefont
  {Magesan}, \citenamefont {Gambetta}, \citenamefont {Johnson}, \citenamefont
  {Ryan}, \citenamefont {Chow}, \citenamefont {Merkel}, \citenamefont
  {da~Silva}, \citenamefont {Keefe}, \citenamefont {Rothwell}, \citenamefont
  {Ohki}, \citenamefont {Ketchen},\ and\ \citenamefont
  {Steffen}}]{magesan2012efficient}%
  \BibitemOpen
  \bibfield  {author} {\bibinfo {author} {\bibfnamefont {E.}~\bibnamefont
  {Magesan}}, \bibinfo {author} {\bibfnamefont {J.~M.}\ \bibnamefont
  {Gambetta}}, \bibinfo {author} {\bibfnamefont {B.~R.}\ \bibnamefont
  {Johnson}}, \bibinfo {author} {\bibfnamefont {C.~A.}\ \bibnamefont {Ryan}},
  \bibinfo {author} {\bibfnamefont {J.~M.}\ \bibnamefont {Chow}}, \bibinfo
  {author} {\bibfnamefont {S.~T.}\ \bibnamefont {Merkel}}, \bibinfo {author}
  {\bibfnamefont {M.~P.}\ \bibnamefont {da~Silva}}, \bibinfo {author}
  {\bibfnamefont {G.~A.}\ \bibnamefont {Keefe}}, \bibinfo {author}
  {\bibfnamefont {M.~B.}\ \bibnamefont {Rothwell}}, \bibinfo {author}
  {\bibfnamefont {T.~A.}\ \bibnamefont {Ohki}}, \bibinfo {author}
  {\bibfnamefont {M.~B.}\ \bibnamefont {Ketchen}},\ and\ \bibinfo {author}
  {\bibfnamefont {M.}~\bibnamefont {Steffen}},\ }\bibfield  {title} {\bibinfo
  {title} {Efficient measurement of quantum gate error by interleaved
  randomized benchmarking},\ }\href@noop {} {\bibfield  {journal} {\bibinfo
  {journal} {Physical review letters}\ }\textbf {\bibinfo {volume} {109}},\
  \bibinfo {pages} {080505} (\bibinfo {year} {2012})}\BibitemShut {NoStop}%
\bibitem [{\citenamefont {Kimmel}\ \emph {et~al.}(2014)\citenamefont {Kimmel},
  \citenamefont {da~Silva}, \citenamefont {Ryan}, \citenamefont {Johnson},\
  and\ \citenamefont {Ohki}}]{kimmel2014robust}%
  \BibitemOpen
  \bibfield  {author} {\bibinfo {author} {\bibfnamefont {S.}~\bibnamefont
  {Kimmel}}, \bibinfo {author} {\bibfnamefont {M.~P.}\ \bibnamefont
  {da~Silva}}, \bibinfo {author} {\bibfnamefont {C.~A.}\ \bibnamefont {Ryan}},
  \bibinfo {author} {\bibfnamefont {B.~R.}\ \bibnamefont {Johnson}},\ and\
  \bibinfo {author} {\bibfnamefont {T.}~\bibnamefont {Ohki}},\ }\bibfield
  {title} {\bibinfo {title} {Robust extraction of tomographic information via
  randomized benchmarking},\ }\href@noop {} {\bibfield  {journal} {\bibinfo
  {journal} {Physical Review X}\ }\textbf {\bibinfo {volume} {4}},\ \bibinfo
  {pages} {011050} (\bibinfo {year} {2014})}\BibitemShut {NoStop}%
\bibitem [{Note2()}]{Note2}%
  \BibitemOpen
  \bibinfo {note} {This assumes a non-degenerate form of a Jordan matrix
  decomposition. Degenerate cases give rise to similar expressions. See
  Appendix \ref {section:jordan-decomp-pab} for details.}\BibitemShut {Stop}%
\bibitem [{\citenamefont {Horn}\ and\ \citenamefont
  {Johnson}(2012)}]{horn2012matrix}%
  \BibitemOpen
  \bibfield  {author} {\bibinfo {author} {\bibfnamefont {R.~A.}\ \bibnamefont
  {Horn}}\ and\ \bibinfo {author} {\bibfnamefont {C.~R.}\ \bibnamefont
  {Johnson}},\ }\href@noop {} {\emph {\bibinfo {title} {Matrix analysis}}},\
  \bibinfo {edition} {2nd}\ ed.\ (\bibinfo  {publisher} {Cambridge university
  press},\ \bibinfo {year} {2012})\BibitemShut {NoStop}%
\bibitem [{\citenamefont {Wallman}\ and\ \citenamefont
  {Emerson}(2016)}]{wallman2016noise}%
  \BibitemOpen
  \bibfield  {author} {\bibinfo {author} {\bibfnamefont {J.~J.}\ \bibnamefont
  {Wallman}}\ and\ \bibinfo {author} {\bibfnamefont {J.}~\bibnamefont
  {Emerson}},\ }\bibfield  {title} {\bibinfo {title} {Noise tailoring for
  scalable quantum computation via randomized compiling},\ }\href@noop {}
  {\bibfield  {journal} {\bibinfo  {journal} {Physical Review A}\ }\textbf
  {\bibinfo {volume} {94}},\ \bibinfo {pages} {052325} (\bibinfo {year}
  {2016})}\BibitemShut {NoStop}%
\bibitem [{\citenamefont {Hashim}\ \emph {et~al.}(2020)\citenamefont {Hashim},
  \citenamefont {Naik}, \citenamefont {Morvan}, \citenamefont {Ville},
  \citenamefont {Mitchell}, \citenamefont {Kreikebaum}, \citenamefont {Davis},
  \citenamefont {Smith}, \citenamefont {Iancu}, \citenamefont {O'Brien} \emph
  {et~al.}}]{hashim2020randomized}%
  \BibitemOpen
  \bibfield  {author} {\bibinfo {author} {\bibfnamefont {A.}~\bibnamefont
  {Hashim}}, \bibinfo {author} {\bibfnamefont {R.~K.}\ \bibnamefont {Naik}},
  \bibinfo {author} {\bibfnamefont {A.}~\bibnamefont {Morvan}}, \bibinfo
  {author} {\bibfnamefont {J.-L.}\ \bibnamefont {Ville}}, \bibinfo {author}
  {\bibfnamefont {B.}~\bibnamefont {Mitchell}}, \bibinfo {author}
  {\bibfnamefont {J.~M.}\ \bibnamefont {Kreikebaum}}, \bibinfo {author}
  {\bibfnamefont {M.}~\bibnamefont {Davis}}, \bibinfo {author} {\bibfnamefont
  {E.}~\bibnamefont {Smith}}, \bibinfo {author} {\bibfnamefont
  {C.}~\bibnamefont {Iancu}}, \bibinfo {author} {\bibfnamefont {K.~P.}\
  \bibnamefont {O'Brien}}, \emph {et~al.},\ }\bibfield  {title} {\bibinfo
  {title} {Randomized compiling for scalable quantum computing on a noisy
  superconducting quantum processor},\ }\href@noop {} {\bibfield  {journal}
  {\bibinfo  {journal} {arXiv preprint arXiv:2010.00215}\ } (\bibinfo {year}
  {2020})}\BibitemShut {NoStop}%
\bibitem [{\citenamefont {Pino}\ \emph {et~al.}(2020)\citenamefont {Pino},
  \citenamefont {Dreiling}, \citenamefont {Figgatt}, \citenamefont {Gaebler},
  \citenamefont {Moses}, \citenamefont {Allman}, \citenamefont {Baldwin},
  \citenamefont {Foss-Feig}, \citenamefont {Hayes}, \citenamefont {Mayer} \emph
  {et~al.}}]{pino2020demonstration}%
  \BibitemOpen
  \bibfield  {author} {\bibinfo {author} {\bibfnamefont {J.~M.}\ \bibnamefont
  {Pino}}, \bibinfo {author} {\bibfnamefont {J.~M.}\ \bibnamefont {Dreiling}},
  \bibinfo {author} {\bibfnamefont {C.}~\bibnamefont {Figgatt}}, \bibinfo
  {author} {\bibfnamefont {J.~P.}\ \bibnamefont {Gaebler}}, \bibinfo {author}
  {\bibfnamefont {S.~A.}\ \bibnamefont {Moses}}, \bibinfo {author}
  {\bibfnamefont {M.}~\bibnamefont {Allman}}, \bibinfo {author} {\bibfnamefont
  {C.}~\bibnamefont {Baldwin}}, \bibinfo {author} {\bibfnamefont
  {M.}~\bibnamefont {Foss-Feig}}, \bibinfo {author} {\bibfnamefont
  {D.}~\bibnamefont {Hayes}}, \bibinfo {author} {\bibfnamefont
  {K.}~\bibnamefont {Mayer}}, \emph {et~al.},\ }\bibfield  {title} {\bibinfo
  {title} {Demonstration of the qccd trapped-ion quantum computer
  architecture},\ }\href@noop {} {\bibfield  {journal} {\bibinfo  {journal}
  {arXiv preprint arXiv:2003.01293}\ } (\bibinfo {year} {2020})}\BibitemShut
  {NoStop}%
\bibitem [{\citenamefont {Corcoles}\ \emph {et~al.}(2021)\citenamefont
  {Corcoles}, \citenamefont {Takita}, \citenamefont {Inoue}, \citenamefont
  {Lekuch}, \citenamefont {Minev}, \citenamefont {Chow},\ and\ \citenamefont
  {Gambetta}}]{corcoles2021exploiting}%
  \BibitemOpen
  \bibfield  {author} {\bibinfo {author} {\bibfnamefont {A.~D.}\ \bibnamefont
  {Corcoles}}, \bibinfo {author} {\bibfnamefont {M.}~\bibnamefont {Takita}},
  \bibinfo {author} {\bibfnamefont {K.}~\bibnamefont {Inoue}}, \bibinfo
  {author} {\bibfnamefont {S.}~\bibnamefont {Lekuch}}, \bibinfo {author}
  {\bibfnamefont {Z.~K.}\ \bibnamefont {Minev}}, \bibinfo {author}
  {\bibfnamefont {J.~M.}\ \bibnamefont {Chow}},\ and\ \bibinfo {author}
  {\bibfnamefont {J.~M.}\ \bibnamefont {Gambetta}},\ }\bibfield  {title}
  {\bibinfo {title} {Exploiting dynamic quantum circuits in a quantum algorithm
  with superconducting qubits},\ }\href@noop {} {\bibfield  {journal} {\bibinfo
   {journal} {arXiv preprint arXiv:2102.01682}\ } (\bibinfo {year}
  {2021})}\BibitemShut {NoStop}%
\bibitem [{\citenamefont {Sivarajah}\ \emph {et~al.}(2020)\citenamefont
  {Sivarajah}, \citenamefont {Dilkes}, \citenamefont {Cowtan}, \citenamefont
  {Simmons}, \citenamefont {Edgington},\ and\ \citenamefont
  {Duncan}}]{sivarajah2020t}%
  \BibitemOpen
  \bibfield  {author} {\bibinfo {author} {\bibfnamefont {S.}~\bibnamefont
  {Sivarajah}}, \bibinfo {author} {\bibfnamefont {S.}~\bibnamefont {Dilkes}},
  \bibinfo {author} {\bibfnamefont {A.}~\bibnamefont {Cowtan}}, \bibinfo
  {author} {\bibfnamefont {W.}~\bibnamefont {Simmons}}, \bibinfo {author}
  {\bibfnamefont {A.}~\bibnamefont {Edgington}},\ and\ \bibinfo {author}
  {\bibfnamefont {R.}~\bibnamefont {Duncan}},\ }\bibfield  {title} {\bibinfo
  {title} {Tket: a retargetable compiler for nisq devices},\ }\href@noop {}
  {\bibfield  {journal} {\bibinfo  {journal} {Quantum Science and Technology}\
  }\textbf {\bibinfo {volume} {6}},\ \bibinfo {pages} {014003} (\bibinfo {year}
  {2020})}\BibitemShut {NoStop}%
\bibitem [{Qis(2019)}]{Qiskit}%
  \BibitemOpen
  \href {https://doi.org/10.5281/zenodo.2562110} {\bibinfo {title} {Qiskit: An
  open-source framework for quantum computing}} (\bibinfo {year}
  {2019})\BibitemShut {NoStop}%
\bibitem [{\citenamefont {Sarovar}\ \emph {et~al.}(2020)\citenamefont
  {Sarovar}, \citenamefont {Proctor}, \citenamefont {Rudinger}, \citenamefont
  {Young}, \citenamefont {Nielsen},\ and\ \citenamefont
  {Blume-Kohout}}]{sarovar2020detecting}%
  \BibitemOpen
  \bibfield  {author} {\bibinfo {author} {\bibfnamefont {M.}~\bibnamefont
  {Sarovar}}, \bibinfo {author} {\bibfnamefont {T.}~\bibnamefont {Proctor}},
  \bibinfo {author} {\bibfnamefont {K.}~\bibnamefont {Rudinger}}, \bibinfo
  {author} {\bibfnamefont {K.}~\bibnamefont {Young}}, \bibinfo {author}
  {\bibfnamefont {E.}~\bibnamefont {Nielsen}},\ and\ \bibinfo {author}
  {\bibfnamefont {R.}~\bibnamefont {Blume-Kohout}},\ }\bibfield  {title}
  {\bibinfo {title} {Detecting crosstalk errors in quantum information
  processors},\ }\href@noop {} {\bibfield  {journal} {\bibinfo  {journal}
  {Quantum}\ }\textbf {\bibinfo {volume} {4}},\ \bibinfo {pages} {321}
  (\bibinfo {year} {2020})}\BibitemShut {NoStop}%
\bibitem [{\citenamefont {McKay}\ \emph {et~al.}(2020)\citenamefont {McKay},
  \citenamefont {Cross}, \citenamefont {Wood},\ and\ \citenamefont
  {Gambetta}}]{mckay2020correlated}%
  \BibitemOpen
  \bibfield  {author} {\bibinfo {author} {\bibfnamefont {D.~C.}\ \bibnamefont
  {McKay}}, \bibinfo {author} {\bibfnamefont {A.~W.}\ \bibnamefont {Cross}},
  \bibinfo {author} {\bibfnamefont {C.~J.}\ \bibnamefont {Wood}},\ and\
  \bibinfo {author} {\bibfnamefont {J.~M.}\ \bibnamefont {Gambetta}},\
  }\bibfield  {title} {\bibinfo {title} {Correlated randomized benchmarking},\
  }\href@noop {} {\bibfield  {journal} {\bibinfo  {journal} {arXiv preprint
  arXiv:2003.02354}\ } (\bibinfo {year} {2020})}\BibitemShut {NoStop}%
\bibitem [{\citenamefont {Winick}\ \emph {et~al.}(2020)\citenamefont {Winick},
  \citenamefont {Wallman},\ and\ \citenamefont
  {Emerson}}]{winick2020simulating}%
  \BibitemOpen
  \bibfield  {author} {\bibinfo {author} {\bibfnamefont {A.}~\bibnamefont
  {Winick}}, \bibinfo {author} {\bibfnamefont {J.~J.}\ \bibnamefont
  {Wallman}},\ and\ \bibinfo {author} {\bibfnamefont {J.}~\bibnamefont
  {Emerson}},\ }\bibfield  {title} {\bibinfo {title} {Simulating and mitigating
  crosstalk},\ }\href@noop {} {\bibfield  {journal} {\bibinfo  {journal} {arXiv
  preprint arXiv:2006.09596}\ } (\bibinfo {year} {2020})}\BibitemShut {NoStop}%
\bibitem [{\citenamefont {Ball}\ \emph {et~al.}(2016)\citenamefont {Ball},
  \citenamefont {Stace}, \citenamefont {Flammia},\ and\ \citenamefont
  {Biercuk}}]{ball2016effect}%
  \BibitemOpen
  \bibfield  {author} {\bibinfo {author} {\bibfnamefont {H.}~\bibnamefont
  {Ball}}, \bibinfo {author} {\bibfnamefont {T.~M.}\ \bibnamefont {Stace}},
  \bibinfo {author} {\bibfnamefont {S.~T.}\ \bibnamefont {Flammia}},\ and\
  \bibinfo {author} {\bibfnamefont {M.~J.}\ \bibnamefont {Biercuk}},\
  }\bibfield  {title} {\bibinfo {title} {Effect of noise correlations on
  randomized benchmarking},\ }\href@noop {} {\bibfield  {journal} {\bibinfo
  {journal} {Physical Review A}\ }\textbf {\bibinfo {volume} {93}},\ \bibinfo
  {pages} {022303} (\bibinfo {year} {2016})}\BibitemShut {NoStop}%
\bibitem [{\citenamefont {Qi}\ and\ \citenamefont
  {Ng}(2021)}]{qi2021randomized}%
  \BibitemOpen
  \bibfield  {author} {\bibinfo {author} {\bibfnamefont {J.}~\bibnamefont
  {Qi}}\ and\ \bibinfo {author} {\bibfnamefont {H.~K.}\ \bibnamefont {Ng}},\
  }\bibfield  {title} {\bibinfo {title} {Randomized benchmarking in the
  presence of time-correlated dephasing noise},\ }\href@noop {} {\bibfield
  {journal} {\bibinfo  {journal} {Physical Review A}\ }\textbf {\bibinfo
  {volume} {103}},\ \bibinfo {pages} {022607} (\bibinfo {year}
  {2021})}\BibitemShut {NoStop}%
\bibitem [{\citenamefont {Harper}\ \emph {et~al.}(2019)\citenamefont {Harper},
  \citenamefont {Hincks}, \citenamefont {Ferrie}, \citenamefont {Flammia},\
  and\ \citenamefont {Wallman}}]{harper2019statistical}%
  \BibitemOpen
  \bibfield  {author} {\bibinfo {author} {\bibfnamefont {R.}~\bibnamefont
  {Harper}}, \bibinfo {author} {\bibfnamefont {I.}~\bibnamefont {Hincks}},
  \bibinfo {author} {\bibfnamefont {C.}~\bibnamefont {Ferrie}}, \bibinfo
  {author} {\bibfnamefont {S.~T.}\ \bibnamefont {Flammia}},\ and\ \bibinfo
  {author} {\bibfnamefont {J.~J.}\ \bibnamefont {Wallman}},\ }\bibfield
  {title} {\bibinfo {title} {Statistical analysis of randomized benchmarking},\
  }\href@noop {} {\bibfield  {journal} {\bibinfo  {journal} {Physical Review
  A}\ }\textbf {\bibinfo {volume} {99}},\ \bibinfo {pages} {052350} (\bibinfo
  {year} {2019})}\BibitemShut {NoStop}%
\bibitem [{\citenamefont {Xue}\ \emph {et~al.}(2019)\citenamefont {Xue},
  \citenamefont {Watson}, \citenamefont {Helsen}, \citenamefont {Ward},
  \citenamefont {Savage}, \citenamefont {Lagally}, \citenamefont {Coppersmith},
  \citenamefont {Eriksson}, \citenamefont {Wehner},\ and\ \citenamefont
  {Vandersypen}}]{xue2019benchmarking}%
  \BibitemOpen
  \bibfield  {author} {\bibinfo {author} {\bibfnamefont {X.}~\bibnamefont
  {Xue}}, \bibinfo {author} {\bibfnamefont {T.~F.}\ \bibnamefont {Watson}},
  \bibinfo {author} {\bibfnamefont {J.}~\bibnamefont {Helsen}}, \bibinfo
  {author} {\bibfnamefont {D.~R.}\ \bibnamefont {Ward}}, \bibinfo {author}
  {\bibfnamefont {D.~E.}\ \bibnamefont {Savage}}, \bibinfo {author}
  {\bibfnamefont {M.~G.}\ \bibnamefont {Lagally}}, \bibinfo {author}
  {\bibfnamefont {S.~N.}\ \bibnamefont {Coppersmith}}, \bibinfo {author}
  {\bibfnamefont {M.~A.}\ \bibnamefont {Eriksson}}, \bibinfo {author}
  {\bibfnamefont {S.}~\bibnamefont {Wehner}},\ and\ \bibinfo {author}
  {\bibfnamefont {L.~M.~K.}\ \bibnamefont {Vandersypen}},\ }\bibfield  {title}
  {\bibinfo {title} {Benchmarking gate fidelities in a si/sige two-qubit
  device},\ }\href@noop {} {\bibfield  {journal} {\bibinfo  {journal} {Physical
  Review X}\ }\textbf {\bibinfo {volume} {9}},\ \bibinfo {pages} {021011}
  (\bibinfo {year} {2019})}\BibitemShut {NoStop}%
\bibitem [{\citenamefont {Helsen}\ \emph
  {et~al.}(2019{\natexlab{b}})\citenamefont {Helsen}, \citenamefont
  {Battistel},\ and\ \citenamefont {Terhal}}]{helsen2019spectral}%
  \BibitemOpen
  \bibfield  {author} {\bibinfo {author} {\bibfnamefont {J.}~\bibnamefont
  {Helsen}}, \bibinfo {author} {\bibfnamefont {F.}~\bibnamefont {Battistel}},\
  and\ \bibinfo {author} {\bibfnamefont {B.~M.}\ \bibnamefont {Terhal}},\
  }\bibfield  {title} {\bibinfo {title} {Spectral quantum tomography},\
  }\href@noop {} {\bibfield  {journal} {\bibinfo  {journal} {npj Quantum
  Information}\ }\textbf {\bibinfo {volume} {5}},\ \bibinfo {pages} {1}
  (\bibinfo {year} {2019}{\natexlab{b}})}\BibitemShut {NoStop}%
\bibitem [{Note3()}]{Note3}%
  \BibitemOpen
  \bibinfo {note} {Alternatively this formula can be calculated from the
  definition involving Haar measure}\BibitemShut {NoStop}%
\bibitem [{\citenamefont {Byrd}\ and\ \citenamefont
  {Khaneja}(2003)}]{byrd2003characterization}%
  \BibitemOpen
  \bibfield  {author} {\bibinfo {author} {\bibfnamefont {M.~S.}\ \bibnamefont
  {Byrd}}\ and\ \bibinfo {author} {\bibfnamefont {N.}~\bibnamefont {Khaneja}},\
  }\bibfield  {title} {\bibinfo {title} {Characterization of the positivity of
  the density matrix in terms of the coherence vector representation},\
  }\href@noop {} {\bibfield  {journal} {\bibinfo  {journal} {Physical Review
  A}\ }\textbf {\bibinfo {volume} {68}},\ \bibinfo {pages} {062322} (\bibinfo
  {year} {2003})}\BibitemShut {NoStop}%
\bibitem [{\citenamefont {Rungta}\ \emph {et~al.}(2001)\citenamefont {Rungta},
  \citenamefont {Bu{\v{z}}ek}, \citenamefont {Caves}, \citenamefont {Hillery},\
  and\ \citenamefont {Milburn}}]{rungta2001universal}%
  \BibitemOpen
  \bibfield  {author} {\bibinfo {author} {\bibfnamefont {P.}~\bibnamefont
  {Rungta}}, \bibinfo {author} {\bibfnamefont {V.}~\bibnamefont {Bu{\v{z}}ek}},
  \bibinfo {author} {\bibfnamefont {C.~M.}\ \bibnamefont {Caves}}, \bibinfo
  {author} {\bibfnamefont {M.}~\bibnamefont {Hillery}},\ and\ \bibinfo {author}
  {\bibfnamefont {G.~J.}\ \bibnamefont {Milburn}},\ }\bibfield  {title}
  {\bibinfo {title} {Universal state inversion and concurrence in arbitrary
  dimensions},\ }\href@noop {} {\bibfield  {journal} {\bibinfo  {journal}
  {Physical Review A}\ }\textbf {\bibinfo {volume} {64}},\ \bibinfo {pages}
  {042315} (\bibinfo {year} {2001})}\BibitemShut {NoStop}%
\bibitem [{\citenamefont {Horodecki}\ and\ \citenamefont
  {Horodecki}(1996)}]{horodecki1996information}%
  \BibitemOpen
  \bibfield  {author} {\bibinfo {author} {\bibfnamefont {R.}~\bibnamefont
  {Horodecki}}\ and\ \bibinfo {author} {\bibfnamefont {M.}~\bibnamefont
  {Horodecki}},\ }\bibfield  {title} {\bibinfo {title} {Information-theoretic
  aspects of inseparability of mixed states},\ }\href@noop {} {\bibfield
  {journal} {\bibinfo  {journal} {Physical Review A}\ }\textbf {\bibinfo
  {volume} {54}},\ \bibinfo {pages} {1838} (\bibinfo {year}
  {1996})}\BibitemShut {NoStop}%
\bibitem [{Note4()}]{Note4}%
  \BibitemOpen
  \bibinfo {note} {Schur's Lemma states that the only matrices that commute
  with all elements of an irreducible representation of a group are scalar
  multiples of $ \protect \mathds {1} $.}\BibitemShut {Stop}%
\bibitem [{\citenamefont {Wallman}(2015)}]{wallman2015bounding}%
  \BibitemOpen
  \bibfield  {author} {\bibinfo {author} {\bibfnamefont {J.~J.}\ \bibnamefont
  {Wallman}},\ }\bibfield  {title} {\bibinfo {title} {Bounding experimental
  quantum error rates relative to fault-tolerant thresholds},\ }\href@noop {}
  {\bibfield  {journal} {\bibinfo  {journal} {arXiv preprint arXiv:1511.00727}\
  } (\bibinfo {year} {2015})}\BibitemShut {NoStop}%
\end{thebibliography}%

\end{document}